\documentclass[11pt]{article}

\usepackage[T1]{fontenc}
\usepackage[margin=1in]{geometry}
\usepackage{amsmath,amssymb,amsthm,mathtools,bm}
\usepackage{microtype}
\usepackage{enumitem}
\usepackage{xcolor}
\usepackage{comment}
\usepackage{float}
\usepackage{booktabs}
\usepackage{tabularx}
\usepackage{tikz}
\usetikzlibrary{arrows.meta,calc,positioning,decorations.markings}
\usepackage{aliascnt}
\usepackage[colorlinks=true,linkcolor=blue!55!black,citecolor=blue!55!black,urlcolor=blue!55!black,backref=page]{hyperref}
\usepackage[nameinlink,noabbrev]{cleveref}

\renewcommand*{\backrefalt}[4]{%
  \ifcase #1 Not cited.%
  \or page~#2.%
  \else pages~#2.%
  \fi}

\allowdisplaybreaks
\numberwithin{equation}{section}
\setlist[itemize]{leftmargin=2em,itemsep=0.25em,topsep=0.35em}
\setlist[enumerate]{leftmargin=2.3em,itemsep=0.3em,topsep=0.35em}
\newtheorem{theorem}{Theorem}[section]
\newaliascnt{lemma}{theorem}
\newtheorem{lemma}[lemma]{Lemma}
\aliascntresetthe{lemma}
\newaliascnt{proposition}{theorem}
\newtheorem{proposition}[proposition]{Proposition}
\aliascntresetthe{proposition}
\newaliascnt{corollary}{theorem}
\newtheorem{corollary}[corollary]{Corollary}
\aliascntresetthe{corollary}
\newaliascnt{definition}{theorem}
\newtheorem{definition}[definition]{Definition}
\aliascntresetthe{definition}
\newaliascnt{assumption}{theorem}

\aliascntresetthe{assumption}
\newaliascnt{remark}{theorem}
\newtheorem{remark}[remark]{Remark}
\aliascntresetthe{remark}
\newaliascnt{example}{theorem}

\aliascntresetthe{example}
\newaliascnt{fact}{theorem}
\newtheorem{fact}[fact]{Fact}
\aliascntresetthe{fact}

\newcommand{\Tr}{\operatorname{Tr}}
\newcommand{\supp}{\operatorname{supp}}
\newcommand{\diam}{\operatorname{diam}}
\newcommand{\dist}{\operatorname{dist}}
\newcommand{\id}{\mathbf 1}
\newcommand{\cH}{\mathcal H}
\newcommand{\cI}{\mathcal I}
\newcommand{\cD}{\mathcal D}
\newcommand{\str}[2]{\mathsf X_{#1:#2}}
\newcommand{\ket}[1]{\lvert #1\rangle}
\newcommand{\bra}[1]{\langle #1\rvert}
\newcommand{\braket}[2]{\langle #1\,|\,#2\rangle}

\newcommand{\norm}[1]{\left\lVert #1\right\rVert}
\newcommand{\abs}[1]{\left\lvert #1\right\rvert}
\newcommand{\eps}{\varepsilon}
\newcommand{\Dlt}{\Delta}
\newcommand{\e}{\mathrm e}
\newcommand{\poly}{\operatorname{poly}}
\newcommand{\Aone}{\texorpdfstring{\textbf{A1}}{A1}}

\hypersetup{
  pdftitle={Stability of the Modular Commutator and Hall Conductance Estimator from Small Conditional Mutual Information},
  pdfsubject={Modular Markov defects, modular structure, and approximate topological invariance of single-wave-function modular-commutator and Hall conductance estimator formulas}
}

\title{Modular commutator as a robust topological invariant and approximate Markovianity}
\author{Tai-Hsuan Yang\\[0.5em]
\small Department of Physics and Institute for Condensed Matter Theory,\\
\small University of Illinois Urbana-Champaign, Urbana, Illinois 61801, USA}
\date{\today}

\begin{document}
\maketitle

\begin{abstract}
The modular commutator provides a bulk, local, single-wave-function probe of the chiral central charge $c_-$ for gapped ground states. Its invariance under deformations was previously established under a local quantum Markov condition---namely, the conditional mutual information $I(A:C|B)$ is zero for all tripartitions of a disk into three consecutive strips A, B, and C \cite{Modular-commutator-Gapped}. However, the local quantum Markov condition also forces the probe to vanish, leaving open whether modular commutator remains robust in physically relevant states where the Markov property holds only approximately. In this paper, we resolve this tension for finite-dimensional Hilbert spaces: the approximate local quantum Markov property implies the change of the modular commutator under topology-preserving deformations vanishes asymptotically. We next prove trace-norm continuity of the modular commutator. Combining deformation invariance with trace-norm continuity, we establish that the modular commutator remains asymptotically invariant within gapped quantum phases connected by quasi-local unitary paths that preserve the approximate local quantum Markov condition. Conversely, by analyzing finite-time dynamics generated by modular Hamiltonians, we derive a quantitative lower bound on the conditional mutual information required for a non-zero modular commutator: $I(A:C|B)$ across such strip tripartitions cannot decay faster than exponentially with the width of B given that the state satisfies entanglement area law, extending the exact no-go theorem of \cite{strict-J-2024} to a quantitative finite bound . Finally, we demonstrate that those conclusions apply equally to the Hall conductance estimator \cite{FanSahayVishwanath2023}.
\end{abstract}

\clearpage
\tableofcontents
\clearpage

\section{Introduction}
\label{sec:introduction}

Two-dimensional gapped topological phases are characterized by universal data that are insensitive to local perturbations \cite{Bravyi2010,Bravyi_2011}.  These data include quasiparticle types, fusion and braiding rules, topological spins, and related modular data \cite{Kitaev2005,KitaevPreskill_2006,LevinWen2006}.  Chiral topological phases carry additional information associated with anomalous edge dynamics: the chiral central charge $c_-$ measures the imbalance between right- and left-moving edge modes and determines the universal low-temperature thermal Hall response \cite{KaneFisher1997,CappelliHuertaZemba2002}.  With an exact on-site $U(1)$ charge, a gapped two-dimensional phase can additionally be characterized by its quantized electrical Hall conductance $\sigma_{xy}$ \cite{ThoulessKohmotoNightingaleNijs1982,NiuThoulessWu1985}.

Remarkably, part of this general topological information can be extracted directly from the entanglement structure of a representative ground-state wave function locally \cite{Kim2022c-minus,Modular-commutator-Gapped,vardhan2025,Kimmodularaddivity, Huang2021knots,yang2025topologicalmixedstatesphases, Zou_2022,zhang2026extensivelongrangemagicnonabelian}.  The topological entanglement entropy isolates the total quantum dimension, while the entanglement Hamiltonian retains detailed signatures of the effective boundary theory \cite{LiHaldane2008,li2026chiralgappedstatesuniversally}; more elaborate local entropy constraints can also recover superselection and fusion data \cite{KitaevPreskill_2006,LevinWen2006,ShiKatoKim2020}.  These quantities probe static topological structure, but they are not themselves thermal or electrical response coefficients.

To access the chiral central charge from the same single-wave-function perspective, consider modular Hamiltonians.  For a region $X$ with reduced state $\rho_X$, let $K_X=-\log\rho_X$; when $\rho_X$ is not faithful, the logarithm is understood on $\supp\rho_X$.  Divide a disk into three cyclically ordered sectors $A,B,C$ as in \cref{fig:intro-geometries}(a).  The modular commutator and the Hall conductance estimator are
\begin{align}
J(A,B,C)_\rho
&:=i\Tr\rho_{ABC}[K_{AB},K_{BC}] = \frac{\pi}{3}c_-,
\label{eq:intro-J}\\
\Sigma(A,B,C)_\rho
&:=\frac{i}{2}\Tr\rho_{ABC}[K_{AB},Q_{BC}^{2}] = \sigma_{xy},
\label{eq:intro-Sigma}
\end{align}
where the second expression assumes an on-site $U(1)$ charge with regional charge $Q_{BC}$.  For 2D gapped many-body ground states studied in Refs.~\cite{Modular-commutator-Gapped,FanSahayVishwanath2023,Kim2022c-minus,strict-J-2024}, $J$ probes the chiral central charge and $\Sigma$ probes the electrical Hall conductance. Related studies have derived a universal expression for the modular commutator in $(1+1)$-dimensional conformal field theories \cite{Zou_2022} and established its geometric additivity for multipartite partitions \cite{Kimmodularaddivity}. Recent work on two-dimensional gapless systems proposes the modular commutator as a probe of the parity anomaly \cite{zeng2026parityanomalymodularcommutator}.

\begin{figure}[H]
\centering
\begin{tikzpicture}[scale=1.0,line width=0.8pt]
  \begin{scope}[xshift=-2.6cm]
    \def\R{1.35}
    \fill[blue!12] (0,0) -- (30:\R) arc (30:150:\R) -- cycle;
    \fill[red!12] (0,0) -- (150:\R) arc (150:270:\R) -- cycle;
    \fill[yellow!22] (0,0) -- (270:\R) arc (270:390:\R) -- cycle;
    \draw[thick] (0,0) circle (\R);
    \draw[thick] (0,0) -- (30:\R);
    \draw[thick] (0,0) -- (150:\R);
    \draw[thick] (0,0) -- (270:\R);
    \node at (90:0.76) {$C$};
    \node at (210:0.76) {$A$};
    \node at (330:0.76) {$B$};
    \node at (0,-1.68) {\textbf{(a)} Response partition};
  \end{scope}
  \begin{scope}[xshift=2.6cm]
    \fill[green!10] (0,0) circle (1.35);
    \fill[white] (0,0) circle (0.58);
    \draw[thick] (0,0) circle (1.35);
    \draw[thick] (0,0) circle (0.58);
    \draw[thick] (90:0.58)--(90:1.35);
    \draw[thick] (270:0.58)--(270:1.35);
    \node at (0,0) {$C$};
    \node at (-0.92,0) {$B$};
    \node at (0.92,0) {$D$};
    \node at (0,-1.68) {\textbf{(b)} \Aone{} geometry};
  \end{scope}
\end{tikzpicture}
\caption{The two geometries used in the introduction.  (a) The three-sector disk partition defining the modular commutator $J(A,B,C)$ and Hall conductance estimator $\Sigma(A,B,C)$; the cyclic order of $A,B,C$ (clockwise or counterclockwise) fixes the response orientation and hence the sign convention.  (b) The local \Aone{} geometry, in which the annular neighborhood of $C$ is split into side regions $B$ and $D$.}
\label{fig:intro-geometries}
\end{figure}

Since the choice of the partition is arbitrary while the topological data extracted is universal, $J$ and $\Sigma$ must be deformation invariant. The entropy condition relevant to deformation invariance  is formulated on the local geometry in \cref{fig:intro-geometries}(b) and one considers the following entropy combinations:
\begin{equation}
\delta_1(B,C,D)
:=S(BC)+S(CD)-S(B)-S(D).
\label{eq:intro-A1}
\end{equation}
Requiring $\delta_1(B,C,D)=0$ uniformly over the relevant local geometries is conventionally called the exact \Aone{} condition in the entanglement-bootstrap program \cite{ShiKatoKim2020} where one expect the \Aone{} violation $\delta_1$ is small uniformly for gapped quantum ground state with topological order. Exact satisfaction of this entropy condition guarantees deformation invariance of the response formulas. It allows local patches to be added, removed, or reassigned without changing either $J$ or $\Sigma$.  Thus, although the three-sector partition in \cref{fig:intro-geometries}(a) is auxiliary, the extracted value is invariant under topology- and orientation-preserving deformations of its boundaries.

The exact \Aone{} condition is, however, too restrictive for chiral phases.
The no-go theorem of Ref.~\cite{strict-J-2024} shows that exact bulk
\Aone{} forces the $c_-$ extracted from the modular commutator to vanish and,
in the presence of exact on-site $U(1)$ symmetry, likewise forces the
$\sigma_{xy}$ extracted from the Hall conductance estimator to vanish.  Moreover, exact
\Aone{} implies the existence of a local commuting-projector parent Hamiltonian with a stable gap \cite{strict-H-2024}.  A state with a
nonzero modular commutator or Hall conductance estimator must therefore exhibit \Aone{} violations
throughout the bulk. This leaves open two distinct questions.
First, for a quantum state, do the single-wave-function response formulas
remain invariant under topology-preserving deformations when
\Aone{} only holds asymptotically?  Second, after
taking the thermodynamic limit, do the extracted responses remain unchanged
between representative states in the same gapped quantum phase?  We answer
both questions affirmatively: sufficiently small \Aone{} violations make $J$
and $\Sigma$ asymptotically deformation invariant, even though exact \Aone{}
fails at finite scales.  The second question is addressed by proving the
trace-norm continuity of $J$ and $\Sigma$ and using the definition of the phase equivalence from entanglement bootstrap through the spatial domain-wall construction\cite{yang2025topologicalmixedstatesphases} from quasi-local unitary paths that preserves the $\Aone{}$ condition approximately.

Finally, we establish a complementary result in the converse direction: a finite modular commutator or Hall conductance estimator requires the \Aone{} violation to remain strictly bounded away from zero. More explicitly, under an entanglement area law or a boundary law for charge fluctuations,\footnote{Following common usage, scaling proportional to $|\partial A|$ is called an ``area law'' for entanglement entropy and a ``boundary law'' for charge fluctuations in the literatures.  Both terms refer to the same scaling with the size of the region's boundary, which is $O(L)$ in two dimensions.} the violation of \Aone{} cannot decay faster than exponentially with the buffer width. We establish this result by extending the instantaneous modular flow technique—originally used to prove the exact no-go theorem—to the approximate Markov regime. Thus, a chiral response is compatible with \Aone{} violations that vanish asymptotically, but strictly incompatible with exact \Aone{} or with decay faster than our quantitative lower bound.

The paper is organized as follows.  \Cref{sec:framework} introduces the modular Markov defect and establishes quantitative deformation stability of the modular commutator and Hall conductance estimator.  \Cref{sec:J-phase-equivalence} proves trace-norm continuity and shows that the modular commutator and the Hall response estimator are phase invariant in the thermodynamics limit under the admissible quasi-local unitary equivalences defining a gapped quantum phase.  \Cref{sec:finite-time-IMF-no-go} develops the finite-time instantaneous-modular-flow bounds and derives spectral-free quantitative lower bounds on the finite-scale \Aone{} violation required by a nonzero modular commutator or Hall conductance estimator.  \Cref{sec:discussion} summarizes the results and discusses open questions and extensions.  The appendices provide technical proofs, extensions to nonfaithful states, and supporting estimates and constructions.

\section{Deformation invariance of the modular commutator}
\label{sec:framework}

\subsection{Modular Markov defect}
\label{subsec:framework-defect}

For a tripartite state $\rho_{ABC}$, define the modular Markov defect by
\begin{equation}
\Dlt_{A:B:C}:=K_{AB}+K_{BC}-K_B-K_{ABC}.
\label{eq:modular-defect-main}
\end{equation}
For an operator $O$ and a state $\rho$, the right-GNS norm is
\begin{equation}
\norm{O}_{2,\rho}^{\mathrm R}
:=\norm{O\rho^{1/2}}_2
=\bigl(\Tr\rho O^\dagger O\bigr)^{1/2}.
\label{eq:right-GNS-main}
\end{equation}
The conditional mutual information is the expectation of the defect,
\begin{equation}
I(A:C|B)_\rho
=\Tr(\rho_{ABC}\Dlt_{A:B:C})
=S(AB)+S(BC)-S(B)-S(ABC),
\label{eq:CMI-defect-main}
\end{equation}

The following theorem is the main technical input of the paper.  It gives a quantitative stability version of the exact quantum-Markov identity by showing that small conditional mutual information forces the modular Markov defect to be small in the state-dependent right-GNS norm.

\begin{theorem}[CMI controls the modular second moment]
\label{thm:main-cmi}
Let
\[
\rho_{ABC}\in\cD(\cH_A\otimes\cH_B\otimes\cH_C),
\qquad d_X:=\dim\cH_X.
\]
Here $\cD(\cH)$ denotes the set of positive semidefinite, unit-trace operators on $\cH$.
For every finite-dimensional tripartite state, with
$u:=I(A:C|B)_\rho$,
\begin{equation}
\boxed{
\norm{\Dlt_{A:B:C}\rho_{ABC}^{1/2}}_2
\le\Lambda_{A:B:C}(u).
}
\label{eq:main-cmi-bound}
\end{equation}
where, for an ordered tripartition $X|Y|Z$, the error function is
\begin{equation}
\Lambda_{X:Y:Z}(u)
:=
\begin{cases}
0,
&u=0,\\[0.2em]
C_0\sqrt{u\left[1+\log(d_{XY}d_{YZ})+\log(1/u)\right]},
&0<u\le\frac12,\\[0.2em]
\log(d_{XY}d_{YZ}d_Yd_{XYZ}),
&u>\frac12,
\end{cases}
\label{eq:Lambda-main}
\end{equation}
Here we take $C_0 = 14$.
\end{theorem}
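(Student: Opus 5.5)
The plan is to reduce the bound to a scalar inequality for the spectral measure of a relative modular operator. The reduction has two ingredients: a Lieb-type reference operator and the identity $\norm{(\log\rho-\log\sigma)\rho^{1/2}}_2=\norm{\log\Delta_{\sigma,\rho}\,\rho^{1/2}}_2$. I first treat faithful $\rho_{ABC}$ and pass to the support in general. The case $u=0$ is the exact equality case of strong subadditivity, where $K_{AB}+K_{BC}-K_B-K_{ABC}$ vanishes on $\supp\rho_{ABC}$, so the right side $0$ is attained.

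\textbf{Step 1: reference operator.} Set $\sigma:=\exp(\log\rho_{AB}+\log\rho_{BC}-\log\rho_B)$ and $Z:=\Tr\sigma$. By Lieb's triple-matrix inequality, $Z\le 1$. Since $\log\sigma=-(K_{AB}+K_{BC}-K_B)$, we have
\[
\Dlt_{A:B:C}=\log\rho_{ABC}-\log\sigma ,\qquad u=\Tr\rho_{ABC}\Dlt_{A:B:C}=D(\rho_{ABC}\Vert\sigma),
\]
where $D(\rho\Vert\sigma):=\Tr\rho(\log\rho-\log\sigma)$ is understood for the unnormalized $\sigma$. Writing $D(\rho\Vert\sigma)=D(\rho\Vert\sigma/Z)-\log Z$ shows both terms are nonnegative and each is at most $u$.

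\textbf{Step 2: spectral reformulation.} Let $\Delta:=\Delta_{\sigma,\rho}=L_\sigma R_{\rho^{-1}}$ be the relative modular operator on Hilbert--Schmidt space, and put $\xi=\rho^{1/2}$ and $X=\log\Delta=L_{\log\sigma}-R_{\log\rho}$. Because $\log\rho$ commutes with $\rho^{1/2}$,
\[
X\xi=-\Dlt_{A:B:C}\,\rho^{1/2},\qquad\text{so}\qquad \norm{\Dlt_{A:B:C}\rho^{1/2}}_2^2=\int x^2\,d\mu(x),
\]
where $\mu$ is the spectral measure of $X$ in $\xi$. This measure satisfies $\int(-x)\,d\mu=u$ and $\int e^{x}\,d\mu\le\Tr\sigma=Z\le1$. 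Hence, with $f(x):=e^x-1-x\ge0$, we get $\int f\,d\mu\le u$. The supports are controlled: $\log\sigma$ is bounded in norm by $\log(d_{AB}d_{BC}d_B)$ up to the behavior of $\log\rho_B$, and the positive side of $X$ is harmless because $f$ grows exponentially there.

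\textbf{Step 3: scalar estimate.} This is the main obstacle. Fix a threshold $T=1+\log(d_{AB}d_{BC})+\log(1/u)$. On $\{x\ge -T\}$, I use the elementary inequality $x^2\le c\,(1+T)f(x)$, which gives a contribution $\le c\,u\,T$. The difficulty is the negative tail $\{x<-T\}$, where neither $\mu$ nor $f$ alone gives a good enough bound. A naive Markov estimate produces $M^2u/T$, with $M$ a spectral bound involving $\log\rho_{ABC}$; this is not dimension-free in the right form.

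I would first try to control the tail through the $\sigma$-side. The mass of $\mu$ at $x<-T$ corresponds to directions where $\rho$ dominates $\sigma$ by a factor $e^{T}$. Counting dimensions (as in the proof of continuity bounds for relative entropy), this mass is at most of order $d_{AB}d_{BC}e^{-T}\cdot(\text{poly}\,T)$. With the choice of $T$ above, that becomes $O(u)$ times logarithmic factors. This yields $\int x^2d\mu\le C_0^2\,u[1+\log(d_{AB}d_{BC})+\log(1/u)]$. Tracking the constants through $\sup_{x\ge -T}x^2/((1+T)f(x))$ should give $C_0=14$.

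\textbf{Step 4: large-$u$ regime.} For $u>\tfrac12$, I apply the triangle inequality in the right-GNS norm:
\[
\norm{\Dlt\rho^{1/2}}_2\le\sum_{X}\norm{K_X\rho_X^{1/2}}_2 .
\]
Each term is bounded by $(\Tr\rho_X\log^2\rho_X)^{1/2}\le\log d_X$, which holds up to a small additive correction for small $d_X$ that must be checked. This gives the bound $\log(d_{AB}d_{BC}d_Bd_{ABC})$ in \eqref{eq:Lambda-main}.
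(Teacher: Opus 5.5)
Your Steps 1, 2 and 4 are correct and match the paper's argument. The spectral measure $\mu$ of $\log\Delta_{\sigma,\rho}$ in $\rho^{1/2}$ is the Nussbaum--Szko\l{}a coupling $P_{ij}=p_i\abs{\braket{i}{j}}^2$ (your $x$ is the paper's $-X_{ij}$). $\Tr\sigma\le1$ is Lieb's triple-matrix inequality, and on $\{x\ge-T\}$ your bound $x^2\le2(1+T)f(x)$ is the paper's scalar second-moment lemma.

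The gap is Step 3, which is the entire difficulty. On $\{x<-T\}$ you need $\int_{x<-T}x^2\,d\mu$, not just $\mu(x<-T)$. For the original state, $\sigma$ has no spectral floor, since its eigenvalues can be arbitrarily small when the marginals have small eigenvalues. So $x$ is unbounded below on the tail, and a mass bound alone cannot close the estimate.

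Worse, the ``dimension counting'' bound $\mu(x<-T)\lesssim d_{AB}d_{BC}\e^{-T}\poly(T)$ is asserted, not proved, and as a statement about the tail at a given level $T$ it is false. Take $\rho_{ABC}=\rho_{AC}\otimes\id_B/d_B$ (a maximally mixed independent $B$ leaves $\mu$ unchanged), with $d_A=d_C=d$ and $\rho_{AC}=\ket\psi\!\bra\psi$, where $\ket\psi=\sqrt{1-(d-1)\epsilon}\,\ket{00}+\sqrt\epsilon\sum_{k=1}^{d-1}\ket{kk}$. A small depolarization makes this faithful without materially changing what follows. Then $\sigma=\rho_A\otimes\rho_C\otimes\id_B/d_B$, and $\mu$ puts mass $\lambda_k$ at $x=2\log\lambda_k$, where $\lambda_0=1-(d-1)\epsilon$ and $\lambda_k=\epsilon$ for $k\ge1$. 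With $\epsilon=\tfrac12\e^{-T/2}$, the tail mass is $(d-1)\epsilon$, of order $d\,\e^{-T/2}$. This exceeds $d^2d_B^2\e^{-T}\poly(T)$ once $T$ is large. Also, at fixed dimensions the tail atoms at $2\log\epsilon$ can be pushed arbitrarily far out.

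The classical counting you have in mind rests on $p(ac|b)\le\min\{p(a|b),p(c|b)\}$, which fails for entangled states. In general, the event $\{p_i>\e^{T}q_j\}$ couples the eigenbases of two noncommuting operators, one of them the exponential of a sum of noncommuting logarithms, so no counting argument is available. Your claim that the constants give $C_0=14$ is therefore also unsupported.

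The missing idea is the paper's: do not estimate the tail, remove it by regularization.
\begin{itemize}
\item Replace $\rho$ by $\rho^{(\eta)}=(1-\eta)\rho+\eta\id/d_{ABC}$, whose marginals are all depolarized with the same $\eta$.
\item Operator monotonicity of the logarithm, together with $\rho_B^{(\eta)}\le\id$, gives $\rho^{(\eta)}\ge\eta\id/d_{ABC}$ and $\tau_\eta\ge\eta^2\id/(d_{AB}d_{BC})$.
\item The coupling variable is then bounded, $\abs{x}\le L_\eta:=2\log(1/\eta)+\log(d_{AB}d_{BC})$, and your bounded-region estimate alone gives $\norm{\Dlt_{\rho^{(\eta)}}(\rho^{(\eta)})^{1/2}}_2^2\le2(1+L_\eta)u_\eta$.
\end{itemize}
The cost of regularizing is paid by three elementary estimates:
\begin{itemize}
\item the smoothing bound $\Tr\rho_W(\log\rho_W-\log\rho^{(\eta)}_W)^2\le2\eta$ for each of the four marginals, costing $4\sqrt{2\eta}$ in total in the right-GNS norm;
\item the change of weight $\rho\le(1-\eta)^{-1}\rho^{(\eta)}$;
\item Alicki--Fannes--Winter continuity, $u_\eta\le u+4\eta\log d_*+4h_2(\eta)$ with $d_*=\min\{d_A,d_C\}$.
\end{itemize}
Choosing $\eta=u/\Gamma$ with $\Gamma=1+\log(d_{AB}d_{BC})+\log(1/u)$ gives $u_\eta\le9u$ and $L_\eta\le3\Gamma$, and that is where $C_0=14$ comes from. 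The same device also handles nonfaithful $\rho$: support logarithms of different marginals, and Lieb's inequality for singular arguments. Your ``pass to the support'' does not address these.
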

The proof of \cref{thm:main-cmi} rests on the following three auxiliary lemmas.  Their proofs are given in \cref{app:cmi-proof}. In the regime considered throughout this paper, we assume every region $W$ appearing below has $d_W\ge4$.  Let $p_i$ be the eigenvalues of $\rho_W$, a direct optimization of $\sum_i p_i(\log p_i)^2$ gives
\begin{equation}
\norm{K_W\rho_W^{1/2}}_2
\le \log d_W.
\label{eq:universal-modular-moment-main}
\end{equation}
Equality is achieved by the uniform spectrum $p_i=1/d_W$ for every $i$.

\begin{lemma}[Depolarization smoothing]
\label{lem:smoothing-app}
Let $\sigma$ be a density operator on a $d$-dimensional Hilbert space and, for $0<\eta\le\frac12$, define
\[
\sigma^{(\eta)}:=(1-\eta)\sigma+\eta\frac{\id}{d}.
\]
Then
\begin{equation}
\Tr\sigma\bigl(\log\sigma-\log\sigma^{(\eta)}\bigr)^2
\le2\eta.
\label{eq:smoothing-app}
\end{equation}
\end{lemma}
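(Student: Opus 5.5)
The plan is to reduce the statement to a scalar inequality, using that $\sigma$ and $\sigma^{(\eta)}$ commute. Indeed, $\sigma^{(\eta)}=(1-\eta)\sigma+\eta\id/d$ is a function of $\sigma$. Diagonalize $\sigma=\sum_i p_i\ket{i}\bra{i}$. Then $\sigma^{(\eta)}$ has eigenvalues $q_i=(1-\eta)p_i+\eta/d>0$ in the same basis. Interpreting $\log\sigma$ on $\supp\sigma$, only indices with $p_i>0$ contribute to the trace, and
\[
\Tr\sigma\bigl(\log\sigma-\log\sigma^{(\eta)}\bigr)^2=\sum_{i:\,p_i>0}p_i\Bigl(\log\frac{q_i}{p_i}\Bigr)^2 .
\]
I will split the index set according to whether $q_i\le p_i$ or $q_i>p_i$. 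Each class will contribute at most $\eta$, which gives the bound $2\eta$.

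\emph{Case $q_i\le p_i$.} Here $q_i\ge(1-\eta)p_i$, so $0\le\log(p_i/q_i)\le-\log(1-\eta)$. I will show the scalar bound $(\log(1-\eta))^2\le\eta$ for $0<\eta\le\frac12$. Equivalently, $g(\eta):=-\log(1-\eta)-\sqrt\eta\le0$ on this interval.
\begin{itemize}
\item Near $0$, $g(\eta)\approx\eta-\sqrt\eta<0$.
\item At the endpoint, $g(1/2)=\log2-1/\sqrt2\approx-0.014<0$.
\item $g$ is convex on $(0,1)$, since $g''(\eta)=(1-\eta)^{-2}+\tfrac14\eta^{-3/2}>0$. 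A convex function lies below its chord, so $g\le\max(g(0^+),g(1/2))\le0$ on $(0,\frac12]$.
\end{itemize}
Summing over this class, the contribution is at most $\eta\sum_ip_i\le\eta$. This endpoint check is the tight step, so the numerics at $\eta=\frac12$ must be done carefully. Everything else is elementary.

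\emph{Case $q_i>p_i$.} Write $q_i/p_i=1+x_i$ with
\[
x_i=\frac{\eta}{dp_i}-\eta\in\Bigl(0,\frac{\eta}{dp_i}\Bigr).
\]
I will use the elementary inequality $\log^2(1+x)\le x$ for $x\ge0$. At $x=0$ both sides vanish. The derivative of the left side is $2\log(1+x)/(1+x)$, and this is $\le1$ because $\log y\le y/2$ for $y\ge1$ (the function $\log y-y/2$ is maximized at $y=2$, where it is negative). Hence each term obeys
\[
p_i\log^2(1+x_i)\le p_ix_i\le\frac{\eta}{d}.
\]
Since there are at most $d$ such indices, this class contributes at most $\eta$.

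Adding the two cases gives $\Tr\sigma(\log\sigma-\log\sigma^{(\eta)})^2\le2\eta$, which is \eqref{eq:smoothing-app}.
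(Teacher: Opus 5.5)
Your proof is correct and follows essentially the same route as the paper. Your split $q_i\le p_i$ is exactly the paper's split $p_i\ge 1/d$, and both cases rest on the same scalar facts, $\log^2(1-\eta)\le\eta$ on $(0,\tfrac12]$ and $\log^2(1+u)\le u$ (equivalently $\log(1+u)\le\sqrt u$); the only cosmetic difference is that you verify the first by convexity of $-\log(1-\eta)-\sqrt\eta$ plus the endpoint check, whereas the paper uses monotonicity of $\log^2(1-\eta)/\eta$ to get the slightly sharper bound $2(\log 2)^2\eta$.
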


\begin{lemma}[Bounded relative-surprisal moment]
\label{lem:scalar-moment-app}
Let $X$ be a real random variable satisfying
\[
\abs X\le L,
\qquad
\mathbb E X=u,
\qquad
\mathbb E\e^{-X}\le1.
\]
Then
\begin{equation}
\mathbb E X^2\le2(1+L)u.
\label{eq:scalar-moment-app}
\end{equation}
\end{lemma}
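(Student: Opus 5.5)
The plan is to compare $\e^{-x}$ with its second-order Taylor expansion, using a curvature constant that holds uniformly on $[-L,L]$. Define
\[
\varphi(x):=\frac{\e^{-x}-1+x}{x^2}\quad(x\neq0),\qquad \varphi(0):=\tfrac12 .
\]
Taylor's formula with integral remainder gives the representation
\[
\varphi(x)=\int_0^1(1-t)\,\e^{-tx}\,dt .
\]
For each fixed $t\in[0,1]$ the integrand is nonincreasing in $x$, so $\varphi$ is nonincreasing on $\mathbb R$. Hence for $\abs x\le L$ we get $\varphi(x)\ge\varphi(L)$. Equivalently, the pointwise bound
\[
\e^{-x}\ge 1-x+\varphi(L)\,x^2
\]
holds for all $x\in[-L,L]$.

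Next, apply this bound to $X$, which satisfies $\abs X\le L$ almost surely, and take expectations. Using $\mathbb E\e^{-X}\le1$ and $\mathbb E X=u$, this gives
\[
1\ge 1-u+\varphi(L)\,\mathbb E X^2,
\qquad\text{so}\qquad
\mathbb E X^2\le \frac{u}{\varphi(L)}=\frac{L^2\,u}{L-1+\e^{-L}} .
\]
This uses $\varphi(L)>0$, which is immediate from the integral representation. The case $L=0$ is trivial, since then $X\equiv0$.

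It remains to show the elementary inequality $L^2\le 2(1+L)(L-1+\e^{-L})$ for $L\ge0$. Set
\[
g(L):=2(1+L)(L-1+\e^{-L})-L^2=L^2-2+2(1+L)\e^{-L}.
\]
Then $g(0)=0$ and
\[
g'(L)=2L-2L\e^{-L}=2L(1-\e^{-L})\ge0 ,
\]
so $g\ge0$ on $[0,\infty)$. Combining this with the previous display gives $\mathbb E X^2\le2(1+L)u$, which is \eqref{eq:scalar-moment-app}.

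The only genuinely delicate point is choosing the curvature constant at the right endpoint. The worst case is $x=+L$, where $\e^{-x}$ is flattest, and not $x=-L$. The integral representation of $\varphi$ settles the monotonicity cleanly, so I do not expect a real obstacle. The rest of the argument is calculus.
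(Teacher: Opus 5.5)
Your proof is correct and follows essentially the same route as the paper: bound $g(x)=\e^{-x}-1+x$ below by a multiple of $x^2$ uniformly on $[-L,L]$, then take expectations using $\mathbb E\e^{-X}\le1$ and $\mathbb E X=u$. The only difference is how the pointwise bound is obtained. You get the optimal constant $\varphi(L)$ from monotonicity of the integral representation and then relax via $L^2/(L-1+\e^{-L})\le2(1+L)$. The paper instead splits into $x\le0$, where $g(x)\ge x^2/2$, and $x\ge0$, where $g(x)\ge x^2/(2(1+x))$, and so reaches $x^2\le2(1+L)g(x)$ directly.
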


\begin{lemma}[Finite-dimensional CMI continuity]\cite{Audenaert_2007,Winter_2016}
\label{lem:cmi-continuity-app}
If $\frac12\norm{\rho_{ABC}-\sigma_{ABC}}_1\le\eta\le\frac12$, then
\begin{equation}
\abs{I(A:C|B)_\rho-I(A:C|B)_\sigma}
\le4\eta\log d_*+4h_2(\eta),
\qquad d_*:=\min\{d_A,d_C\},
\label{eq:cmi-continuity-app}
\end{equation}
where $h_2$ is the binary entropy.
\end{lemma}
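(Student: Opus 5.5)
The plan is to reduce the statement to the Alicki--Fannes--Winter continuity bound for the \emph{conditional entropy} and then apply it twice. We use the chain-rule decompositions
\[
I(A:C|B)=H(A|B)-H(A|BC)=H(C|B)-H(C|AB),
\]
which hold for both $\rho$ and $\sigma$. By monotonicity of the trace distance under partial trace, the marginals on $AB$, $BC$ and $ABC$ are all still $\eta$-close, so every conditional-entropy term can be compared at the same $\eta$. The first decomposition gives a bound in terms of $\log d_A$ and the second a bound in terms of $\log d_C$. Taking the better of the two yields $d_*=\min\{d_A,d_C\}$.

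\textbf{Step 1: AFW for conditional entropy.} The intermediate claim is that, for $\frac12\norm{\rho_{XY}-\sigma_{XY}}_1\le\eps\le\frac12$,
\[
\abs{H(X|Y)_\rho-H(X|Y)_\sigma}\le 2\eps\log d_X+g(\eps),
\qquad g(\eps):=(1+\eps)\,h_2\!\left(\tfrac{\eps}{1+\eps}\right).
\]
This is Winter's tight form, and I would prove it by his coupling argument. Let $\eps_0:=\frac12\norm{\rho-\sigma}_1\le\eps$. If $\eps_0=0$ there is nothing to show; otherwise write $\rho-\sigma=\eps_0(\Dlt_+-\Dlt_-)$ with $\Dlt_\pm$ orthogonally supported density operators. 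Then
\[
\gamma:=\frac{1}{1+\eps_0}\rho+\frac{\eps_0}{1+\eps_0}\Dlt_-=\frac{1}{1+\eps_0}\sigma+\frac{\eps_0}{1+\eps_0}\Dlt_+
\]
is a common mixture. The argument needs two facts about conditional entropy:
\begin{itemize}
\item concavity gives the lower bound $H(X|Y)_\gamma\ge \frac{1}{1+\eps_0}H(X|Y)_\rho+\frac{\eps_0}{1+\eps_0}H(X|Y)_{\Dlt_-}$;
\item applying $H(X|Y)$ to the flagged extension $\sum_j p_j\tau_j\otimes\ket j\bra j$ and removing the classical flag gives the upper bound
\[
H(X|Y)_\gamma\le \tfrac{1}{1+\eps_0}H(X|Y)_\sigma+\tfrac{\eps_0}{1+\eps_0}H(X|Y)_{\Dlt_+}+h_2\!\left(\tfrac{\eps_0}{1+\eps_0}\right).
\]
\end{itemize}
Subtract the two bounds, multiply by $1+\eps_0$, and use $\abs{H(X|Y)}\le\log d_X$ on $\Dlt_\pm$. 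This gives $H(X|Y)_\rho-H(X|Y)_\sigma\le 2\eps_0\log d_X+g(\eps_0)$, and by symmetry the same bound holds for the reverse difference. Monotonicity of $g$ on $[0,\frac12]$ then lets us replace $\eps_0$ by $\eps$.

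\textbf{Step 2: combine.} Apply Step 1 with $X=A$, once with $Y=B$ and once with $Y=BC$. Summing the two bounds gives
\[
\abs{I(A:C|B)_\rho-I(A:C|B)_\sigma}\le 4\eta\log d_A+2g(\eta).
\]
Applying Step 1 with $X=C$, $Y=B$ and $Y=AB$ gives the same bound with $d_C$ in place of $d_A$. The minimum of the two bounds gives the $\log d_*$ term.

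\textbf{Step 3: the scalar inequality.} It remains to show $g(\eta)\le 2h_2(\eta)$ for $0\le\eta\le\frac12$, which I expect to be the only mildly fiddly point. Expanding, $g(\eta)=(1+\eta)\log(1+\eta)-\eta\log\eta$, so it suffices to show
\[
(1+\eta)\log(1+\eta)\le -\eta\log\eta-2(1-\eta)\log(1-\eta).
\]
I would check this in two ways:
\begin{itemize}
\item \emph{Small $\eta$.} The left side is $\le\eta+\eta^2/2$, while the right side grows like $\eta\log(1/\eta)+2\eta$.
\item \emph{The whole interval.} Consider $f(\eta):=2h_2(\eta)-g(\eta)$. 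It satisfies $f(0)=0$, and $f'(\eta)=\log\frac{(1-\eta)^2}{\eta(1+\eta)}-1$ (in nats) is positive on $(0,\eta_*)$ and negative on $(\eta_*,\frac12]$, where $\eta_*\approx0.24$. Hence $f$ increases and then decreases, so $f\ge\min\{f(0),f(\frac12)\}$. At the endpoint, $f(\tfrac12)=2\log2-\tfrac32\log\tfrac32-\tfrac12\log2\approx0.43>0$.
\end{itemize}
Together these give $f\ge0$ on $[0,\frac12]$. Substituting $g\le2h_2$ into Step 2 yields exactly the bound $4\eta\log d_*+4h_2(\eta)$ of \cref{lem:cmi-continuity-app}.
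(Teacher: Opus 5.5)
Your proposal is correct and follows the paper's proof: Winter's conditional-entropy bound is applied to $I=H(A|B)-H(A|BC)$ and to $I=H(C|B)-H(C|AB)$, then combined with $(1+\eta)h_2\bigl(\tfrac{\eta}{1+\eta}\bigr)\le 2h_2(\eta)$ on $[0,\tfrac12]$, where you re-derive Winter's bound and verify the scalar inequality that the paper only cites and asserts. One small slip is that the derivative is $f'(\eta)=\log\frac{(1-\eta)^2}{\eta(1+\eta)}$ with no $-1$ (the constants $-1+2-1$ cancel), so the turning point is $\eta_*=1/3$ rather than $\approx0.24$; since this ratio is strictly decreasing, $f'$ still changes sign exactly once and your increase-then-decrease argument with the check at $\eta=\tfrac12$ goes through unchanged.
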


\begin{proof}[Proof of \cref{thm:main-cmi}]
To display the dependence of the modular Markov defect on the state, we use
the abbreviation
\begin{equation}
\Dlt_\rho
:=\Dlt_{A:B:C}(\rho)
=K_{AB}(\rho)+K_{BC}(\rho)-K_B(\rho)-K_{ABC}(\rho),
\label{eq:state-dependent-defect-main}
\end{equation}
and use the same convention for any other state. 

The endpoint and large-CMI cases are immediate.  If $u=0$, the exact quantum-Markov structure \cite{HaydenJozsaPetzWinter2004} gives
$\Dlt_\rho=0$ on $\supp\rho_{ABC}$ and hence
$\Dlt_\rho\rho_{ABC}^{1/2}=0$
\cite[Eq.~(12)]{Modular-commutator-Gapped}.  If $u>\frac12$, the triangle inequality and \eqref{eq:universal-modular-moment-main} give
\begin{equation}
\norm{\Dlt_\rho\rho_{ABC}^{1/2}}_2
\le\log(d_{AB}d_{BC}d_Bd_{ABC}),
\label{eq:large-cmi-fallback-main}
\end{equation}
which is the last branch of \eqref{eq:Lambda-main}.

Now for the branch where $0<u<1/2$, introduce the globally depolarized state of $\rho$: 
\[
\rho^{(\eta)}=(1-\eta)\rho+\eta\frac{\id}{d_{ABC}},
\qquad 0<\eta\le\frac12,
\].
Every marginal is depolarized with the same parameter.  Applying
\cref{lem:smoothing-app} separately to the $ABC$, $AB$, $BC$, and $B$
logarithms, and then using the triangle inequality, gives
\[
\norm{(\Dlt_\rho-\Dlt_{\rho^{(\eta)}})\rho^{1/2}}_2
\le4\sqrt{2\eta}.
\]
Moreover, $\rho^{(\eta)}\ge(1-\eta)\rho$, so changing the GNS weight from $\rho$ to $\rho^{(\eta)}$ costs only
\[
\norm{\Dlt_{\rho^{(\eta)}}\rho^{1/2}}_2
\le(1-\eta)^{-1/2}
\norm{\Dlt_{\rho^{(\eta)}}(\rho^{(\eta)})^{1/2}}_2.
\]
It therefore remains to bound the modular second moment of the full-rank state $\rho^{(\eta)}$. Let
\[
\tau_\eta
:=\exp\!\left(
\log\rho_{AB}^{(\eta)}+
\log\rho_{BC}^{(\eta)}-
\log\rho_B^{(\eta)}
\right).
\]
$\Tr\tau_\eta\le1$ is derived in appendix \ref{app:identities}.  Directly from the definition of $\tau_\eta$, we have
\[
\Dlt_{\rho^{(\eta)}}=\log\rho^{(\eta)}-\log\tau_\eta,
\qquad
u_\eta:=I(A:C|B)_{\rho^{(\eta)}}
=\Tr\rho^{(\eta)}
\bigl(\log\rho^{(\eta)}-\log\tau_\eta\bigr).
\]
Diagonalize the two generally noncommuting operators in their respective eigenbases:
\[
\rho^{(\eta)}=\sum_i p_i\ket i\bra i,
\qquad
\tau_\eta=\sum_j q_j\ket j\bra j.
\]
The overlaps of these bases define a probability distribution
\[
P_{ij}:=p_i\abs{\braket{i}{j}}^2,
\qquad
\sum_{i,j}P_{ij}=1.
\]
For a pair $(i,j)$ sampled according to $P_{ij}$, define
\[
X_{ij}:=\log p_i-\log q_j.
\]
Expanding in the two eigenbases shows directly that
\begin{align}
\mathbb E X
&=u_\eta, \qquad 
\mathbb E X^2
=\Tr\rho^{(\eta)}
\bigl(\log\rho^{(\eta)}-\log\tau_\eta\bigr)^2
=\norm{\Dlt_{\rho^{(\eta)}}(\rho^{(\eta)})^{1/2}}_2^2,
\notag\\
\mathbb E\e^{-X}
&=\sum_{i,j}P_{ij}\frac{q_j}{p_i}
=\sum_jq_j
=\Tr\tau_\eta\le1.
\label{eq:proof-sketch-moments-main}
\end{align}
Thus the noncommutative modular second moment is exactly the classical second moment of $X$ under the coupling $P$.  By \cref{lem:depolarized-markov-floor-app}, depolarization supplies the spectral floors
\begin{equation}
\rho^{(\eta)}\ge\frac{\eta}{d_{ABC}}\id,
\qquad
\tau_\eta\ge\frac{\eta^2}{d_{AB}d_{BC}}\id.
\label{eq:spectral-floors-main}
\end{equation}
Together with $p_i,q_j\le1$, \eqref{eq:spectral-floors-main} implies, for every pair $(i,j)$,
\begin{equation}
\abs{X_{ij}}
=\abs{\log p_i-\log q_j}
\le L_\eta
:=2\log\frac1\eta+\log(d_{AB}d_{BC}).
\label{eq:relative-surprisal-bound-main}
\end{equation}
Indeed, $X_{ij}\le-\log q_j\le L_\eta$, while
$-X_{ij}\le-\log p_i\le\log(d_{ABC}/\eta)\le L_\eta$.
Applying \cref{lem:scalar-moment-app} to the random variable just constructed therefore yields 
\begin{equation}
\norm{\Dlt_{\rho^{(\eta)}}(\rho^{(\eta)})^{1/2}}_2^2
\le2(1+L_\eta)u_\eta.
\label{eq:regularized-second-main}
\end{equation}

Finally, \cref{lem:cmi-continuity-app} compares the smoothed and original conditional mutual informations:
\begin{equation}
u_\eta\le u+4\eta\log d_*+4h_2(\eta),
\qquad
d_*:=\min\{d_A,d_C\}.
\label{eq:smoothed-cmi-main}
\end{equation}
Set
\begin{equation}
\Gamma:=1+\log(d_{AB}d_{BC})+\log\frac1u,
\qquad
\eta:=\frac{u}{\Gamma}.
\label{eq:eta-choice-main}
\end{equation}
The elementary constant estimates collected in
\cref{app:cmi-constant-evaluation} give
$u_\eta\le9u$, $L_\eta\le3\Gamma$, and $\eta\le3/10$.
Combining these bounds with \eqref{eq:regularized-second-main} and the
preceding smoothing and change-of-weight bounds gives
\begin{equation}
\norm{\Dlt_\rho\rho^{1/2}}_2
\le4\sqrt{2\eta}
+(1-\eta)^{-1/2}\sqrt{2(1+L_\eta)u_\eta}
\le14\sqrt{u\Gamma},
\label{eq:combined-cmi-bound-main}
\end{equation}
which is \eqref{eq:main-cmi-bound} for $0<u\le\frac12$ and $C_0=14$. In particular, no lower spectral bound on the original reduced density matrices is assumed.
\end{proof}

\begin{corollary}[Regular many-body regions]
\label{cor:regular-regions-main}
Consider fixed-local-dimension regions $A,B,C$ of comparable linear scale $\ell$ in $s$ spatial dimensions, with $B$ separating $A$ from $C$.  If
\[
|A|,|B|,|C|=O(\ell^s),
\qquad
I(A:C|B)_\rho\le C\ell^s\e^{-c\ell},
\]
then
\begin{equation}
\norm{\Dlt_{A:B:C}\rho_{ABC}^{1/2}}_2
\le C'\ell^s\e^{-c\ell/2}.
\label{eq:regular-defect-main}
\end{equation}
\end{corollary}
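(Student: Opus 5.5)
The plan is to apply \cref{thm:main-cmi} directly and check that the error function $\Lambda_{A:B:C}$ scales as claimed. Fix the local dimension $q$, so that $d_X=q^{|X|}$ and $\log d_X=|X|\log q$. The hypothesis $|A|,|B|,|C|=O(\ell^s)$ then gives
\[
1+\log(d_{AB}d_{BC})=1+(|A|+2|B|+|C|)\log q\le c_1\ell^s
\]
for $\ell\ge1$, with $c_1$ depending only on $q$ and the implicit constants. Write $u:=I(A:C|B)_\rho\le C\ell^s\e^{-c\ell}$.

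There are three cases.

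\emph{Case $u=0$.} Then $\Lambda=0$ and \eqref{eq:regular-defect-main} is trivial.

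\emph{Case $0<u\le\frac12$.} The bound \eqref{eq:Lambda-main} gives
\[
\norm{\Dlt_{A:B:C}\rho_{ABC}^{1/2}}_2\le 14\sqrt{u\,[c_1\ell^s+\log(1/u)]}.
\]
The only non-routine point is the $\log(1/u)$ term, which is unbounded as $u\to0$. It is handled by monotonicity rather than by a lower bound on $u$. The function $g(u):=u\,[a+\log(1/u)]$ with $a\ge1$ has derivative $g'(u)=a-1-\log u>0$ on $(0,1)$, so $g$ is increasing there. We may therefore replace $u$ by the upper bound $\bar u:=\min\{C\ell^s\e^{-c\ell},\tfrac12\}$.

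If $\bar u=C\ell^s\e^{-c\ell}$, then $\log(1/\bar u)=c\ell-\log(C\ell^s)\le c\ell+|\log C|$, which is at most a constant times $\ell^s$ for $s\ge1$. This yields
\[
g(\bar u)\le C\ell^s\e^{-c\ell}\cdot c_2\ell^s,
\]
and hence
\[
\norm{\Dlt_{A:B:C}\rho_{ABC}^{1/2}}_2\le 14\sqrt{Cc_2}\,\ell^s\e^{-c\ell/2}.
\]
If instead $\bar u=\tfrac12$, then $C\ell^s\e^{-c\ell}\ge\tfrac12$, which confines $\ell$ to a bounded range. That range is absorbed exactly as in the next case.

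\emph{Case $u>\frac12$.} This forces $C\ell^s\e^{-c\ell}>\frac12$, which can happen only for $\ell\le\ell_0(C,c,s)$. On this bounded range, the fallback branch gives
\[
\norm{\Dlt_{A:B:C}\rho_{ABC}^{1/2}}_2\le\log(d_{AB}d_{BC}d_Bd_{ABC})=O(\ell^s)\le O(\ell^s)\,\e^{c\ell_0/2}\e^{-c\ell/2}.
\]
Enlarging $C'$ by the factor $\e^{c\ell_0/2}$ therefore covers this case.

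Collecting the three cases gives \eqref{eq:regular-defect-main}, with $C'$ depending on $q,C,c,s$ and the implicit region-size constants. The step needing care is the $\log(1/u)$ factor, which the monotonicity of $g$ resolves. The geometric assumption that $B$ separates $A$ from $C$ plays no role in this argument; it only makes the hypothesis on $u$ physically natural.
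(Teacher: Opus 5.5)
Your proposal is correct and takes the same route as the paper: substitute $\log(d_{AB}d_{BC})=(|A|+2|B|+|C|)\log q=O(\ell^s)$ into \cref{thm:main-cmi}. The paper states this in one line; your argument that $u\,[a+\log(1/u)]$ is increasing in $u$, together with your handling of the bounded-$\ell$ branches ($\bar u=\tfrac12$ and $u>\tfrac12$), fills in the details the paper leaves implicit.
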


\begin{proof}
For local dimension $q$,
\[
\log(d_{AB}d_{BC})
=(|A|+2|B|+|C|)\log q
=O(\ell^s).
\]
Substitution into \cref{thm:main-cmi} proves \eqref{eq:regular-defect-main}.
\end{proof}

\subsection{Approximate \Aone{} condition}
\label{subsec:A1-main}

Physically relevant ground states of gapped local Hamiltonians are expected to satisfy \Aone{} approximately at scales much larger than the correlation length. Therefore, it motivates us to consider the following state family.

\begin{definition}[Uniformly \Aone-admissible state family]
\label{def:A1-admissible-state}
A family of pure states $\boldsymbol\rho=\{\rho_\Lambda\}_\Lambda$ on closed
lattices is \emph{uniformly \Aone-admissible} if there exist a constant
$c_0>0$ and a polynomial $p_1$, independent of $\Lambda$ and of the partition,
such that every \Aone{} violation on the geometry of \cref{fig:intro-geometries}(b) satisfies the following bound.
Here $\ell$ denotes its characteristic length scale, with all relevant
separating-buffer widths taken to be of order $\ell$:
\begin{equation}
0 \le \delta_1^{\rho_\Lambda}(B,C,D)
:=
S_{\rho_\Lambda}(BC)+S_{\rho_\Lambda}(CD)
-S_{\rho_\Lambda}(B)-S_{\rho_\Lambda}(D)
\le p_1(|C|)\e^{-c_0\ell}.
\label{eq:A1-admissible-bound}
\end{equation}
uniformly in the system size, position, shape, and scale.
\end{definition}

For the Hall conductance, we further need to assume the $U(1)$ symmetry, so the state family we consider becomes:
\begin{definition}[Uniformly $U(1)$-symmetric \Aone-admissible state family]
\label{def:U1-A1-admissible-state-phase}
Fix an on-site charge
\begin{equation}
  Q_\Lambda:=\sum_{v\in\Lambda}Q_v,
  \qquad
  \sup_v\norm{Q_v}_\infty\le q_0<\infty.
  \label{eq:onsite-charge-phase}
\end{equation}
A uniformly \Aone-admissible pure-state family
$\boldsymbol\rho=\{\ket{\Psi_\Lambda}\!\bra{\Psi_\Lambda}\}_\Lambda$ is
\emph{uniformly $U(1)$-symmetric} if every representative has sharp total
charge,
\begin{equation}
  Q_\Lambda\ket{\Psi_\Lambda}
  =q_\Lambda\ket{\Psi_\Lambda}.
  \label{eq:sharp-total-charge-phase}
\end{equation}
Equivalently, every reduced state obeys
$[\rho_{\Lambda,X},Q_X]=0$.
\end{definition}

The quantum states considered below are uniformly \Aone-admissible families for the modular commutator and uniformly $U(1)$-symmetric \Aone-admissible families for the Hall conductance estimator.  In particular, any family with a persistent $O(1)$ \Aone{} violation on some local geometry lies outside this class; \cref{app:spurious-A1} gives an explicit cluster-chain example that is common in demonstrating the spurious contribution of topological entanglement entropy and modular commutator \cite{Zou2016,Williamson2019, Gass_2024}. The role of \Aone{} in the deformation argument is to control the CMI of the following  tripartition shown in \cref{fig:principal-Markov-partition}.

\begin{figure}[H]
\centering
\begin{tikzpicture}[line width=0.9pt, font=\normalsize]
  \definecolor{colorA}{HTML}{5C8ECC}
  \definecolor{colorB}{HTML}{81A7DD}
  \definecolor{colorC}{HTML}{A9C4EA}

  \draw[fill=colorA] (0,0) rectangle (1.8,1.0) node[pos=0.5] {$A$};
  \draw[fill=colorB] (1.8,0) rectangle (3.6,1.0) node[pos=0.5] {$B$};
  \draw[fill=colorC] (3.6,0) rectangle (5.4,1.0) node[pos=0.5] {$C$};
\end{tikzpicture}
\caption{The principal Markov partition $A|B|C$.  For any subsystem $A$ of
the purifying complement of the \Aone{} geometry, approximate \Aone{}
controls the conditional mutual information $I(A:C|B)$.}
\label{fig:principal-Markov-partition}
\end{figure}

To see explicitly how \Aone{} supplies the local Markov relation, let $E$ purify the physical regions $BCD$ in \cref{fig:intro-geometries}(b).  The purification gives
\begin{align}
\delta_1(B,C,D)
&=S(BC)+S(CD)-S(B)-S(D)
\notag\\
&=S(BC)+S(BE)-S(B)-S(BCE)
=I(C:E|B).
\label{eq:A1-purification-main}
\end{align}
If $A$ is any subsystem of the purifying complement $E$, tracing out
$E\setminus A$ and applying data processing for conditional mutual
information gives
\begin{equation}
I(A:C|B)\le \delta_1(B,C,D)
\le p_1(|C|)\e^{-c_0\ell}.
\label{eq:A1-controls-CMI-main}
\end{equation}
Thus approximate \Aone{} supplies the small-CMI input used in the deformation
bounds below.

\subsection{Deformation invariance of the modular commutator}
\label{subsec:J-main}

Before analyzing the individual deformation moves, we establish a general
bound showing that small CMI implies a small modular commutator.  The
deformation proof below will apply this bound to the tripartitions
associated with each local move.

\begin{lemma}[Small CMI implies a small modular commutator]
\label{lem:small-J-main}
Let $\rho_{ABC}$ be a finite-dimensional state and set
$u:=I(A:C|B)_\rho$.  Then
\begin{equation}
\abs{J(A,B,C)_\rho}
\le2\min\{\log d_{AB},\log d_{BC}\}\,\Lambda_{A:B:C}(u).
\label{eq:small-J-main}
\end{equation}
\end{lemma}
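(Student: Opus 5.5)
The plan is to rewrite $J$ so that the modular Markov defect appears explicitly, and then apply Cauchy--Schwarz in the right-GNS inner product together with \cref{thm:main-cmi} and the universal moment bound \eqref{eq:universal-modular-moment-main}.

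Start from the definition of the defect, $K_{BC}=\Dlt_{A:B:C}+K_B+K_{ABC}-K_{AB}$. Substituting into $J=i\Tr\rho_{ABC}[K_{AB},K_{BC}]$, the four terms behave as follows:
\begin{itemize}
\item $[K_{AB},K_{AB}]=0$.
\item $\Tr\rho_{ABC}[K_{AB},K_{ABC}]=0$, because $\rho_{ABC}$ commutes with $K_{ABC}$ (on the support; in the nonfaithful case one uses $K_{ABC}\rho=\rho K_{ABC}$ with the support convention).
\item $\Tr\rho_{ABC}[K_{AB},K_B]=\Tr\rho_{AB}[K_{AB},K_B]=0$, by tracing out $C$ and using the same commutation for $\rho_{AB}$.
\end{itemize}
Hence
\[
J(A,B,C)_\rho=i\Tr\rho_{ABC}[K_{AB},\Dlt_{A:B:C}].
\]

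Next, write $\Tr\rho[K_{AB},\Dlt]=\Tr\rho K_{AB}\Dlt-\Tr\rho\Dlt K_{AB}$. Since both operators are Hermitian, this equals $2i\,\mathrm{Im}\Tr(\rho K_{AB}\Dlt)$. Splitting $\rho=\rho^{1/2}\rho^{1/2}$ gives the Hilbert--Schmidt pairing $\Tr\bigl((K_{AB}\rho^{1/2})^\dagger(\Dlt\rho^{1/2})\bigr)$, so Cauchy--Schwarz yields
\[
|J|\le2\norm{K_{AB}\rho_{ABC}^{1/2}}_2\,\norm{\Dlt\rho_{ABC}^{1/2}}_2 .
\]
For the first factor, $\norm{K_{AB}\rho_{ABC}^{1/2}}_2^2=\Tr\rho_{AB}K_{AB}^2\le(\log d_{AB})^2$ by \eqref{eq:universal-modular-moment-main}. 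The second factor is at most $\Lambda_{A:B:C}(u)$ by \cref{thm:main-cmi}. Together these give $|J|\le2\log d_{AB}\,\Lambda_{A:B:C}(u)$.

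To obtain the $\log d_{BC}$ alternative, run the symmetric argument with $K_{AB}=\Dlt+K_B+K_{ABC}-K_{BC}$ substituted into the first slot. The terms $[K_{BC},K_{BC}]$, $[K_{ABC},K_{BC}]$ and $[K_B,K_{BC}]$ again vanish in expectation, now by tracing out $A$. This leaves $J=i\Tr\rho[\Dlt,K_{BC}]$, bounded the same way by $2\log d_{BC}\,\Lambda$. Taking the minimum of the two bounds gives \eqref{eq:small-J-main}.

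The main subtlety is the nonfaithful case. There one must check that the vanishing-commutator identities and the interpretation of $K_X$ on $\supp\rho_X$ remain valid, i.e. that $\rho_{ABC}^{1/2}$ annihilates the parts of $K_{AB}$ and $K_B$ lying off the marginal supports. This holds because $\supp\rho_{ABC}\subseteq\supp\rho_{AB}\otimes\cH_C$. Everything else is a routine algebraic expansion.
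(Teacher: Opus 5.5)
Your proof is correct and follows the paper's argument. Both proofs rewrite $J$ as $i\Tr\rho[K_{AB},\Dlt_{A:B:C}]$ or $i\Tr\rho[\Dlt_{A:B:C},K_{BC}]$ using the two substitutions of the defect identity, remove the nested-support terms by cyclicity, and then apply the weighted Cauchy--Schwarz commutator bound with \eqref{eq:universal-modular-moment-main} and \cref{thm:main-cmi}, taking the smaller of the two prefactors. Your remark that $\supp\rho_{ABC}\subseteq\supp\rho_{AB}\otimes\cH_C$ makes the vectors $K_{AB}\rho_{ABC}^{1/2}$ independent of the kernel convention is a small point the paper leaves implicit.
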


\begin{proof}
The following form of commutator expectation value can be bounded with the triangle inequality and Cauchy--Schwarz for the Hilbert--Schmidt inner product:
\begin{align}
\abs{i\Tr\rho[P,Q]}
&\le
\abs{\Tr\bigl[(P\rho^{1/2})^\dagger(Q\rho^{1/2})\bigr]}
+\abs{\Tr\bigl[(Q\rho^{1/2})^\dagger(P\rho^{1/2})\bigr]}
\notag
\\
&\le2\norm{P\rho^{1/2}}_2\norm{Q\rho^{1/2}}_2.
\label{eq:weighted-comm-main}
\end{align}
Here $P,Q$ are Hermitian.  The universal modular second-moment bound
\eqref{eq:universal-modular-moment-main} gives
$\norm{K_X\rho^{1/2}}_2\le\log d_X$.  Substitute
$K_{AB}=K_{ABC}-K_{BC}+K_B+\Dlt_{A:B:C}$ into the definition of $J$.  The terms containing $K_{ABC}$, $K_{BC}$, and $K_B$ have zero commutator expectation by cyclicity and nested support, leaving
\begin{equation}
J(A,B,C)_\rho
=i\Tr\rho[\Dlt_{A:B:C},K_{BC}]
=i\Tr\rho[K_{AB},\Dlt_{A:B:C}].
\label{eq:J-defect-main}
\end{equation}
The second equality follows by instead substituting
$K_{BC}=K_{ABC}-K_{AB}+K_B+\Dlt_{A:B:C}$; the additional terms again
have zero commutator expectation.
Applying \eqref{eq:weighted-comm-main},
\eqref{eq:universal-modular-moment-main}, and \cref{thm:main-cmi} to the
two expressions in \eqref{eq:J-defect-main} gives the bounds with
prefactors $2\log d_{BC}$ and $2\log d_{AB}$, respectively.
Taking their minimum proves \eqref{eq:small-J-main}.
\end{proof}

We next pass to a finite deformation.  The proof below decomposes the
deformation into elementary moves and controls each change using the
weighted commutator inequality underlying \cref{lem:small-J-main}.
Summing these one-step errors gives the desired bound.

\begin{theorem}[Approximate invariance of the modular commutator]
\label{thm:main-J}
Let $\ket\psi$ be a finite-dimensional pure state satisfying approximate \Aone{} uniformly on all local geometries used in a topology-preserving deformation between tripartitions $\mathcal G_0$ and $\mathcal G_N$.  Resolve the deformation into endpoint moves, middle-region moves, and exact complementary-region flips, as illustrated in \cref{fig:J-deformation-moves}.  Then
\begin{equation}
\abs{J(\mathcal G_N)-J(\mathcal G_0)}
\le\sum_{j=1}^{N}E_j,
\label{eq:J-global-main}
\end{equation}
where each $E_j$ is the explicit one-step error in \cref{eq:endpoint-main,eq:middle-main}.  If the \Aone{} violations obey \eqref{eq:A1-admissible-bound}, region volumes and the number of moves are polynomial in $\ell$, then
\begin{equation}
\boxed{
\abs{J(\mathcal G_N)-J(\mathcal G_0)}
\le p_J(\ell)\e^{-c_0\ell/2}
}
\label{eq:J-exp-main}
\end{equation}
for a polynomial $p_J$.
\end{theorem}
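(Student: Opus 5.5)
The approach is to telescope over the elementary moves. Write $J(\mathcal G_N)-J(\mathcal G_0)=\sum_{j=1}^N\bigl(J(\mathcal G_j)-J(\mathcal G_{j-1})\bigr)$ and bound each increment by an explicit $E_j$. This gives \eqref{eq:J-global-main} by the triangle inequality.

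There are three kinds of move.

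\emph{Complementary-region flips.} For a pure global state, $K_X$ and $K_{X^c}$ have the same nonzero spectrum, and their expectations agree through the Schmidt decomposition. So I would use the standard identity $J(A,B,C)_\psi=J(C,B,A)_\psi$ up to sign on complements, i.e. $K_X\ket\psi=K_{X^c}\ket\psi$ on the support. This gives $E_j=0$ exactly; these moves only reposition the partition.

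\emph{Endpoint moves.} Here a small patch $a$ (of size $O(\ell)$ with a buffer of width $\ell$) is transferred, say $A\to Aa$ with $a$ taken from the exterior. Only $K_{AB}$ changes, to $K_{AaB}$. Then
\[
J(Aa,B,C)-J(A,B,C)=i\Tr\rho[K_{AaB}-K_{AB},K_{BC}].
\]
I would decompose
\[
K_{AaB}-K_{AB}=K_{aB'}-K_{B'}+\Dlt',
\]
where $B'\subset B$ is the buffer separating $a$ from $A$ and $\Dlt'$ is the modular Markov defect of the tripartition $a|B'|(A\cup B\setminus B')$. A second defect is needed to bring in $K_{BC}$. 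The terms $K_{aB'}-K_{B'}$ are supported away from $C$, so after a further defect substitution of the type in \eqref{eq:J-defect-main} their commutator with $K_{BC}$ reduces to nested-support terms with vanishing expectation. That leaves only defect terms, and \eqref{eq:weighted-comm-main}, \eqref{eq:universal-modular-moment-main} and \cref{thm:main-cmi} bound them:
\[
E_j\le 2\log d_{BC}\,\bigl[\Lambda(u_j)+\Lambda(u_j')\bigr].
\]
The CMIs $u_j,u_j'$ are controlled by \eqref{eq:A1-controls-CMI-main} on an \Aone{} geometry whose purifying complement contains the relevant far region.

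\emph{Middle-region moves.} A patch $b$ is moved between $B$ and one of $A$ or $C$, so both $K_{AB}$ and $K_{BC}$ change. I would write the difference as a sum of two one-sided changes and treat each as in the endpoint case, with the other modular Hamiltonian fixed. This gives an $E_j$ of the same form, with prefactors $\log d_{AB}$ or $\log d_{BC}$.

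The main obstacle is the algebraic step showing that, after substitution, every non-defect term has zero commutator expectation. This is where nested support and the $B$-buffer geometry must match an \Aone{} configuration, as in \cref{fig:intro-geometries}(b), with $C$ the patch and $B,D$ its annular sides. The approach is to choose each move small enough that the patch and its annular neighborhood form exactly such a geometry, and to place the remainder of the partition inside the purifying complement $E$.

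For \eqref{eq:J-exp-main}, each $u\le p_1(|C|)\e^{-c_0\ell}$, which is $\le\tfrac12$ for large $\ell$. Then
\[
\Lambda(u)\le C_0\sqrt{u\,[1+\log(d d)+\log(1/u)]},
\]
where $\log d=O(\mathrm{poly}(\ell))$ and $\log(1/u)\le c_0\ell+O(\log\ell)$ because $\sqrt{u\log(1/u)}$ is increasing for small $u$. Hence $\Lambda(u)\le\mathrm{poly}(\ell)\,\e^{-c_0\ell/2}$. Multiplying by the $O(\mathrm{poly}(\ell))$ prefactor $\log d_{BC}$ and summing over polynomially many moves gives $p_J(\ell)\e^{-c_0\ell/2}$, as in \cref{cor:regular-regions-main}.
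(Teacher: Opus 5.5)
Your telescoping framework is fine, and so is the final conversion of $\Lambda(u)$ with $u\le p_1(|C|)\e^{-c_0\ell}$ into $\operatorname{poly}(\ell)\,\e^{-c_0\ell/2}$. However, the endpoint move, which is the core one-step estimate, is built on the wrong Markov chain.

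You condition on a buffer $B'\subset B$ ``separating $a$ from $A$,'' that is, on the tripartition $a|B'|(A\cup B\setminus B')$. For $A\to Aa$ to be a topology-preserving local move, $a$ must touch $A$, so no part of $B$ can separate them. The CMI you need then satisfies $I(a:A(B\setminus B')|B')\ge I(a:A|B')$. For adjacent regions this is of the order of the entanglement across their shared boundary, not small. No \Aone{} geometry controls it either, because $A$ touches the patch and cannot lie in the purifying complement.

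The missing idea is to condition on the region the patch joins. In the chain geometry $a|A|B|C$, where $A$ shields $a$ from $BC$, write $K_{aAB}=K_{aA}+K_{AB}-K_A-\Dlt_{a:A:B}$. Since $aA$ and $A$ are disjoint from $BC$, their commutators with $K_{BC}$ vanish identically. No second defect is needed, and you get the exact identity \eqref{eq:endpoint-identity-main}, hence $E_j=2\log d_{BC}\,\Lambda_{a:A:B}(I(a:B|A))$. Here $I(a:B|A)$ is controlled by \eqref{eq:A1-controls-CMI-main}, with $a$ as the disk, $A$ supplying the conditioning side of its annulus, and $B$ in the purifying complement.

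There are two smaller problems.

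First, your middle move is mis-described. Reassigning $b$ between $A$ and $B$ leaves $K_{AB}$ unchanged; it is really an endpoint-type move giving $-i\langle[K_{AB},\Dlt_{b:B:C}]\rangle$. Both modular Hamiltonians change only when $b$ enters $B$ from the complement, as in the paper. In that case, splitting into two one-sided changes leaves the cross term $i\langle[K_{AB},K_{Bb}]\rangle=J(A,B,b)$. This is not a nested-support term, so it does not vanish. It must be rewritten as $i\langle[\Dlt_{A:B:b},K_{Bb}]\rangle$, after which you obtain a bound of the form \eqref{eq:middle-main}.

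Second, the flip $J(A,B,C)=J(C,B,A)$ is false, since antisymmetry gives $J(C,B,A)=-J(A,B,C)$. The correct use of $K_X\ket\psi=K_{X^c}\ket\psi$ is $K_{AB}\ket\psi=K_{CD}\ket\psi$ and $K_{BC}\ket\psi=K_{DA}\ket\psi$, with $D$ the complement of $ABC$. This yields \eqref{eq:J-flip-main}. These flips are exactly what reduce junction patches to endpoint or middle moves, a case your proposal does not otherwise handle.
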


\begin{figure}[ht]
\centering
\begin{tikzpicture}[line cap=round,line join=round]
  \def\Jdisk{%
    \def\R{1.15}%
    \fill[blue!12] (0,0)--(30:\R) arc (30:150:\R)--cycle;
    \fill[red!11] (0,0)--(150:\R) arc (150:270:\R)--cycle;
    \fill[yellow!22] (0,0)--(270:\R) arc (270:390:\R)--cycle;
    \draw[thick] (0,0) circle (\R);
    \draw (0,0)--(30:\R);
    \draw (0,0)--(150:\R);
    \draw (0,0)--(-90:\R);
    \node at (90:0.55) {$B$};
    \node at (-30:0.55) {$A$};
    \node at (-150:0.55) {$C$};
  }

  \begin{scope}[xshift=-4.5cm,yshift=1.65cm]
    \filldraw[fill=orange!32,draw=orange!70!black,thick]
      (-30:1.15) circle (0.28);
    \Jdisk
    \node at (-30:1.31) {$a$};
    \node[align=center,font=\scriptsize] at (0,-1.55)
      {\textbf{(a)} Endpoint move\\$A\to aA$};
  \end{scope}

  \begin{scope}[yshift=1.65cm]
    \filldraw[fill=orange!32,draw=orange!70!black,thick]
      (-150:1.15) circle (0.28);
    \Jdisk
    \node at (-150:1.31) {$c$};
    \node[align=center,font=\scriptsize] at (0,-1.55)
      {\textbf{(b)} Endpoint move\\$C\to Cc$};
  \end{scope}

  \begin{scope}[xshift=4.5cm,yshift=1.65cm]
    \filldraw[fill=orange!32,draw=orange!70!black,thick]
      (90:1.15) circle (0.28);
    \Jdisk
    \node at (90:1.31) {$b$};
    \node[align=center,font=\scriptsize] at (0,-1.55)
      {\textbf{(c)} Middle move\\$B\to Bb$};
  \end{scope}

  \begin{scope}[xshift=-2.3cm,yshift=-2.15cm]
    \filldraw[fill=orange!32,draw=orange!70!black,thick]
      (30:1.15) circle (0.30);
    \Jdisk
    \node at (30:1.32) {$b$};
    \node[align=center,font=\scriptsize] at (0,-1.55)
      {\textbf{(d)} Junction move \\$B\to Bb$};
  \end{scope}

  \begin{scope}[xshift=2.3cm,yshift=-2.15cm]
    \def\R{1.15}
    \fill[blue!12] (0,0)--(30:\R) arc (30:150:\R)--cycle;
    \fill[red!11] (0,0)--(150:\R) arc (150:270:\R)--cycle;
    \fill[yellow!22] (0,0)--(270:\R) arc (270:390:\R)--cycle;
    \begin{scope}
      \clip (0,0)--(-90:\R)
        arc[start angle=-90,end angle=30,radius=\R]--cycle;
      \filldraw[fill=orange!32,draw=orange!70!black,thick]
        (-30:0.08) circle (0.48);
    \end{scope}
    \draw[thick] (0,0) circle (\R);
    \draw (0,0)--(30:\R);
    \draw (0,0)--(150:\R);
    \draw (0,0)--(-90:\R);
    \node at (90:0.55) {$B$};
    \node at (-150:0.55) {$C$};
    \node at (-30:0.83) {$A$};
    \node at (-30:0.25) {$b$};
    \node[align=center,font=\scriptsize] at (0,-1.55)
      {\textbf{(e)} Interior move \\$B\to Bb$};
  \end{scope}
\end{tikzpicture}
\caption{Representative local geometries in the deformation proof for the modular commutator.  Panels (a) and (b) add an endpoint patch while keeping $BC$ or $AB$, respectively, fixed.  Panel (c) adds a patch to the middle region, buffered from both endpoints.  Junction patches such as (d) are converted by an exact complementary-region flip into an endpoint or middle geometry represented by (e).  The orange patch lies in the complementary region $D$ before reassignment; cyclic rotations cover the remaining cases.}
\label{fig:J-deformation-moves}
\end{figure}

\begin{proof}
For the endpoint move in \cref{fig:J-deformation-moves}(a), attach a patch $a$ to $A$ in a chain-like geometry $a|A|B|C$. We have the following decomposition
\begin{equation}
K_{aAB}=K_{aA}+K_{AB}-K_A-\Dlt_{a:A:B}.
\end{equation}
Since both $K_{aA}$ and $K_A$ have support disjoint from $BC$, substituting this decomposition into $J(aA,B,C)$ gives the exact difference
\begin{equation}
J(aA,B,C)-J(A,B,C)
=-i\langle[\Dlt_{a:A:B},K_{BC}]\rangle.
\label{eq:endpoint-identity-main}
\end{equation}
The weighted commutator inequality \eqref{eq:weighted-comm-main}, the modular
second-moment bound \eqref{eq:universal-modular-moment-main}, and
\cref{thm:main-cmi} therefore give
\begin{equation}
\abs{J(aA,B,C)-J(A,B,C)}
\le2\log d_{BC}\Lambda_{a:A:B}\!\left(I(a:B|A)\right).
\label{eq:endpoint-main}
\end{equation}

For the middle move in \cref{fig:J-deformation-moves}(c), add a patch $b$ to the separating region $B$ and assume that it is buffered from both endpoints.  Set
\begin{equation}
u_L:=I(A:b|B),
\qquad
u_R:=I(b:C|B).
\label{eq:middle-u-main}
\end{equation}
The modular Markov defects for the two tripartitions satisfy the exact
operator identity
\begin{equation}
\Dlt_{A:B:b}-\Dlt_{b:B:C}
=-K_{ABb}+K_{AB}+K_{BbC}-K_{BC}.
\label{eq:delta-difference-main}
\end{equation}
Expanding the two modular commutators and using
$\langle[K_X,K_Y]\rangle=0$ whenever $X\subseteq Y$ gives
\begin{equation}
J(A,Bb,C)-J(A,B,C)
=-i\left\langle
[\Dlt_{A:B:b}-\Dlt_{b:B:C},K_{AB}+K_{BbC}]
\right\rangle.
\label{eq:middle-identity-main}
\end{equation}
The same identity may be evaluated with $K_{ABb}+K_{BC}$ in place of $K_{AB}+K_{BbC}$ after the nested-support terms are removed. Applying \eqref{eq:weighted-comm-main} separately to $\Dlt_{A:B:b}$ and $\Dlt_{b:B:C}$, and then applying \cref{thm:main-cmi}, gives
\begin{equation}
\begin{aligned}
\abs{J(A,Bb,C)-J(A,B,C)}
\le{}&2\min\{\log(d_{AB}d_{BbC}),\,\log (d_{ABb}d_{BC})\}
\Bigl[
\Lambda_{A:B:b}(u_L)+\Lambda_{b:B:C}(u_R)
\Bigr].
\end{aligned}
\label{eq:middle-main}
\end{equation}

For a pure global state, complementary modular Hamiltonians act identically on the state.  This yields the exact support flips illustrated in \cref{fig:J-deformation-moves}(d) and (e), such as
\begin{equation}
J(A,B,C)=J(D,C,B)=J(C,D,A),
\label{eq:J-flip-main}
\end{equation}
which reduce the remaining junction moves to endpoint or middle moves.  Applying \eqref{eq:endpoint-main} or \eqref{eq:middle-main} at each step and telescoping the scalar differences gives \eqref{eq:J-global-main}.  Under \eqref{eq:A1-admissible-bound}, each local $\Lambda$ is bounded by a polynomial times $\e^{-c_0\ell/2}$; polynomially many moves and polynomial modular second moments therefore give \eqref{eq:J-exp-main}.
\end{proof}

\subsection{Deformation invariance of the Hall conductance estimator}
\label{subsec:Sigma-main}

For the Hall conductance estimator, assume exact on-site $U(1)$ symmetry:
\begin{equation}
Q=\sum_v Q_v,
\qquad
Q\ket\Psi=q\ket\Psi,
\qquad
Q_X:=\sum_{v\in X}Q_v.
\label{eq:U1-main}
\end{equation}

\begin{theorem}[Approximate invariance of the Hall conductance estimator]
\label{thm:main-Sigma}
Let $\ket\Psi$ be a finite-dimensional pure state satisfying the exact
on-site $U(1)$ symmetry in \eqref{eq:U1-main} and approximate \Aone{}
uniformly on all local geometries used in a topology- and
orientation-preserving deformation between tripartitions $\mathcal G_0$ and
$\mathcal G_N$.  Resolve the deformation into endpoint moves, middle-region
moves, and exact complementary-region flips, as illustrated in
\cref{fig:J-deformation-moves}.  Then
\begin{equation}
\abs{\Sigma(\mathcal G_N)-\Sigma(\mathcal G_0)}
\le\sum_{j=1}^{N}E_j^\Sigma,
\label{eq:Sigma-global-main}
\end{equation}
where, for the ordered tripartition $X_j|Y_j|Z_j$ associated with move $j$,
\begin{equation}
E_j^\Sigma
:=
\Lambda_{X_j:Y_j:Z_j}\!\left(I(X_j:Z_j|Y_j)\right)
\norm{Q_j\ket\Psi}_2,
\label{eq:Sigma-one-step-error}
\end{equation}
and $Q_j$ is the corresponding quadratic regional-charge polynomial.  If
the \Aone{} violations obey \eqref{eq:A1-admissible-bound}, the on-site charges are
uniformly bounded, and region volumes and the number of moves are polynomial
in $\ell$, then
\begin{equation}
\boxed{
\abs{\Sigma(\mathcal G_N)-\Sigma(\mathcal G_0)}
\le p_\Sigma(\ell)\e^{-c_0\ell/2}.
}
\label{eq:Sigma-exp-main}
\end{equation}
for a polynomial $p_\Sigma$.
\end{theorem}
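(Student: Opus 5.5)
The plan mirrors the proof of \cref{thm:main-J}, with $K_{BC}$ replaced by the regional charge operator $Q_{BC}^2$ wherever the second slot of the commutator appears. The additional structure we need comes from $U(1)$ symmetry. Since $[\rho_X,Q_X]=0$ for every region $X$ (\cref{def:U1-A1-admissible-state-phase}), every modular Hamiltonian $K_X$ commutes with $Q_X$. Hence $\langle[K_X,F(Q_Y)]\rangle=0$ whenever $X\subseteq Y$: by cyclicity, $\Tr\rho_Y[K_X,F(Q_Y)]$ equals $\Tr[F(Q_Y),\rho_Y]K_X$ after expanding, and $\rho_Y$ commutes with $Q_Y$. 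When $Y\subseteq X$, the term $Q_Y$ splits into $Q_X$ minus a charge supported on $X\setminus Y$. For the pure global state we can also use $Q_X\ket\Psi=(q-Q_{X^c})\ket\Psi$ to flip supports. These nested-support cancellations replace the identity $\langle[K_X,K_Y]\rangle=0$ used for $J$.

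\emph{Endpoint move} $A\to aA$. Substitute $K_{aAB}=K_{aA}+K_{AB}-K_A-\Dlt_{a:A:B}$ into $\Sigma(aA,B,C)$. The operators $K_{aA}$ and $K_A$ are supported disjointly from $BC$, so they commute with $Q_{BC}^2$. This gives the exact identity
\[
\Sigma(aA,B,C)-\Sigma(A,B,C)=-\tfrac i2\langle[\Dlt_{a:A:B},Q_{BC}^2]\rangle .
\]
The weighted commutator inequality \eqref{eq:weighted-comm-main} and \cref{thm:main-cmi} then bound this by $\Lambda_{a:A:B}(I(a:B|A))\,\norm{Q_{BC}^2\ket\Psi}_2$, which is \eqref{eq:Sigma-one-step-error} with $Q_j=Q_{BC}^2$ (the factor $\tfrac12$ cancels the $2$). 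For $C\to Cc$, the modular Hamiltonian in the commutator is unchanged. The change is $\tfrac i2\langle[K_{AB},Q_{BCc}^2-Q_{BC}^2]\rangle$, and the new term is $Q_c^2+2Q_cQ_{BC}$. I plan to use the symmetry constraint together with the decomposition of $K_{BCc}$ against $\Dlt_{B:C:c}$ to move the defect onto the charge side. A cleaner route is to first apply a complementary flip, rewriting $\Sigma$ with the roles of the modular and charge slots exchanged, so that this becomes an $A$-type endpoint move with $Q_j$ a quadratic polynomial in regional charges. I expect to need this flip.

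\emph{Middle move} $B\to Bb$. I use the identity \eqref{eq:delta-difference-main}, expand both estimators, and drop the nested-support terms. This leaves commutators of $\Dlt_{A:B:b}$ and $\Dlt_{b:B:C}$ with quadratic charge polynomials such as $Q_{BbC}^2$ and $Q_{BC}^2$. Each is bounded as above, with $Q_j$ the corresponding polynomial. Junction moves are reduced to these cases by exact complementary flips, which follow from purity and charge sharpness ($Q_X\ket\Psi=(q-Q_{X^c})\ket\Psi$, $K_X\ket\Psi=K_{X^c}\ket\Psi$). These flips must preserve orientation, which is where the orientation hypothesis enters. Telescoping the moves gives \eqref{eq:Sigma-global-main}.

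For \eqref{eq:Sigma-exp-main}, note that $\norm{Q_j\ket\Psi}_2\le\norm{Q_j}_\infty\le (q_0|Y\cup Z|)^2$, which is polynomial in $\ell$. By \eqref{eq:A1-controls-CMI-main}, each CMI is at most $p_1e^{-c_0\ell}$, so \cref{cor:regular-regions-main} makes every $\Lambda\le \mathrm{poly}(\ell)e^{-c_0\ell/2}$. With polynomially many moves, the sum is $p_\Sigma(\ell)e^{-c_0\ell/2}$. The main obstacle is the bookkeeping of the charge-slot endpoint move and of the flips. Unlike $K$, the operator $Q^2$ is not additive, so the cross terms $2Q_cQ_{BC}$ must be shown to cancel exactly using $[\rho_X,Q_X]=0$ before the defect bound is applied.
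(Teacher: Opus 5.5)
\emph{Genuine gap:} the non-additive cross terms do not cancel. Your $A\to aA$ identity and the final scaling argument are correct, but for $C\to Cc$ (with $c$ disjoint from $AB$) only the pieces $Q_c^2$ and $Q_cQ_C$ commute with $K_{AB}$. What remains is
\[
\Sigma(A,B,Cc)-\Sigma(A,B,C)=i\langle[K_{AB},Q_BQ_c]\rangle=\Sigma(A,B,c).
\]
This is not a nested-support term: using $[K_{AB},Q_{AB}]=0$ it equals $-i\langle[K_{AB},Q_AQ_c]\rangle$, and it does not vanish in general.

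The missing idea is that this cross term is itself the Hall conductance estimator of the buffered auxiliary tripartition $A|B|c$, and is therefore a defect commutator. Substitute $K_{AB}=K_{ABc}-K_{Bc}+K_B+\Dlt_{A:B:c}$. The $K_{ABc}$ and $K_{Bc}$ terms drop because $\Tr\rho_W[K_W,O]=0$ for $O$ supported in $W$. The $K_B$ term drops because $[K_B,Q_BQ_c]=[K_B,Q_B]Q_c=0$ by symmetry. This leaves $\Sigma(A,B,c)=i\langle[\Dlt_{A:B:c},Q_BQ_c]\rangle$, which is the identity \eqref{eq:Hall-Markov-main} applied to $A|B|c$. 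It is bounded by $2\Lambda_{A:B:c}(I(A:c|B))\norm{Q_BQ_c\ket\Psi}_2$. So symmetry removes only the non-defect parts; it does not kill the cross term.

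Neither of your proposed routes supplies this step. $K_{BCc}$ never occurs in $\Sigma$, so decomposing it against $\Dlt_{B:C:c}$ is irrelevant. There is also no exact flip exchanging the modular and charge slots. Purity and charge sharpness only give $\Sigma(A,B,C)=\Sigma(D,C,B)=\Sigma(C,D,A)$ with $D=\overline{ABC}$. Under these flips, $C\to Cc$ becomes a reassignment of $c$ between two slots with the \emph{same} modular Hamiltonian, and its entire change is again exactly this cross term.

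The same omission breaks your middle move. The identity \eqref{eq:delta-difference-main} introduces $K_{BbC}$ and $K_{BC}$, which are absent from $\Sigma$. Inserting them is harmless, since their commutators with $Q_{BbC}^2$ vanish. But after discarding nested-support terms you are still left with $\tfrac i2\langle[K_{AB},Q_{BbC}^2-Q_{BC}^2]\rangle=\Sigma(A,B,b)$. This is not a defect commutator of the form you describe, and it again needs the conversion above, now for $A|B|b$.

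Doing this directly with $K_{ABb}=K_{AB}+K_{Bb}-K_B-\Dlt_{A:B:b}$ gives the single-defect identity
\[
\Sigma(A,Bb,C)-\Sigma(A,B,C)=-\tfrac i2\langle[\Dlt_{A:B:b},Q_{BbC}^2-2Q_BQ_b]\rangle,
\]
so $\Dlt_{b:B:C}$ is not needed at all. Once every cross term is routed through $\Sigma(X,Y,Z)=i\langle[\Dlt_{X:Y:Z},Q_YQ_Z]\rangle$ for an auxiliary buffered tripartition, each elementary move, including the junction moves, takes the form $-\tfrac i2\langle[\Dlt_m,R_m]\rangle$. Your telescoping and scaling steps then go through.
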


\begin{proof}
The charge and modular identities used here are collected in
\cref{fact:basic} of \cref{app:Hall}.  In particular, expanding
$Q_{BC}^2=(Q_B+Q_C)^2$, the
$Q_B^2$ and $Q_C^2$ terms have vanishing commutator expectation, so
\begin{equation}
\Sigma(A,B,C)
=i\langle[K_{AB},Q_BQ_C]\rangle.
\label{eq:Sigma-cross-main}
\end{equation}
Substituting the modular Markov defect gives the exact identity
\begin{equation}
\Sigma(A,B,C)
=i\langle[\Dlt_{A:B:C},Q_BQ_C]\rangle.
\label{eq:Hall-Markov-main}
\end{equation}
Therefore
\begin{equation}
\abs{\Sigma(A,B,C)}
\le2\Lambda_{A:B:C}\!\left(I(A:C|B)\right)
\norm{Q_BQ_C\ket\Psi}_2.
\label{eq:Hall-small-main}
\end{equation}
This identity is not intended to show that the Hall conductance estimator vanishes in the topological geometry: the CMI entering a local Markov move is associated with a buffered auxiliary tripartition, whereas the response partition itself can retain a nonzero topological value.

The exact Hall conductance estimator deformation proof uses the same patch geometries as the modular-commutator proof: endpoint and buffered middle moves are represented by \cref{fig:J-deformation-moves}(a)--(c), while junction moves are reduced by the complementary-region flips in panels (d) and (e).  Every one-step difference has the form
\begin{equation}
\Sigma(\mathcal G')-\Sigma(\mathcal G)
=-\frac{i}{2}\langle[\Dlt_m,R_m]\rangle,
\label{eq:Sigma-generic-main}
\end{equation}
where $\Dlt_m$ is a local modular Markov defect and $R_m$ is a quadratic polynomial in regional charges.  Consequently,
\begin{equation}
\abs{\Sigma(\mathcal G')-\Sigma(\mathcal G)}
\le\Lambda_m(u_m)\norm{R_m\ket\Psi}_2.
\label{eq:Sigma-step-main}
\end{equation}

The six representative charge polynomials are listed and derived in \cref{app:Hall}.  If $\norm{Q_v}_\infty\le q_0$, then $\norm{Q_X}_\infty\le q_0|X|$, so every $\norm{R_m}_\infty$ grows at most polynomially with the region sizes.  Approximate \Aone{} controls each local CMI $u_m$; summing \eqref{eq:Sigma-step-main} proves \cref{thm:main-Sigma}.
\end{proof}

\section{Stability of modular commutator and Hall conductance estimator within a quantum phase}
\label{sec:J-phase-equivalence}

The approximate deformation invariance shown in \cref{thm:main-J,thm:main-Sigma} compare different partitions of one fixed state.  To promote these results to phase invariants, one must also compare the modular responses of two different representative states.  Two additional ingredients are needed.  First, local trace-norm closeness must imply closeness of both the modular commutator and the Hall conductance estimator without any lower bound on the reduced-state spectra. Second, the phase relation must admit spatially truncated evolutions that create a bubble of one representative in the background of the other while
preserving approximate \Aone{} uniformly, including at the bubble wall; for the Hall conductance estimator, the full and truncated evolutions must additionally preserve the on-site $U(1)$ symmetry exactly. This is analagous to the topological mixed state phase equivalence definition but adapted to quasi-local unitaries \cite{yang2025topologicalmixedstatesphases}.  We formulate and prove these statements below.

\subsection{Small trace distance implies small changes of $J$ and $\Sigma$}
\label{subsec:J-trace-continuity-phase}

We first rewrite both responses as imaginary parts of inner products.  Let $\ket\psi$ be a purification of $\rho_{ABC}$.  For operators supported on $ABC$, their expectation in $\ket\psi$ equals their expectation in $\rho_{ABC}$.  Since the modular Hamiltonians are Hermitian,
\begin{align}
J(A,B,C)_\rho
&=i\left(
\braket{K_{AB}(\rho)\psi}{K_{BC}(\rho)\psi}
-
\braket{K_{BC}(\rho)\psi}{K_{AB}(\rho)\psi}
\right)
\notag\\
&=-2\,\operatorname{Im}
\braket{K_{AB}(\rho)\psi}{K_{BC}(\rho)\psi}.
\label{eq:J-quadratic-form-phase}
\end{align}
For the Hall conductance estimator, \eqref{eq:Sigma-cross-main} gives
\begin{align}
\Sigma(A,B,C)_\rho
&=i\left(
\braket{K_{AB}(\rho)\psi}{Q_BQ_C\psi}
-
\braket{Q_BQ_C\psi}{K_{AB}(\rho)\psi}
\right)
\notag\\
&=-2\,\operatorname{Im}
\braket{K_{AB}(\rho)\psi}{Q_BQ_C\psi}.
\label{eq:Sigma-quadratic-form-phase}
\end{align}
To decompose the purification $\ket{\psi}$ on $X$ and $\bar X$, we define orthonormal bases $\{\ket{i}_X\}$ and $\{\ket{\alpha}_{\bar X}\}$ and write
\[
\ket\psi=\sum_{i,\alpha}T_{i\alpha}\ket{i}_X\ket{\alpha}_{\bar X}.
\]
$T=(T_{i\alpha})$ can be viewed as a linear map from $\cH_{\bar X}$ to $\cH_X$, is called the coefficient matrix of $\ket\psi$ across the cut $X:\bar X$.  Since $\rho_X=TT^\dagger$, we can identify the vector norm of $\ket\psi$ with the Hilbert--Schmidt norm of $T$.

\begin{theorem}[Spectrum-independent continuity of a modular vector]
\label{thm:modular-vector-continuity-phase}
Let $\ket\psi$ and $\ket\phi$ be normalized vectors on the same bipartite
Hilbert space $X\bar X$, with reduced states $\rho_X$ and $\sigma_X$. Denote
\begin{equation}
  \delta:=\norm{\ket\psi-\ket\phi}_2\le2.
\end{equation}
and
\begin{equation}
  \vartheta_X(\delta)
  :=
  \begin{cases}
    0, & \delta=0,\\[0.2em]
    \delta\left[\log d_X+2\log(1/\delta)+3\right],
    &0<\delta\le2.
  \end{cases}
  \label{eq:vartheta-J-phase}
\end{equation}
Then
\begin{equation}
  \norm{K_X(\rho)\ket\psi-K_X(\sigma)\ket\phi}_2
  \le \vartheta_X(\delta).
  \label{eq:modular-vector-continuity-phase}
\end{equation}
No minimum-eigenvalue bound is assumed.
\end{theorem}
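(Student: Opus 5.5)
We plan to convert the statement into a Hilbert--Schmidt continuity estimate for a non-Lipschitz matrix function of the coefficient matrices. Let $T$ and $S$ be the coefficient matrices of $\ket\psi$ and $\ket\phi$ across $X:\bar X$. Then $\rho_X=TT^\dagger$, $\sigma_X=SS^\dagger$, $\norm{T-S}_2=\delta$, and $\norm{T}_\infty,\norm{S}_\infty\le1$. The vector $K_X(\rho)\ket\psi$ has coefficient matrix $-\log(TT^\dagger)\,T$, with the logarithm taken on the support. In a singular value decomposition $T=\sum_k s_k\ket{u_k}\bra{v_k}$ this matrix equals $\sum_k(-2s_k\log s_k)\ket{u_k}\bra{v_k}$. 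The support convention is harmless here because $s\log s\to0$ as $s\to0$. Hence the left side of \eqref{eq:modular-vector-continuity-phase} is $\norm{h(T)-h(S)}_2$, where $h$ is the singular-value function $h(s)=-2s\log s$ and $h(0)=0$. The point of this reformulation is that no inverse of $\rho_X$ ever appears, which is why no spectral floor is needed.

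To handle this singular-value function with Hermitian tools, we pass to the self-adjoint dilations
\[
\widehat T=\begin{pmatrix}0&T\\T^\dagger&0\end{pmatrix},\qquad \widehat S=\begin{pmatrix}0&S\\S^\dagger&0\end{pmatrix}.
\]
Their spectra are $\{\pm s_k\}$ padded with zeros, contained in $[-1,1]$, on a space of dimension $N:=d_X+d_{\bar X}$. Let $H(x):=-2x\log|x|$, which is odd. Then $H(\widehat T)$ is the dilation of $h(T)$, so $\norm{H(\widehat T)-H(\widehat S)}_2=\sqrt2\,\norm{h(T)-h(S)}_2$, and likewise $\norm{\widehat T-\widehat S}_2=\sqrt2\,\delta$. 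Write $\widehat T=\sum_i\lambda_iP_i$ and $\widehat S=\sum_j\mu_jQ_j$ in their eigenbases, with weights $w_{ij}:=\Tr P_iQ_j$. The standard double-sum identity gives
\[
\norm{H(\widehat T)-H(\widehat S)}_2^2=\sum_{ij}w_{ij}\abs{H(\lambda_i)-H(\mu_j)}^2,\qquad \sum_{ij}w_{ij}\abs{\lambda_i-\mu_j}^2=2\delta^2 .
\]

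The central scalar input is a modulus of continuity for $H$ on $[-1,1]$. We will show $\abs{H(x)-H(y)}\le2\,t\bigl(\log(1/t)+c\bigr)$ with $t=\abs{x-y}$ and an absolute constant $c$. This covers pairs of opposite sign, using oddness of $H$ and monotonicity of $s\abs{\log s}$ near $0$. Next we bound the sum $\sum w_{ij}\,\omega(t_{ij})^2$ subject to $\sum w_{ij}t_{ij}^2=2\delta^2$. The weights have total mass $\sum w_{ij}=N$. Setting $r=t^2$, the function $r\mapsto r(\log(1/\sqrt r)+c)^2$ is concave on $r\le r_0$, and we can arrange this range by a concave majorant on $[0,4]$. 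Jensen with normalized weights $w_{ij}/N$ then gives
\[
\sum w_{ij}\,\omega(t_{ij})^2\le N\,\omega\bigl(\sqrt{2\delta^2/N}\bigr)^2\approx 2\delta^2\bigl(\tfrac12\log N+\log(1/\delta)+c'\bigr)^2 .
\]
After dividing by the dilation factor, this yields $\norm{h(T)-h(S)}_2\lesssim\delta\bigl[\log N+2\log(1/\delta)+O(1)\bigr]$.

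The remaining obstacle is to get $\log d_X$ rather than $\log(d_X+d_{\bar X})$, and to get the exact constant $3$. For the first, we will restrict $\bar X$ to the span of the right-support vectors of $T$ and $S$. This subspace has dimension at most $2d_X$, and the restriction changes neither norm, so $\log N\le\log(3d_X)$ and the $\log 3$ is absorbed into the additive constant. Alternatively, we can work only with $TT^\dagger$-type eigenprojections on $X$, whose rank is at most $d_X$. For the constant, we will tune the concave majorant carefully, handling the regime $\delta$ close to $2$ separately, where the crude bound $\norm{h(T)}_2+\norm{h(S)}_2\le2\log d_X$ from \eqref{eq:universal-modular-moment-main} already suffices. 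We expect this bookkeeping, namely the concavity range and the constant matching, to be the fiddly part. The structural step is the dilation plus Jensen argument.
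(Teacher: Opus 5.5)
Your proof is correct, but it handles the logarithmic singularity by a different route from the paper.

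\textbf{The paper's route.} The paper never applies functional calculus to $s\mapsto-2s\log s$ itself.
\begin{itemize}
\item It truncates the logarithm at level $L$, so that $f_L(s)=s\min\{-2\log s,L\}$ has an $L$-Lipschitz odd extension.
\item It uses the same dilation and spectral-block estimate as you do to get $\norm{F_L(T)-F_L(S)}_2\le L\delta$.
\item It bounds each discarded tail by $\sum_{p_i<\e^{-L}}p_i(-\log p_i-L)^2\le4\e^{-2}d_X\e^{-L}$, and then sets $L=\log d_X+2\log(1/\delta)+2$.
\end{itemize}
There the factor $\log d_X$ comes from counting at most $d_X$ small eigenvalues of $\rho_X$, so $\bar X$ never enters. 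In your route the dimension enters through the total weight $N=\sum_{ij}w_{ij}$. That is why restricting $\bar X$ to the joint right support of $T$ and $S$, of dimension at most $2d_X$, is essential rather than cosmetic.

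\textbf{Your constant bookkeeping closes.} The step you expect to be fiddly works out cleanly with $c=(1+\log4)/2$.
\begin{itemize}
\item For same-sign pairs, subadditivity of the concave function $h$ and $-h'(s)=2\log s+2\le2$ on $(0,1]$ give $\abs{H(x)-H(y)}\le\max\{2t\log(1/t),2t\}$.
\item For opposite-sign pairs, concavity gives at most $2h(t/2)=2t\log(2/t)$.
\item So $\omega(t)=2t(\log(1/t)+c)$ is a valid modulus on $[0,2]$.
\item The function $g(r)=\omega(\sqrt r)^2=r(\log(1/r)+1+\log4)^2$ satisfies $g''(r)=2\log(r/4)/r\le0$. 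It is therefore concave on all of $[0,4]$, and no majorant is needed.
\item Jensen gives $N\,g(2\delta^2/N)=2\delta^2[\log N+2\log(1/\delta)+1+\log2]^2$. Your intermediate display is off by a factor of $4$ here, but your final line is right.
\item Dividing by the dilation factor and using $N\le3d_X$ gives $\norm{h(T)-h(S)}_2\le\delta[\log d_X+2\log(1/\delta)+1+\log6]$ uniformly for $0<\delta\le2$. Since $1+\log6<3$, this proves the statement.
\end{itemize}
No separate large-$\delta$ regime is needed. That is fortunate, because the crude bound $2\log d_X$ does not beat $\vartheta_X(\delta)$ for $\delta$ slightly below $2$ when $d_X$ is large.

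\textbf{What each route buys.} Your one-shot double-sum argument with a concave modulus avoids truncation and tail estimates, and gives a slightly sharper constant. The paper's version is more modular: it reuses the Lipschitz singular-value lemma that also drives the IMF Lipschitz bound. It needs neither the concavity check nor the support reduction.
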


\begin{proof}
Identify $\ket\psi$ and $\ket\phi$ across the cut $X:\bar X$ with coefficient matrices $T$ and $S$ with the same orthonormal basis on $X$ and $\bar X$.  Then
\begin{equation}
  TT^\dagger=\rho_X,
  \qquad
  SS^\dagger=\sigma_X,
  \qquad
  \norm{T-S}_2=\delta.
  \label{eq:coefficient-matrices-phase}
\end{equation}
For $L\ge2$, define the truncated logarithm
\begin{equation}
  k_L(p):=\min\{-\log p,L\},
  \qquad 0\le p\le1,
\end{equation}
and
\begin{equation}
  F_L(T):=k_L(TT^\dagger)T.
\end{equation}
If $s\in[0,1]$ is a singular value of $T$, the associated scalar singular-value
function is
\begin{equation}
  f_L(s):=s\min\{-2\log s,L\},
  \qquad f_L(0):=0.
\end{equation}
The two auxiliary lemmas proved in
\cref{lem:fL-Lipschitz-app,lem:truncated-modular-vector-app} show that the
odd extension of $f_L$ is $L$-Lipschitz and that the induced map $F_L$ is
also $L$-Lipschitz in the Hilbert--Schmidt norm, namely,
\begin{equation}
  \norm{F_L(T)-F_L(S)}_2\le L\norm{T-S}_2=L\delta.
  \label{eq:truncated-vector-Lipschitz-phase}
\end{equation}

It remains to control the discarded logarithmic tail.  If $p_i$ are the
eigenvalues of $\rho_X$:
\begin{align}
  &\norm{\bigl(K_X(\rho)-k_L(\rho_X)\bigr)\ket\psi}_2^2 =
  \sum_{p_i<\e^{-L}}p_i\bigl(-\log p_i-L\bigr)^2
  \le \frac{4}{\e^2}d_X\e^{-L}.
  \label{eq:modular-tail-phase}
\end{align}
The same bound holds for $\sigma_X$ and $\ket\phi$.  For $\delta>0$, choose
\begin{equation}
  L=\log d_X+2\log(1/\delta)+2.
\end{equation}
Since $d_X\ge4$ and $\delta\le2$, this choice satisfies $L\ge2$.
Each tail norm in \eqref{eq:modular-tail-phase} is then at most
$2\e^{-2}\delta$. 
Under the coefficient-matrix identification, we have
\begin{equation}
  \norm{k_L(\rho_X)\ket\psi-k_L(\sigma_X)\ket\phi}_2
  =
  \norm{F_L(T)-F_L(S)}_2
  \le L\delta.
\end{equation}
The triangle inequality, followed by the two tail bounds, therefore gives
\begin{align}
  \norm{K_X(\rho)\ket\psi-K_X(\sigma)\ket\phi}_2
  &\le
  \norm{\bigl(K_X(\rho)-k_L(\rho_X)\bigr)\ket\psi}_2
  \notag\\
  &\quad+
  \norm{k_L(\rho_X)\ket\psi-k_L(\sigma_X)\ket\phi}_2
  \notag\\
  &\quad+
  \norm{\bigl(k_L(\sigma_X)-K_X(\sigma)\bigr)\ket\phi}_2
  \notag\\
  &\le
  2\e^{-2}\delta+L\delta+2\e^{-2}\delta
  \notag\\
  &\le\delta(L+1)
  \notag\\
  &=\delta\left[\log d_X+2\log(1/\delta)+3\right],
\end{align}
where $4\e^{-2}<1$ is used in the third inequality. The case $\delta=0$ is immediate.
\end{proof}

\begin{theorem}[Trace-norm continuity of the modular commutator]
\label{thm:J-trace-continuity-phase}
Let $\rho_{ABC}$ and $\sigma_{ABC}$ be arbitrary finite-dimensional density matrices and set
\begin{equation}
  \eps:=\frac12\norm{\rho_{ABC}-\sigma_{ABC}}_1,
  \qquad
  \Omega_J(\delta;A,B,C)
  :=2\log d_{BC}\,\vartheta_{AB}(\delta)
  +2\log d_{AB}\,\vartheta_{BC}(\delta).
  \label{eq:trace-distance-phase}
\end{equation}
Then
\begin{equation}
  \abs{J(A,B,C)_\rho-J(A,B,C)_\sigma}
  \le
  \Omega_J\!\left(\sqrt{2\eps};A,B,C\right).
  \label{eq:J-trace-continuity-phase}
\end{equation}
In particular, for some universal constant $C=24\sqrt{2}$,
\begin{equation}
  \abs{J(A,B,C)_\rho-J(A,B,C)_\sigma}
  \le
  C\sqrt\eps
  \left[
    1+\log(d_{AB}d_{BC})+\log\frac1\eps
  \right]^2.
  \label{eq:J-trace-continuity-rough-phase}
\end{equation}
\end{theorem}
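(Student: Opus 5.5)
The plan is to reduce the mixed-state comparison to a comparison of two nearby purifications, and then apply \cref{thm:modular-vector-continuity-phase} to each modular vector in the quadratic-form representation \eqref{eq:J-quadratic-form-phase}.

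\textbf{Step 1: purifications.} By Fuchs--van de Graaf, the fidelity satisfies $F(\rho_{ABC},\sigma_{ABC})\ge1-\eps$. Uhlmann's theorem then lets us choose purifications $\ket\psi,\ket\phi$ on a common space $ABC\,R$ with $\braket{\psi}{\phi}=F\ge 1-\eps$, real and nonnegative. Hence
\[
\norm{\ket\psi-\ket\phi}_2^2=2-2\,\mathrm{Re}\braket{\psi}{\phi}\le2\eps,
\]
so $\delta:=\norm{\ket\psi-\ket\phi}_2\le\sqrt{2\eps}$. If $\sqrt{2\eps}>2$ we may cap $\delta$ at $2$, and the bound is monotone for the relevant range.

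\textbf{Step 2: splitting the difference.} Write $u_\rho:=K_{AB}(\rho)\ket\psi$, $v_\rho:=K_{BC}(\rho)\ket\psi$, and similarly $u_\sigma,v_\sigma$ with $\ket\phi$. By \eqref{eq:J-quadratic-form-phase},
\[
J_\rho-J_\sigma=-2\,\mathrm{Im}\bigl(\braket{u_\rho-u_\sigma}{v_\rho}+\braket{u_\sigma}{v_\rho-v_\sigma}\bigr).
\]
Cauchy--Schwarz bounds this by $2\norm{u_\rho-u_\sigma}\norm{v_\rho}+2\norm{u_\sigma}\norm{v_\rho-v_\sigma}$. The moment bound \eqref{eq:universal-modular-moment-main} gives $\norm{v_\rho}\le\log d_{BC}$ and $\norm{u_\sigma}\le\log d_{AB}$. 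Applying \cref{thm:modular-vector-continuity-phase} with $X=AB$ and $X=BC$, where $\bar X$ includes $R$, gives $\norm{u_\rho-u_\sigma}\le\vartheta_{AB}(\delta)$ and $\norm{v_\rho-v_\sigma}\le\vartheta_{BC}(\delta)$. Together these yield $\Omega_J(\delta)$.

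\textbf{Step 3: monotonicity.} We need $\vartheta_X$ to be nondecreasing, so that $\delta$ can be replaced by $\sqrt{2\eps}$. The derivative of $\delta[\log d+2\log(1/\delta)+3]$ is $\log d+2\log(1/\delta)+1$. This is positive for $\delta\le2$ because $\log d\ge\log4>2\log2-1$. This is the one delicate point; it is where the hypothesis $d_X\ge4$ enters.

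\textbf{Step 4: the rough form.} Insert $\delta=\sqrt{2\eps}$, using $2\log(1/\delta)=\log(1/(2\eps))\le\log(1/\eps)$, so that
\[
\vartheta_X\le\sqrt{2\eps}\,[\log d_X+\log(1/\eps)+3].
\]
Then bound $\log d_{BC}\vartheta_{AB}+\log d_{AB}\vartheta_{BC}\le 2\sqrt{2\eps}\,G\cdot 3G$, where $G:=1+\log(d_{AB}d_{BC})+\log(1/\eps)$ and we use $\log d_X+\log(1/\eps)+3\le3G$. With the overall factor $2$ from $\Omega_J$, this gives the constant $12\sqrt2$, comfortably within $24\sqrt2$. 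The regime $\eps\ge1$ or $\log(1/\eps)<0$ needs no separate treatment, since $\eps\le1$ always holds.

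The main obstacle is already absorbed by \cref{thm:modular-vector-continuity-phase}. What remains is the correct use of Uhlmann's theorem to obtain a $\sqrt{\eps}$ vector distance, together with the bookkeeping above.
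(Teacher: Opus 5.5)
Your proposal is correct and follows the paper's proof essentially step for step: Uhlmann plus Fuchs--van de Graaf give $\sqrt{2\eps}$-close purifications, then the add-and-subtract Cauchy--Schwarz split of the quadratic form \eqref{eq:J-quadratic-form-phase}, the modular moment bound \eqref{eq:universal-modular-moment-main}, and \cref{thm:modular-vector-continuity-phase} applied with $X=AB$ and $X=BC$. Your explicit check that $\vartheta_X$ is nondecreasing, needed to replace $\delta$ by $\sqrt{2\eps}$, makes precise a step the paper leaves implicit, and your bookkeeping gives the constant $12\sqrt2$, which is within the stated $24\sqrt2$.
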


\begin{proof}
By Uhlmann's theorem and the Fuchs--van de Graaf inequality, as shown in \cref{lem:purification-trace-distance-app}, there are purifications $\ket\psi$ of $\rho_{ABC}$ and $\ket\phi$ of $\sigma_{ABC}$ on a common auxiliary system such that
\begin{equation}
  \norm{\ket\psi-\ket\phi}_2\le\sqrt{2\eps}.
\end{equation}
Write
\begin{equation}
  v_X(\rho):=K_X(\rho)\ket\psi,
  \qquad
  v_X(\sigma):=K_X(\sigma)\ket\phi.
\end{equation}
Using \eqref{eq:J-quadratic-form-phase}, we have:
\begin{align}
  J_\rho-J_\sigma
  &=-2\operatorname{Im}
  \left[
    \braket{v_{AB}(\rho)}{v_{BC}(\rho)}
    -
    \braket{v_{AB}(\sigma)}{v_{BC}(\sigma)}
  \right]
  \notag\\
  &=-2\operatorname{Im}
  \left[
    \braket{v_{AB}(\rho)-v_{AB}(\sigma)}
            {v_{BC}(\rho)}
    +
    \braket{v_{AB}(\sigma)}
            {v_{BC}(\rho)-v_{BC}(\sigma)}
  \right].
  \label{eq:J-inner-product-difference-phase}
\end{align}
The second equality is obtained by adding and subtracting
$\braket{v_{AB}(\sigma)}{v_{BC}(\rho)}$.  Since
$\abs{\operatorname{Im}z}\le\abs z$, the triangle inequality and Cauchy--Schwarz give
\begin{align}
  \abs{J_\rho-J_\sigma}
  &\le
  2\norm{v_{AB}(\rho)-v_{AB}(\sigma)}_2
   \norm{v_{BC}(\rho)}_2
  +
  2\norm{v_{AB}(\sigma)}_2
   \norm{v_{BC}(\rho)-v_{BC}(\sigma)}_2.
  \label{eq:J-vector-difference-phase}
\end{align}
The modular second-moment bound \eqref{eq:universal-modular-moment-main} gives $\norm{v_X(\rho)}_2,\ \norm{v_X(\sigma)}_2\le\log d_X$ and \cref{thm:modular-vector-continuity-phase} controls the two differences in \eqref{eq:J-vector-difference-phase}. This proves \eqref{eq:J-trace-continuity-phase}.

For the rough form \eqref{eq:J-trace-continuity-rough-phase}, the case $\eps=0$ is immediate, so assume $\eps>0$ and
set
\begin{equation}
  M:=1+\log(d_{AB}d_{BC})+\log(1/\eps).
\end{equation}
For $X=AB,BC$, one has
\begin{equation}
  \log d_X\le M.
\end{equation}
Moreover,
\begin{align}
  \vartheta_X\!\left(\sqrt{2\eps}\right)
  &=\sqrt{2\eps}
    \left[
      \log d_X+\log\frac{1}{2\eps}+3
    \right]
\le 3\sqrt{2\eps}\,M.
\end{align}
Substituting these two bounds into
\eqref{eq:J-trace-continuity-phase} gives
\begin{equation}
  \abs{J_\rho-J_\sigma}
  \le 24\sqrt{2}\,\sqrt{\eps}\,M^2,
\end{equation}
which proves \eqref{eq:J-trace-continuity-rough-phase}.
\end{proof}

We next prove the corresponding statement for the Hall conductance estimator.

\begin{theorem}[Trace-norm continuity of the Hall conductance estimator]
\label{thm:Sigma-trace-continuity-phase}
Let $\rho_{ABC}$ and $\sigma_{ABC}$ be arbitrary finite-dimensional density
matrices, and assume $\norm{Q_v}_\infty\le q_0$ for every site $v$.  Set
\begin{equation}
  \eps:=\frac12\norm{\rho_{ABC}-\sigma_{ABC}}_1\le\frac12.
  \label{eq:Sigma-trace-distance-phase}
\end{equation}
For $0\le\delta\le2$, define
\begin{equation}
  \Omega_\Sigma(\delta;A,B,C)
  :=2\norm{Q_BQ_C}_\infty
  \left[
    \vartheta_{AB}(\delta)+\log d_{AB}\,\delta
  \right].
  \label{eq:Omega-Sigma-phase}
\end{equation}
Then
\begin{equation}
  \abs{\Sigma(A,B,C)_\rho-\Sigma(A,B,C)_\sigma}
  \le
  \Omega_\Sigma\!\left(\sqrt{2\eps};A,B,C\right).
  \label{eq:Sigma-trace-continuity-phase}
\end{equation}
In particular, for some constant $C=10\sqrt{2}$,
\begin{equation}
  \abs{\Sigma(A,B,C)_\rho-\Sigma(A,B,C)_\sigma}
  \le
  Cq_0^2|B||C|\sqrt\eps
  \left[
    1+\log d_{AB}+\log\frac1\eps
  \right].
  \label{eq:Sigma-trace-continuity-rough-phase}
\end{equation}
\end{theorem}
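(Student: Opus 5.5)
We mirror the proof of \cref{thm:J-trace-continuity-phase}, with the second modular vector replaced by the charge vector $Q_BQ_C\ket\psi$. By \cref{lem:purification-trace-distance-app}, choose purifications $\ket\psi$ of $\rho_{ABC}$ and $\ket\phi$ of $\sigma_{ABC}$ on a common auxiliary system with $\norm{\ket\psi-\ket\phi}_2\le\delta:=\sqrt{2\eps}$. Since $\eps\le\frac12$, we have $\delta\le1\le2$, so \cref{thm:modular-vector-continuity-phase} applies.

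We rely on the representation \eqref{eq:Sigma-quadratic-form-phase}, $\Sigma_\rho=-2\operatorname{Im}\braket{K_{AB}(\rho)\psi}{Q_BQ_C\psi}$. One caveat: the reduction \eqref{eq:Sigma-cross-main} used that the $Q_B^2$ and $Q_C^2$ terms drop out. For arbitrary $\rho,\sigma$ we cannot take this for granted. The $Q_B^2$ term always vanishes, because $K_{AB}$ commutes with $\rho_{AB}$ and $Q_B^2$ is supported in $AB$, so $\Tr\rho_{AB}[K_{AB},Q_B^2]=0$. The $Q_C^2$ term vanishes because $[K_{AB},Q_C^2]=0$ by disjoint support. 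So \eqref{eq:Sigma-cross-main} holds for every state, and the representation is valid for both $\rho$ and $\sigma$.

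Write $v=K_{AB}(\rho)\ket\psi$, $v'=K_{AB}(\sigma)\ket\phi$, $w=Q_BQ_C\ket\psi$ and $w'=Q_BQ_C\ket\phi$. We add and subtract $\braket{v'}{w}$:
\[
\Sigma_\rho-\Sigma_\sigma=-2\operatorname{Im}\bigl[\braket{v-v'}{w}+\braket{v'}{w-w'}\bigr].
\]
Cauchy--Schwarz bounds the first term by $2\norm{v-v'}_2\norm{w}_2$. Here $\norm{v-v'}_2\le\vartheta_{AB}(\delta)$ by \cref{thm:modular-vector-continuity-phase}, and $\norm{w}_2\le\norm{Q_BQ_C}_\infty$. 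The second term is at most $2\norm{v'}_2\norm{Q_BQ_C}_\infty\delta$, and $\norm{v'}_2\le\log d_{AB}$ by \eqref{eq:universal-modular-moment-main}. Adding the two gives exactly $\Omega_\Sigma(\delta;A,B,C)$, which proves \eqref{eq:Sigma-trace-continuity-phase}.

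For the rough form, use $\norm{Q_BQ_C}_\infty\le\norm{Q_B}_\infty\norm{Q_C}_\infty\le q_0^2|B||C|$. Set $M:=1+\log d_{AB}+\log(1/\eps)$ and substitute $\delta=\sqrt{2\eps}$:
\[
\vartheta_{AB}(\delta)=\sqrt{2\eps}\,\bigl[\log d_{AB}+\log\tfrac{1}{2\eps}+3\bigr]\le 3\sqrt{2\eps}\,M,
\qquad
\log d_{AB}\,\delta\le\sqrt{2\eps}\,M.
\]
The first bound uses $\log\frac{1}{2\eps}\le\log\frac1\eps$. Hence $\Omega_\Sigma\le 2q_0^2|B||C|\cdot4\sqrt2\,\sqrt\eps\,M=8\sqrt2\,q_0^2|B||C|\sqrt\eps\,M$, which is within the stated constant $C=10\sqrt2$. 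The case $\eps=0$ is trivial.

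The main obstacle is checking that the quadratic-form representation holds for both states without invoking the $U(1)$ symmetry. This is handled by the support argument above. Everything else is bookkeeping already done in the $J$ case.
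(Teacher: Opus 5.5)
Your proposal is correct and follows the paper's proof essentially step for step: Uhlmann purifications with $\norm{\ket\psi-\ket\phi}_2\le\sqrt{2\eps}$, the add-and-subtract Cauchy--Schwarz split of $-2\operatorname{Im}\langle K_{AB}\psi|Q_BQ_C\psi\rangle$, and then \cref{thm:modular-vector-continuity-phase} together with the modular second-moment bound. Two small additions on your side are worthwhile: you check explicitly that $\Sigma=i\langle[K_{AB},Q_BQ_C]\rangle$ holds for arbitrary states without the $U(1)$ symmetry the paper implicitly invokes, and your tighter bookkeeping of the $\log d_{AB}\,\delta$ term gives the sharper constant $8\sqrt2$, whereas the paper loosens this to $10\sqrt2$.
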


\begin{proof}
By Uhlmann's theorem and the Fuchs--van de Graaf inequality, as shown in \cref{lem:purification-trace-distance-app}, there exist purifications $\ket\psi$ and $\ket\phi$ on a common auxiliary system such that
\begin{equation}
  \norm{\ket\psi-\ket\phi}_2\le\sqrt{2\eps},
\end{equation}
where $\ket\psi$ purifies $\rho_{ABC}$ and $\ket\phi$ purifies
$\sigma_{ABC}$.  Define
\begin{equation}
  v_\rho:=K_{AB}(\rho)\ket\psi,
  \qquad
  v_\sigma:=K_{AB}(\sigma)\ket\phi.
\end{equation}
Using \eqref{eq:Sigma-quadratic-form-phase} and following the derivation of the $J(A,B,C)$ counterpart:
\begin{align}
  \abs{\Sigma_\rho-\Sigma_\sigma}
  &\le
  2\abs{
    \braket{v_\rho}{Q_BQ_C\psi}
    -\braket{v_\sigma}{Q_BQ_C\phi}
  }
  \notag\\
  &\le
  2\norm{v_\rho-v_\sigma}_2\norm{Q_BQ_C\ket\psi}_2
  +2\norm{v_\sigma}_2
  \norm{Q_BQ_C(\ket\psi-\ket\phi)}_2.
  \label{eq:Sigma-vector-difference-phase}
\end{align}
By \cref{thm:modular-vector-continuity-phase} and the universal modular second-moment bound
\begin{equation}
  \norm{v_\rho-v_\sigma}_2
  \le\vartheta_{AB}\!\left(\sqrt{2\eps}\right), \quad \norm{v_\sigma}_2\le\log d_{AB}.
\end{equation}
Moreover,
\begin{equation}
\norm{Q_BQ_C}_\infty
\le\norm{Q_B}_\infty\norm{Q_C}_\infty
\le q_0^2|B||C|.
\end{equation}
Hence
\begin{equation}
\norm{Q_BQ_C\ket\psi}_2\le q_0^2|B||C|,
\qquad
\norm{Q_BQ_C(\ket\psi-\ket\phi)}_2
\le q_0^2|B||C|\sqrt{2\eps}.
\end{equation}
Substitution into \eqref{eq:Sigma-vector-difference-phase} proves
\eqref{eq:Sigma-trace-continuity-phase}.

For the rough form, the case $\eps=0$ is immediate, so assume $\eps>0$ and
set
\begin{equation}
  M_\Sigma:=1+\log d_{AB}+\log(1/\eps).
\end{equation}
As in the modular-commutator bound,
\begin{equation}
  \log d_{AB}\le M_\Sigma, \quad \vartheta_{AB}\!\left(\sqrt{2\eps}\right)
  \le3\sqrt{2\eps}\,M_\Sigma.
\end{equation}
Substituting these bounds into \eqref{eq:Sigma-trace-continuity-phase} gives
\begin{align}
  \abs{\Sigma_\rho-\Sigma_\sigma}
  &\le
  2q_0^2|B||C|
  \left[
    3\sqrt{2\eps}\,M_\Sigma
    +2\sqrt{2\eps}\,M_\Sigma
  \right]
  \notag\\
  &=10\sqrt{2}\,
  q_0^2|B||C|\sqrt{\eps}\,M_\Sigma.
\end{align}
\end{proof}

\subsection{Quasi-local unitaries, \Aone{} preservation, and symmetry-protected phase equivalence}
\label{subsec:QLU-phase-equivalence}

We now specify the quasi-local unitary paths used to compare the admissible state families introduced in\cref{def:A1-admissible-state,def:U1-A1-admissible-state-phase}.
We follow the time-dependent interaction and propagator conventions in \cite[Sec.~4.2, Eqs.~(4.16)--(4.19)]{BachmannMichalakisNachtergaeleSims2012}. We use their interaction norm with polynomial decay functions of arbitrarily high degree, uniformly in the finite volume; this specifies the faster-than-any-power class needed below.

\begin{definition}[Quasi-local unitary path]
\label{def:QLU-path-phase}
Let $\Gamma$ be a metric lattice with uniformly polynomial volume growth,
$\sup_{x\in\Gamma}|B_r(x)|\le c(1+r)^{d_{\mathrm g}}$, with fixed volume-growth exponent $d_{\mathrm g}$, and let $\Lambda\subset\Gamma$
range over finite volumes.  A family of unitary paths
\begin{equation}
  U_\Lambda(s),
  \qquad 0\le s\le1,
  \qquad U_\Lambda(0)=\id,
\end{equation}
is a \emph{quasi-local unitary path} if
\begin{equation}
  \partial_sU_\Lambda(s)
  =-iG_\Lambda(s)U_\Lambda(s),
  \qquad
  G_\Lambda(s)=\sum_{Z\subseteq\Lambda}\Phi_{\Lambda,s}(Z),
  \label{eq:QLU-generator-phase}
\end{equation}
where $\Phi_{\Lambda,s}(Z)=\Phi_{\Lambda,s}(Z)^\dagger$ is supported on $Z$
and norm-continuous in $s$.  For every $p>d_{\mathrm g}$, set $F_p(r)=(1+r)^{-p}$
and require
\begin{equation}
  \norm{\Phi}_{F_p}
  :=\sup_\Lambda\sup_{x,y\in\Lambda}
  \frac{1}{F_p(\dist(x,y))}
  \sum_{\substack{Z\subseteq\Lambda\\x,y\in Z}}
  \sup_{s\in[0,1]}\norm{\Phi_{\Lambda,s}(Z)}
  <\infty.
  \label{eq:QLU-F-norm-phase}
\end{equation}
This is the finite-volume-uniform version of the interaction norm in
Ref.~\cite[Sec.~4.1, Eq.~(4.7)]{BachmannMichalakisNachtergaeleSims2012}.
The endpoint $U_\Lambda(1)$ is called a quasi-local unitary.
We write $U_\Lambda(s,t)=U_\Lambda(s)U_\Lambda(t)^\dagger$; the associated
Heisenberg evolution is $O\mapsto U_\Lambda(s,t)^\dagger O U_\Lambda(s,t)$.
\end{definition}

Since \eqref{eq:QLU-F-norm-phase} holds for arbitrarily large $p$, polynomial
volume growth gives, for every $n\in\mathbb N$,
\begin{equation}
  \sup_{\Lambda}\sup_{s\in[0,1]}\sup_{x\in\Lambda}
  \sum_{Z\ni x}
  |Z|(1+\diam Z)^n\norm{\Phi_{\Lambda,s}(Z)}
  <\infty.
  \label{eq:QLU-almost-local-phase}
\end{equation}
For the argument below, we only need spatial decay uniform in
$s,t\in[0,1]$.  Fix an integer $n\ge1$ and choose
$p>\max\{d_{\mathrm g},n\}$.  The time-dependent Lieb--Robinson bound of
Ref.~\cite[Sec.~4.2, Theorem~4.6]{BachmannMichalakisNachtergaeleSims2012}
contains an exponential factor in $|s-t|$ and the spatial sum
$\sum_{x\in X,y\in Y}F_p(\dist(x,y))$.
Since $|s-t|\le1$, the time factor is absorbed into a constant, while the
sum is at most $|X||Y|(1+\dist(X,Y))^{-p}$.  Consequently,
\begin{equation}
  \norm{[U_\Lambda(s,t)^\dagger O_XU_\Lambda(s,t),O_Y]}
  \le
  C_n|X||Y|\norm{O_X}\norm{O_Y}
  \dist(X,Y)^{-n},
  \label{eq:QLU-LR-phase}
\end{equation}
with $C_n$ independent of $\Lambda$, $s,t\in[0,1]$, and the supports.
Throughout the locality estimates, $n$ denotes an arbitrary positive
integer decay exponent.  Inverse powers of distance are understood as
$+\infty$ at zero distance, where the bounds are vacuous.

The canonical example is quasi-adiabatic continuation along a uniformly
gapped path of local Hamiltonians.  It produces a quasi-local unitary path
that transports the ground-space projector and therefore connects the ground
spaces of the endpoint Hamiltonians; for nondegenerate ground states, it maps
one ground-state vector to the other up to a phase
\cite{Hasting_quasiadiabaticcontinuation_2005,
hastings2010localityquantumsystems,BachmannMichalakisNachtergaeleSims2012}.

For a region $D\subseteq\Lambda$, define the restricted generator and its
unitary path by
\begin{align}
  G_{\Lambda,D}(s)
  &:=\sum_{Z\subseteq D}\Phi_{\Lambda,s}(Z),
  \label{eq:QLU-restricted-generator-phase}\\
  \partial_sU_{\Lambda,D}(s)
  &:=-iG_{\Lambda,D}(s)U_{\Lambda,D}(s),
  \qquad U_{\Lambda,D}(0)=\id.
  \label{eq:QLU-restricted-unitary-phase}
\end{align}

\begin{lemma}[Automatic spatial truncation]
\label{lem:QLU-spatial-truncation-phase}
For every $n\in\mathbb N$, there is a constant $C_n$, independent of
$\Lambda$, $D$, and $s$, such that, for $X\subseteq D$,
\begin{equation}
  \norm{
    U_\Lambda(s)^\dagger O_XU_\Lambda(s)
    -U_{\Lambda,D}(s)^\dagger O_XU_{\Lambda,D}(s)
  }
  \le
  C_n|X|\norm{O_X}
  \dist(X,D^c)^{-n}.
  \label{eq:QLU-interior-truncation-phase}
\end{equation}
If
\begin{align}
  \rho_{\Lambda,s}
  :=U_\Lambda(s)\rho_\Lambda U_\Lambda(s)^\dagger,\qquad
  \omega_{\Lambda,D,s}
  :=U_{\Lambda,D}(s)\rho_\Lambda U_{\Lambda,D}(s)^\dagger,
\end{align}
then
\begin{equation}
  \frac12\norm{
    (\omega_{\Lambda,D,s})_X-(\rho_{\Lambda,s})_X
  }_1
  \le
  C_n|X|
  \dist(X,D^c)^{-n},
  \qquad X\subseteq D,
  \label{eq:QLU-interior-state-phase}
\end{equation}
and
\begin{equation}
  (\omega_{\Lambda,D,s})_Y=(\rho_\Lambda)_Y, 
  \qquad U_{\Lambda,D}(s)^\dagger O_YU_{\Lambda,D}(s)=O_Y,
  \qquad Y\subseteq D^c.
  \label{eq:QLU-exterior-state-phase}
\end{equation}
\end{lemma}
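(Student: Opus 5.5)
The plan is to compare the two Heisenberg evolutions by a Duhamel interpolation, and then obtain the state statements by duality. Set $\tau_s(O):=U_\Lambda(s)^\dagger OU_\Lambda(s)$ and $\tau^D_s(O):=U_{\Lambda,D}(s)^\dagger OU_{\Lambda,D}(s)$. Then $\partial_s\tau_s(O)=i\tau_s([G_\Lambda(s),O])$, and similarly for $\tau^D_s$ with $G_{\Lambda,D}(s)$.

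Differentiating $r\mapsto \tau_r\bigl(U_\Lambda(s)U_\Lambda(r)^\dagger\,\cdots\bigr)$ directly leads to ordering bookkeeping, so I will use the standard interpolation
\[
\tau_s(O_X)-\tau^D_s(O_X)=\int_0^s \partial_r\bigl[\tau_r\circ\tau^D_{r,s}(O_X)\bigr]\,dr ,
\]
where $\tau^D_{r,s}$ is the restricted cocycle from $r$ to $s$. This gives
\[
\norm{\tau_s(O_X)-\tau^D_s(O_X)}\le\int_0^s\norm{[G_\Lambda(r)-G_{\Lambda,D}(r),\tau^D_{r,s}(O_X)]}\,dr .
\]
Unitarity of $\tau_r$ is what lets us drop the outer evolution in this norm.

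The difference $G_\Lambda-G_{\Lambda,D}=\sum_{Z\not\subseteq D}\Phi_{\Lambda,r}(Z)$ contains only terms touching $D^c$. For each such $Z$, apply the Lieb--Robinson bound \eqref{eq:QLU-LR-phase} to the restricted dynamics. Its interaction is a sub-interaction of $\Phi$, so the same $F_p$-norm bound and constants $C_n$ apply. This yields
\[
\norm{[\Phi(Z),\tau^D_{r,s}(O_X)]}\le C_n|X||Z|\norm{O_X}\norm{\Phi(Z)}\dist(X,Z)^{-n'} .
\]
Summing over $Z\ni x$ with $x\in D^c$ uses \eqref{eq:QLU-almost-local-phase}. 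Two regimes arise. If $\dist(X,Z)\ge\tfrac12\dist(X,D^c)$, the LR factor gives the decay. Otherwise $\diam Z\ge\tfrac12\dist(X,D^c)$, and the weight $(1+\diam Z)^{n}$ in \eqref{eq:QLU-almost-local-phase} supplies it, together with the trivial commutator bound $2\norm{\Phi(Z)}\norm{O_X}$.

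\textbf{Main obstacle.} The sum over base points $x\in D^c$ is infinite in the thermodynamic sense and must be controlled uniformly in $\Lambda$. The first thing to try is to bound $\sum_{y\in X}\sum_{x\in D^c}(1+\dist(x,y))^{-p}$ via polynomial volume growth, with $p$ chosen larger than $n+d_{\mathrm g}+1$. This gives $C|X|\dist(X,D^c)^{-n}$ and absorbs the extra $|Z|$ factors into the weighted norm. The resulting factor $|X|$ (not $|X|^2$) is exactly the form claimed.

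For the state bound \eqref{eq:QLU-interior-state-phase}, use duality: $\tfrac12\norm{\omega_X-\rho_X}_1=\sup_{\norm{O_X}\le1}\tfrac12\abs{\Tr\rho_\Lambda(\tau^D_s(O_X)-\tau_s(O_X))}$, which is bounded by the operator estimate. For \eqref{eq:QLU-exterior-state-phase}, note that $G_{\Lambda,D}(s)$ is supported in $D$. Hence $U_{\Lambda,D}(s)$ acts as $\id$ on $D^c$ (uniqueness of the ODE solution in $\mathcal B(\cH_D)\otimes\id$). So $\tau^D_s(O_Y)=O_Y$ for $Y\subseteq D^c$, and the partial trace over $Y^c\supseteq D$ removes the conjugation, giving $(\omega_{\Lambda,D,s})_Y=(\rho_\Lambda)_Y$.
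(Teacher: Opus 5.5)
Your proposal is essentially the paper's proof: the same Duhamel interpolation through the restricted propagator, and the same split of the omitted terms $Z\not\subseteq D$ into those at distance at least $d/2$ from $X$ (handled by Lieb--Robinson for the sub-interaction, uniformly in $D$) and those with $\diam Z\ge d/2$ (handled by the moment tail \eqref{eq:QLU-almost-local-phase}), followed by trace-norm duality for the state bound and the support of $G_{\Lambda,D}$ for the exterior identity. Your explicit base-point summation usefully fills in the uniformity-in-$\Lambda$ step that the paper leaves to ``bounded geometry''; however, to land on $|X|$ rather than $|X|^2$ you should use the un-coarsened Lieb--Robinson bound with $\sum_{x\in X,z\in Z}F_p(\dist(x,z))$, whose $F$-norm convolution over $Z$ yields exactly your $\sum_{y\in X}\sum_{x\in D^c}$, instead of the quoted form $C_n|X||Z|\dist(X,Z)^{-n}$, whose $|X|$ would multiply a second one.
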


\begin{proof}
Let $U(s,t)$ and $U_D(s,t)$ be the full and restricted propagators. By Duhamel formula:
\begin{align}
  &U(s,0)^\dagger O_XU(s,0)
  -U_D(s,0)^\dagger O_XU_D(s,0)
  \notag\\
  &\quad=
  i\int_0^s
  U(t,0)^\dagger
  \bigl[
    G(t)-G_D(t),
    U_D(s,t)^\dagger O_XU_D(s,t)
  \bigr]
  U(t,0)\,dt.
  \label{eq:QLU-Duhamel-phase}
\end{align}
 where $G(t)-G_D(t)$ is the sum of interaction terms $\Phi_{\Lambda,t}(Z)$ with $Z\not\subseteq D$.  Set $d:=\dist(X,D^c)$.  Since the restricted interaction is obtained by deleting terms from the full interaction, it satisfies the same almost-local norm bound as in \eqref{eq:QLU-almost-local-phase}; hence
$U_D(s,t)$ obeys the same type of Lieb--Robinson bound, uniformly in $D$.

For every omitted support $Z\not\subseteq D$, the elementary geometric bound
\begin{equation*}
  d\le\dist(X,Z)+\diam Z
\end{equation*}
shows that either $Z$ is at least distance $d/2$ from $X$, or its diameter is
at least $d/2$.  In the first case, the Lieb--Robinson bound suppresses its
commutator with $U_D(s,t)^\dagger O_XU_D(s,t)$.  In the second case, the
large-diameter tail in \eqref{eq:QLU-almost-local-phase} suppresses the
interaction term itself.  Summing the two contributions, using bounded
geometry and choosing the available moment exponent larger than $n$, gives
\begin{align*}
  &\sum_{Z\not\subseteq D}
  \norm{[
    \Phi_{\Lambda,t}(Z),
    U_D(s,t)^\dagger O_XU_D(s,t)
  ]}\le
  C_n|X|\norm{O_X}
  d^{-n},
\end{align*}
uniformly for $0\le t\le s\le1$.  Unitary conjugation by $U(t,0)$ does not
change the norm.  Integrating \eqref{eq:QLU-Duhamel-phase} over an interval
of length at most one proves
\eqref{eq:QLU-interior-truncation-phase}.

For the state bound, trace-norm duality rewrites the left-hand side of \eqref{eq:QLU-interior-state-phase} as the supremum, over $\norm{O_X}\le1$, of the difference between the full and restricted Heisenberg evolutions of $O_X$.  Equation \eqref{eq:QLU-interior-truncation-phase} therefore proves \eqref{eq:QLU-interior-state-phase}, after absorbing the factor $1/2$ into $C_n$.  Subsequently, $U_D(s)$ commutes with every observable supported in $D^c$, so the expectation values of all observables on $Y\subseteq D^c$ are unchanged.  This proves \eqref{eq:QLU-exterior-state-phase}.
\end{proof}

\begin{definition}[$U(1)$-symmetric quasi-local unitary path]
\label{def:U1-symmetric-QLU-phase}
A quasi-local unitary path in the sense of
\cref{def:QLU-path-phase} is \emph{$U(1)$-symmetric} if its almost-local
interaction decomposition can be chosen so that every interaction term
conserves the charge on its support:
\begin{equation}
  [\Phi_{\Lambda,s}(Z),Q_Z]=0,
  \qquad
  Q_Z:=\sum_{v\in Z}Q_v,
  \label{eq:termwise-U1-symmetry-phase}
\end{equation}
for every $\Lambda$, $s$, and $Z\subseteq\Lambda$.
\end{definition}

The termwise formulation is needed because the bubble construction truncates
the generator spatially.  It implies
\begin{equation}
  [G_\Lambda(s),Q_\Lambda]=0,
  \qquad
  [G_{\Lambda,D}(s),Q_D]=0,
  \label{eq:full-and-truncated-generator-symmetry-phase}
\end{equation}
and therefore
\begin{equation}
  [U_\Lambda(s),Q_\Lambda]=0,
  \qquad
  [U_{\Lambda,D}(s),Q_\Lambda]=0.
  \label{eq:full-and-truncated-unitary-symmetry-phase}
\end{equation}
For the second identity, use that $U_{\Lambda,D}(s)$ is supported on $D$ and
commutes with $Q_D$.  Thus every spatially truncated bubble state produced
from a sharp-charge state remains in the same total-charge sector.

\begin{definition}[\Aone-preserving quasi-local unitary]
\label{def:A1-preserving-QLU-phase}
A quasi-local unitary path $U_\Lambda(s)$ is \emph{\Aone-preserving} if, for
every uniformly \Aone-admissible state family $\boldsymbol\rho$, every region $D\subseteq\Lambda$ including $D=\Lambda$, and every $s\in[0,1]$, both
\begin{equation}
  U_{\Lambda,D}(s)\rho_\Lambda U_{\Lambda,D}(s)^\dagger
  \quad\text{and}\quad
  U_{\Lambda,D}(s)^\dagger\rho_\Lambda U_{\Lambda,D}(s)
  \label{eq:A1-preserving-two-directions-phase}
\end{equation}
are uniformly \Aone-admissible, including on geometries that cross the truncation wall.\footnote{The endpoint comparison below only requires this admissibility condition at $s=0$ and $s=1$.  We retain the full interval $s\in[0,1]$ to match the quasi-adiabatic picture: along a uniformly gapped Hamiltonian path, the transported ground state at every intermediate $s$ belongs to the same gapped quantum phase as the initial state at $s=0$.}
\end{definition}

\begin{definition}[$U(1)$-symmetric and \Aone-preserving quasi-local unitary]
\label{def:U1-A1-preserving-QLU-phase}
A quasi-local unitary path is \emph{$U(1)$-symmetric and
\Aone-preserving} if it satisfies both
\cref{def:U1-symmetric-QLU-phase,def:A1-preserving-QLU-phase}.  In
particular, every full or spatially truncated intermediate state obtained
from a uniformly $U(1)$-symmetric \Aone-admissible representative remains
exactly $U(1)$ symmetric and uniformly \Aone-admissible, including on
geometries that cross the truncation wall.
\end{definition}

The endpoint admissibility requirement in
\cref{def:A1-preserving-QLU-phase,def:U1-A1-preserving-QLU-phase}
excludes states with persistent spurious contributions that violate uniform
\Aone{}.  In particular, the cluster-based constructions underlying spurious
modular commutators \cite{Gass_2024} also produce nondecaying \Aone{} defects:
\cref{app:spurious-A1} gives $\delta_1=\log2$ for the ordinary cluster chain,
and \cref{app:modified-cluster-A1} calculates a strictly positive,
size-independent defect for the modified chain with nonzero spurious $J$.
Thus these examples fail the small-\Aone{} requirement on the endpoint, thus is excluded from the quantum states we consider for phase equivalence.
\footnote{We do
not prove a general statement that \Aone{} preservation along the gapped path in the bulk alone rules out every
possible spurious contribution.  The explicit cluster-chain counterexamples
discussed here can nevertheless be excluded by their persistent \Aone{}
defects; this is not a classification of all possible counterexamples. More details will be discussed in Ref.~\cite{Bowen2025}.}

Small \Aone{} violation on geometries crossing the truncation wall is a
\emph{separate assumption}, not a consequence of uniform bulk \Aone{} along
the full state trajectory.  As shown in \cref{app:boundary-A1-counterexample},
a finite-range commuting generator can leave an initial product state
unchanged for every $s\in[0,1]$, while spatial truncation removes cancelling
terms and creates a cluster chain along the cut.  Its wall-crossing
\Aone{} defect is $2\log2$ at arbitrarily large annular width.  Quasi-locality
controls the truncation error away from the wall, but does not supply the
boundary admissibility required by the bubble construction.

\begin{definition}[Phase equivalence from entanglement bootstrap]
\label{def:A1-phase-equivalence}
Two uniformly \Aone-admissible representative-state families
$\boldsymbol\rho^\alpha$ and $\boldsymbol\rho^\beta$ are in the same
\emph{entanglement-bootstrap phase}, written
\begin{equation}
  \boldsymbol\rho^\alpha\sim_{A_1}\boldsymbol\rho^\beta,
\end{equation}
if there exists a single \Aone-preserving quasi-local unitary path
$U_\Lambda(s)$ such that
\begin{equation}
  \rho_\Lambda^\beta
  =U_\Lambda(1)\rho_\Lambda^\alpha U_\Lambda(1)^\dagger
  \label{eq:A1-phase-endpoint-phase}
\end{equation}
for every $\Lambda$.
\end{definition}

\begin{definition}[$U(1)$-protected phase equivalence]
\label{def:U1-A1-phase-equivalence}
Two uniformly $U(1)$-symmetric \Aone-admissible representative-state
families $\boldsymbol\rho^\alpha$ and $\boldsymbol\rho^\beta$, defined with
respect to the same on-site charge representation, are in the same
\emph{$U(1)$-protected entanglement-bootstrap phase}, written
\begin{equation}
  \boldsymbol\rho^\alpha
  \sim_{A_1,U(1)}
  \boldsymbol\rho^\beta,
\end{equation}
if there exists a single $U(1)$-symmetric and \Aone-preserving quasi-local
unitary path $U_\Lambda(s)$ such that
\begin{equation}
  \rho_\Lambda^\beta
  =U_\Lambda(1)\rho_\Lambda^\alpha U_\Lambda(1)^\dagger
  \label{eq:U1-A1-phase-endpoint-phase}
\end{equation}
for every $\Lambda$.
\end{definition}

\subsection{Phase stability of the modular commutator and Hall conductance estimator}
\label{subsec:J-phase-stability-final}

We now combine the trace-norm continuity theorem with the phase equivalence definition in \cref{def:A1-phase-equivalence} to show the stability of the modular commutator within a topological phase.  For a topology-preserving
deformation $\mathcal G\to\mathcal G'$ in a state $\rho$, denote by
\begin{equation}
  \mathcal E_J(\rho;\mathcal G\to\mathcal G')
\end{equation}
the right-hand side of \eqref{eq:J-global-main}.  By \cref{thm:main-J}, if the
relevant \Aone{} defects satisfy \eqref{eq:A1-admissible-bound}, then
\begin{equation}
  \mathcal E_J(\rho;\mathcal G\to\mathcal G')
  \le p_J(\ell)\e^{-c_0\ell/2}
  \label{eq:EJ-exponential-phase}
\end{equation}
whenever the geometry and the number of elementary moves grow polynomially in
$\ell$.  The comparison geometry used below is shown schematically in
\cref{fig:J-bubble-comparison}.

\begin{figure}[H]
\centering
\resizebox{0.42\textwidth}{!}{%
\begin{tikzpicture}[x=1cm, y=1cm, line cap=round, line join=round]

  \begin{scope}
    \clip[rounded corners=20pt] (0,0) rectangle (9,9);
    
    \fill[white] (0,0) rectangle (2.5,9);
    
    \shade[left color=white, right color=red!14] (2.5,0) rectangle (6.8,9);
    
    \fill[red!14] (6.8,0) rectangle (9,9);
  \end{scope}

  \draw[line width=3.5pt, rounded corners=20pt] (0,0) rectangle (9,9);

  \newcommand{\defect}[2]{%
    \begin{scope}[shift={(#1,#2)}, scale=0.7]
      \draw[thick, fill=orange!15] 
        (0,-0.6) 
        to[out=60, in=-120] (0.2,-0.1)
        to[out=60, in=-60] (0.05,0.4)
        to[out=120, in=-90] (-0.05,0.7)
        to[out=-70, in=110] (0.08,0.25)
        to[out=-70, in=60] (-0.1,-0.25)
        to[out=-120, in=80] (0,-0.6) -- cycle;
    \end{scope}
  }


  \draw[line width=4.5pt, draw={rgb,255:red,13;green,110;blue,220},
        postaction={decorate},
        decoration={markings,
          mark=at position 0.55 with {\arrow{Stealth[length=5mm,width=4.2mm]}}
        },
        ->, >={Stealth[length=6.5mm, width=5.5mm]}]
    (2.9, 2.3) to[out=85, in=240] (6.0, 5.7);

  \newcommand{\diskPartition}[2]{%
    \begin{scope}[shift={(#1,#2)}]
      \def\R{1.35}
      \fill[blue!12]   (0,0) -- (30:\R)  arc (30:150:\R)  -- cycle;
      \fill[red!12]    (0,0) -- (150:\R) arc (150:270:\R) -- cycle;
      \fill[yellow!22] (0,0) -- (270:\R) arc (270:390:\R) -- cycle;

      \draw[line width=1.5pt] (0,0) -- (30:\R);
      \draw[line width=1.5pt] (0,0) -- (150:\R);
      \draw[line width=1.5pt] (0,0) -- (270:\R);
      \draw[line width=2.4pt] (0,0) circle (\R);

      \node at (90:0.76)  {\large $C$};
      \node at (210:0.76) {\large $A$};
      \node at (330:0.76) {\large $B$};
    \end{scope}
  }

  \diskPartition{1.8}{1.8}
  \diskPartition{7.2}{7.2}

  \node[font=\large, fill=white, fill opacity=0.8, text opacity=1,
        rounded corners=2pt, inner sep=2pt]
    at (1.8,3.55) {$\mathcal G_{\mathrm{out}}$};
  \node[font=\large, fill=red!7, fill opacity=0.85, text opacity=1,
        rounded corners=2pt, inner sep=2pt]
    at (7.2,5.35) {$\mathcal G_{\mathrm{in}}$};

\end{tikzpicture}
}
\caption{Bubble comparison used in
\cref{thm:J-phase-stability-final}.  The partition is deformed from
$\mathcal G_{\mathrm{out}}$, where the bubble state agrees with
$\rho^\alpha$, to $\mathcal G_{\mathrm{in}}$, where it is locally close to
$\rho^\beta$.  These two comparisons produce the errors $\mathcal E_J$ and
$\Omega_J$, respectively.}
\label{fig:J-bubble-comparison}
\end{figure}

\begin{theorem}[Stability of $J$ under \Aone-preserving quasi-local equivalence]
\label{thm:J-phase-stability-final}
Suppose
$\boldsymbol\rho^\alpha\sim_{A_1}\boldsymbol\rho^\beta$ through the
\Aone-preserving quasi-local unitary path $U_\Lambda(s)$.  Let
$D\subseteq\Lambda$ be a sufficiently large disk and define the endpoint
bubble state
\begin{equation}
  \omega_{\Lambda,D}
  :=U_{\Lambda,D}(1)\rho_\Lambda^\alpha
    U_{\Lambda,D}(1)^\dagger.
  \label{eq:endpoint-bubble-state-phase}
\end{equation}
Choose topology- and orientation-equivalent response partitions
$\mathcal G_{\mathrm{out}}$ and $\mathcal G_{\mathrm{in}}$ of characteristic
scale $\ell$ such that their total supports satisfy
\begin{equation}
  R_{\mathrm{out}}\subseteq D^c,
  \qquad
  R_{\mathrm{in}}\subseteq D,
  \qquad
  r:=\dist(R_{\mathrm{in}},D^c).
  \label{eq:inside-outside-partitions-phase}
\end{equation}
Assume that $\mathcal G_{\mathrm{out}}$ can be deformed into
$\mathcal G_{\mathrm{in}}$ through admissible local moves.  Define
\begin{equation}
  \eta_U(r,R_{\mathrm{in}})
  :=C_n|R_{\mathrm{in}}|
  r^{-n},
  \label{eq:eta-U-phase}
\end{equation}
where $n$ may be chosen arbitrarily and $C_n$ is the constant in
\eqref{eq:QLU-interior-state-phase}.  For $r$ large enough that
$\eta_U\le1/2$,
\begin{equation}
  \begin{aligned}
  &\abs{
    J(\mathcal G_{\mathrm{in}})_{\rho^\beta}
    -J(\mathcal G_{\mathrm{out}})_{\rho^\alpha}
  }
  \le
  \Omega_J\!\left(
    \sqrt{2\eta_U(r,R_{\mathrm{in}})};\mathcal G_{\mathrm{in}}
  \right)
  +\mathcal E_J\!\left(
    \omega_{\Lambda,D};
    \mathcal G_{\mathrm{out}}\to\mathcal G_{\mathrm{in}}
  \right).
  \end{aligned}
  \label{eq:J-bubble-stability-phase}
\end{equation}
Here $\Omega_J(\delta;\mathcal G)$ is the modulus in
\eqref{eq:trace-distance-phase} evaluated on the tripartition $\mathcal G$.
\end{theorem}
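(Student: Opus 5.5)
The plan is a three-term telescoping argument through the bubble state $\omega:=\omega_{\Lambda,D}$:
\[
J(\mathcal G_{\mathrm{in}})_{\rho^\beta}-J(\mathcal G_{\mathrm{out}})_{\rho^\alpha}
=\bigl[J(\mathcal G_{\mathrm{in}})_{\rho^\beta}-J(\mathcal G_{\mathrm{in}})_{\omega}\bigr]
+\bigl[J(\mathcal G_{\mathrm{in}})_{\omega}-J(\mathcal G_{\mathrm{out}})_{\omega}\bigr]
+\bigl[J(\mathcal G_{\mathrm{out}})_{\omega}-J(\mathcal G_{\mathrm{out}})_{\rho^\alpha}\bigr].
\]
Each bracket is then controlled by a different earlier result.

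The third bracket vanishes exactly. Since $R_{\mathrm{out}}\subseteq D^c$, the exterior identity \eqref{eq:QLU-exterior-state-phase} of \cref{lem:QLU-spatial-truncation-phase} gives $\omega_{R_{\mathrm{out}}}=(\rho^\alpha_\Lambda)_{R_{\mathrm{out}}}$. The modular commutator depends only on the reduced state on $ABC$, because the modular Hamiltonians $K_{AB},K_{BC}$ are functions of marginals of that reduced state. Hence the two values coincide.

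The first bracket is handled by trace-norm continuity. By \eqref{eq:A1-phase-endpoint-phase}, $\rho^\beta_\Lambda=U_\Lambda(1)\rho^\alpha_\Lambda U_\Lambda(1)^\dagger$, which is $\rho_{\Lambda,1}$ in the notation of \cref{lem:QLU-spatial-truncation-phase}. Applying \eqref{eq:QLU-interior-state-phase} with $X=R_{\mathrm{in}}\subseteq D$ gives
\[
\eps:=\tfrac12\norm{\omega_{R_{\mathrm{in}}}-\rho^\beta_{R_{\mathrm{in}}}}_1\le\eta_U(r,R_{\mathrm{in}})\le\tfrac12.
\]
We then invoke \eqref{eq:J-trace-continuity-phase} of \cref{thm:J-trace-continuity-phase} on the tripartition $\mathcal G_{\mathrm{in}}$, which yields $\Omega_J(\sqrt{2\eps};\mathcal G_{\mathrm{in}})$. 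One small point remains: to replace $\eps$ by $\eta_U$ we need $\delta\mapsto\Omega_J(\delta)$ to be nondecreasing on the relevant range. Writing out $\vartheta_X(\delta)=\delta[\log d_X+2\log(1/\delta)+3]$, its derivative is $\log d_X+2\log(1/\delta)+1$. With $\delta\le1$ this is positive, since $d_X\ge4$. The condition $\eta_U\le1/2$ ensures $\sqrt{2\eta_U}\le1$, so monotonicity holds and the bound is $\Omega_J(\sqrt{2\eta_U};\mathcal G_{\mathrm{in}})$.

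The second bracket concerns a single state $\omega$ and two partitions, so it is exactly the quantity bounded by \cref{thm:main-J}, namely $\mathcal E_J(\omega;\mathcal G_{\mathrm{out}}\to\mathcal G_{\mathrm{in}})$. We also need $\omega$ to be a finite-dimensional pure state. Purity holds because $\rho^\alpha_\Lambda$ is pure and $U_{\Lambda,D}(1)$ is unitary.

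The main obstacle is justifying that \cref{thm:main-J} applies to $\omega$. Along the deformation, the local geometries cross the bubble wall $\partial D$, where $\omega$ is neither $\rho^\alpha$ nor $\rho^\beta$. Here the theorem relies on \cref{def:A1-preserving-QLU-phase}: $U_{\Lambda,D}(1)\rho^\alpha_\Lambda U_{\Lambda,D}(1)^\dagger$ is uniformly \Aone-admissible, explicitly including wall-crossing geometries. Every move therefore has \Aone{} defect bounded as in \eqref{eq:A1-admissible-bound}, and the explicit one-step errors $E_j$ evaluated in $\omega$ are finite. Their sum is $\mathcal E_J$ by definition, so no exponential estimate is needed at this stage. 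Adding the three brackets with the triangle inequality gives \eqref{eq:J-bubble-stability-phase}.
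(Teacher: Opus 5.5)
Your proposal is correct and follows the paper's proof essentially step for step. The exterior identity makes $J(\mathcal G_{\mathrm{out}})$ agree exactly between $\omega_{\Lambda,D}$ and $\rho^\alpha$; the interior truncation bound together with \cref{thm:J-trace-continuity-phase} gives the $\Omega_J$ term on $\mathcal G_{\mathrm{in}}$; \Aone-preservation across the wall lets \cref{thm:main-J} supply $\mathcal E_J$; and the triangle inequality combines them. Your explicit check that $\delta\mapsto\Omega_J(\delta)$ is nondecreasing for $\delta\le1$, which upgrades $\eps\le\eta_U$ to the stated $\Omega_J(\sqrt{2\eta_U})$, is a detail the paper leaves implicit, and you verify it correctly.
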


\begin{proof}
Because $U_{\Lambda,D}(1)$ is supported on $D$,
\eqref{eq:QLU-exterior-state-phase} gives
\begin{equation}
  (\omega_{\Lambda,D})_{R_{\mathrm{out}}}
  =(\rho_\Lambda^\alpha)_{R_{\mathrm{out}}},
\end{equation}
and therefore
\begin{equation}
  J(\mathcal G_{\mathrm{out}})_{\omega_{\Lambda,D}}
  =J(\mathcal G_{\mathrm{out}})_{\rho^\alpha}.
  \label{eq:J-outside-exact-phase}
\end{equation}
On the other hand, the full endpoint of the quasi-local path is
$\rho_\Lambda^\beta$.  Hence \eqref{eq:QLU-interior-state-phase} gives
\begin{equation}
  \frac12\norm{
    (\omega_{\Lambda,D})_{R_{\mathrm{in}}}
    -(\rho_\Lambda^\beta)_{R_{\mathrm{in}}}
  }_1
  \le\eta_U(r,R_{\mathrm{in}}).
  \label{eq:bubble-inside-trace-phase}
\end{equation}
Applying \cref{thm:J-trace-continuity-phase} to
\eqref{eq:bubble-inside-trace-phase} bounds the difference between the two
values on $\mathcal G_{\mathrm{in}}$ by $\Omega_J$.

By \cref{def:A1-preserving-QLU-phase}, the wall state
$\omega_{\Lambda,D}$ is uniformly \Aone-admissible on all geometries,
including those intersecting $\partial D$.  Therefore
\cref{thm:main-J} applies to the deformation from
$\mathcal G_{\mathrm{out}}$ to $\mathcal G_{\mathrm{in}}$.  Inserting the two
wall-state values between the endpoint values and applying the triangle
inequality proves \eqref{eq:J-bubble-stability-phase}.
\end{proof}

\paragraph{Hall conductance estimator deformation error.}
For a pure sharp-charge state $\rho=\ket\Psi\!\bra\Psi$ and a
symmetry-preserving deformation $\mathcal G\to\mathcal G'$, denote by
\begin{equation}
  \mathcal E_\Sigma(\rho;\mathcal G\to\mathcal G')
  :=\sum_{j=1}^{N}\Lambda_j(u_j)\norm{Q_j\ket\Psi}_2
  \label{eq:ESigma-definition-phase}
\end{equation}
the right-hand side of \eqref{eq:Sigma-global-main}.  By
\cref{thm:main-Sigma}, if the relevant \Aone{} defects satisfy
\eqref{eq:A1-admissible-bound}, the on-site charges are uniformly bounded, and the
geometry and number of elementary moves grow polynomially in $\ell$, then
\begin{equation}
  \mathcal E_\Sigma(\rho;\mathcal G\to\mathcal G')
  \le p_\Sigma(\ell)\e^{-c_0\ell/2}.
  \label{eq:ESigma-exponential-phase}
\end{equation}

\begin{theorem}[Stability of the Hall conductance estimator under symmetry-protected quasi-local equivalence]
\label{thm:Sigma-phase-stability-final}
Suppose
\begin{equation}
  \boldsymbol\rho^\alpha
  \sim_{A_1,U(1)}
  \boldsymbol\rho^\beta
\end{equation}
through a $U(1)$-symmetric and \Aone-preserving quasi-local unitary path.
Let $D$, the bubble state $\omega_{\Lambda,D}$, the response partitions
$\mathcal G_{\mathrm{out}}$ and $\mathcal G_{\mathrm{in}}$, and the
truncation error $\eta_U(r,R_{\mathrm{in}})$ be as in \eqref{eq:endpoint-bubble-state-phase},
\eqref{eq:inside-outside-partitions-phase}, and \eqref{eq:eta-U-phase}.  For $\eta_U\le1/2$,
\begin{equation}
  \begin{aligned}
  &\abs{
    \Sigma(\mathcal G_{\mathrm{in}})_{\rho^\beta}
    -\Sigma(\mathcal G_{\mathrm{out}})_{\rho^\alpha}
  }
  \le
  \Omega_\Sigma\!\left(
    \sqrt{2\eta_U(r,R_{\mathrm{in}})};\mathcal G_{\mathrm{in}}
  \right)
  +\mathcal E_\Sigma\!\left(
    \omega_{\Lambda,D};
    \mathcal G_{\mathrm{out}}\to\mathcal G_{\mathrm{in}}
  \right).
  \end{aligned}
  \label{eq:Sigma-bubble-stability-phase}
\end{equation}
Here $\Omega_\Sigma(\delta;\mathcal G)$ is the modulus in
\eqref{eq:Omega-Sigma-phase} evaluated on the tripartition
$\mathcal G$.
\end{theorem}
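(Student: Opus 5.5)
The argument follows the proof of \cref{thm:J-phase-stability-final} almost line by line. The plan is to insert the two values of $\Sigma$ computed in the bubble state $\omega_{\Lambda,D}$ and use the triangle inequality:
\begin{align*}
\abs{\Sigma(\mathcal G_{\mathrm{in}})_{\rho^\beta}-\Sigma(\mathcal G_{\mathrm{out}})_{\rho^\alpha}}
\le{}&\abs{\Sigma(\mathcal G_{\mathrm{in}})_{\rho^\beta}-\Sigma(\mathcal G_{\mathrm{in}})_{\omega}}
+\abs{\Sigma(\mathcal G_{\mathrm{in}})_{\omega}-\Sigma(\mathcal G_{\mathrm{out}})_{\omega}}\\
&+\abs{\Sigma(\mathcal G_{\mathrm{out}})_{\omega}-\Sigma(\mathcal G_{\mathrm{out}})_{\rho^\alpha}}.
\end{align*}
Here $\omega:=\omega_{\Lambda,D}$. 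The first term will give the $\Omega_\Sigma$ contribution, the second the $\mathcal E_\Sigma$ contribution, and the third will vanish exactly.

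\emph{Third term.} $U_{\Lambda,D}(1)$ is supported in $D$ and $R_{\mathrm{out}}\subseteq D^c$, so \eqref{eq:QLU-exterior-state-phase} gives $\omega_{R_{\mathrm{out}}}=(\rho^\alpha_\Lambda)_{R_{\mathrm{out}}}$. The quantity $\Sigma(A,B,C)$ depends only on the marginal on $ABC$: both $K_{AB}$ and $Q_{BC}^2$ are functions of operators supported there. Hence the third term is zero.

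\emph{First term.} By \eqref{eq:QLU-interior-state-phase} with $X=R_{\mathrm{in}}$, the marginals of $\omega$ and $\rho^\beta_\Lambda=U_\Lambda(1)\rho^\alpha_\Lambda U_\Lambda(1)^\dagger$ on $R_{\mathrm{in}}$ are within trace distance $\eta_U(r,R_{\mathrm{in}})\le1/2$. Applying \cref{thm:Sigma-trace-continuity-phase} on $\mathcal G_{\mathrm{in}}$ then bounds this term by $\Omega_\Sigma(\sqrt{2\eps};\mathcal G_{\mathrm{in}})$ with $\eps\le\eta_U$. To replace $\eps$ by $\eta_U$, I need $\Omega_\Sigma$ to be nondecreasing in $\delta$ on the relevant range. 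The factor $\delta\log d$ is monotone. The factor $\vartheta(\delta)=\delta[\log d+2\log(1/\delta)+3]$ has derivative $\log d+2\log(1/\delta)+1>0$ for $\delta\le1$, which holds since $\delta=\sqrt{2\eta_U}\le1$. So monotonicity holds and the replacement is valid. Note that \cref{thm:Sigma-trace-continuity-phase} itself requires no symmetry assumption.

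\emph{Second term.} This is the main obstacle, because it needs the hypotheses of \cref{thm:main-Sigma} for the wall state $\omega$. First, $\omega$ is pure: it is a unitary conjugate of the pure state $\rho^\alpha_\Lambda$. Second, it has sharp charge: by \eqref{eq:full-and-truncated-unitary-symmetry-phase}, $[U_{\Lambda,D}(1),Q_\Lambda]=0$, and this follows from the termwise symmetry in \cref{def:U1-symmetric-QLU-phase}. Hence $Q_\Lambda U_{\Lambda,D}(1)\ket{\Psi^\alpha}=q_\Lambda U_{\Lambda,D}(1)\ket{\Psi^\alpha}$, so the on-site symmetry \eqref{eq:U1-main} holds exactly. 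Third, it satisfies uniform approximate \Aone{} on every geometry, including those crossing $\partial D$; by \cref{def:U1-A1-preserving-QLU-phase} this is an assumption, not something to derive. With all three in hand, \cref{thm:main-Sigma} applied to the deformation $\mathcal G_{\mathrm{out}}\to\mathcal G_{\mathrm{in}}$ in the state $\omega$ bounds the second term by $\mathcal E_\Sigma(\omega;\mathcal G_{\mathrm{out}}\to\mathcal G_{\mathrm{in}})$. This requires the deformation to be orientation-preserving, which is guaranteed because the two partitions are topology- and orientation-equivalent. Summing the three bounds gives \eqref{eq:Sigma-bubble-stability-phase}.
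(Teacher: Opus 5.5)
Your proposal is correct and follows the paper's proof essentially step for step: you insert the two wall-state values, use exact agreement of the $R_{\mathrm{out}}$ marginals, apply trace-norm continuity on $\mathcal G_{\mathrm{in}}$ via the interior truncation bound, and apply \cref{thm:main-Sigma} to the wall deformation after checking purity, sharp total charge from termwise symmetry, and the assumed wall-crossing \Aone{} admissibility. Two points in your write-up are details the paper leaves implicit: the check that $\Omega_\Sigma$ is nondecreasing in $\delta$, which justifies replacing $\eps\le\eta_U$ by $\eta_U$, and the remark that the trace-continuity step itself needs no symmetry.
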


\begin{proof}
The termwise symmetry condition
\eqref{eq:termwise-U1-symmetry-phase} implies
$[U_{\Lambda,D}(1),Q_\Lambda]=0$.  Hence the bubble state
$\omega_{\Lambda,D}$ is a pure state with the same sharp total charge as
$\rho_\Lambda^\alpha$, and the full endpoint $\rho_\Lambda^\beta$ lies in
the same charge sector.  Thus the exact symmetry hypothesis of
\cref{thm:main-Sigma} holds for the two bulks and for the wall state.

Because the truncated unitary is supported on $D$,
\eqref{eq:QLU-exterior-state-phase} gives
\begin{equation}
  (\omega_{\Lambda,D})_{R_{\mathrm{out}}}
  =(\rho_\Lambda^\alpha)_{R_{\mathrm{out}}},
\end{equation}
and therefore
\begin{equation}
  \Sigma(\mathcal G_{\mathrm{out}})_{\omega_{\Lambda,D}}
  =\Sigma(\mathcal G_{\mathrm{out}})_{\rho^\alpha}.
  \label{eq:Sigma-outside-exact-phase}
\end{equation}
Deep inside the bubble,
\eqref{eq:QLU-interior-state-phase} gives
\begin{equation}
  \frac12\norm{
    (\omega_{\Lambda,D})_{R_{\mathrm{in}}}
    -(\rho_\Lambda^\beta)_{R_{\mathrm{in}}}
  }_1
  \le\eta_U(r,R_{\mathrm{in}}).
  \label{eq:Sigma-bubble-inside-trace-phase}
\end{equation}
Both reduced states commute with the regional charge, so
\cref{thm:Sigma-trace-continuity-phase} bounds their Hall conductance estimators on
$\mathcal G_{\mathrm{in}}$ by $\Omega_\Sigma$.

By \cref{def:U1-A1-preserving-QLU-phase}, the wall state is uniformly
\Aone-admissible on every geometry encountered while moving the response
partition through $\partial D$, and it remains exactly $U(1)$ symmetric.
Therefore \cref{thm:main-Sigma} applies to the wall deformation
$\mathcal G_{\mathrm{out}}\to\mathcal G_{\mathrm{in}}$.  Inserting the two
wall-state values between the endpoint values and applying the triangle
inequality proves \eqref{eq:Sigma-bubble-stability-phase}.
\end{proof}

\begin{corollary}[Phase invariance of the limiting modular responses]
\label{cor:response-phase-invariance-final}
Assume fixed on-site dimension with two-dimensional response partitions
of scale $\ell$, a polynomial number of elementary deformation moves, and
comparison bubbles with $r\ge c\ell$.  Suppose the corresponding
thermodynamic response limits exist.  If
$\boldsymbol\rho^\alpha\sim_{A_1}\boldsymbol\rho^\beta$, then
\begin{equation}
  \boxed{
  J^{\mathrm{top}}(\boldsymbol\rho^\alpha)
  =J^{\mathrm{top}}(\boldsymbol\rho^\beta).
  }
  \label{eq:J-topological-phase-invariance-final}
\end{equation}
If, in addition, the on-site charges are uniformly bounded and
$\boldsymbol\rho^\alpha\sim_{A_1,U(1)}\boldsymbol\rho^\beta$, then
\begin{equation}
  \boxed{
  \Sigma^{\mathrm{top}}(\boldsymbol\rho^\alpha)
  =\Sigma^{\mathrm{top}}(\boldsymbol\rho^\beta).
  }
  \label{eq:Sigma-topological-phase-invariance-final}
\end{equation}
\end{corollary}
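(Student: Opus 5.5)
The corollary follows from the two bubble-stability theorems, \cref{thm:J-phase-stability-final,thm:Sigma-phase-stability-final}, by showing that both error terms vanish as $\ell\to\infty$. We then pass to the thermodynamic limit, where the limiting responses are assumed to exist and to be independent of the chosen partition.

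First, fix $\ell$. For each $\Lambda$ large enough, choose a disk $D\subseteq\Lambda$ and topology- and orientation-equivalent partitions $\mathcal G_{\mathrm{out}}\subseteq D^c$ and $\mathcal G_{\mathrm{in}}\subseteq D$ of scale $\ell$, with buffer $r=\dist(R_{\mathrm{in}},D^c)\ge c\ell$. The partitions should be connected by polynomially many (in $\ell$) admissible moves, so that $|R_{\mathrm{in}}|=O(\ell^2)$ and all region dimensions satisfy $\log d_X=O(\ell^2)$.

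\emph{Deformation error.} By \cref{def:A1-preserving-QLU-phase}, the wall state $\omega_{\Lambda,D}$ is uniformly \Aone-admissible with constants independent of $\Lambda$. Then \eqref{eq:EJ-exponential-phase} gives $\mathcal E_J\le p_J(\ell)\e^{-c_0\ell/2}$. In the $U(1)$ case, the termwise symmetry preserves the sharp charge, and \eqref{eq:ESigma-exponential-phase} gives $\mathcal E_\Sigma\le p_\Sigma(\ell)\e^{-c_0\ell/2}$, using $q_0$ bounded.

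\emph{Continuity error.} Choose $n\ge5$ in \eqref{eq:eta-U-phase}, so that $\eta_U\le C_n\,O(\ell^2)(c\ell)^{-n}$, which is polynomially small and below $1/2$ for large $\ell$. Substituting into the rough forms \eqref{eq:J-trace-continuity-rough-phase} and \eqref{eq:Sigma-trace-continuity-rough-phase} gives:
\begin{itemize}
\item for $J$, a bound of order $\sqrt{\eta_U}\,[\ell^2+\log(1/\eta_U)]^2=O(\ell^{1-n/2}\,\ell^4)$;
\item for $\Sigma$, a bound of order $q_0^2\ell^4\sqrt{\eta_U}\,\ell^2$.
\end{itemize}
Since $n$ is arbitrary, taking $n$ large, say $n\ge 20$, makes both bounds $o(1)$ as $\ell\to\infty$, uniformly in $\Lambda$. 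This step is the main bookkeeping obstacle: the polynomial prefactors coming from $\log d$ and $|B||C|$ must be beaten by the arbitrary power decay of the Lieb--Robinson truncation. Because the constants $C_n$ are $\Lambda$-independent, this works.

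Finally, the right-hand side of \eqref{eq:J-bubble-stability-phase} is bounded by a quantity $\epsilon(\ell)\to0$ that does not depend on $\Lambda$. Take $\Lambda\to\infty$ at fixed $\ell$; the assumed limits identify $J(\mathcal G_{\mathrm{out}})_{\rho^\alpha}$ and $J(\mathcal G_{\mathrm{in}})_{\rho^\beta}$ with the scale-$\ell$ limiting values. Then let $\ell\to\infty$, where $J^{\mathrm{top}}$ is defined as the large-scale limit, to obtain $|J^{\mathrm{top}}(\boldsymbol\rho^\alpha)-J^{\mathrm{top}}(\boldsymbol\rho^\beta)|\le\lim\epsilon(\ell)=0$. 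The same argument applies verbatim to $\Sigma$.
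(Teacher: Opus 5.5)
Your proposal is correct and follows the paper's proof. You combine the two bubble-stability bounds and kill $\mathcal E_J,\mathcal E_\Sigma$ through uniform \Aone{} admissibility of the wall state. You then beat the polynomial dimension and charge prefactors in $\Omega_J,\Omega_\Sigma$ with the arbitrarily fast power decay of $\eta_U$ at $r\ge c\ell$, and finally take the thermodynamic limit.

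Your explicit exponent count is a more detailed version of the paper's remark that $\eta_U$ decays superpolynomially while all prefactors grow only polynomially. That count needs roughly $n>10$ for $J$ and $n>14$ for $\Sigma$, so your initial ``$n\ge5$'' is superseded by your later choice $n\ge20$.
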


\begin{proof}
For $r\ge c\ell$, the quasi-local truncation error $\eta_U$ tends to zero
superpolynomially.  The dimension factors in $\Omega_J$ and, under the
bounded-charge assumption, the dimension and charge factors in
$\Omega_\Sigma$ grow only polynomially.  Hence the continuity terms in
\eqref{eq:J-bubble-stability-phase} and
\eqref{eq:Sigma-bubble-stability-phase} vanish as $\ell\to\infty$.
Uniform \Aone{} admissibility of the bubble-wall states similarly gives
$\mathcal E_J,\mathcal E_\Sigma\to0$ through
\cref{eq:EJ-exponential-phase,eq:ESigma-exponential-phase}.  The two
domain-wall bounds therefore imply that the finite-size response differences
vanish.  Taking the thermodynamic limit proves both equalities.
\end{proof}

\section{Finite-time instantaneous modular flow and quantitative \Aone{} obstructions}
\label{sec:finite-time-IMF-no-go}

Instantaneous modular flow (IMF) was introduced in
Ref.~\cite{strict-J-2024} as a central tool for proving the exact no-go theorem: a nonzero modular commutator is incompatible with exact \Aone{} uniformly.  For a normalized vector \(\ket\phi\), write
\begin{equation}
\rho_X(\phi):=\Tr_{\bar X}\ket\phi\!\bra\phi
\end{equation}
for its reduced density matrix on $X$, and define
\begin{equation}
\cI_X(t)\ket\phi
:=\rho_X(\phi)^{it}\ket\phi.
\label{eq:finite-IMF-definition}
\end{equation}
The defining feature of IMF, distinguishing it from ordinary modular flow, is that in a product of maps the reduced density matrix in each factor is recomputed from that factor's \emph{current input state}, rather than held
fixed at its initial value.  Products of IMF maps are composed from right to left.  IMF obeys the flipping and commutation properties
\cite{strict-J-2024}:
\begin{align}
\cI_X(t)\ket\phi
&=\cI_{\overline X}(t)\ket\phi,
\label{eq:finite-flip-move}\\
\cI_X(s)\cI_Y(t)\ket\phi
&=\cI_Y(t)\cI_X(s)\ket\phi,
\qquad X\subseteq Y.
\label{eq:finite-commutation-move}
\end{align}
The flipping identity follows directly from the Schmidt decomposition across $X|\overline X$: the reduced states on the two sides have the same nonzero Schmidt eigenvalues, so their imaginary powers act identically on $\ket\phi$.  If $X\subseteq Y$, flipping the $Y$ flow replaces it by a flow on $\overline Y$, which is disjoint from $X$.  The two flows then leave each other's reduced state unchanged and therefore commute; flipping back gives \eqref{eq:finite-commutation-move}.
For nonfaithful reduced states, imaginary powers are understood on the support and may be extended by the identity on the kernel when an ordinary unitary representative is convenient.

The exact no-go argument uses exact Markov decompositions of IMF to show that the modular commutator and Hall conductance estimator remain constant along the orbit
\(
\ket{\Psi(t)}=\cI_{AB}(t)\ket\Psi
\).
Along this orbit, the entropy and regional charge fluctuation will change linearly in time if $\Aone{}$ holds exactly, which is incompatible with their finite local bounds in the long time.  In this section, we extend this argument to finite times under approximate Markov conditions.  The main new input is a finite-time stability theorem for IMF that depends only on the initial conditional mutual information and requires no lower bound on any reduced-state eigenvalue.  We begin by reviewing the exact no-go argument of Ref.~\cite{strict-J-2024} and identifying the steps that require quantitative control.

\subsection{Review of the exact no-go theorem and strategy of the quantitative extension}
\label{subsec:exact-no-go-review}

Before proving the finite lower bound on the $\Aone{}$ violation, we review the logic of the exact no-go theorem of Ref.~\cite{strict-J-2024}. Fix a partition $A,B,C$ and the finite collection of auxiliary annular geometries needed in the proof. If the state has finite local Hilbert-space dimension and satisfies exact bulk \Aone{} on all of these geometries, then
\begin{equation}
J(A,B,C)_\Psi=0.
\label{eq:exact-no-go-local-statement}
\end{equation}
Equivalently, its contrapositive says that a nonzero finite modular commutator forces at least one local \Aone{} violation to be strictly positive. The original theorem is qualitative: it rules out the possibility that $\Aone{}$ holds exactly for every partition, but it does not estimate how large one of the defects must be.  The purpose of the present section is to promote the result to a quantitative, spectral-free lower bound.

\subsubsection{The exact entropy-drift argument}
\label{subsubsec:exact-entropy-drift-review}
Consider the instantaneous modular-flow orbit and its $BC$ reduced state and entropy:
\begin{align}
\ket{\Psi(t)}:=\cI_{AB}(t)\ket\Psi,
\label{eq:exact-review-Psi-t}\qquad
\rho_{BC}(t):=
\Tr_{\overline{BC}}\!\left(\ket{\Psi(t)}\!\bra{\Psi(t)}\right),
\qquad
S_{BC}(t)&:=S\!\left(\rho_{BC}(t)\right).
\end{align}
Write $K_R(t):=-\log\rho_R(t)$.  The IMF generator gives
\begin{equation}
\frac{d}{dt}\ket{\Psi(t)}=-iK_{AB}(t)\ket{\Psi(t)},
\qquad
\dot\rho_{BC}(t) =-i\Tr_{\overline{BC}} \left[K_{AB}(t),\ket{\Psi(t)}\!\bra{\Psi(t)}\right].
\end{equation}
Therefore, using $dS(\rho)/dt=\Tr(\dot\rho K)$ and cyclicity of the trace, we get the identity that relates the entropy production rate with the modular commutator:
\begin{align}
\frac{d}{dt}S_{BC}(t)
&=\Tr_{BC}\!\left[\dot\rho_{BC}(t)K_{BC}(t)\right]\notag\\
&=i\bra{\Psi(t)}[K_{AB}(t),K_{BC}(t)]\ket{\Psi(t)}.\notag \\
&= J(A,B,C)_{\Psi(t)}. 
\label{eq:exact-review-entropy-derivative}
\end{align}
Moreover, we can write the modular commutator in terms of the mixed infinitesimal generator of the overlap between the two IMFs.  For any normalized $\ket\phi$,
\begin{equation}
J(A,B,C)_\phi
=
-2\operatorname{Im}
\left.
\frac{\partial^2}{\partial x\,\partial y}
\braket{\cI_{AB}(x)\phi}{\cI_{BC}(y)\phi}
\right|_{x=y=0}.
\label{eq:J-IMF-overlap-generator}
\end{equation}
as the mixed derivative equals $\bra\phi K_{AB}K_{BC}\ket\phi$, whose imaginary part gives $i\bra\phi[K_{AB},K_{BC}]\ket\phi$ with the factor above.

The nontrivial part of the proof is to show that exact \Aone{} makes this slope constant:
\begin{equation}
J(A,B,C)_{\Psi(t)} = J(A,B,C)_\Psi
\qquad \text{for every }t\in\mathbb R.
\label{eq:exact-review-J-constant}
\end{equation}
We recall the two steps behind this statement.  Split $C=C'M$, let $D$ be an
outer collar, and set $E=\overline{ABCD}$.  The basic geometry and the refined
partition used below are shown in \cref{fig:finite-no-go-geometry}.

\begin{figure}[H]
\centering
\begin{minipage}[t]{0.48\linewidth}
\centering
\resizebox{\linewidth}{!}{%
\begin{tikzpicture}[line cap=round,line join=round]
  \path[use as bounding box] (-3.05,-2.65) rectangle (3.25,3.05);
  \fill[yellow!22] (0,0) circle (2.35);
  \path[fill=green!18]
    (30:1.92) arc[start angle=30,end angle=150,radius=1.92]
    -- (150:2.35) arc[start angle=150,end angle=30,radius=2.35] -- cycle;
  \path[fill=blue!12]
    (20:2.35) arc[start angle=20,end angle=160,radius=2.35]
    -- (160:2.72) arc[start angle=160,end angle=20,radius=2.72] -- cycle;
  \draw[thick] (0,0) circle (2.35);
  \draw[thick] (30:1.92) arc[start angle=30,end angle=150,radius=1.92];
  \draw[thick] (20:2.72) arc[start angle=20,end angle=160,radius=2.72];
  \draw[thick] (20:2.35)--(20:2.72);
  \draw[thick] (160:2.35)--(160:2.72);
  \draw[thick] (0,0)--(150:2.35);
  \draw[thick] (0,0)--(30:2.35);
  \draw[thick] (0,0)--(-90:2.35);
  \draw[densely dotted] (150:2.35) arc[start angle=150,end angle=30,radius=2.35];
  \node at (0,2.54) {$D$};
  \node at (0,2.13) {$M$};
  \node at (0,0.82) {$C'$};
  \node at (-0.86,-0.76) {$A$};
  \node at (0.86,-0.76) {$B$};
  \node at (3.02,0.05) {$E$};
\end{tikzpicture}%
}
\par\smallskip\textbf{(a)}
\end{minipage}\hfill
\begin{minipage}[t]{0.48\linewidth}
\centering
\resizebox{\linewidth}{!}{%
\begin{tikzpicture}[line cap=round,line join=round]
  \path[use as bounding box] (-3.05,-2.65) rectangle (3.25,3.05);
  \path[fill=green!18]
    (20:2.43) arc[start angle=20,end angle=160,radius=2.43]
    -- (160:2.8) arc[start angle=160,end angle=20,radius=2.8] -- cycle;
  \draw[line width=0.9pt] (20:2.43) -- (20:2.8)
    arc[start angle=20,end angle=160,radius=2.8] -- (160:2.43);
  \fill[yellow!22] (0,0) circle (2.43);
  \path[fill=green!18]
    (30:2) arc[start angle=30,end angle=150,radius=2]
    -- (150:2.43) arc[start angle=150,end angle=30,radius=2.43] -- cycle;
  \draw[line width=0.9pt] (30:2) arc[start angle=30,end angle=150,radius=2];
  \draw[line width=0.9pt]
    (150:2.43) arc[start angle=150,end angle=270,radius=2.43]
    arc[start angle=270,end angle=390,radius=2.43];
  \draw[line width=0.9pt] (0,0) -- (150:2.43);
  \draw[line width=0.9pt] (0,0) -- (30:2.43);
  \draw[line width=0.9pt] (0,0) -- (-90:2.43);
  \draw[line width=0.7pt,densely dotted]
    (150:2.43) arc[start angle=150,end angle=30,radius=2.43];
  \draw[red,line width=1.2pt]
    (150:1.22) arc[start angle=150,end angle=30,radius=1.22];
  \draw[red,line width=1.2pt]
    (30:0.78)
    .. controls (0.66,-0.58) and (1.32,-1.12)
    .. ({2.43*cos(-25)},{2.43*sin(-25)});
  \node at (0,2.67) {$D$};
  \node at (0,2.22) {$M$};
  \node at (0,1.42) {$C'_2$};
  \node at (0,0.58) {$C'_1$};
  \node at (-0.88,-0.78) {$A$};
  \node at (0.53,-1.18) {$B_1$};
  \node at (1.30,-0.20) {$B_2$};
  \node at (2.92,0.02) {$E$};
\end{tikzpicture}%
}
\par\smallskip\textbf{(b)}
\end{minipage}
\caption[Geometries used in the exact no-go proof]{Geometries used in the exact no-go proof.  (a) The original response
region is $C=C'M$; $D$ is an outer collar and
$E=\overline{ABCD}$.  The two Markov decompositions are applied to the
ordered tripartitions $E|D|C$ and $C'|M|D$.  (b) The refined partition with
$B=B_1B_2$ and $C'=C'_1C'_2$.\protect\footnotemark}
\label{fig:finite-no-go-geometry}
\end{figure}
\footnotetext{The figure is adapted from \emph{Strict Area Law Entanglement versus
Chirality}~\cite{strict-J-2024} and it was created in TikZ
with generative AI assistance.}

Introduce the auxiliary orbit
\begin{equation}
\ket{\Psi'(t)}:=\cI_{MD}(t)\ket\Psi.
\label{eq:exact-review-Psi-prime}
\end{equation}
Exact \Aone{} implies
\begin{equation}
I(C:E|D)_\Psi=0,
\qquad
I(D:C'|M)_\Psi=0,
\label{eq:exact-review-two-CMIs}
\end{equation}
Furthermore, for an exact quantum Markov chain $X|Y|Z$, instantaneous modular flow obeys the exact decomposition move
\begin{equation}
I(X:Z|Y)_\phi=0
\quad\Longrightarrow\quad
\cI_{XYZ}(t)\ket\phi
=
\cI_{XY}(t)\cI_{YZ}(t)\cI_Y(-t)\ket\phi.
\label{eq:exact-review-Markov-move}
\end{equation}
Since $\overline{AB}=CDE$, the flipping identity
\eqref{eq:finite-flip-move} first gives
\begin{equation}
\ket{\Psi(t)}=\cI_{CDE}(t)\ket\Psi.
\end{equation}
Applying \eqref{eq:exact-review-Markov-move} to the Markov chain $E|D|C$
and then using nested commutation \eqref{eq:finite-commutation-move} yields
\begin{align}
\ket{\Psi(t)}
&=\cI_{DE}(t)\cI_{CD}(t)\cI_D(-t)\ket\Psi
\notag\\
&=\cI_D(-t)\cI_{DE}(t)\cI_{CD}(t)\ket\Psi.
\label{eq:exact-review-first-decomposition}
\end{align}
For the second Markov chain $C'|M|D$, recall that $C=C'M$.  Another
application of \eqref{eq:exact-review-Markov-move}, followed by nested
commutation, gives
\begin{align}
\cI_{CD}(t)\ket\Psi
&=\cI_C(t)\cI_{MD}(t)\cI_M(-t)\ket\Psi
\notag\\
&=\cI_M(-t)\cI_C(t)\cI_{MD}(t)\ket\Psi
\notag\\
&=\cI_M(-t)\cI_C(t)\ket{\Psi'(t)}.
\label{eq:exact-review-second-decomposition}
\end{align}
Substituting \eqref{eq:exact-review-second-decomposition} into
\eqref{eq:exact-review-first-decomposition} gives
\begin{equation}
\ket{\Psi(t)}
=
\cI_D(-t)\cI_{DE}(t)\cI_M(-t)\cI_C(t)
\ket{\Psi'(t)}.
\label{eq:exact-review-two-decompositions}
\end{equation}

The IMF flipping and nested-commutation moves allow the probe flows
$\cI_{AB}(x)$ and $\cI_{BC}(y)$ to be pulled through the four factors in
\eqref{eq:exact-review-two-decompositions}.  The remaining factors are
implemented by the same unitary
\begin{equation}
V(t)
=
\bigl[\rho_D(\Psi'(t))\bigr]^{-it}
\bigl[\rho_{DE}(\Psi'(t))\bigr]^{it}
\bigl[\rho_M(\Psi'(t))\bigr]^{-it}
\bigl[\rho_C(\Psi'(t))\bigr]^{it},
\label{eq:exact-review-V-definition}
\end{equation}
Its action on the two probe states gives
\begin{align}
\cI_{AB}(x)\ket{\Psi(t)}
&=
V(t)\cI_{AB}(x)\ket{\Psi'(t)},
\label{eq:exact-review-common-unitary-AB}\\
\cI_{BC}(y)\ket{\Psi(t)}
&=
V(t)\cI_{BC}(y)\ket{\Psi'(t)}.
\label{eq:exact-review-common-unitary-BC}
\end{align}
Consequently, $V(t)$ cancels from the overlap of the two probe states.  By
\eqref{eq:J-IMF-overlap-generator}, taking the mixed derivative at $x=y=0$
yields the first equality
\begin{equation}
J(A,B,C)_{\Psi(t)} = J(A,B,C)_{\Psi'(t)}.
\label{eq:exact-review-step-one}
\end{equation}

For the second step, refine $B=B_1B_2$ and $C'=C'_1C'_2$ as in
\cref{fig:finite-no-go-geometry}(b).  These refinements resolve the junction
where $B$ meets the removed patch $M$: a direct one-defect argument would
require $I(B:M\mid C')=0$, which is not supplied by the buffered local
\Aone{} condition.  Instead, we use the four Markov conditions
\begin{align*}
I(B_1C'_1:M\mid B_2C'_2)&=0,
& I(A:B_2\mid B_1)&=0,\\
I(B_1:C'_2M\mid B_2)&=0,
& I(B_1:C'_2\mid B_2)&=0.
\end{align*}
They remain exact on $\ket{\Psi'(t)}$: the CMIs on regions disjoint from
$MD$ are unchanged, while those involving $M$ are controlled by enlarging
the leg containing $M$ to include $D$.  For example,
\[
I(B_1C'_1:M\mid B_2C'_2)_{\Psi'(t)}
\le I(B_1C'_1:MD\mid B_2C'_2)_{\Psi'(t)}
= I(B_1C'_1:MD\mid B_2C'_2)_{\Psi}=0,
\]
by data processing and invariance under the unitary $MD$ flow.
The first two Markov conditions eliminate the two modular-defect terms in
the algebraic identity derived below in \eqref{eq:J-deformation-identity};
the last two make its residual commutator expectations vanish by
\eqref{eq:J-defect-main}.  Thus
\begin{equation}
J(A,B,C)_{\Psi'(t)} = J(A,B,C')_{\Psi'(t)}.
\label{eq:exact-review-step-two}
\end{equation}
Finally, the $MD$ flow does not change the $ABC'$ reduced state, and exact \Aone{} gives deformation invariance on the initial state.  Hence
\begin{equation}
J(A,B,C')_{\Psi'(t)}
=
J(A,B,C')_\Psi
=
J(A,B,C)_\Psi.
\label{eq:exact-review-final-chain}
\end{equation}
Equations
\eqref{eq:exact-review-step-one}--\eqref{eq:exact-review-final-chain} prove
\eqref{eq:exact-review-J-constant}.

Substitution into the exact drift identity gives
\begin{equation}
S_{BC}(t)
=
S_{BC}(0)+tJ(A,B,C)_\Psi.
\label{eq:exact-review-linear-entropy}
\end{equation}
On the other hand, all states in the orbit have their $BC$ marginal in the
same finite-dimensional Hilbert space, so
\begin{equation}
0\le S_{BC}(t)\le\log d_{BC}
\qquad
\text{for every }t.
\label{eq:exact-review-entropy-range}
\end{equation}
A nonzero constant slope in \eqref{eq:exact-review-linear-entropy} is incompatible with \eqref{eq:exact-review-entropy-range} for large $t$.  

The Hall conductance estimator no-go theorem has the same structure.
Under exact on-site $U(1)$ symmetry, consider the orbit
$\partial_t\ket{\Psi(t)}=-iK_{AB}(t)\ket{\Psi(t)}$.
First, the mean is constant: $[K_{AB}(t),Q_C]=0$ by disjoint support, so
\[
\frac{d}{dt}\langle Q_{BC}\rangle_{\Psi(t)}
=i\Tr\rho_{AB}(t)[K_{AB}(t),Q_B]
=i\Tr[\rho_{AB}(t),K_{AB}(t)]Q_B=0.
\]
Next, differentiating the second moment gives
\begin{align}
\frac{d}{dt}\langle Q_{BC}^{2}\rangle_{\Psi(t)}
&=i\bra{\Psi(t)}K_{AB}(t)Q_{BC}^2\ket{\Psi(t)}
-i\bra{\Psi(t)}Q_{BC}^2K_{AB}(t)\ket{\Psi(t)}\notag\\
&=i\langle[K_{AB}(t),Q_{BC}^2]\rangle_{\Psi(t)}
=2\Sigma(A,B,C)_{\Psi(t)}.
\label{eq:exact-review-Hall-drift}
\end{align}
Consequently, centering the charge leaves the drift unchanged:
\begin{equation}
\frac{d}{dt}\operatorname{Var}_{\Psi(t)}(Q_{BC})
=\frac{d}{dt}\bigl(\langle Q_{BC}^2\rangle_{\Psi(t)}
-\langle Q_{BC}\rangle_{\Psi(t)}^2\bigr)
=2\Sigma(A,B,C)_{\Psi(t)}.
\label{eq:mean-charge-constant}
\end{equation}
Exact \Aone{} makes $\Sigma(A,B,C)_{\Psi(t)}$ constant by the same two-orbit comparison and deformation argument.  A nonzero constant Hall conductance estimator would then make the second moment grow linearly without bound, contradicting the bounded spectrum of $Q_{BC}$ in a finite local Hilbert space.  For details, see Ref.~\cite{strict-J-2024}.

\subsubsection{Overview of the extension to finite lower bound of \texorpdfstring{\Aone{}}{A1}}
\label{subsubsec:finite-A1-overview}

Suppose now that the finitely many \Aone{} defects used by the no-go geometry
are not zero.  Their bounds depend on the response, since the deformation
geometries differ: write $\eps_{A_1}^{(J)}$ and $\eps_{A_1}^{(\Sigma)}$
as defined in \eqref{eq:uniform-A1-error}.  We illustrate the argument
below for $J$; the Hall argument uses $\eps_{A_1}^{(\Sigma)}$ instead.
The exact drift identities
\eqref{eq:exact-review-entropy-derivative} and
\eqref{eq:exact-review-Hall-drift} remain unchanged.  What must be stabilized
is the constancy of the response along the IMF orbit.  The quantitative proof
replaces the three exact ingredients above as follows.

\begin{enumerate}
\item \emph{Approximate Markov decomposition at finite time.}
Exact \Aone{} supplied zero CMIs and hence the exact identity
\eqref{eq:exact-review-Markov-move}.  For finite \Aone{}, purification and data processing give $I(X:Z|Y)\le\eps_{A_1}^{(J)}$
for every tripartition used in the two decompositions.  We prove below that the corresponding IMF factorization error
\begin{equation}
\mathfrak E_{\phi}^{X:Y:Z}(t)
:=\cI_{XYZ}(t)\ket\phi - \cI_{XY}(t)\cI_{YZ}(t)\cI_Y(-t)\ket\phi
\label{eq:exact-review-flow-error-preview}
\end{equation}
is controlled directly by the initial CMI through:
\begin{equation}
\pi\int_{-\infty}^{\infty}
\frac{
\norm{\mathfrak E_{\phi}^{X:Y:Z}(t)}_2^2
}{
\ell(t)^2[\cosh(2\pi t)-1]
}\,dt
\le
I(X:Z|Y)_\phi.
\label{eq:exact-review-weighted-preview}
\end{equation}
where $\ell(t):=\sqrt{2(1+4t^2)}$.  No evolved CMI and no lower spectral bound are required.

\item \emph{Approximate comparison of the two IMF orbits.}
Apply \eqref{eq:exact-review-weighted-preview} to the same ordered
tripartitions $E|D|C$ and $C'|M|D$ as in the exact proof.  After propagating
the second error through the remaining IMF factors, one constructs the state
\begin{equation}
\ket{\widetilde\Psi(t)}
:=
\cI_D(-t)\cI_{DE}(t)\cI_M(-t)\cI_C(t)
\ket{\Psi'(t)},
\label{eq:exact-review-Psi-tilde-preview}
\end{equation}
which satisfies a weighted estimate of the form
\begin{equation}
\pi\int_{-\infty}^{\infty}
\frac{
r(t)^2
}{
\ell(t)^6[\cosh(2\pi t)-1]
}\,dt
\le4\eps_{A_1}^{(J)},
\qquad
r(t):=\norm{\ket{\Psi(t)}-\ket{\widetilde\Psi(t)}}_2.
\label{eq:exact-review-r-preview}
\end{equation}
The common-unitary cancellation remains exact between
$\ket{\widetilde\Psi(t)}$ and $\ket{\Psi'(t)}$; only the comparison between
$\ket{\Psi(t)}$ and $\ket{\widetilde\Psi(t)}$ is approximate.

\item \emph{Approximate response constancy.}
The spectrum-independent continuity theorem converts $r(t)$ into a change of the modular commutator, while \cref{thm:main-J} controls the two deformations $C\leftrightarrow C'$ by a modulus $\mathcal D_J(\eps_{A_1}^{(J)})=O\!\left(L^2\sqrt{\eps_{A_1}^{(J)}[L^2+\log(1/\eps_{A_1}^{(J)})]}\right)$ for regular two-dimensional regions of linear size $L$ and fixed on-site dimension.  The exact
chain \eqref{eq:exact-review-step-one}--\eqref{eq:exact-review-final-chain} is then replaced by
\begin{equation}
\abs{
J(A,B,C)_{\Psi(t)}-J(A,B,C)_\Psi
}
\le
\Omega_J(r(t);A,B,C)+2\mathcal D_J(\eps_{A_1}^{(J)}),
\label{eq:exact-review-J-orbit-preview}
\end{equation}
where $\Omega_J(r;A,B,C)=O\!\left(rL^2[L^2+\log(1/r)]\right)$ as $r\to0$ for regular two-dimensional regions of linear size $L$ and fixed on-site dimension.  The Hall conductance estimator obeys an analogous bound with $\Omega_\Sigma$ and $\mathcal D_\Sigma$.

\item \emph{Invert the finite drift budget.}
Define the distance of the initial entropy to the nearer endpoint of its
allowed interval by
\begin{equation}
R
:=
\min\{S_{BC}(0),\log d_{BC}-S_{BC}(0)\}.
\label{eq:exact-review-R-definition}
\end{equation}
Assume $J(A,B,C)_\Psi\ne0$ and choose a time-orientation sign
$\kappa\in\{-1,1\}$ so that the initial entropy drift along
$\Psi(\kappa t)$ points toward the nearer endpoint: choose
$\kappa J(A,B,C)_\Psi<0$ if that endpoint is $0$, and
$\kappa J(A,B,C)_\Psi>0$ if it is $\log d_{BC}$
(either endpoint may be chosen in case of a tie).
Combining the exact entropy identity with
\eqref{eq:exact-review-J-orbit-preview} gives, for every
$T>R/\abs{J(A,B,C)_\Psi}$,
\begin{equation}
\boxed{
\abs{J(A,B,C)_\Psi}T-R
\le
\int_0^T\Omega_J(r(\kappa t);A,B,C)\,dt
+
2T\mathcal D_J(\eps_{A_1}^{(J)}).
}
\label{eq:exact-review-master-preview}
\end{equation}
The weighted estimate \eqref{eq:exact-review-r-preview} bounds the first term
on the right.  Since both error moduli vanish with $\eps_{A_1}^{(J)}$, the
right-hand side of \eqref{eq:exact-review-master-preview} would be too small
to accommodate a fixed positive left-hand side if $\eps_{A_1}^{(J)}$ were
arbitrarily small.  Inverting this inequality gives a strictly positive,
finite-dimensional, spectral-free lower bound on the required \Aone{}
violation.  The Hall proof is identical after replacing the entropy budget by
the finite regional charge-fluctuation budget and replacing
$\abs{J(A,B,C)_\Psi}$ by
$\abs{\sigma_{xy}}$.
\end{enumerate}

The remaining subsections implement these four steps.  In particular, they prove the finite-time IMF estimate, construct the common comparison path, and optimize the drift time.  Under the entanglement area law assumption: $R(L)=O(L)$, the obstruction obtained has the following form
\begin{equation}
\eps_{A_1}^{(J)}(L)
\ge
\poly(L,\abs{c_-}^{-1})^{-1}
\exp\!\left[-\frac{6R(L)}{\abs{c_-}}\right],
\end{equation}
The precise form is stated in
\cref{cor:thermal-exp-L}.  Thus the quantitative extension keeps the original
contradiction mechanism---a finite local budget versus a persistent response
drift---while replacing every exact Markov equality by a controlled,
spectral-free error.

\subsection{Global Lipschitz continuity of one IMF map}

Although $\cI_X(t)$ is nonlinear, it is globally Lipschitz in the state vector:
\begin{lemma}[Spectrum-free IMF Lipschitz bound]
\label{lem:IMF-Lipschitz}
For arbitrary normalized vectors $\ket\phi,\ket\chi$,
\begin{equation}
\norm{\cI_X(t)\ket\phi-\cI_X(t)\ket\chi}_2
\le
\ell(t)\norm{\ket\phi-\ket\chi}_2,
\qquad
\ell(t):=\sqrt{2(1+4t^2)}.
\label{eq:IMF-Lipschitz}
\end{equation}
\end{lemma}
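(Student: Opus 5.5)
The plan is to reduce the nonlinear map $\cI_X(t)$ to a matrix function of the coefficient matrix, exactly as in the proof of \cref{thm:modular-vector-continuity-phase}. Write $\ket\phi$ and $\ket\chi$ across the cut $X:\bar X$ with coefficient matrices $T$ and $S$ in fixed orthonormal bases. Then $\rho_X(\phi)=TT^\dagger$, $\rho_X(\chi)=SS^\dagger$ and $\norm{T-S}_2=\norm{\ket\phi-\ket\chi}_2$. The vector $\cI_X(t)\ket\phi=\rho_X(\phi)^{it}\ket\phi$ corresponds to $(TT^\dagger)^{it}T$.

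In a singular value decomposition $T=\sum_k s_k\ket{u_k}\bra{v_k}$ this equals $\sum_k s_k^{1+2it}\ket{u_k}\bra{v_k}$. So it is the singular-value function $g_t(s):=s^{1+2it}$ for $s>0$, with $g_t(0):=0$. This convention is consistent with the nonfaithful case: however the imaginary power is extended on $\ker\rho_X$, it acts on $\ket\phi$ only through the support, since $T$ annihilates nothing outside it in the relevant sense. No spectral lower bound enters anywhere.

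To get a Lipschitz bound for a non-Hermitian argument I will use the Hermitian dilation
\[
\hat T:=\begin{pmatrix}0&T\\ T^\dagger&0\end{pmatrix}.
\]
Its eigenvalues are $\pm s_k$, and $\norm{\hat T-\hat S}_2^2=2\norm{T-S}_2^2$. Let $f_t$ be the odd extension $f_t(x):=x\abs x^{2it}$. Then $f_t(\hat T)$ is the dilation of $g_t(T)$, so
\[
\norm{f_t(\hat T)-f_t(\hat S)}_2^2=2\norm{g_t(T)-g_t(S)}_2^2 .
\]
For Hermitian $A=\sum a_i\ket i\bra i$ and $B=\sum b_j\ket j\bra j$ and any complex-valued $f$, expanding in the two eigenbases gives
\[
\norm{f(A)-f(B)}_2^2=\sum_{i,j}\abs{f(a_i)-f(b_j)}^2\abs{\braket ij}^2
\le \mathrm{Lip}(f)^2\sum_{i,j}\abs{a_i-b_j}^2\abs{\braket ij}^2=\mathrm{Lip}(f)^2\norm{A-B}_2^2 .
\]
This is the same mechanism as \cref{lem:truncated-modular-vector-app}. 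Hence $\norm{g_t(T)-g_t(S)}_2\le\mathrm{Lip}(f_t)\norm{T-S}_2$.

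The remaining, and main, step is the scalar estimate $\mathrm{Lip}(f_t)\le\sqrt{1+4t^2}$ on $[-1,1]$.

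\textbf{Same sign.} For $x,y$ of the same sign (say $0<y<x\le1$), integrate the derivative $f_t'(s)=(1+2it)s^{2it}$ along $[y,x]$. Its modulus is exactly $\sqrt{1+4t^2}$, which gives $\abs{f_t(x)-f_t(y)}\le\sqrt{1+4t^2}\,\abs{x-y}$. By oddness the same holds for negative arguments, and the case $y=0$ follows by continuity since $\abs{f_t(x)}=\abs x$.

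\textbf{Opposite sign.} For $x>0>y$, use $\abs{f_t(x)-f_t(y)}\le\abs x+\abs y=\abs{x-y}$.

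Combining the cases yields $\norm{\cI_X(t)\ket\phi-\cI_X(t)\ket\chi}_2\le\sqrt{1+4t^2}\,\norm{\ket\phi-\ket\chi}_2$. This is stronger than \eqref{eq:IMF-Lipschitz} by the factor $\sqrt2$, so the stated bound with $\ell(t)=\sqrt{2(1+4t^2)}$ follows a fortiori. The only delicate point I anticipate is the bookkeeping of the zero singular values and the kernel convention. If that turns out to force a cruder treatment, for example splitting off the projector onto $\supp T$ and paying a triangle inequality, the extra $\sqrt2$ in $\ell(t)$ is available to absorb the loss.
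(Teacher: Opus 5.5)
Your reduction to the coefficient matrix, the odd function $f_t(x)=x\abs{x}^{2it}$, the Hilbert--Schmidt functional-calculus estimate for Hermitian arguments, and the scalar bound $\mathrm{Lip}(f_t)\le\sqrt{1+4t^2}$ on $[-1,1]$ are all correct and follow the paper's proof. The step that fails is the claim that $f_t(\hat T)$ is the dilation of $g_t(T)$. Because $f_t$ is complex-valued, spectral calculus on the eigenvectors $\frac{1}{\sqrt2}(\ket{u_k},\pm\ket{v_k})$ gives
\[
f_t(\hat T)=\begin{pmatrix}0&g_t(T)\\ g_t(T^\dagger)&0\end{pmatrix},
\qquad
g_t(T^\dagger)=\sum_k s_k^{1+2it}\ket{v_k}\bra{u_k}=g_{-t}(T)^\dagger\neq g_t(T)^\dagger .
\]
This is not even Hermitian for $t\neq0$. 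Hence $\norm{f_t(\hat T)-f_t(\hat S)}_2^2=\norm{g_t(T)-g_t(S)}_2^2+\norm{g_{-t}(T)-g_{-t}(S)}_2^2$, and the two summands differ in general once the singular vectors carry complex phases. All the dilation yields is $\norm{g_t(T)-g_t(S)}_2^2\le2\,\mathrm{Lip}(f_t)^2\norm{T-S}_2^2$, which gives exactly $\ell(t)=\sqrt{2(1+4t^2)}$. This is the paper's complex-valued bound \eqref{eq:complex-singular-transform-bound-app}. The $\sqrt2$ is forced by the complex values of $f_t$, not by zero singular values; your kernel remark is fine and harmless.

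Your intermediate claim with constant $\sqrt{1+4t^2}$ is in fact false, even for normalized vectors. Take $\ket\phi=s_1\ket{11}+s_2\ket{22}$ with $s_1,s_2>0$ and $s_1^2+s_2^2=1$. On vectors $z_1\ket{11}+z_2\ket{22}$ the flow acts componentwise as $z\mapsto z\abs z^{2it}$. Its differential at $s>0$ in direction $x+iy$ is $s^{2it}[(1+2it)x+iy]$. Along the tangent direction $(x_1,x_2)\propto(s_2,-s_1)$, $y_j=\kappa x_j$, the squared local gain is $[1+(2t+\kappa)^2]/(1+\kappa^2)$. Its maximum over $\kappa$ is $(\abs t+\sqrt{1+t^2})^2=1+4t^2+2\abs t\bigl(\sqrt{1+t^2}-\abs t\bigr)$, which exceeds $1+4t^2$ for $t\neq0$; at $t=1$ the gain is $1+\sqrt2>\sqrt5$. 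The real-axis Lipschitz constant of $f_t$ misses this shear from radial change into phase, and that shear is what the lower-left block records.

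Replace your equality by $\norm{A}_2^2\le\norm{A}_2^2+\norm{B}_2^2$ for the two off-diagonal blocks of the difference. With that one-line repair your argument proves the lemma exactly as stated, along the same route as the paper.
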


\begin{proof}
Identify a vector across $X:\bar X$ with its coefficient matrix.  Write
\[
\ket\phi\leftrightarrow T,
\qquad
\ket\chi\leftrightarrow S,
\qquad
\rho_X(\phi)=TT^\dagger,
\qquad
\rho_X(\chi)=SS^\dagger.
\]
Then
\[
\norm{\ket\phi-\ket\chi}_2=\norm{T-S}_2.
\]
If $T=U\operatorname{diag}(s_j)V^\dagger$, the coefficient matrix of the
flowed vector is
\begin{equation}
(TT^\dagger)^{i t}T
=
U\operatorname{diag}(s_j^{1+2i t})V^\dagger.
\label{eq:IMF-singular-transform}
\end{equation}
Introduce the odd function
\[
f_t(x)=
\begin{cases}
\operatorname{sgn}(x)|x|^{1+2i t},&x\ne0,\\
0,&x=0.
\end{cases}
\]
On each half-axis, $|f_t'(x)|=\sqrt{1+4t^2}$; for opposite signs,
$|f_t(x)-f_t(y)|\le|x|+|y|=|x-y|$.  Hence
\[
\operatorname{Lip}(f_t)\le\sqrt{1+4t^2}.
\]
Equation \eqref{eq:IMF-singular-transform} is the singular-value transform
$f_t^\diamond(T)$ in the notation of
\cref{lem:truncated-modular-vector-app}.  Applying its complex-valued bound
\eqref{eq:complex-singular-transform-bound-app} to the restriction of $f_t$
to $[0,1]$ gives
\begin{align*}
\norm{\cI_X(t)\ket\phi-\cI_X(t)\ket\chi}_2
&=\norm{f_t^\diamond(T)-f_t^\diamond(S)}_2\\
&\le\sqrt{2(1+4t^2)}\norm{T-S}_2\\
&=\ell(t)\norm{\ket\phi-\ket\chi}_2.
\end{align*}
This proves \eqref{eq:IMF-Lipschitz}.
\end{proof}

\subsection{Weighted finite-time theorem}

For an ordered tripartition $X|Y|Z$, define
\begin{equation}
\mathfrak E_{\phi}^{X:Y:Z}(t)
:=
\cI_{XYZ}(t)\ket\phi
-
\cI_{XY}(t)\cI_{YZ}(t)\cI_Y(-t)\ket\phi.
\label{eq:general-flow-error}
\end{equation}
All factors on the right are genuine current-state IMF maps.  The following
is the finite-time stability theorem used in both no-go arguments.

\begin{theorem}[Weighted finite-time Markov decomposition]
\label{thm:weighted-IMF}
\begin{equation}
\pi\int_{-\infty}^{\infty}
\frac{
\norm{\mathfrak E_{\phi}^{X:Y:Z}(t)}_2^2
}{
\ell(t)^2[\cosh(2\pi t)-1]
}\,dt
\le I(X:Z|Y)_\phi.
\label{eq:weighted-IMF}
\end{equation}
\end{theorem}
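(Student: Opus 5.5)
The plan is to first turn the nonlinear, current-state IMF composition on the right of \eqref{eq:general-flow-error} into an explicit product of imaginary powers of the \emph{initial} marginals. Then the weighted bound becomes a statement about one vector-valued function that is analytic on a strip. That statement can be attacked with the same complex-interpolation machinery behind rotated-Petz recovery bounds.

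\textbf{Step 1 (linearization of the IMF product).} Write $\rho:=\ket\phi\!\bra\phi$ and $\rho_R:=\rho_R(\phi)$. The first map $\cI_Y(-t)$ applies $\rho_Y^{-it}\otimes\id$ to $\ket\phi$. This conjugates the $YZ$ marginal, so the current $YZ$ state is $\rho_Y^{-it}\rho_{YZ}\rho_Y^{it}$. Its imaginary power is $\rho_Y^{-it}\rho_{YZ}^{it}\rho_Y^{it}$, and the second map therefore produces $\rho_Y^{-it}\rho_{YZ}^{it}\ket\phi$. Repeating this for the $XY$ step, the $XY$ marginal of $\rho_Y^{-it}\rho_{YZ}^{it}\ket\phi$ is $\rho_Y^{-it}\sigma_{XY}(t)\rho_Y^{it}$, where $\sigma_{XY}(t)$ is the $XY$ marginal of $\rho_{YZ}^{it}\ket\phi$. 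Rather than computing $\sigma_{XY}(t)$ exactly, I would replace it by $\rho_{XY}$. The discrepancy between the true third factor and the ``frozen'' factor is controlled by \cref{lem:IMF-Lipschitz}. That lemma is why the Lipschitz constant $\ell(t)^2$ appears in the denominator: it charges the cost of recomputing the reduced state from a perturbed input. After this step I aim for an inequality of the form
\[
\norm{\mathfrak E_\phi^{X:Y:Z}(t)}_2\le \ell(t)\,\norm{\bigl(\rho_{XYZ}^{it}-\rho_{XY}^{it}\rho_Y^{-it}\rho_{YZ}^{it}\bigr)\ket\phi}_2,
\]
possibly with the factors reordered or applied to $\rho^{-it}$ on the other side. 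When $I(X:Z|Y)=0$, the Hayden--Jozsa--Petz--Winter structure makes the right-hand side vanish identically, which recovers \eqref{eq:exact-review-Markov-move}.

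\textbf{Step 2 (weighted integral of the frozen error).} In GNS form,
\[
\norm{(\rho_{XYZ}^{it}-\rho_{XY}^{it}\rho_Y^{-it}\rho_{YZ}^{it})\phi}_2^2=\norm{\rho_{XYZ}^{1/2}-\rho_{XYZ}^{-it}\rho_{XY}^{it}\rho_Y^{-it}\rho_{YZ}^{it}\rho_{XYZ}^{1/2}}_2^2.
\]
I would study the strip-analytic operator function $G(z)=\rho_{XY}^{z}\rho_Y^{-z}\rho_{YZ}^{z}\rho_{XYZ}^{-z}$ (suitably regularized by depolarization as in the proof of \cref{thm:main-cmi}, then removed by continuity). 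The key identity to use is the spectral formula
\[
\pi\int_{\mathbb R}\frac{|e^{ixt}-1|^2}{\cosh(2\pi t)-1}\,dt=g(x),
\]
with $g$ an explicit convex function satisfying $g(x)\le$ (linear term in $x$) plus nonnegative corrections that integrate to zero against a state. I would apply this formula to the spectral measure of the relative modular operator of $\rho_{XYZ}$ with respect to $\tau=\exp(\log\rho_{XY}+\log\rho_{YZ}-\log\rho_Y)$. The expected conclusion is that the weighted integral of the frozen error is bounded by $\Tr\rho(\log\rho-\log\tau)=I(X:Z|Y)$, using $\Tr\tau\le1$ from \cref{app:identities}. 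The $\cosh(2\pi t)-1\sim2\pi^2t^2$ singularity is integrable precisely because the error vanishes to first order at $t=0$. Dividing by $\ell(t)^2\ge 2$ then yields \eqref{eq:weighted-IMF}.

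\textbf{Main obstacle.} Step 2 is the hard part. The product $\rho_{XY}^{it}\rho_Y^{-it}\rho_{YZ}^{it}$ is not $\tau^{it}$, so the exact relative-modular spectral identity does not apply directly. I would first try a Hirschman/Stein three-lines argument on $G(z)$ over the strip $0\le\operatorname{Re}z\le 1/2$. On the line $\operatorname{Re}z=1/2$, the norm is controlled by $\Tr\rho_{XY}\rho_Y^{-1}\rho_{YZ}\le1$, the Lieb concavity input. The boundary Poisson kernel of that strip, differentiated at the real endpoint, produces exactly the density $\pi/(\cosh(2\pi t)-1)$, which is the form I need. Making the constant come out as exactly $1$ in front of the CMI is the delicate point. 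If the direct estimate loses a constant, I would absorb it by exploiting the slack between $\ell(t)^2$ and the factor $2$.
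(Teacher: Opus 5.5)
Your overall architecture matches the paper's: a Lipschitz reduction to a ``frozen'' error, followed by strip interpolation and the normal derivative of the Poisson kernel. Your Step 1 target is also exactly the paper's reduction, since in the faithful case $\rho_{XY}^{it}$ is unitary and your right-hand side equals $\ell(t)\norm{(\rho_{XY}^{-it}\rho^{it}-\rho_Y^{-it}\rho_{YZ}^{it})\ket\phi}_2$. The mechanism you describe is off, though. Literally replacing $\sigma_{XY}(t)$ by $\rho_{XY}$ in the current marginal $\rho_Y^{-it}\sigma_{XY}(t)\rho_Y^{it}$ produces $\rho_Y^{-it}\rho_{XY}^{it}\rho_{YZ}^{it}\ket\phi$. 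Moreover, \cref{lem:IMF-Lipschitz} compares two \emph{inputs} of one current-state map, not two marginals. The clean device uses the fact that the $XYZ$ flow preserves the $XY$ marginal, so $\cI_{XYZ}(t)\ket\phi=\cI_{XY}(t)\bigl[\rho_{XY}^{-it}\rho^{it}\ket\phi\bigr]$. Nested commutation gives $\cI_{XY}(t)\cI_{YZ}(t)\cI_Y(-t)\ket\phi=\cI_{XY}(t)\bigl[\rho_Y^{-it}\rho_{YZ}^{it}\ket\phi\bigr]$.

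The genuine gap is Step 2, which you leave as an obstacle, and your fallback does not exist. Squaring Step 1 gives $\norm{\mathfrak E_\phi^{X:Y:Z}(t)}_2^2/\ell(t)^2\le d(t)^2$ with nothing to spare, so any constant lost in Step 2 would falsify the stated bound. A three-lines contraction is only a positivity input; the information sits in the first-order behaviour at $z=0$. The relative-modular spectral route fails, as you suspect, because the three-factor product is not a one-parameter group.

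The missing idea is to apply the strip Poisson representation to a specific nonnegative \emph{scalar harmonic} function.
\begin{itemize}
\item Put $\mathcal W(z):=G(z)\rho^{1/2}=\rho_{XY}^{z}\rho_Y^{-z}\rho_{YZ}^{z}\rho^{1/2-z}$. Its Hilbert--Schmidt norm is exactly one on both boundary lines. On $\operatorname{Re}z=0$ the prefactor is unitary. On $\operatorname{Re}z=\tfrac12$ the squared norm is $\Tr[\rho_{YZ}\rho_Y^{-1/2+is}\rho_{XY}\rho_Y^{-1/2-is}]=1$ after tracing out $X$, so no Lieb concavity is needed.
\item Three lines then give $\norm{\mathcal W(z)}_2\le1$ on the strip. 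Hence $h(z):=1-\operatorname{Re}\Tr[\rho^{1/2}\mathcal W(z)]$ satisfies $h\ge0$ and $h(0)=0$.
\item Differentiating gives $\partial_xh(0)=-\Tr\rho(\log\rho_{XY}-\log\rho_Y+\log\rho_{YZ}-\log\rho)=I(X:Z|Y)$.
\item Because $\norm{\mathcal W(it)}_2=\norm{\rho^{1/2}}_2=1$, one has $h(it)=\tfrac12\norm{\mathcal W(it)-\rho^{1/2}}_2^2=\tfrac12 d(t)^2$, where $d(t)$ is your frozen error.
\item The inward normal derivative of the width-$\tfrac12$ Poisson formula is $\partial_xh(0)=2\pi\int h(it)[\cosh(2\pi t)-1]^{-1}dt$ plus a nonnegative upper-boundary term. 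Dropping that term gives $\pi\int d(t)^2[\cosh(2\pi t)-1]^{-1}dt\le I(X:Z|Y)$ with constant exactly one.
\end{itemize}
Your Hirschman idea also closes if you run it on $\log\abs{\Tr[\rho^{1/2}\mathcal W(z)]}$, since $-\log\abs f\ge1-\operatorname{Re}f$.

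Finally, ``removed by continuity'' is too thin for an integral over all $t\in\mathbb R$. The paper depolarizes the initial state and uses that the error is $(\ell(t)+\ell(t)^3)$-Lipschitz in the initial matrix amplitude to get pointwise convergence in $t$. It then applies Fatou's lemma together with continuity of the conditional mutual information.
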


\begin{proof}
First suppose that $\rho:=\rho_{XYZ}$ is faithful, and define the
Hilbert--Schmidt distance between the two matrix amplitudes
\begin{equation}
d_\phi(t)
:=\norm{
\rho_{XY}^{-it}\rho^{1/2+it}
-\rho_Y^{-it}\rho_{YZ}^{it}\rho^{1/2}
}_2.
\label{eq:dphi-definition}
\end{equation}
Set $A_t:=\rho_{XY}^{-it}\rho^{it}$ and
$B_t:=\rho_Y^{-it}\rho_{YZ}^{it}$.  Since the $XYZ$ flow leaves the $XY$
marginal unchanged, and the $YZ$ flow leaves the $Y$ marginal unchanged,
\begin{equation}
A_t\ket\phi=\cI_{XY}(-t)\cI_{XYZ}(t)\ket\phi,
\qquad
B_t\ket\phi=\cI_Y(-t)\cI_{YZ}(t)\ket\phi.
\end{equation}
The inverse and nested commutation identities therefore give
\begin{equation}
\mathfrak E_{\phi}^{X:Y:Z}(t)
=\cI_{XY}(t)A_t\ket\phi-\cI_{XY}(t)B_t\ket\phi.
\end{equation}
Applying \cref{lem:IMF-Lipschitz} to the two normalized inputs with $M=A_t-B_t$, we obtain
\begin{align}
\norm{\mathfrak E_{\phi}^{X:Y:Z}(t)}_2
&\le\ell(t)\norm{(A_t-B_t)\ket\phi}_2\notag\\
&=\ell(t)\norm{(A_t-B_t)\rho^{1/2}}_2
=\ell(t)d_\phi(t).
\label{eq:E-by-d}
\end{align}

To control $d_\phi(t)$, introduce the operator-valued analytic function
\begin{equation}
\mathcal W(z)
:=
\rho_{YZ}^{z}\rho_Y^{-z}\rho_{XY}^{z}\rho^{1/2-z},
\qquad
0\le\operatorname{Re}z\le\frac12.
\label{eq:interpolation-function-definition}
\end{equation}
The boundary estimates and the three-lines interpolation argument are given
in \cref{lem:interpolation-contraction-app}.  They imply
\begin{equation}
\norm{\mathcal W(z)}_2\le1
\qquad
(0\le\operatorname{Re}z\le\tfrac12).
\label{eq:interpolation-contraction}
\end{equation}
Set
\[
f(z):=\Tr[\rho^{1/2}\mathcal W(z)],
\qquad
h(z):=1-\operatorname{Re}f(z).
\]
Since $f$ is analytic, $h=1-\operatorname{Re}f$ is harmonic on the strip.
By \eqref{eq:interpolation-contraction}, it is also nonnegative and bounded,
and $h(0)=0$.  To differentiate \eqref{eq:interpolation-function-definition}, use
\begin{equation}
\left.\frac{d}{dz}A^z\right|_{z=0}=\log A,
\qquad
\frac{d}{dz}\rho^{1/2-z}
=-(\log\rho)\rho^{1/2-z}.
\end{equation}
At \(z=0\), the first three factors in \(\mathcal W(z)\) are identities.
The product rule therefore gives
\begin{equation}
\mathcal W'(0)
=
\left(
\log\rho_{YZ}-\log\rho_Y+\log\rho_{XY}-\log\rho
\right)\rho^{1/2}.
\label{eq:interpolation-derivative-origin}
\end{equation}
Since \(f(z)=\Tr[\rho^{1/2}\mathcal W(z)]\), cyclicity of the trace yields
\begin{align}
f'(0)
&=
\Tr\rho
\left(
\log\rho_{YZ}-\log\rho_Y+\log\rho_{XY}-\log\rho
\right)
\notag\\
&=
\Tr\rho_{YZ}\log\rho_{YZ}
-\Tr\rho_Y\log\rho_Y
+\Tr\rho_{XY}\log\rho_{XY}
-\Tr\rho\log\rho. 
\notag\\
&=-I(X:Z|Y)_\rho = -\partial_x h(0).
\label{eq:f-derivative-entropies}
\end{align}
Here $\partial_x$ denotes differentiation with respect to $x=\operatorname{Re}z$; at the left boundary $x=0$, it is the inward normal
derivative of the strip.

The Poisson integral formula on a strip and its boundary differentiation are
stated and derived in \cref{lem:strip-Poisson-normal-app}.  Applying that
lemma with strip width $a=1/2$ gives
\begin{align}
\partial_xh(0)
={}&
2\pi\int_{-\infty}^{\infty}
\frac{h(i s)}{\cosh(2\pi s)-1}\,ds +
2\pi\int_{-\infty}^{\infty}
\frac{h(\tfrac12+i s)}{\cosh(2\pi s)+1}\,ds.
\label{eq:Poisson-normal}
\end{align}
The second integral is nonnegative.  Left multiplication by the unitary
$\rho_{YZ}^{-is}\rho_Y^{is}$ in \eqref{eq:dphi-definition} gives
\begin{equation}
d_\phi(s)
=
\norm{\mathcal W(-i s)-\rho^{1/2}}_2.
\label{eq:d-interpolation-function}
\end{equation}
Both matrices have unit Hilbert--Schmidt norm, and hence
\begin{align}
h(i s)
&=1-\operatorname{Re}\Tr[\rho^{1/2}\mathcal W(i s)]\notag\\
&=\frac12\left(
\norm{\mathcal W(i s)}_2^2+\norm{\rho^{1/2}}_2^2
-2\operatorname{Re}\Tr[\rho^{1/2}\mathcal W(i s)]
\right)\notag\\
&=\frac12\norm{\mathcal W(i s)-\rho^{1/2}}_2^2
=\frac12d_\phi(-s)^2.
\label{eq:h-d}
\end{align}
Substitution into \eqref{eq:Poisson-normal}, followed by dropping the
nonnegative upper-boundary integral, yields
\begin{equation}
\pi\int_{-\infty}^{\infty}
\frac{d_\phi(s)^2}{\cosh(2\pi s)-1}\,ds
\le I(X:Z|Y)_\rho.
\label{eq:weighted-d}
\end{equation}
Combining \eqref{eq:E-by-d} and \eqref{eq:weighted-d} proves
\eqref{eq:weighted-IMF} for faithful states.

The faithful-state assumption is removed in
\cref{app:nonfaithful-IMF}.  There, a faithful depolarizing regularization,
the global IMF Lipschitz bound, and Fatou's lemma extend
\eqref{eq:weighted-IMF} to arbitrary finite-dimensional states without
changing its right-hand side or introducing any spectral dependence.
\end{proof}

The finite-time factorization error of the IMF also admits an explicit bound, not just the integral bound above: for fixed time and subsystem dimensions, it is $O\!\left(I(X:Z|Y)_\phi^{1/3}\right)$. The precise statement and proof are given in \cref{app:pointwise-IMF}. The proof of the $\Aone{}$ violation lower-bound below uses \cref{thm:weighted-IMF} directly.

\subsection{Approximate no-go geometry and common comparison path}
\label{sec:shared-geometry}

We now use the geometry of the exact no-go proofs for the modular commutator
and Hall conductance estimator.  Let
$A,B,C$ be the three-sector response partition, split $C=C'M$,
let $D$ be the neighboring outer collar, and set $E:=\overline{ABCD}$.
This geometry is illustrated in \cref{fig:finite-no-go-geometry}.

Define
\begin{equation}
\ket{\Psi(t)}:=\cI_{AB}(t)\ket\Psi,
\qquad
\ket{\Psi'(t)}:=\cI_{MD}(t)\ket\Psi.
\label{eq:two-trajectories}
\end{equation}

With the refinements $B=B_1B_2$ and $C'=C'_1C'_2$ in
\cref{fig:finite-no-go-geometry}(b), let $\mathcal G_{\rm flow}$ be the
finite family of \Aone{} geometries controlling the two IMF Markov
decompositions on $E|D|C$ and $C'|M|D$.  This family is common to both
responses.  Let $\mathcal G_{\rm def}^{(J)}$ be the family controlling the
modular-commutator deformation tripartitions
\begin{align*}
B_1C'_1\,|\,B_2C'_2\,|\,MD, \qquad
A\,|\,B_1\,|\,B_2,\qquad
B_1\,|\,B_2\,|\,C'_2MD, \qquad
B_1\,|\,B_2\,|\,C'_2,
\end{align*}
whereas $\mathcal G_{\rm def}^{(\Sigma)}$ controls only
$A|B_1|B_2$ and $B_1|A|MD$, the tripartitions needed for
the Hall conductance estimator deformation.
The tripartitions containing $MD$ are obtained by enlarging $M$ to $MD$
in the corresponding deformation tripartitions.  Since the flow generating
$\ket{\Psi'(t)}$ acts only on $MD$, the enlarged CMIs are unchanged by the flow; tracing out $D$ bounds the CMIs needed for the deformations as we will demonstrate below.
The CMIs for $A|B_1|B_2$ and $B_1|B_2|C'_2$ are unchanged because their regions are disjoint from $MD$.
For each response $\mathsf P\in\{J,\Sigma\}$, define
\begin{equation}
\mathcal G_{\rm ng}^{(\mathsf P)}
:=\mathcal G_{\rm flow}\cup\mathcal G_{\rm def}^{(\mathsf P)},
\qquad
\eps_{A_1}^{(\mathsf P)}
:=
\sup_{\mathcal G\in\mathcal G_{\rm ng}^{(\mathsf P)}}
\delta_1^\Psi(\mathcal G).
\label{eq:uniform-A1-error}
\end{equation}
By \eqref{eq:A1-purification-main} and data processing, every CMI used
for response $\mathsf P$ is at most $\eps_{A_1}^{(\mathsf P)}$.
Thus each error includes the common decomposition geometries but only
the deformation geometries needed for its own response.

\subsubsection{The two finite-time decomposition errors}

Define
\begin{align}
e_1(t)
:={}&
\norm{
\cI_{CDE}(t)\ket\Psi
-
\cI_D(-t)\cI_{DE}(t)\cI_{CD}(t)\ket\Psi
}_2,
\label{eq:e1-definition}\\
e_2(t)
:={}&
\norm{
\cI_{CD}(t)\ket\Psi
-
\cI_M(-t)\cI_C(t)\cI_{MD}(t)\ket\Psi
}_2.
\label{eq:e2-definition}
\end{align}
The order of the nested factors may be changed by
\eqref{eq:finite-commutation-move}.  The ordered tripartitions are $E|D|C$ and
$C'|M|D$, respectively.  Since these decomposition geometries belong to
both families, \cref{thm:weighted-IMF} gives, for each
$\mathsf P\in\{J,\Sigma\}$,
\begin{equation}
\pi\int_{-\infty}^{\infty}
\frac{e_\nu(t)^2}{\ell(t)^2[\cosh(2\pi t)-1]}
\,dt
\le
\eps_{A_1}^{(\mathsf P)},
\qquad \nu=1,2.
\label{eq:e1-e2-weighted}
\end{equation}

Using the flipping move, $\ket{\Psi(t)}=\cI_{CDE}(t)\ket\Psi$.  Define
\begin{equation}
\ket{\widetilde\Psi(t)}
:=
\cI_D(-t)\cI_{DE}(t)
\cI_M(-t)\cI_C(t)
\ket{\Psi'(t)},
\qquad
r(t):=\norm{\ket{\Psi(t)}-\ket{\widetilde\Psi(t)}}_2.
\label{eq:Psi-tilde}
\end{equation}
The outer composition $\cI_D(-t)\cI_{DE}(t)$ is
$\ell(t)^2$-Lipschitz, so
\begin{align}
r(t)
&\le e_1(t)
+\norm{
\cI_D(-t)\cI_{DE}(t)\cI_{CD}(t)\ket\Psi
-
\cI_D(-t)\cI_{DE}(t)\cI_M(-t)\cI_C(t)
\ket{\Psi'(t)}
}_2
\notag\\
&\le e_1(t)+\ell(t)^2e_2(t).
\label{eq:r-e1-e2}
\end{align}
Using $(a+b)^2\le2a^2+2b^2$ and $\ell(t)\ge1$, equations
\eqref{eq:e1-e2-weighted}--\eqref{eq:r-e1-e2} imply
\begin{align}
\pi\int_{-\infty}^{\infty}
\frac{r(t)^2}{\ell(t)^6[\cosh(2\pi t)-1]}
\,dt
&\le 2\pi\int_{-\infty}^{\infty}
\frac{e_1(t)^2+e_2(t)^2}{\ell(t)^2[\cosh(2\pi t)-1]}
\,dt
\notag\\
&\le4\eps_{A_1}^{(\mathsf P)}.
\label{eq:r-weighted}
\end{align}

\subsubsection{Exact common-unitary covariance}

The remaining IMF sequence in \eqref{eq:Psi-tilde} obeys the exact
common-unitary relation used in the strict no-go proof.  There is a unitary
$V(t)$, independent of the probe parameters, such that
\begin{align}
\cI_{AB}(x)\ket{\widetilde\Psi(t)}
&=
V(t)\cI_{AB}(x)\ket{\Psi'(t)},
\label{eq:V-AB}\\
\cI_{BC}(y)\ket{\widetilde\Psi(t)}
&=
V(t)\cI_{BC}(y)\ket{\Psi'(t)}.
\label{eq:V-BC}
\end{align}
One may take $V(t)$ to be the product of unitary extensions of the
$D,DE,M,C$ reduced-state powers computed from $\ket{\Psi'(t)}$:
\begin{equation}
V(t)
=
\bigl[\rho_D(\Psi'(t))\bigr]^{-it}
\bigl[\rho_{DE}(\Psi'(t))\bigr]^{it}
\bigl[\rho_M(\Psi'(t))\bigr]^{-it}
\bigl[\rho_C(\Psi'(t))\bigr]^{it}.
\label{eq:V-definition}
\end{equation}
The required reduced state is unchanged by all preceding factors because
their supports either contain the region in question or are disjoint from it.
Thus the same fixed unitary implements the residual IMF sequence on both probe
states.  It cancels in the overlap defining the modular commutator, giving
\begin{equation}
J(A,B,C)_{\widetilde\Psi(t)}
=
J(A,B,C)_{\Psi'(t)}.
\label{eq:J-common-unitary}
\end{equation}

For the Hall overlap, the second probe is the on-site charge rotation
$U_{BC}(y)=\exp(i yQ_{BC})$.  By \cref{lem:onsite-covariance}, it moves through the IMF sequence.  The
$D$ and $DE$ marginals are disjoint from $BC$, while exact symmetry gives
$[\rho_R,Q_R]=0$ for $R=M,C$; hence the same charge rotation also leaves the
$M$ and $C$ marginals unchanged.  Thus it does not change any reduced state
needed to construct $V(t)$.  Therefore
\begin{equation}
U_{BC}(y)\ket{\widetilde\Psi(t)}
=
V(t)U_{BC}(y)\ket{\Psi'(t)},
\end{equation}
and the same cancellation gives
\begin{equation}
\Sigma(A,B,C)_{\widetilde\Psi(t)}
=
\Sigma(A,B,C)_{\Psi'(t)}.
\label{eq:Sigma-common-unitary}
\end{equation}
No approximate Markov assumption is used in
\eqref{eq:J-common-unitary}--\eqref{eq:Sigma-common-unitary}; all approximation
has already been isolated in $r(t)$.

\subsection{Continuity and deformation moduli for the two responses}
\label{sec:response-moduli}

\subsubsection{Modular commutator}
We use $\Lambda_{X:Y:Z}$ as defined in \eqref{eq:Lambda-main}, with each
ordered deformation tripartition written explicitly in the subscript.
Assume henceforth
that $\eps_{A_1}^{(J)}<\frac12$.

\begin{proposition}[Modular commutator deformation modulus]
\label{prop:J-deformation}
Both on the initial state and on $\ket{\Psi'(t)}$,
\begin{equation}
\abs{J(A,B,C)-J(A,B,C')}
\le
\mathcal D_J(\eps_{A_1}^{(J)}).
\label{eq:J-deformation-bound}
\end{equation}
Here, for $0\le\eps<\frac12$,
\begin{equation}
\begin{aligned}
\mathcal D_J(\eps)
:={}&
2\log d_{AB}\Lambda_{B_1C'_1:B_2C'_2:M}(\eps)
+2(\log d_{B_2C'_2M}+\log d_{B_2C'_2})\Lambda_{A:B_1:B_2}(\eps)\\
&\hspace{-12pt}+2\min\{\log d_B,\log d_{B_2C'_2M}\}\Lambda_{B_1:B_2:C'_2M}(\eps)
+2\min\{\log d_B,\log d_{B_2C'_2}\}\Lambda_{B_1:B_2:C'_2}(\eps).
\end{aligned}
\label{eq:D-J}
\end{equation}
Moreover,
\begin{equation}
\abs{J(A,B,C)_{\Psi(t)}-J(A,B,C)_\Psi}
\le
\Omega_J(r(t);A,B,C)+2\mathcal D_J(\eps_{A_1}^{(J)}).
\label{eq:J-orbit-comparison}
\end{equation}
\end{proposition}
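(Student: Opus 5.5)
The plan is to derive an exact operator identity expressing $J(A,B,C)-J(A,B,C')$ through modular Markov defects and nested-support commutators, then bound each term with the weighted commutator inequality \eqref{eq:weighted-comm-main}, \cref{thm:main-CMI}-type estimates (\cref{thm:main-cmi}), and \cref{lem:small-J-main}.

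\textbf{Deformation identity.} Write $C=C'M$, $B=B_1B_2$, $C'=C'_1C'_2$. Then $J(A,B,C)-J(A,B,C')=i\langle[K_{AB},K_{BC}-K_{BC'}]\rangle$. Using the defect of the chain $B_1C'_1|B_2C'_2|M$,
\[
K_{BC}-K_{BC'}=K_{B_2C'_2M}-K_{B_2C'_2}-\Dlt_{B_1C'_1:B_2C'_2:M}.
\]
By \eqref{eq:weighted-comm-main} and \eqref{eq:universal-modular-moment-main}, the defect term contributes at most $2\log d_{AB}\,\Lambda_{B_1C'_1:B_2C'_2:M}$. In the remaining commutators, substitute
\[
K_{AB}=K_{AB_1}+K_B-K_{B_1}-\Dlt_{A:B_1:B_2}.
\]
The operators $K_{AB_1}$ and $K_{B_1}$ are supported on $AB_1$, which is disjoint from $B_2C'_2M$, so they commute with $K_{B_2C'_2M}$ and $K_{B_2C'_2}$ at the operator level. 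The defect $\Dlt_{A:B_1:B_2}$, paired against $K_{B_2C'_2M}$ and $K_{B_2C'_2}$, gives $2(\log d_{B_2C'_2M}+\log d_{B_2C'_2})\Lambda_{A:B_1:B_2}$. What remains is $i\langle[K_B,K_{B_2C'_2M}]\rangle-i\langle[K_B,K_{B_2C'_2}]\rangle$. Since $B=B_1B_2$, these are exactly the modular commutators $J(B_1,B_2,C'_2M)$ and $J(B_1,B_2,C'_2)$. \cref{lem:small-J-main} bounds them by the last two terms of \eqref{eq:D-J}.

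\textbf{CMI inputs on both states.} On $\Psi$, every CMI above is at most $\eps_{A_1}^{(J)}$ by \eqref{eq:A1-purification-main} and data processing. On $\Psi'(t)$ the argument splits by region. The CMIs on $A|B_1|B_2$ and $B_1|B_2|C'_2$ avoid $MD$, so they are unchanged. For the two chains containing $M$, enlarge the leg $M$ to $MD$: data processing bounds the CMI by the enlarged one, which is invariant under the unitary $MD$ flow and hence equals its value on $\Psi$, which is at most $\eps_{A_1}^{(J)}$. To replace each argument $u$ by $\eps$ inside $\Lambda$, I must check that $\Lambda_{X:Y:Z}$ is nondecreasing on $[0,\tfrac12)$. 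This holds because $\tfrac{d}{du}\,u(c+\log(1/u))=c-1+\log(1/u)>0$ for $c\ge1$. This monotonicity check is the one point needing care; everything else is bookkeeping.

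\textbf{Orbit comparison.} I insert the chain of states
\[
\Psi(t)\to\widetilde\Psi(t)\to\Psi'(t)\;(C\to C')\to\Psi\;(C'\to C).
\]
\begin{enumerate}
\item For the first step, rerun the proof of \cref{thm:J-trace-continuity-phase} directly with the vectors $\ket{\Psi(t)}$ and $\ket{\widetilde\Psi(t)}$, with no Uhlmann step. Applying \cref{thm:modular-vector-continuity-phase} with $\delta=r(t)$ gives $\Omega_J(r(t);A,B,C)$.
\item The second step is exact by \eqref{eq:J-common-unitary}.
\item Next, apply \eqref{eq:J-deformation-bound} on $\Psi'(t)$.
\item Since the $MD$ flow leaves the $ABC'$ marginal invariant, $J(A,B,C')_{\Psi'(t)}=J(A,B,C')_\Psi$.
\item Finally, apply \eqref{eq:J-deformation-bound} on $\Psi$.
\end{enumerate}
The triangle inequality then yields \eqref{eq:J-orbit-comparison}.
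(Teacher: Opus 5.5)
Your proposal is correct and follows the paper's proof essentially step for step. It uses the same operator identity built from $\Dlt_{B_1C'_1:B_2C'_2:M}$ and $\Dlt_{A:B_1:B_2}$, the same reduction of the residual term to $J(B_1,B_2,C'_2M)-J(B_1,B_2,C'_2)$ bounded by \cref{lem:small-J-main}, the same enlargement of the $M$-leg to include $D$ on $\ket{\Psi'(t)}$, and the same five-term triangle inequality. Your explicit check that $\Lambda_{X:Y:Z}$ is nondecreasing on $[0,\tfrac12)$, which justifies replacing each CMI by $\eps_{A_1}^{(J)}$, is a detail the paper leaves implicit.
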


\begin{proof}
Although this deformation can formally be expressed using a single modular
defect, bounding it would require control of $I(B:M\mid C')$, which is not
provided by the\Aone{} condition because $B$ and $M$ contact each other.  The refinements $B=B_1B_2$ and $C'=C'_1C'_2$ resolve this
junction into the admissible local tripartitions above.
Direct expansion gives
\begin{align}
J(A,B,C)-J(A,B,C')
={}&
i\langle[K_{AB},K_{BC}-K_{BC'}]\rangle
\notag\\
={}&
-i\langle[K_{AB},\Dlt_{B_1C'_1:B_2C'_2:M}]\rangle
\notag\\
&-i\langle[\Dlt_{A:B_1:B_2},K_{B_2C'_2M}-K_{B_2C'_2}]\rangle
\notag\\
&+i\langle[K_B,K_{B_2C'_2M}-K_{B_2C'_2}]\rangle.
\label{eq:J-deformation-identity}
\end{align}
The weighted commutator inequality
\[
|i\langle[P,Q]\rangle|
\le2\norm{P\ket\phi}_2\norm{Q\ket\phi}_2
\]
combined with \cref{thm:main-cmi} and
\eqref{eq:universal-modular-moment-main} bounds the first two terms of
\eqref{eq:J-deformation-identity} by the first two terms in
\eqref{eq:D-J}.  The remaining term is
\[
i\langle[K_B,K_{B_2C'_2M}-K_{B_2C'_2}]\rangle
=J(B_1,B_2,C'_2M)-J(B_1,B_2,C'_2).
\]
Applying \cref{lem:small-J-main} to these two modular commutators gives
the last two terms in \eqref{eq:D-J}, with the smaller dimension factor
in each case.  On $\ket{\Psi'(t)}$, the needed CMIs are either unchanged or
bounded by enlarged CMIs containing $MD$.  For example,
\[
I(B_1C'_1:M|B_2C'_2)_{\Psi'(t)}
\le
I(B_1C'_1:MD|B_2C'_2)_{\Psi}
\le\eps_{A_1}^{(J)}.
\]
The other three cases are identical in structure.
The triangle inequality gives
\begin{align}
&\abs{J(A,B,C)_{\Psi(t)}-J(A,B,C)_\Psi}
\notag\\
&\quad\le
\abs{J(A,B,C)_{\Psi(t)}-J(A,B,C)_{\widetilde\Psi(t)}}
\notag\\
&\qquad+
\abs{J(A,B,C)_{\widetilde\Psi(t)}-J(A,B,C)_{\Psi'(t)}}
\notag\\
&\qquad+
\abs{J(A,B,C)_{\Psi'(t)}-J(A,B,C')_{\Psi'(t)}}
\notag\\
&\qquad+
\abs{J(A,B,C')_{\Psi'(t)}-J(A,B,C')_\Psi}
\notag\\
&\qquad+
\abs{J(A,B,C')_\Psi-J(A,B,C)_\Psi}.
\label{eq:J-orbit-triangle}
\end{align}
The first term is at most $\Omega_J(r(t);A,B,C)$ by
\cref{thm:modular-vector-continuity-phase} and the same inner-product
argument used in the proof of \cref{thm:J-trace-continuity-phase}.
The second term vanishes by the exact common-unitary
identity \eqref{eq:J-common-unitary}.  The third and fifth terms are each at
most $\mathcal D_J(\eps_{A_1}^{(J)})$ by
\eqref{eq:J-deformation-bound}.  Finally, the fourth term vanishes because
$\ket{\Psi'(t)}=\cI_{MD}(t)\ket\Psi$ is obtained by a unitary supported on
$MD$, which is disjoint from $ABC'$; hence
$\rho_{ABC'}(\Psi'(t))=\rho_{ABC'}(\Psi)$.  Substitution into
\eqref{eq:J-orbit-triangle} proves \eqref{eq:J-orbit-comparison}.
\end{proof}

For later integration, set
\begin{equation}
b_J
:=
\frac{
\log d_{BC}(\log d_{AB}+3)
+\log d_{AB}(\log d_{BC}+3)
}{\log d_{AB}+\log d_{BC}}.
\label{eq:bJ}
\end{equation}
For $0\le r\le2$, we then have
\begin{equation}
\Omega_J(r;A,B,C)
\le
2(\log d_{AB}+\log d_{BC})r
\left[b_J+2\log\frac1r\right].
\label{eq:Omega-J-scalar}
\end{equation}
The right-hand side is understood by continuity at $r=0$.

\subsubsection{Hall conductance estimator}

Assume the exact on-site $U(1)$ symmetry in \eqref{eq:U1-main}.
For this response, assume $\eps_{A_1}^{(\Sigma)}<\frac12$.
By \cref{lem:onsite-covariance}, all IMF states considered above remain in
the same total-charge sector.  Recall the Hall conductance estimator is
\begin{equation}
\Sigma(A,B,C)
:=
\frac{i}{2}\langle[K_{AB},Q_{BC}^2]\rangle
=
i\langle[K_{AB},Q_BQ_C]\rangle.
\label{eq:Sigma-definition}
\end{equation}

And \cref{thm:modular-vector-continuity-phase} gives the sharper same-purification continuity bound
\begin{equation}
\abs{\Sigma_\phi-\Sigma_\chi}
\le
2\norm{Q_BQ_C}_\infty
\left[\vartheta_{AB}(r)+\log d_{AB}r\right],
\qquad r\le2.
\label{eq:Sigma-vector-continuity}
\end{equation}
Indeed, write
$\Sigma=-2\operatorname{Im}\braket{K_{AB}\phi}{Q_BQ_C\phi}$,
add and subtract the mixed inner product, and apply Cauchy--Schwarz.  Define
\begin{equation}
b_\Sigma
:=2\log d_{AB}+3.
\label{eq:bSigma}
\end{equation}
With \eqref{eq:bSigma}, one then has, for $0\le r\le2$,
\begin{equation}
\Omega_\Sigma(r;A,B,C)
\le
2\norm{Q_BQ_C}_\infty r
\left[b_\Sigma+2\log\frac1r\right].
\label{eq:Omega-Sigma-scalar}
\end{equation}

The Hall conductance estimator deformation $C\leftrightarrow C'$ has a particularly short direct
bound, consistent with the general one-step formula in
\cref{thm:main-Sigma}.

\begin{proposition}[Hall conductance estimator deformation modulus]
\label{prop:Sigma-deformation}
Both on the initial state and on $\ket{\Psi'(t)}$,
\begin{equation}
\abs{\Sigma(A,B,C)-\Sigma(A,B,C')}
\le
\mathcal D_\Sigma(\eps_{A_1}^{(\Sigma)}),
\label{eq:Sigma-deformation-bound}
\end{equation}
where
\begin{equation}
\mathcal D_\Sigma(\eps)
:=
2\norm{Q_AQ_M}_\infty
\left[
\Lambda_{A:B_1:B_2}(\eps)
+
\Lambda_{B_1:A:M}(\eps)
\right].
\label{eq:D-Sigma}
\end{equation}
Moreover,
\begin{equation}
\abs{\Sigma(A,B,C)_{\Psi(t)}-\Sigma(A,B,C)_\Psi}
\le
\Omega_\Sigma(r(t);A,B,C)+2\mathcal D_\Sigma(\eps_{A_1}^{(\Sigma)}).
\label{eq:Sigma-orbit-comparison}
\end{equation}
\end{proposition}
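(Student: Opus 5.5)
The plan is to reduce the difference $\Sigma(A,B,C)-\Sigma(A,B,C')$ to two modular-defect commutators. I would use exact charge conservation to trade the charge operator $Q_B$ for $Q_A$, and then apply two local Markov decompositions of $K_{AB}$.

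\textbf{Step 1: reduce to a single commutator.} Since $C=C'M$, we have $Q_C=Q_{C'}+Q_M$. By \eqref{eq:Sigma-definition},
\[
\Sigma(A,B,C)-\Sigma(A,B,C')=i\langle[K_{AB},Q_BQ_M]\rangle .
\]

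\textbf{Step 2: charge flip.} Sharp total charge gives $[\rho_{AB},Q_{AB}]=0$, so $K_{AB}$ commutes with $Q_{AB}$ as an operator on the support, extended by zero on the kernel. Since $Q_M$ is supported off $AB$,
\[
[K_{AB},Q_{AB}Q_M]=Q_{AB}[K_{AB},Q_M]+[K_{AB},Q_{AB}]Q_M=0 .
\]
Hence $i\langle[K_{AB},Q_BQ_M]\rangle=-i\langle[K_{AB},Q_AQ_M]\rangle$. This step explains why the prefactor in \eqref{eq:D-Sigma} is $\norm{Q_AQ_M}_\infty$.

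\textbf{Step 3: first decomposition.} Write $B=B_1B_2$ and use the defect identity for $A|B_1|B_2$:
\[
K_{AB}=K_{AB_1}+K_{B_1B_2}-K_{B_1}-\Dlt_{A:B_1:B_2}.
\]
The operators $K_{B_1B_2}$ and $K_{B_1}$ are supported disjointly from $AM$, so they commute with $Q_AQ_M$ exactly.

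\textbf{Step 4: second decomposition.} Next use the tripartition $B_1|A|M$:
\[
K_{AB_1}=K_{AB_1M}-K_{AM}+K_A+\Dlt_{B_1:A:M}.
\]
The terms drop out as follows.
\begin{itemize}
\item $\langle[K_{AB_1M},Q_AQ_M]\rangle$ and $\langle[K_{AM},Q_AQ_M]\rangle$ vanish by cyclicity, because $Q_AQ_M$ is supported inside each region and $[\rho_X,K_X]=0$.
\item $[K_A,Q_AQ_M]=[K_A,Q_A]Q_M=0$, using $[\rho_A,Q_A]=0$.
\end{itemize}
This leaves the exact identity
\[
\Sigma(A,B,C)-\Sigma(A,B,C')=i\langle[\Dlt_{A:B_1:B_2},Q_AQ_M]\rangle-i\langle[\Dlt_{B_1:A:M},Q_AQ_M]\rangle .
\]

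\textbf{Step 5: bound the two defect terms.} Apply the weighted commutator inequality \eqref{eq:weighted-comm-main} to each term, using $\norm{Q_AQ_M\ket\phi}_2\le\norm{Q_AQ_M}_\infty$. Then \cref{thm:main-cmi} bounds each defect by $\Lambda$ of the corresponding CMI, which gives $\mathcal D_\Sigma$ once both CMIs are shown to be at most $\eps_{A_1}^{(\Sigma)}$.
\begin{itemize}
\item On $\Psi$, this follows from \eqref{eq:A1-purification-main}, data processing, and the definition \eqref{eq:uniform-A1-error}. One also uses that $\Lambda$ is monotone on $[0,\frac12)$; if needed, replace $\Lambda$ by its monotone envelope.
\item On $\Psi'(t)$, the region of $I(A:B_2|B_1)$ is disjoint from $MD$, so this CMI is unchanged.
\item For the other CMI, $I(B_1:M|A)_{\Psi'(t)}\le I(B_1:MD|A)_{\Psi'(t)}=I(B_1:MD|A)_\Psi$. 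The inequality is data processing, and the equality holds because the $MD$ flow is a unitary on $MD$.
\end{itemize}
\Cref{lem:onsite-covariance} guarantees that $\Psi'(t)$ keeps sharp charge, so Steps 2--4 remain valid on $\Psi'(t)$.

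\textbf{Step 6: orbit comparison.} To obtain \eqref{eq:Sigma-orbit-comparison}, I would repeat the five-term triangle chain \eqref{eq:J-orbit-triangle} with $\Sigma$ in place of $J$.
\begin{itemize}
\item The first term, comparing $\Psi(t)$ with $\widetilde\Psi(t)$, is bounded by the same-purification continuity estimate \eqref{eq:Sigma-vector-continuity}, giving $\Omega_\Sigma(r(t))$.
\item The second term vanishes by \eqref{eq:Sigma-common-unitary}.
\item The third and fifth terms are the deformation bound just proved.
\item The fourth term vanishes because the $MD$ flow does not alter $\rho_{ABC'}$.
\end{itemize}

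\textbf{Main obstacle.} The delicate part is Step 2 and the vanishing claims in Step 4 for nonfaithful marginals. One must check that $K_X$, defined on $\supp\rho_X$, commutes with $Q_X$. This holds because $[\rho_X,Q_X]=0$ makes $Q_X$ preserve both the support and its eigenspaces. One also needs that the cyclicity argument $\Tr\rho_X[K_X,Y]=0$ holds with the support convention. Both facts should follow from \cref{fact:basic}.
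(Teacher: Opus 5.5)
Your proposal is correct and follows the paper's proof essentially step for step. It uses the same charge flip $Q_BQ_M\to-Q_AQ_M$ via $[K_{AB},Q_{AB}]=0$, the same two substitutions giving the exact identity $i\langle[\Dlt_{A:B_1:B_2}-\Dlt_{B_1:A:M},Q_AQ_M]\rangle$, the same transfer of the two CMIs to $\ket{\Psi'(t)}$ (invariance of $\rho_{AB}$, and enlarging $M$ to $MD$), and the same five-term triangle chain. Your hedge about a monotone envelope is unnecessary, since $u\mapsto u[1+\log(d_{XY}d_{YZ})+\log(1/u)]$ is already increasing on $(0,\tfrac12]$.
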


\begin{proof}
Retain the split $B=B_1B_2$.  By
\cref{fact:basic} in \cref{app:Hall}, $[K_{AB},Q_{AB}]=0$.  Since
$Q_B=Q_{AB}-Q_A$,
\begin{align}
\Sigma(A,B,C)-\Sigma(A,B,C')
&=
i\langle[K_{AB},Q_BQ_M]\rangle
\notag\\
&=
-i\langle[K_{AB},Q_AQ_M]\rangle
\notag\\
&=
i\langle[\Dlt_{A:B_1:B_2}-\Dlt_{B_1:A:M},Q_AQ_M]\rangle.
\label{eq:Sigma-deformation-identity}
\end{align}
In the last line, $K_{AB}$ was replaced by
$K_{AB_1}+K_B-K_{B_1}-\Dlt_{A:B_1:B_2}$, while the remaining
$K_{AB_1}$ contribution was converted to the modular Markov defect
$\Dlt_{B_1:A:M}$.  The weighted commutator inequality together with
\cref{thm:main-cmi} proves \eqref{eq:Sigma-deformation-bound} on the initial
state.  The same bound holds on $\ket{\Psi'(t)}$: the first CMI is
unchanged because the $AB$ reduced state is unchanged, while
\[
I(B_1:M|A)_{\Psi'(t)}
\le
I(B_1:MD|A)_{\Psi}
\le\eps_{A_1}^{(\Sigma)}.
\]
Using \eqref{eq:Sigma-common-unitary} and follows the same series of triangle inquality \eqref{eq:J-orbit-triangle}. Utilize the fact that reduced state on $ABC'$ is unchanged and the two deformation bounds prove \eqref{eq:Sigma-orbit-comparison}.
\end{proof}

\subsection{A generic weighted drift obstruction}
\label{sec:generic-obstruction}

The modular commutator and Hall conductance estimator proofs now differ only in the scalar quantity that
has a finite drift budget.  We isolate the common bound.
Fix one response $\mathsf P\in\{J,\Sigma\}$.  In this subsection only,
write $\eps_{A_1}:=\eps_{A_1}^{(\mathsf P)}$ and take
$\mathcal D=\mathcal D_{\mathsf P}$ and $\Omega=\Omega_{\mathsf P}$.
The weighted estimate \eqref{eq:r-weighted} holds with this choice;
no supremum over the other response's deformation geometries is required for this step.

Write $\mathsf P(t):=\mathsf P(A,B,C)_{\Psi(t)}$ for the chosen response
along the modular-flow trajectory, and suppose it satisfies
\begin{equation}
|\mathsf P(t)-\mathsf P(0)|
\le
\Omega(r(t))+2\mathcal D(\eps_{A_1}),
\label{eq:generic-orbit-comparison}
\end{equation}
with
\begin{equation}
\Omega(r)\le ar\left[b+2\log\frac1r\right].
\label{eq:generic-Omega}
\end{equation}
For the modular-commutator application,
$a=2(\log d_{AB}+\log d_{BC})$ and $b=b_J$, with $b_J$ defined in
\eqref{eq:bJ}.  For the Hall application,
$a=2\norm{Q_BQ_C}_\infty$ and $b=b_\Sigma$, with $b_\Sigma$ defined in
\eqref{eq:bSigma}.

\begin{lemma}[Integrated continuity-modulus bound]
\label{lem:integrated-Omega-bound}
Assume $a>0$, $b\ge0$, and \eqref{eq:generic-Omega}.  Define
\begin{equation}
G(T)
:=
\frac1\pi\int_0^T
\ell(t)^6[\cosh(2\pi t)-1]\,dt.
\label{eq:G-definition}
\end{equation}
Then \eqref{eq:r-weighted} implies
\begin{equation}
\int_0^T r(t)\,dt
\le
2\sqrt{\eps_{A_1}G(T)}.
\label{eq:Q-bound}
\end{equation}
If $2\sqrt{\eps_{A_1}G(T)}\le T$, then
\begin{equation}
\int_0^T\Omega(r(t))\,dt
\le
2a\sqrt{\eps_{A_1}G(T)}
\left[
 b+2\log\frac{T}{2\sqrt{\eps_{A_1}G(T)}}+\frac2e
\right].
\label{eq:integrated-Omega}
\end{equation}
If $2\sqrt{\eps_{A_1}G(T)}>T$, then
\begin{equation}
\eps_{A_1}>\frac{T^2}{4G(T)}.
\label{eq:large-average-failure}
\end{equation}
\end{lemma}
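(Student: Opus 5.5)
The bound \eqref{eq:Q-bound} comes from Cauchy--Schwarz against the weight in \eqref{eq:r-weighted}. Write $w(t):=\pi^{-1}\ell(t)^6[\cosh(2\pi t)-1]$, so that $G(T)=\int_0^T w$, and factor $r=(r/\sqrt w)\sqrt w$. This gives
\[
\int_0^T r\,dt\le\Bigl(\int_0^T\frac{r^2}{w}\,dt\Bigr)^{1/2}\Bigl(\int_0^T w\,dt\Bigr)^{1/2}.
\]
The first factor is at most the full-line integral in \eqref{eq:r-weighted}, which is bounded by $4\eps_{A_1}$; this is where we use that the integrand is nonnegative. Hence $\int_0^T r\le 2\sqrt{\eps_{A_1}G(T)}$. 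The weight vanishes quadratically at $t=0$, but the integrand $r^2/w$ is only evaluated where \eqref{eq:r-weighted} already makes sense, so this causes no issue.

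For \eqref{eq:integrated-Omega}, set $Q:=\int_0^T r\,dt$. The modulus bound \eqref{eq:generic-Omega} reduces the task to bounding $a\int_0^T\bigl[b\,r+2r\log(1/r)\bigr]dt$. The linear part is at most $abQ$. For the entropy-like part, I will use concavity of $\phi(r)=r\log(1/r)$ on $[0,\infty)$, extended by $\phi(0)=0$, together with Jensen's inequality for the uniform average on $[0,T]$:
\[
\frac1T\int_0^T\phi(r)\,dt\le\phi(Q/T),
\qquad\text{so}\qquad
\int_0^T r\log\frac1r\,dt\le Q\log\frac TQ.
\]
The function $Q\mapsto Q\log(T/Q)$ is increasing only for $Q\le T/e$, but the hypothesis gives only $Q\le Q_0:=2\sqrt{\eps_{A_1}G(T)}\le T$. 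The slack term $2/e$ is what handles this. Since $\max_{Q\le Q_0}Q\log(T/Q)\le Q_0\log(T/Q_0)+T/e$ is too crude, I will instead use the elementary inequality
\[
Q\log\frac TQ\le Q_0\log\frac{T}{Q_0}+\frac{Q_0}{e}\qquad (0\le Q\le Q_0\le T).
\]
To prove it, write $Q=\lambda Q_0$ with $\lambda\in[0,1]$. Then
\[
Q\log\frac TQ=\lambda Q_0\log\frac{T}{Q_0}+Q_0\,\lambda\log\frac1\lambda\le Q_0\log\frac{T}{Q_0}+\frac{Q_0}{e},
\]
using $\log(T/Q_0)\ge0$ and $\lambda\log(1/\lambda)\le1/e$. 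Multiplying by $2a$ and adding the linear part $abQ\le abQ_0$ yields exactly \eqref{eq:integrated-Omega}.

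One subtlety concerns the range of $r$. Concavity and the Jensen step need \eqref{eq:generic-Omega} only for $r\le2$, and $r(t)\le2$ always holds because both vectors are normalized. I will also check that $b+2\log(1/r)$ does not become negative in a way that breaks the bound. This is harmless: the Jensen step applies to $\phi$ on all of $[0,2]$, where $\phi$ is still concave, and negative values only help the upper bound.

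Finally, \eqref{eq:large-average-failure} is just a rearrangement: $2\sqrt{\eps_{A_1}G(T)}>T$ gives $\eps_{A_1}>T^2/(4G(T))$, since $G(T)>0$ for $T>0$. The main obstacle is the Jensen/monotonicity step, which is where the $2/e$ correction enters.
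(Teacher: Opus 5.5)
Your proof is correct and follows the paper's route: Cauchy--Schwarz against the weight, using nonnegativity to pass from $[0,T]$ to the full line in \eqref{eq:r-weighted}, gives \eqref{eq:Q-bound}, and concavity of $x\log(1/x)$ together with the slack $\sup_{\lambda\in[0,1]}\lambda\log(1/\lambda)=1/e$ gives \eqref{eq:integrated-Omega}. The only cosmetic difference is the order of steps. The paper applies the tangent-line majorant $x\log(1/x)\le x\log(1/\alpha)+\alpha/e$ pointwise with $\alpha=2\sqrt{\eps_{A_1}G(T)}/T$ and then integrates, whereas you apply Jensen first and then bound $Q\log(T/Q)$ over $Q\le Q_0$; both give the same constant, and in either version the degenerate case $\eps_{A_1}=0$ is read by continuity.
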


\begin{proof}
The weight in \eqref{eq:r-weighted} is even.  Cauchy--Schwarz therefore gives
\begin{align}
\int_0^T r(t)\,dt
&\le
\left(
\int_0^T
\frac{r(t)^2}{\ell(t)^6[\cosh(2\pi t)-1]}\,dt
\right)^{1/2}
\left(
\int_0^T
\ell(t)^6[\cosh(2\pi t)-1]\,dt
\right)^{1/2}
\notag\\
&\le2\sqrt{\eps_{A_1}G(T)},
\end{align}
which proves \eqref{eq:Q-bound}.

For $x\ge0$ and $0<\alpha\le1$,
\begin{equation}
x\log\frac1x
\le
x\log\frac1\alpha+\frac{\alpha}{e}.
\label{eq:xlog-inequality}
\end{equation}
This follows from $x\log(\alpha/x)\le\alpha/e$ for $x>0$,
with $x\log(1/x)$ understood by continuity at $x=0$.
If $0<2\sqrt{\eps_{A_1}G(T)}\le T$, apply
\eqref{eq:xlog-inequality} pointwise with
$\alpha=2\sqrt{\eps_{A_1}G(T)}/T$.  Using \eqref{eq:Q-bound}, integration
of \eqref{eq:generic-Omega} yields
\begin{align}
\int_0^T\Omega(r(t))\,dt
&\le
a\left(\int_0^T r(t)\,dt\right)
\left[b+2\log\frac{T}{2\sqrt{\eps_{A_1}G(T)}}\right]
+\frac{4a\sqrt{\eps_{A_1}G(T)}}{e}
\notag\\
&\le
2a\sqrt{\eps_{A_1}G(T)}
\left[
b+2\log\frac{T}{2\sqrt{\eps_{A_1}G(T)}}+\frac2e
\right],
\end{align}
which is \eqref{eq:integrated-Omega}.  The case $\eps_{A_1}=0$ follows by
continuity.  Finally, $2\sqrt{\eps_{A_1}G(T)}>T$ rearranges directly to
\eqref{eq:large-average-failure}.
\end{proof}

\begin{lemma}[Explicit bound on the integrated strip weight]
\label{lem:G-upper}
For $T\ge0$, the function $G$ defined in \eqref{eq:G-definition} satisfies
\begin{equation}
G(T)
\le
\frac{2}{\pi^2}(1+4T^2)^3\e^{2\pi T}.
\label{eq:G-upper}
\end{equation}
\end{lemma}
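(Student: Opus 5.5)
We need to bound $G(T)=\frac1\pi\int_0^T \ell(t)^6[\cosh(2\pi t)-1]\,dt$ with $\ell(t)^6=8(1+4t^2)^3$. The plan is to pull the monotone polynomial factor out of the integral and integrate the hyperbolic factor exactly.

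\emph{Step 1: monotone factor.} Since $(1+4t^2)^3$ is nondecreasing on $[0,T]$,
\[
G(T)\le \frac{8}{\pi}(1+4T^2)^3\int_0^T[\cosh(2\pi t)-1]\,dt .
\]

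\emph{Step 2: hyperbolic integral.} Integrating exactly,
\[
\int_0^T[\cosh(2\pi t)-1]\,dt=\frac{\sinh(2\pi T)}{2\pi}-T\le \frac{\sinh(2\pi T)}{2\pi}\le \frac{\e^{2\pi T}}{4\pi},
\]
using $T\ge0$ and $\sinh x\le \e^x/2$.

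\emph{Step 3: combine.} These give
\[
G(T)\le \frac{8}{\pi}\cdot\frac{1}{4\pi}(1+4T^2)^3\e^{2\pi T}=\frac{2}{\pi^2}(1+4T^2)^3\e^{2\pi T},
\]
which is exactly \eqref{eq:G-upper}, so no constant is lost.

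There is no real obstacle here. The only points to check are the factor $\ell(t)^6=2^3(1+4t^2)^3$ and the constant bookkeeping $8/(4\pi^2)=2/\pi^2$. Dropping the $-T$ term is legitimate because $T\ge0$, and the case $T=0$ is trivial since both sides are then nonnegative and $G(0)=0$.
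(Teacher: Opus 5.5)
Your proof is correct and is essentially the paper's argument. Both pull out the nondecreasing factor $\ell(t)^6=8(1+4t^2)^3$ and bound the hyperbolic part by an exponential, arriving at the same constant $2/\pi^2$; the only difference is that the paper bounds the integrand pointwise via $\cosh(2\pi t)-1\le\frac12\e^{2\pi t}$ before integrating, whereas you integrate exactly and then drop $-T$ and use $\sinh x\le\e^{x}/2$.
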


\begin{proof}
Using $\ell(t)^6=8(1+4t^2)^3$ and
$\cosh(2\pi t)-1\le\frac12\e^{2\pi t}$ for $t\ge0$, we obtain
\begin{align*}
G(T)
&\le\frac4\pi(1+4T^2)^3\int_0^T\e^{2\pi t}\,dt\\
&\le\frac{2}{\pi^2}(1+4T^2)^3\e^{2\pi T}.
\end{align*}
\end{proof}

\begin{theorem}[Generic spectral-free drift obstruction]
\label{thm:generic-obstruction}
Assume that, for some $s>0$ and $R\ge0$,
\begin{equation}
sT-R
\le
\int_0^T
|\mathsf P(t)-\mathsf P(0)|\,dt
\label{eq:drift-budget}
\end{equation}
whenever $T>R/s$.  Let $a,b,\mathcal D, \mathsf P$ be as in
\eqref{eq:generic-orbit-comparison}--\eqref{eq:generic-Omega}, with $b\ge3$.
Then, for every $T>R/s$,
\begin{equation}
\eps_{A_1}
\ge
\min\left\{
\mathcal D^{\leftarrow}\!\left(\frac{sT-R}{4T}\right),
\frac{T^2}{4G(T)}
\chi_{b+2/e}\!\left(\frac{sT-R}{2aT}\right)^2
\right\}.
\label{eq:generic-lower-bound}
\end{equation}
where
\begin{equation}
\mathcal D^{\leftarrow}(y)
:=\inf\{\eps\ge0:\mathcal D(\eps)\ge y\},
\qquad
\chi_\gamma(z):=
\begin{cases}
0,&z=0,\\[0.2em]
-\displaystyle\frac{z}{
2W_{-1}\!\left(-\frac z2\e^{-\gamma/2}\right)},&0<z\le \gamma,\\[0.8em]
1,&z>\gamma,
\end{cases}
\qquad \gamma\ge2.
\label{eq:generic-bound-definitions}
\end{equation}
Here $\inf\varnothing:=+\infty$.  More explicitly, for each $\gamma\ge2$,
$\chi_\gamma:[0,\infty)\to[0,1]$ is defined as follows: for
$0<z\le\gamma$, $\chi_\gamma(z)$ is the unique $x\in(0,1]$ satisfying
\begin{equation}
x\left[\gamma+2\log\frac1x\right]=z.
\end{equation}
Uniqueness follows because the left-hand side is strictly increasing on
$[0,1]$, with endpoint values $0$ (by continuity) and $\gamma$.
We set $\chi_\gamma(0)=0$ and extend it by $\chi_\gamma(z)=1$ for
$z>\gamma$.  The formula in \eqref{eq:generic-bound-definitions} expresses
this inverse using the real Lambert-$W$ branch $W_{-1}\le-1$;
its scaling is given in \cref{lem:Lambert-W-scaling}.
\end{theorem}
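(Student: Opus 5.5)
Fix $T>R/s$. Combining the drift budget \eqref{eq:drift-budget} with the orbit comparison \eqref{eq:generic-orbit-comparison} and integrating over $[0,T]$ gives
\[
sT-R\le\int_0^T\Omega(r(t))\,dt+2T\mathcal D(\eps_{A_1}).
\]
The left side is positive, so at least one of the two terms on the right must carry at least half of it. This splits the argument into two cases, and each case produces one entry of the minimum in \eqref{eq:generic-lower-bound}.

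\emph{Case 1:} $2T\mathcal D(\eps_{A_1})\ge (sT-R)/2$. Then $\mathcal D(\eps_{A_1})\ge (sT-R)/(4T)$. By the definition of the generalized inverse $\mathcal D^{\leftarrow}$ as an infimum, $\eps_{A_1}$ belongs to the set over which the infimum is taken. Hence $\eps_{A_1}\ge\mathcal D^{\leftarrow}\bigl((sT-R)/(4T)\bigr)$. No monotonicity of $\mathcal D$ is needed here.

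\emph{Case 2:} $\int_0^T\Omega(r)\,dt\ge (sT-R)/2$. We apply \cref{lem:integrated-Omega-bound} and set $x:=2\sqrt{\eps_{A_1}G(T)}/T$.

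If $x>1$, then \eqref{eq:large-average-failure} gives $\eps_{A_1}>T^2/(4G(T))$. Since $\chi_\gamma\le1$, this already dominates the second entry of the minimum.

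If $x\le1$, then \eqref{eq:integrated-Omega} reads $\int_0^T\Omega\le aT\,x[b+2\log(1/x)+2/e]$. With $\gamma:=b+2/e$, we obtain
\[
x\bigl[\gamma+2\log\tfrac1x\bigr]\ge z:=\frac{sT-R}{2aT}.
\]
The function $g(x)=x[\gamma+2\log(1/x)]$ is strictly increasing on $[0,1]$, since $g'(x)=\gamma-2+2\log(1/x)>0$ for $\gamma\ge2$; this is where the hypothesis $b\ge3$ (indeed only $b\ge 2-2/e$) enters. Its range on $[0,1]$ is $[0,\gamma]$. Inverting therefore gives $x\ge\chi_\gamma(z)$ when $z\le\gamma$. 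When $z>\gamma$, the inequality $g(x)\ge z$ is impossible for $x\le1$, so this subcase reduces to $x>1$, which is already covered; the extension $\chi_\gamma=1$ is consistent with that. Squaring then gives $\eps_{A_1}\ge \frac{T^2}{4G(T)}\chi_\gamma(z)^2$.

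The boundary situation $\eps_{A_1}=0$ needs a separate remark. Then $x=0$ and the integral of $\Omega$ vanishes, because $r\equiv0$ almost everywhere by \eqref{eq:r-weighted}. So Case 2 cannot occur, and Case 1 forces $\mathcal D(0)\ge(sT-R)/(4T)>0$, which is consistent with the bound.

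What remains is to justify the Lambert-$W$ formula. Substitute $x=-z/(2w)$ into $x[\gamma-2\log x]=z$. This reduces to $w\e^{w}=-\tfrac z2\e^{-\gamma/2}$. The branch $W_{-1}$ is the correct one because $x\le1$ corresponds to $w\le -z/2$, and more precisely to $w\le-1$. I expect this branch selection, together with the monotonicity and range check that makes $\chi_\gamma$ well defined, to be the only delicate point. Everything else is the two-case pigeonhole argument above.
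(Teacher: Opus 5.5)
Your proof is correct and is essentially the paper's argument: the same pigeonhole split between the deformation term $2T\mathcal D(\eps_{A_1})$ and the integrated continuity term, with the case $2\sqrt{\eps_{A_1}G(T)}>T$ dispatched by \eqref{eq:large-average-failure} and the remaining case inverted through the monotone function defining $\chi_{b+2/e}$. You only perform the two case splits in the opposite order (the paper first splits on whether $2\sqrt{\eps_{A_1}G(T)}\le T$), and your side remarks on $\eps_{A_1}=0$, on $b\ge 2-2/e$ sufficing, and on the $W_{-1}$ branch (which the paper defers to \cref{lem:chi-Lambert-representation}) are accurate.
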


\begin{proof}
If $2\sqrt{\eps_{A_1}G(T)}>T$,
\eqref{eq:large-average-failure} implies the second branch because
$\chi_{b+2/e}\le1$.  Suppose
$2\sqrt{\eps_{A_1}G(T)}\le T$.  Combining
\eqref{eq:drift-budget}, \eqref{eq:generic-orbit-comparison}, and
\eqref{eq:integrated-Omega} gives
\begin{align}
sT-R
\le{}
2a\sqrt{\eps_{A_1}G(T)}
\left[
 b+2\log\frac{T}{2\sqrt{\eps_{A_1}G(T)}}+\frac2e
\right] +2T\mathcal D(\eps_{A_1}).
\label{eq:generic-master}
\end{align}
At least one term is at least $(sT-R)/2$.  If it is the deformation term,
the first branch of \eqref{eq:generic-lower-bound} follows.  Otherwise,
\[
\frac{2\sqrt{\eps_{A_1}G(T)}}{T}
\left[
b+\frac2e
+2\log\frac{T}{2\sqrt{\eps_{A_1}G(T)}}
\right]
\ge
\frac{sT-R}{2aT}.
\]
The defining inverse relation for $\chi_{b+2/e}$ gives
\[
\frac{2\sqrt{\eps_{A_1}G(T)}}{T}
\ge
\chi_{b+2/e}\!\left(\frac{sT-R}{2aT}\right),
\]
which is the second branch.
\end{proof}

For an explicit bound, one can choose
\begin{equation}
T_*:=\frac Rs+\frac1\pi.
\label{eq:Tstar-generic}
\end{equation}
so that the leading factor
$(sT-R)^2\e^{-2\pi T}$ is maximized. Then $sT_*-R=s/\pi$, and \cref{thm:generic-obstruction,lem:G-upper} give
\begin{equation}
\begin{aligned}
\eps_{A_1}
\ge
\min\Bigg\{
\mathcal D^{\leftarrow}\!\left(\frac{s}{4\pi T_*}\right),
\frac{\pi^2\e^{-2}T_*^2}{8(1+4T_*^2)^3}
\chi_{b+2/e}\!\left(\frac{s}{2\pi aT_*}\right)^2
\exp\!\left(-\frac{2\pi R}{s}\right)
\Bigg\}.
\end{aligned}
\label{eq:generic-optimized}
\end{equation}

\subsection{Modular commutator: lower bound from entropy drift}
\label{sec:thermal-obstruction}

Let $F(t)$ denote the entropy itself along the modular-flow trajectory:
\begin{equation}
F(t):=S_{BC}(\Psi(t)),
\qquad
F'(t)=J(A,B,C)_{\Psi(t)}.
\label{eq:entropy-derivative}
\end{equation}
The key input is simply that the entropy stays within its allowed range,
\begin{equation}
0\le F(t)\le\log d_{BC}.
\label{eq:thermal-entropy-range}
\end{equation}
A constant nonzero modular commutator would make $F(t)$ grow or decrease
linearly until it leaves this interval.  Its deviation from the initial
value must therefore be large enough to prevent this.  To obtain the
sharpest entropy budget from this range, use the distance from $F(0)$ to
the nearer endpoint:
\begin{equation}
R_{BC}
:=
\min\{S_{BC}(\Psi),\log d_{BC}-S_{BC}(\Psi)\}.
\label{eq:RBC}
\end{equation}
Assume $J(A,B,C)_\Psi\ne0$ and choose the direction of modular time so
that the initial drift points toward that nearer endpoint (either one
in case of a tie).  Continue to write $\Psi(t)$ and $F(t)$ for the
oriented trajectory and its entropy.  Reversing time changes the sign
of the derivative in \eqref{eq:entropy-derivative}, but in either direction
\[
\abs{F'(t)-F'(0)}
=\abs{J(A,B,C)_{\Psi(t)}-J(A,B,C)_\Psi}.
\]
If the nearer endpoint is $\log d_{BC}$, then
$F'(0)=\abs{J(A,B,C)_\Psi}$ and $R_{BC}=\log d_{BC}-F(0)$.
Using $F(T)\le\log d_{BC}$ directly gives
\begin{align*}
\abs{J(A,B,C)_\Psi}T-R_{BC}
&\le F'(0)T-\bigl[F(T)-F(0)\bigr]\\
&=\int_0^T\bigl[F'(0)-F'(t)\bigr]\,dt\\
&\le\int_0^T\abs{F'(t)-F'(0)}\,dt.
\end{align*}
If the nearer endpoint is $0$, then $F'(0)=-\abs{J(A,B,C)_\Psi}$
and $R_{BC}=F(0)$.  The lower bound $F(T)\ge0$ similarly gives
\[
\abs{J(A,B,C)_\Psi}T-R_{BC}
\le F(T)-F(0)-F'(0)T
\le\int_0^T\abs{F'(t)-F'(0)}\,dt.
\]
Hence, for
$T>R_{BC}/\abs{J(A,B,C)_\Psi}$,
\begin{equation}
\abs{J(A,B,C)_\Psi}T-R_{BC}
\le
\int_0^T
\abs{J(A,B,C)_{\Psi(t)}-J(A,B,C)_\Psi}\,dt.
\label{eq:thermal-budget}
\end{equation}
Thus \cref{thm:generic-obstruction} applies with
$\eps_{A_1}=\eps_{A_1}^{(J)}$ and
\begin{equation}
(s,R,a,b,\mathcal D)
=
\left(
\abs{J(A,B,C)_\Psi},R_{BC},
2(\log d_{AB}+\log d_{BC}),b_J,\mathcal D_J
\right).
\label{eq:thermal-substitution}
\end{equation}

\begin{corollary}[Spectral-free modular commutator \Aone{} obstruction]
\label{cor:thermal-obstruction}
\label{cor:thermal-exp-L}
If $J(A,B,C)_\Psi\ne0$, define
\begin{equation}
T_J^*:=
\frac{R_{BC}}{\abs{J(A,B,C)_\Psi}}+\frac1\pi,
\label{eq:TJstar}
\end{equation}
Then $\eps_{A_1}^{(J)}$ is lower bounded by
\begin{equation}
\min\Bigg\{
\mathcal D_J^{\leftarrow}\!\left(
\frac{\abs{J(A,B,C)_\Psi}}{4\pi T_J^*}
\right),
\frac{\pi^2\e^{-2}(T_J^*)^2}
{8[1+4(T_J^*)^2]^3}
\chi_{b_J+2/e}\!\left(
\frac{\abs{J(A,B,C)_\Psi}}
{4\pi(\log d_{AB}d_{BC})T_J^*}
\right)^2
\exp\!\left(
-\frac{2\pi R_{BC}}{\abs{J(A,B,C)_\Psi}}
\right)
\Bigg\}.
\label{eq:thermal-optimized}
\end{equation}
In particular, consider regular two-dimensional response regions of scale
$L$, fixed on-site dimension, and polynomially many deformation moves.
If $J(A,B,C)_\Psi=\pi c_-/3$ with fixed $c_-\ne0$, then there is a
positive polynomial $P_J$ such that, for sufficiently large $L$,
\begin{equation}
\eps_{A_1}^{(J)}(L)
\ge
\frac1{P_J(L,|c_-|^{-1})}
\exp\!\left[-\frac{6R_{BC}(L)}{|c_-|}\right].
\label{eq:thermal-exp-L}
\end{equation}
\end{corollary}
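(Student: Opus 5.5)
The first display, \eqref{eq:thermal-optimized}, is obtained by specializing the optimized generic bound \eqref{eq:generic-optimized} to the substitution \eqref{eq:thermal-substitution}. The drift-budget hypothesis \eqref{eq:drift-budget} of \cref{thm:generic-obstruction} is exactly \eqref{eq:thermal-budget}, valid for $T>R_{BC}/\abs{J(A,B,C)_\Psi}$. The orbit comparison \eqref{eq:generic-orbit-comparison} is \eqref{eq:J-orbit-comparison} from \cref{prop:J-deformation}. The modulus bound \eqref{eq:generic-Omega} is \eqref{eq:Omega-J-scalar}, with $a=2\log(d_{AB}d_{BC})$ and $b=b_J$. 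To use \cref{thm:generic-obstruction} we need $b_J\ge3$; this is immediate from \eqref{eq:bJ}, since $b_J=3+2\log d_{AB}\log d_{BC}/(\log d_{AB}+\log d_{BC})\ge3$. Choosing $T=T_J^*$, so that $sT-R=s/\pi$, and using \cref{lem:G-upper} in the form
\[
\frac{T^2}{4G(T)}\ge\frac{\pi^2T^2\e^{-2\pi T}}{8(1+4T^2)^3},
\qquad
\e^{-2\pi T_J^*}=\e^{-2}\exp(-2\pi R_{BC}/\abs{J}),
\]
gives the second branch. The argument of $\chi$ becomes $s/(2\pi aT_J^*)=\abs{J}/(4\pi\log(d_{AB}d_{BC})T_J^*)$, and the first branch is $\mathcal D_J^{\leftarrow}(s/(4\pi T_J^*))$.

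For the asymptotic statement \eqref{eq:thermal-exp-L}, substitute $\abs{J}=\pi\abs{c_-}/3$. The exponential factor then becomes $\exp[-2\pi R_{BC}\cdot 3/(\pi\abs{c_-})]=\exp[-6R_{BC}/\abs{c_-}]$. It remains to show that every other factor is at least an inverse polynomial in $L$ and $\abs{c_-}^{-1}$.
\begin{enumerate}
\item Under the area law, $R_{BC}=O(L)$ (and in any case $R_{BC}\le\log d_{BC}=O(L^2)$), so $T_J^*$ is polynomial in $(L,\abs{c_-}^{-1})$. Hence the prefactor $(T_J^*)^2/[1+4(T_J^*)^2]^3$ is inverse polynomial.
\item The argument $z$ of $\chi_\gamma$ is inverse polynomial, and $\gamma=b_J+2/e=O(L^2)$. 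Since $\chi_\gamma(z)\ge z/(\gamma+2\log(1/z)+O(1))$ (the Lambert-$W$ scaling of \cref{lem:Lambert-W-scaling}), $\chi^2$ is also inverse polynomial.
\item For the deformation branch, each $\Lambda(\eps)$ in \eqref{eq:D-J} is at most $C_0\sqrt{\eps[1+O(L^2)+\log(1/\eps)]}$ with polynomial prefactors, so $\mathcal D_J(\eps)\le P(L)\sqrt{\eps(L^2+\log(1/\eps))}$. Inverting, $\mathcal D_J^{\leftarrow}(y)\ge y^2/(P(L)^2\,\mathrm{polylog})$. With $y=\abs{J}/(4\pi T_J^*)$ inverse polynomial, this branch is inverse polynomial in $L$ and $\abs{c_-}^{-1}$, with no exponential factor at all. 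It therefore dominates the exponential branch for large $L$, so the minimum is attained, up to polynomial factors, by the exponential branch.
\end{enumerate}
Absorbing all polynomial factors into $P_J$ proves \eqref{eq:thermal-exp-L}.

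The main obstacle is the inversion of $\mathcal D_J$ in step 3. The $\log(1/\eps)$ term makes the inverse only implicit, so I would bound $\log(1/\eps)$ a posteriori: either $\eps\ge\e^{-L^2}$, which is already much larger than the target bound, or $\log(1/\eps)$ is controlled by a polynomial in the quantities at hand.
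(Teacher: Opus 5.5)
Your proposal follows the paper's proof: specialize \eqref{eq:generic-optimized} through the substitution \eqref{eq:thermal-substitution}, then show that every non-exponential factor is inverse-polynomial in $(L,|c_-|^{-1})$. Your branch-by-branch algebra for the first display is correct. So is your explicit check $b_J=3+2\log d_{AB}\log d_{BC}/(\log d_{AB}+\log d_{BC})\ge3$, which the paper leaves implicit. Two steps in the asymptotic part need repair, though neither changes the route.

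First, the case split in your last paragraph is wrong as stated. Under an area law the target $\exp[-6R_{BC}/|c_-|]/P_J$ is $\e^{-O(L)}$, which is far \emph{larger} than $\e^{-L^2}$, so knowing $\eps\ge\e^{-L^2}$ settles nothing by itself. (What that case actually gives is $\log(1/\eps)\le L^2$, hence an inverse-polynomial bound.) The cleaner fix avoids the split altogether. On $(0,1]$ one has $\eps\le\sqrt\eps$ and $\eps\log(1/\eps)\le(2/\e)\sqrt\eps$. Your estimate $\mathcal D_J(\eps)\le P(L)\sqrt{\eps(L^2+\log(1/\eps))}$ therefore becomes $\mathcal D_J(\eps)\le P'(L)\,\eps^{1/4}$ for a polynomial $P'$, so $\mathcal D_J^{\leftarrow}(y)\ge(y/P'(L))^4$. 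This is inverse-polynomial at $y=\abs{J(A,B,C)_\Psi}/(4\pi T_J^*)=|c_-|/(12T_J^*)$.

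The same trick repairs your $\chi$ bound. There the correction term is $2\log(\gamma/2+\log(2/z))=O(\log L)$, not $O(1)$, because $\gamma=b_J+2/\e=\Theta(L^2)$. Directly, $z=x\gamma+2x\log(1/x)\le(\gamma+2)\sqrt x$ gives $\chi_\gamma(z)\ge\min\{1,z^2/(\gamma+2)^2\}$.

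Second, you do not need, and have not justified, the claim that the deformation branch ``dominates'' the exponential branch. If $R_{BC}$ stays bounded, both branches are inverse-polynomial and neither need dominate. As in the paper, simply note that $\exp[-6R_{BC}/|c_-|]\le1$. An inverse-polynomial lower bound on the first branch is then already at least the target, and taking $P_J$ to be the sum of the two polynomials bounds the minimum in \eqref{eq:thermal-optimized} from below by the right-hand side of \eqref{eq:thermal-exp-L}.
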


\begin{proof}
Insert the modular commutator substitution \eqref{eq:thermal-substitution}
into \eqref{eq:generic-optimized} to obtain \eqref{eq:thermal-optimized}.
For fixed local dimension, all modular-moment and logarithmic dimension factors are
polynomial in $L$.  Moreover, $R_{BC}\le\log d_{BC}=O(L^2)$, so
$T_J^*$ is at most polynomial in $L$ and $|c_-|^{-1}$.  The deformation modulus
$\mathcal D_J(\eps)$ is a polynomial times
$\sqrt{\eps[\poly(L)+\log(1/\eps)]}$, so its generalized inverse at the
inverse-polynomial target in \eqref{eq:thermal-optimized} is
inverse-polynomial up to logarithms.  The Lambert-$W$ factor is likewise
inverse-polynomial up to logarithms by
\cref{lem:Lambert-W-scaling}.  Absorbing these logarithmic losses into
$P_J$, both branches in \eqref{eq:thermal-optimized} are bounded below by
the right-hand side of \eqref{eq:thermal-exp-L}, since
$\exp[-6R_{BC}/|c_-|]\le1$.
\end{proof}

An entanglement area law $S_{BC}(\Psi)=O(L)$ yields an exponentially small lower bound on the largest \Aone{} violation among the partitions used in the proof, up to a polynomial prefactor.  At fixed $R_{BC}$, the factor $\exp[-6R_{BC}/|c_-|]$ increases with $|c_-|$, so greater net chirality strengthens the exponential part of the bound. Volume-law entanglement can permit a weaker bound of order $\exp[-O(L^2)]$ when $R_{BC}=\Theta(L^2)$; however, the physical meaning of the modular commutator in this regime is not yet established. The argument applies to every response partition $A,B,C$: at least one of
its finitely many associated \Aone{} partitions satisfies the lower bound. If finite modular commutator and the entanglement area law both hold uniformly, the same lower bound applies to at least one \Aone{} partition in each construction.  Repeating the construction for different partitions therefore gives infinitely many distinct \Aone{} partitions satisfying the bound, though not necessarily every possible \Aone{} partition.

\subsection{Hall conductance estimator: lower bound from charge fluctuations}
\label{sec:Hall-obstruction}

The exact drift relation \eqref{eq:exact-review-Hall-drift} uses the second
moment $\langle Q_{BC}^2\rangle$.  Since the mean is constant, we instead
use the variance drift \eqref{eq:mean-charge-constant} to obtain a sharper
quantitative bound. Define the initial regional charge fluctuation
\begin{equation}
V_{BC}
:=
\operatorname{Var}_\Psi(Q_{BC})
=
\langle Q_{BC}^2\rangle_0-\langle Q_{BC}\rangle_0^2.
\label{eq:VBC}
\end{equation}
Orient modular time so that the variance initially decreases,
and denote the oriented trajectory again by $\Psi(t)$.  Positivity of
$\operatorname{Var}_{\Psi(t)}(Q_{BC})$ gives, whenever
$T>V_{BC}/(2\abs{\Sigma(A,B,C)_\Psi})$,
\begin{equation}
\abs{\Sigma(A,B,C)_\Psi}T-\frac{V_{BC}}2
\le
\int_0^T
\abs{
\Sigma(A,B,C)_{\Psi(t)}-
\Sigma(A,B,C)_\Psi
}\,dt.
\label{eq:Hall-budget}
\end{equation}
Thus \cref{thm:generic-obstruction} applies with
$\eps_{A_1}=\eps_{A_1}^{(\Sigma)}$ and
\begin{equation}
(s,R,a,b,\mathcal D)
=
\left(
\abs{\Sigma(A,B,C)_\Psi},
\frac{V_{BC}}2,
2\norm{Q_BQ_C}_\infty,
b_\Sigma,
\mathcal D_\Sigma
\right).
\label{eq:Hall-substitution}
\end{equation}

\begin{corollary}[Spectral-free Hall conductance estimator \Aone{} obstruction]
\label{cor:Hall-obstruction}
\label{cor:Hall-exp-L}
If $\Sigma(A,B,C)_\Psi\ne0$, define
\begin{equation}
T_\Sigma^*
:=
\frac{V_{BC}}{2\abs{\Sigma(A,B,C)_\Psi}}+\frac1\pi,
\label{eq:TSigmastar}
\end{equation}
Then $\eps_{A_1}^{(\Sigma)}$ is lower bounded by
\begin{equation}
\min\Bigg\{
\mathcal D_\Sigma^{\leftarrow}\!\left(
\frac{\abs{\Sigma(A,B,C)_\Psi}}{4\pi T_\Sigma^*}
\right),
\frac{\pi^2\e^{-2}(T_\Sigma^*)^2}
{8[1+4(T_\Sigma^*)^2]^3}
\chi_{b_\Sigma+2/e}\!\left(
\frac{\abs{\Sigma(A,B,C)_\Psi}}
{4\pi\norm{Q_BQ_C}_\infty T_\Sigma^*}
\right)^2
\exp\!\left(
-\frac{\pi V_{BC}}{\abs{\Sigma(A,B,C)_\Psi}}
\right)
\Bigg\}.
\label{eq:Hall-optimized}
\end{equation}
In particular, consider regular two-dimensional response regions of scale
$L$, fixed on-site dimension, uniformly bounded on-site charge, and
polynomially many deformation moves.  If
$\Sigma(A,B,C)_\Psi=\sigma_{xy}$ with fixed $\sigma_{xy}\ne0$, then there
is a positive polynomial $P_\Sigma$ such that, for sufficiently large $L$,
\begin{equation}
\eps_{A_1}^{(\Sigma)}(L)
\ge
\frac1{P_\Sigma(L,|\sigma_{xy}|^{-1})}
\exp\!\left[-\frac{\pi V_{BC}(L)}{|\sigma_{xy}|}\right].
\label{eq:Hall-exp-L}
\end{equation}
\end{corollary}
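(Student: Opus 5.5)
The proof mirrors that of \cref{cor:thermal-obstruction}, with the entropy budget replaced by the variance budget \eqref{eq:Hall-budget}. First, I would check the hypotheses of \cref{thm:generic-obstruction} for the Hall response with the substitution \eqref{eq:Hall-substitution}. The orbit comparison \eqref{eq:generic-orbit-comparison} is \eqref{eq:Sigma-orbit-comparison} of \cref{prop:Sigma-deformation}. The modulus bound \eqref{eq:generic-Omega} is \eqref{eq:Omega-Sigma-scalar} with $a=2\norm{Q_BQ_C}_\infty$ and $b=b_\Sigma=2\log d_{AB}+3\ge3$. The drift budget \eqref{eq:drift-budget} is \eqref{eq:Hall-budget} with $s=\abs{\Sigma(A,B,C)_\Psi}$ and $R=V_{BC}/2$.

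For \eqref{eq:Hall-budget} itself I would argue as follows. By \eqref{eq:mean-charge-constant}, $\frac{d}{dt}\operatorname{Var}_{\Psi(t)}(Q_{BC})=2\Sigma(A,B,C)_{\Psi(t)}$. Orienting time so that $2\Sigma(0)=-2\abs{\Sigma(0)}$, positivity of the variance gives
\[
0\le V_{BC}+2\int_0^T\Sigma(t)\,dt\le V_{BC}-2\abs{\Sigma(0)}T+2\int_0^T\abs{\Sigma(t)-\Sigma(0)}\,dt.
\]
Dividing by two gives \eqref{eq:Hall-budget}.

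Next, I would set $T=T_\Sigma^*=R/s+1/\pi$ as in \eqref{eq:Tstar-generic}, so that $sT-R=s/\pi$. I would then insert this into \eqref{eq:generic-lower-bound} together with \cref{lem:G-upper}. The second branch becomes
\[
\frac{T^2}{4G(T)}\ge\frac{\pi^2T^2}{8(1+4T^2)^3}\,\e^{-2\pi T}=\frac{\pi^2\e^{-2}T^2}{8(1+4T^2)^3}\,\e^{-2\pi R/s},
\]
and $2\pi R/s=\pi V_{BC}/\abs{\Sigma}$. The argument of $\chi$ becomes $(s/\pi)/(2aT)=\abs{\Sigma}/(4\pi\norm{Q_BQ_C}_\infty T)$. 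The deformation branch becomes $\mathcal D_\Sigma^{\leftarrow}(s/(4\pi T))$. This reproduces \eqref{eq:Hall-optimized}, i.e.\ it is \eqref{eq:generic-optimized} specialized to the Hall response.

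For the asymptotic form \eqref{eq:Hall-exp-L}, I would bound each factor polynomially. Bounded on-site charge gives $\norm{Q_BQ_C}_\infty\le q_0^2|B||C|=O(L^4)$, so $a$ is polynomial in $L$. Fixed local dimension makes $b_\Sigma=O(L^2)$. Also $V_{BC}\le\norm{Q_{BC}}_\infty^2=O(L^4)$, so $T_\Sigma^*$ is polynomial in $L$ and $\abs{\sigma_{xy}}^{-1}$, and the rational prefactor is inverse-polynomial.

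Two factors need more care:
\begin{itemize}
\item By \cref{lem:Lambert-W-scaling}, $\chi_\gamma(z)\gtrsim z/(\gamma+2\log(1/z))$ for small $z$, which is inverse-polynomial up to logarithms.
\item For $\mathcal D_\Sigma^{\leftarrow}$: by \eqref{eq:D-Sigma} and \cref{thm:main-cmi}, $\mathcal D_\Sigma(\eps)\le\poly(L)\sqrt{\eps[\poly(L)+\log(1/\eps)]}$. Solving $\mathcal D_\Sigma(\eps)\ge y$ with $y$ inverse-polynomial forces $\eps\ge y^2/(\poly(L)\log(\cdot))$, which is again inverse-polynomial.
\end{itemize}
Since $\exp[-\pi V_{BC}/\abs{\sigma_{xy}}]\le1$, the minimum in \eqref{eq:Hall-optimized} is then bounded below by $P_\Sigma^{-1}\exp[-\pi V_{BC}/\abs{\sigma_{xy}}]$ once the logarithmic losses are absorbed into $P_\Sigma$.

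I expect the main obstacle to be this absorption step. The generalized inverse of $\mathcal D_\Sigma$ and the Lambert-$W$ factor both contain $\log(1/\eps)$ or $\log(1/z)$ terms, and these must be shown to stay bounded by a polynomial in $L$ and $\abs{\sigma_{xy}}^{-1}$. I would handle this by restricting to $\eps\ge\e^{-L^k}$, because otherwise the claimed bound already holds trivially for large $L$, given that $V_{BC}=O(L^4)$ makes the target only $\e^{-O(L^4)}$. The restriction caps the logarithms polynomially, and choosing $k$ appropriately closes the argument.
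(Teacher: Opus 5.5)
Your route is the paper's: substitute \eqref{eq:Hall-substitution} into the optimized generic bound \eqref{eq:generic-optimized}, then do polynomial bookkeeping in $L$. Several of your steps are correct:
\begin{itemize}
\item the verification of \eqref{eq:Hall-optimized}, namely $T_\Sigma^*=R/s+1/\pi$, $sT_\Sigma^*-R=s/\pi$, $\e^{-2\pi T_\Sigma^*}=\e^{-2}\e^{-\pi V_{BC}/\abs{\Sigma(A,B,C)_\Psi}}$, and the $\chi$ argument;
\item your rederivation of \eqref{eq:Hall-budget};
\item the polynomial bounds on $a$, $b_\Sigma$, $V_{BC}$ and $T_\Sigma^*$.
\end{itemize}

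The step that fails as written is the justification of the cutoff in your last paragraph. The case $\eps<\e^{-L^k}$ is not one in which \eqref{eq:Hall-exp-L} ``holds trivially.'' Because $V_{BC}=O(L^4)$, the target $P_\Sigma^{-1}\e^{-\pi V_{BC}/\abs{\sigma_{xy}}}$ is at least $\e^{-O(L^4)}$. So for $k>4$ such an $\eps$ would \emph{violate} the claimed bound, and the case split runs in the wrong direction. It also conflates the variable in the infimum defining $\mathcal D_\Sigma^{\leftarrow}(y)$ with the actual defect $\eps_{A_1}^{(\Sigma)}$. What must be bounded below is $\mathcal D_\Sigma^{\leftarrow}(y)$ at the explicit target $y=\abs{\sigma_{xy}}/(4\pi T_\Sigma^*)$, and this quantity does not depend on the value of $\eps_{A_1}^{(\Sigma)}$.

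The correct reason the restriction is harmless is monotonicity. Set $\phi(\eps):=\eps[P+\log(1/\eps)]$ with $P:=1+\max\{\log(d_{AB_1}d_{B_1B_2}),\log(d_{AB_1}d_{AM})\}=\poly(L)$. Then \cref{thm:main-cmi} gives $\mathcal D_\Sigma(\eps)\le 56\norm{Q_AQ_M}_\infty\sqrt{\phi(\eps)}$, and $\phi$ is increasing on $(0,\frac12]$. Hence for $\eps<\e^{-L^k}$ one has $\mathcal D_\Sigma(\eps)\le\poly(L)\,\e^{-L^k/2}<y$ for large $L$. Such $\eps$ therefore never enter the infimum defining $\mathcal D_\Sigma^{\leftarrow}(y)$.

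More directly, monotonicity gives $\mathcal D_\Sigma^{\leftarrow}(y)\ge\phi^{-1}(w)\ge w/(P+2\log(P/w)+1)$ with $w=y^2/(56\norm{Q_AQ_M}_\infty)^2$. This is inverse-polynomial, by the same Lambert-$W$-type inversion you used for $\chi$, and needs no cutoff at all.

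The $\log(1/z)$ in the $\chi$ factor involves no $\eps$. Since $z=\abs{\sigma_{xy}}/(4\pi\norm{Q_BQ_C}_\infty T_\Sigma^*)$ is explicitly inverse-polynomial, no case split is needed there either. With this repair your argument coincides with the paper's, which simply asserts these inverse-polynomial-up-to-logarithms facts.
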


\begin{proof}
Insert the Hall substitution \eqref{eq:Hall-substitution} into
\eqref{eq:generic-optimized} to obtain \eqref{eq:Hall-optimized}.
For regular two-dimensional regions,
$\log d_{AB}=O(L^2)$ and
$\norm{Q_BQ_C}_\infty,\norm{Q_AQ_M}_\infty=\poly(L)$ under the bounded
on-site-charge assumption.  Hence $b_\Sigma$ and all charge prefactors in
$\mathcal D_\Sigma$ are polynomial.  If $\norm{Q_v}_\infty\le q_0$, then
$V_{BC}\le\norm{Q_{BC}-\langle Q_{BC}\rangle_0\id}_\infty^2\le4q_0^2|BC|^2=O(L^4)$;
thus $T_\Sigma^*$ is also at most polynomial in $L$ and $|\sigma_{xy}|^{-1}$.
As in the modular commutator case, the generalized
inverse of $\mathcal D_\Sigma$ at the relevant inverse-polynomial target is
inverse-polynomial up to logarithms, and the Lambert-$W$ factor is
inverse-polynomial up to logarithms by \cref{lem:Lambert-W-scaling}.
Absorbing logarithmic losses into $P_\Sigma$ and using
$\exp[-\pi V_{BC}/|\sigma_{xy}|]\le1$ gives \eqref{eq:Hall-exp-L}
from both branches of \eqref{eq:Hall-optimized}.
\end{proof}
For gapped ground states with conserved total charge and short-range charge
correlations, bulk charge fluctuations obey a boundary law
\cite{Song2012BipartiteFluctuations,Estienne2022Cornering}, including in quantum
Hall states \cite{Estienne2022Cornering}.  Thus $V_{BC}(L)=O(L)$ yields an
exponentially small lower bound on at least one \Aone{} violation for each
response partition, up to a polynomial prefactor.  At fixed $V_{BC}$, the factor
$\exp[-\pi V_{BC}/|\sigma_{xy}|]$ increases with $|\sigma_{xy}|$, so a larger
Hall response strengthens the exponential part of the bound.  More generally,
if $V_{BC}=\Theta(L^2)$, the bound can be as weak as $\exp[-O(L^2)]$.

\section{Discussion}
\label{sec:discussion}

\subsection{Summary}
\label{subsec:discussion-summary}

This work develops a finite-dimensional, spectral-free stability theory for the modular commutator and Hall conductance estimators under approximate \Aone{}.  The basic technical result, \cref{thm:main-cmi}, shows that conditional mutual information controls the modular Markov defect in the right-GNS norm.  The bound depends only logarithmically on subsystem dimensions and requires no lower bound on the nonzero eigenvalues of any reduced state.

Applied to the local moves of the entanglement bootstrap, this control yields quantitative deformation invariance of the modular commutator \cref{thm:main-J} and, with exact on-site \(U(1)\) symmetry, the Hall conductance estimator \cref{thm:main-Sigma}.  Exponentially
decaying \Aone{} violations make the dependence on topology- and orientation-preserving choices of the response partition exponentially small, up to polynomial geometric factors.

We also establish trace-norm continuity of the modular commutator and the
Hall conductance estimator \cref{thm:J-trace-continuity-phase,thm:Sigma-trace-continuity-phase}.
Combining this continuity with deformation invariance and quasi-local
truncation establishes invariance of the thermodynamic response limits
along \Aone-preserving quasi-local unitary paths
\cref{cor:response-phase-invariance-final}.  Thus the
paper proves both partition stability within a state and conditional phase
stability between states, while making the additional \Aone-preservation and
symmetry assumptions explicit.

The finite-time instantaneous-modular-flow analysis provides the converse
constraint.  The nonlinear current-state IMF map is globally Lipschitz
\cref{lem:IMF-Lipschitz}, and its Markov factorization obeys a weighted
finite-time bound determined solely by the initial CMI
\cref{thm:weighted-IMF}.  The result applies to arbitrary, possibly
nonfaithful finite-dimensional states, as detailed in
\cref{app:nonfaithful-IMF}.  Combining this flow control with the bound on how much the response
changes during modular evolution, \eqref{eq:generic-orbit-comparison},
converts the strict no-go argument into quantitative lower bounds:
for every response partition with a nonzero modular commutator or Hall
conductance estimator, at least one of the \Aone{} partitions used in the
construction obeys the lower bound
\cref{cor:thermal-obstruction,cor:Hall-obstruction}.  Under entanglement area law
or boundary-scale charge-fluctuation, these become the
\(\e^{-O(L)}\) obstructions in
\eqref{eq:thermal-exp-L} and \eqref{eq:Hall-exp-L}.

Taken together, the results resolve the apparent tension between asymptotic
deformation invariance and the exact-\Aone{} chirality no-go theorem.
Physically relevant chiral states may have \Aone{} violations that vanish with
scale, so their response becomes asymptotically independent of the auxiliary
partition, but the \Aone{} violations used in each construction cannot all
lie below the corresponding response-dependent lower bound.

\subsection{Open questions}
\label{subsec:discussion-open-questions}
\subsubsection{Robustness of R\'enyi-like probes of chirality and replica obstructions}

A natural extension is to establish quantitative robustness for finite-replica
probes of the chiral central charge.  Gass and Levin \cite{Gass_2026}
introduced the R\'enyi-like probe
\begin{equation}
Z_{\alpha,\beta}(A,B,C)_\rho
:=\Tr\rho_{ABC}\rho_{AB}^{\alpha}\rho_{BC}^{\beta},
\qquad
\omega_{\alpha,\beta}
:=\frac{Z_{\alpha,\beta}}{|Z_{\alpha,\beta}|},
\label{eq:discussion-renyi-probe}
\end{equation}
for $\alpha,\beta>0$ and $Z_{\alpha,\beta}\ne0$.
Here $A,B,C$ form the same cyclically ordered three-sector disk partition used to define the modular commutator $J(A,B,C)$ in \cref{fig:intro-geometries}(a).
For positive integers $\alpha,\beta$, the overlap is an expectation value of regional permutation operators on $\alpha+\beta+1$ replicas.
The phase is expected to encode the chiral central charge through the relation proposed in Ref.~\cite[Eq.~(5)]{Gass_2026},
\begin{equation}
\omega_{\alpha,\beta}
\longrightarrow
\exp\!\left[
-\frac{\pi i c_-}{12}
\left(
\frac{\alpha}{\alpha+1}
+\frac{\beta}{\beta+1}
-\frac{\alpha+\beta}{\alpha+\beta+1}
\right)
\right].
\label{eq:discussion-renyi-central-charge}
\end{equation}
Sheffer et al.\
\cite{sheffer2026probingchiraltopologicalstates} developed a broader
permutation-defect construction whose finite-replica probes can extract not
only the chiral central charge and Hall conductance but also higher central
charges through the lens-space multi-entropy.
These works motivate an approximate-Markov
stability theory for such observables.

The first question is whether small local \Aone{} violations imply
quantitative deformation invariance of $\log\omega_{\alpha,\beta}$ at fixed
replica number. Can this control be combined with trace-norm continuity
to obtain a conditional phase-invariance theorem analogous to ours?
An essential distinction is between controlling the overlap and controlling
its phase: a small absolute error in $Z_{\alpha,\beta}$ need not give a
small error in $\log\omega_{\alpha,\beta}$ on the chosen branch when the
overlap magnitude is small.
Thus a useful theorem would require relative-error control or suitable
lower bounds on the overlap magnitude, uniformly over the deformations
and state comparisons under consideration.

A complementary question is whether a nontrivial finite-replica phase forces a quantitative lower bound on the local \Aone{} violation.  Such a result would provide a replica-based counterpart of the obstruction proved here.  It remains open whether the appropriate quantity is the von Neumann \Aone{} defect itself or a R\'enyi analogue adapted to the permutation construction.  In the latter case, one must
identify a suitable definition and establish the nonnegativity and Markov or recovery properties needed by the argument, rather than assume that replacing each entropy by a R\'enyi entropy suffices.

\subsubsection{Infinite local dimension}

Our robustness bounds assume a fixed finite local Hilbert-space dimension,
independent of system size.  A natural extension is to bosonic lattice
systems, where each site has an infinite-dimensional Hilbert space.
In this setting, a small change in trace norm can move probability into
arbitrarily high levels and produce a large change in entropy or modular
fluctuations.  Quantitative continuity therefore requires additional
restrictions, typically an energy constraint
\cite{Winter_2016,Shirokov2016InfiniteCorrelations}.

For such bosonic systems, modular Hamiltonians are generally unbounded.
A useful starting point is a
positive reference Hamiltonian \(H_R\) with a finite partition function at
every positive inverse temperature, together with a uniform energy bound:
\begin{equation}
\Tr\e^{-\beta H_R}<\infty\quad(\beta>0),
\qquad
\Tr\rho_RH_R\le E_R.
\label{eq:discussion-energy-constraint}
\end{equation}
The maximum entropy at energy at most \(E\) is then finite.  This energy-dependent entropy budget replaces the dimension-dependent budget in entropy continuity bounds \cite{Winter_2016}. This extension would allow the modular response formulas to be studied in bosonic systems without imposing an artificial cutoff on the local occupation number but a more physically relevant regime of bounded energy.  More fundamentally, this would test whether the tension between a nonzero chiral response and approximate local Markovianity persists when the local Hilbert space is infinite-dimensional but the available energy is finite.  

A possible route is to apply the present bounds in low-energy subspaces and then remove the cutoff.  The central challenge, however, is to replace dimension-dependent estimates by bounds controlled by the available energy.  In the converse direction, it is interesting to ask whether the instantaneous modular-flow argument can yield an energy-dependent lower bound on local \Aone{} violations for states with a nonzero chiral response. An analogous Hall obstruction would require suitable control of regional charge fluctuations.  Resolving these questions would extend both the robustness of the modular response formulas and the obstruction to local Markovianity to physically relevant bosonic states, placing the two directions of the present theory in a setting beyond finite local dimension.

\section*{Acknowledgments}

I am especially grateful to Bowen Shi and for fruitful discussions and comments on this work. I also thank Jong Yeon Lee, Jia Wang and colleagues for helpful conversations. I am supported by Taiwan-UIUC fellowship program.
\section*{AI Disclosure Statement}

Generative artificial intelligence, specifically OpenAI's GPT-5.6 Sol model, played a substantial role in the development of the theoretical results presented in this work. The model was used to derive mathematical results and assist in organizing and writing the manuscript. The author developed the phase-equivalence argument.  GPT-6 Astra is only used to refine the writing. The author independently checked the derivations and proofs, assessed the scope of the results, and verified the conclusions reported in the paper. The author takes full responsibility for the correctness and content of the manuscript.

\appendix
\crefalias{section}{appendix}
\crefname{appendix}{appendix}{appendices}
\Crefname{appendix}{Appendix}{Appendices}

\section{Trace, information, and commutator identities}
\label{app:identities}

\begin{lemma}[Trace distance to purification distance]
\label{lem:purification-trace-distance-app}
Let $\rho$ and $\sigma$ be finite-dimensional states and set
\begin{equation}
\eps:=\frac12\norm{\rho-\sigma}_1.
\end{equation}
There exist purifications $\ket\psi$ of $\rho$ and $\ket\phi$ of $\sigma$
on a common auxiliary system such that
\begin{equation}
\norm{\ket\psi-\ket\phi}_2\le\sqrt{2\eps}.
\label{eq:purification-trace-distance-app}
\end{equation}
\end{lemma}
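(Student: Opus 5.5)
The plan is to combine Uhlmann's theorem with the Fuchs--van de Graaf inequality and then convert fidelity into a vector distance by fixing a phase. Let $F(\rho,\sigma):=\norm{\sqrt\rho\sqrt\sigma}_1$ denote the (root) fidelity. Take an auxiliary system $R$ with $\dim\cH_R$ equal to the dimension of the system, and fix any purification $\ket\psi$ of $\rho$ on the system tensored with $R$.

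By Uhlmann's theorem, the maximum of $\abs{\braket{\psi}{\phi'}}$ over purifications $\ket{\phi'}$ of $\sigma$ on the same space equals $F(\rho,\sigma)$. A maximizer exists because the set of purifications is the compact orbit of one purification under unitaries on $R$. Choose such a maximizer $\ket{\phi'}$. Multiplying it by a global phase $\e^{i\theta}$ gives another purification of $\sigma$, so we may set $\ket\phi:=\e^{i\theta}\ket{\phi'}$ with $\theta$ chosen to make $\braket{\psi}{\phi}=F(\rho,\sigma)$ real and nonnegative. Then
\begin{equation*}
\norm{\ket\psi-\ket\phi}_2^2=2-2\operatorname{Re}\braket{\psi}{\phi}=2\bigl(1-F(\rho,\sigma)\bigr).
\end{equation*}

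Next, the Fuchs--van de Graaf inequality $1-F(\rho,\sigma)\le\frac12\norm{\rho-\sigma}_1=\eps$ gives $\norm{\ket\psi-\ket\phi}_2^2\le2\eps$. Taking square roots yields \eqref{eq:purification-trace-distance-app}.

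No step is a genuine obstacle. The only point needing care is to use the lower Fuchs--van de Graaf bound $1-F\le\eps$ rather than the upper one. One must also note that the phase adjustment preserves the purification property, so the optimal overlap can be taken real and positive.
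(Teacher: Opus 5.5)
Your proposal is correct and follows the same route as the paper: Uhlmann's theorem with a phase choice making $\braket{\psi}{\phi}=F_{\mathrm r}(\rho,\sigma)\ge0$, the identity $\norm{\ket\psi-\ket\phi}_2^2=2(1-F_{\mathrm r})$, and the Fuchs--van de Graaf bound $1-F_{\mathrm r}\le\eps$. Your added justification that a maximizer exists (compactness of the unitary orbit on the auxiliary system) and that the phase adjustment preserves the purification property simply makes explicit what the paper leaves implicit.
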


\begin{proof}
Define the root fidelity
\begin{equation}
F_{\mathrm r}(\rho,\sigma)
:=\norm{\sqrt{\rho}\sqrt{\sigma}}_1.
\end{equation}
By Uhlmann's theorem, the purifications and their phases may be chosen so
that
\begin{equation}
\braket{\psi}{\phi}=F_{\mathrm r}(\rho,\sigma)\ge0.
\end{equation}
The Fuchs--van de Graaf inequality gives
$1-F_{\mathrm r}(\rho,\sigma)\le\eps$.  Since both purifications are
normalized,
\begin{align}
\norm{\ket\psi-\ket\phi}_2^2
&=2-2\operatorname{Re}\braket{\psi}{\phi}
\notag\\
&=2\bigl(1-F_{\mathrm r}(\rho,\sigma)\bigr)
\le2\eps.
\end{align}
Taking the square root proves
\eqref{eq:purification-trace-distance-app}.
\end{proof}

\subsection{The subnormalized Markov operator}

\begin{lemma}[Trace bound]
\label{lem:tau-trace-app}
Define the subnormalized Markov operator
\begin{equation}
\tau_{ABC}
:=\exp\!\left(\log\rho_{AB}+\log\rho_{BC}-\log\rho_B\right).
\label{eq:tau-main}
\end{equation}
Then
\begin{equation}
\Tr\tau_{ABC}\le1.
\label{eq:tau-trace-app}
\end{equation}
\end{lemma}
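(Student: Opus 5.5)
We need $\Tr\exp(\log\rho_{AB}+\log\rho_{BC}-\log\rho_B)\le1$. This is Lieb's classic consequence of his triple-matrix inequality, and it is the step underlying the Lieb--Ruskai proof of strong subadditivity. I will assume first that all marginals are faithful, so the logarithms are honest Hermitian operators. In the paper this lemma is used for the depolarized state $\rho^{(\eta)}$, which is full rank.

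\textbf{Step 1: the key tool.} The plan is to invoke Lieb's theorem in the following form. For positive definite $R,S$ and Hermitian $H$,
\[
\Tr\exp\bigl(H+\log R-\log S\bigr)\le\Tr\int_0^\infty e^{H}\,(S+x\id)^{-1}R\,(S+x\id)^{-1}\,dx .
\]
This follows from Lieb's concavity of $X\mapsto\Tr\exp(H+\log X)$ and Golden--Thompson-type arguments. I would cite it directly from Lieb (1973) and Lieb--Ruskai rather than reprove it; that concavity theorem is the only genuinely deep input. We apply it with $H=\log\rho_{AB}$, $R=\rho_{BC}$, and $S=\rho_B$, with all operators embedded on $ABC$ by tensoring with identities.

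\textbf{Step 2: partial traces.} The right-hand side of Step 1 is
\[
\int_0^\infty\Tr\bigl[\rho_{AB}(\rho_B+x)^{-1}\rho_{BC}(\rho_B+x)^{-1}\bigr]\,dx .
\]
First trace out $C$: $\rho_{BC}$ becomes $\rho_B\otimes\id_A$. Then trace out $A$: $\rho_{AB}$ becomes $\rho_B$. The integrand becomes $\Tr_B[\rho_B(\rho_B+x)^{-1}\rho_B(\rho_B+x)^{-1}]$. Diagonalize $\rho_B$ with eigenvalues $p_k$. Using $\int_0^\infty p^2(p+x)^{-2}\,dx=p$, the integral equals $\sum_k p_k=1$ over the support.

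\textbf{Step 3: faithfulness and the main obstacle.} The main obstacle is the non-faithful case, where $\log\rho_X$ has $-\infty$ eigenvalues. I would handle it by defining $\tau$ through the regularized states. Concretely, take $\rho_\epsilon=(1-\epsilon)\rho+\epsilon\id/d$, note that its marginals are the depolarized marginals, and bound $\Tr\tau_\epsilon\le1$ for each $\epsilon$. One then passes to the limit in whatever support convention is used. For the theorem's application only the faithful case $\tau_\eta$ is needed, so this remark suffices.
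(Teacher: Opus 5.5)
Your proposal is correct and follows the paper's proof essentially step for step: Lieb's triple-matrix inequality for faithful marginals, then partial traces over $C$ and $A$ reducing the bound to $\Tr_B\int_0^\infty\rho_B^2(\rho_B+x\id_B)^{-2}\,dx=\Tr\rho_B=1$, then a depolarizing regularization for the general case, which (as you note) is never needed because the lemma is only applied to the full-rank $\tau_\eta$. One small correction: the triple-matrix inequality comes from Lieb's concavity theorem via one-homogeneity and the integral formula for the derivative of $\log$ at $\rho_B$, not from Golden--Thompson-type arguments, but since you cite it directly this does not affect the proof.
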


\begin{proof}
For full-rank marginals, Lieb's triple-matrix inequality \cite{Lieb1973} gives
\begin{align}
&\Tr\exp\!\left(\log\rho_{AB}-\log\rho_B+\log\rho_{BC}\right)
\notag\\
&\quad\le
\Tr\int_0^\infty
\rho_{AB}(\rho_B+t\id_B)^{-1}
\rho_{BC}(\rho_B+t\id_B)^{-1}\,dt.
\label{eq:triple-matrix-app}
\end{align}
All operators are embedded in $ABC$ in the natural way.  Taking partial traces over $A$ and $C$ reduces the right-hand side to
\[
\Tr_B\int_0^\infty
\rho_B^2(\rho_B+t\id_B)^{-2}\,dt
=\Tr_B\rho_B=1.
\]
The general case follows by regularization and continuity \cite{SutterBertaTomamichel2017}.
\end{proof}

\begin{lemma}[Depolarized Markov spectral floor]
\label{lem:depolarized-markov-floor-app}
Let
\[
\rho_{ABC}^{(\eta)}
:=(1-\eta)\rho_{ABC}+\eta\frac{\id_{ABC}}{d_{ABC}},
\qquad 0<\eta\le1,
\]
and define
\[
\tau_\eta
:=\exp\!\left(
\log\rho_{AB}^{(\eta)}+
\log\rho_{BC}^{(\eta)}-
\log\rho_B^{(\eta)}
\right).
\]
Then
\begin{equation}
\rho_{ABC}^{(\eta)}\ge\frac{\eta}{d_{ABC}}\id_{ABC},
\qquad
\tau_\eta\ge\frac{\eta^2}{d_{AB}d_{BC}}\id_{ABC}.
\label{eq:depolarized-markov-floor-app}
\end{equation}
\end{lemma}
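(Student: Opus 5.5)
The first floor is immediate. Since $\rho_{ABC}\ge0$, we have $\rho^{(\eta)}_{ABC}\ge\eta\,\id_{ABC}/d_{ABC}$. For the second floor, I would first observe that partial traces commute with global depolarization. The marginals of $\rho^{(\eta)}$ are
\[
\rho_X^{(\eta)}=(1-\eta)\rho_X+\eta\,\id_X/d_X ,\qquad X\in\{AB,BC,B\}.
\]
Each is therefore faithful, so all logarithms in $\tau_\eta$ are well defined. The marginals satisfy the two-sided bounds
\[
\frac{\eta}{d_X}\id_X\le\rho_X^{(\eta)}\le\id_X .
\]
The upper bound holds because every eigenvalue of a density matrix is at most $1$.

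Next I would use operator monotonicity of the logarithm on each term of the exponent separately:
\[
\log\rho_{AB}^{(\eta)}\ge\log\frac{\eta}{d_{AB}}\,\id,\qquad
\log\rho_{BC}^{(\eta)}\ge\log\frac{\eta}{d_{BC}}\,\id,\qquad
-\log\rho_B^{(\eta)}\ge0 .
\]
The last inequality uses $\rho_B^{(\eta)}\le\id$, so $\log\rho^{(\eta)}_B\le0$. All three operators are embedded in $ABC$ by tensoring with identities, and the inequalities survive this embedding. Adding them gives
\[
H:=\log\rho_{AB}^{(\eta)}+\log\rho_{BC}^{(\eta)}-\log\rho_B^{(\eta)}\ \ge\ \log\frac{\eta^2}{d_{AB}d_{BC}}\;\id_{ABC}.
\]

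Finally, $H$ is a single Hermitian operator. Operator inequalities of the form $H\ge c\,\id$ pass through the exponential: every eigenvalue of $H$ is at least $c$, so every eigenvalue of $e^H$ is at least $e^c$. Hence $\tau_\eta=e^H\ge\frac{\eta^2}{d_{AB}d_{BC}}\id_{ABC}$.

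The only place one might go wrong is applying monotonicity after exponentiating, since the exponential is not operator monotone. I avoid this by first bounding the \emph{sum} $H$ below by a scalar multiple of the identity. No noncommutative issue then arises, because comparison with a scalar is an eigenvalue statement.
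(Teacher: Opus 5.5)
Your proposal is correct and follows essentially the same argument as the paper. Both note that every marginal is depolarized with the same parameter, which gives $\frac{\eta}{d_X}\id_X\le\rho_X^{(\eta)}\le\id_X$; both then bound each logarithm in the exponent below by a scalar multiple of the identity, add the three bounds, and pass this scalar lower bound through the exponential using the spectral decomposition of the single Hermitian exponent. Your remark that the exponential is applied only after reducing to a comparison with a multiple of the identity is exactly what makes the last step legitimate.
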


\begin{proof}
Every marginal is depolarized with the same parameter, so
\[
\rho_W^{(\eta)}
=(1-\eta)\rho_W+\eta\frac{\id_W}{d_W}
\ge\frac{\eta}{d_W}\id_W.
\]
Moreover, $\rho_B^{(\eta)}\le\id_B$.  After embedding all operators in $ABC$, operator monotonicity of the logarithm therefore gives
\begin{align*}
\log\rho_{AB}^{(\eta)}
&\ge\log\!\left(\frac{\eta}{d_{AB}}\right)\id_{ABC},
&
\log\rho_{BC}^{(\eta)}
&\ge\log\!\left(\frac{\eta}{d_{BC}}\right)\id_{ABC},
&
-\log\rho_B^{(\eta)}&\ge0.
\end{align*}
Hence the Hermitian exponent defining $\tau_\eta$ is bounded below by
\[
\log\!\left(\frac{\eta^2}{d_{AB}d_{BC}}\right)\id_{ABC}.
\]
Every eigenvalue of this Hermitian exponent is therefore at least
$\log[\eta^2/(d_{AB}d_{BC})]$.  Applying the scalar exponential to its
spectral decomposition gives the second inequality in
\eqref{eq:depolarized-markov-floor-app}.
\end{proof}

\section{Proof of the CMI-to-GNS theorem}
\label{app:cmi-proof}

\subsection{Depolarization smoothing}

\begin{proof}[Proof of \cref{lem:smoothing-app}]
Since $\sigma$ and
$\sigma^{(\eta)}=(1-\eta)\sigma+\eta\id/d$ commute, they have a common
orthonormal eigenbasis.  Let $p_i$ be the eigenvalues of $\sigma$ and set
\[
q_i:=(1-\eta)p_i+\frac{\eta}{d}.
\]
Then
\[
\Tr\sigma\bigl(\log\sigma-\log\sigma^{(\eta)}\bigr)^2
=
\sum_{i:p_i>0}p_i\left(\log\frac{p_i}{q_i}\right)^2.
\]

First consider the indices for which $p_i\ge d^{-1}$.  Since
$p_i\ge q_i\ge(1-\eta)p_i$,
\[
0\le\log\frac{p_i}{q_i}\le-\log(1-\eta).
\]
Their total contribution is therefore at most $\log^2(1-\eta)$.  Direct
differentiation shows that $\log^2(1-\eta)/\eta$ is increasing on
$(0,1/2]$, so
\[
\log^2(1-\eta)
\le2\log^2(2)\,\eta
<\eta.
\]

Now consider the indices for which $0<p_i<d^{-1}$ and put
$x_i:=dp_i\in(0,1)$.  Their individual contributions satisfy
\[
p_i\left(\log\frac{p_i}{q_i}\right)^2
=
\frac{x_i}{d}
\log^2\!\left(1-\eta+\frac{\eta}{x_i}\right).
\]
Writing $u_i:=\eta/x_i$ and using $1-\eta+u_i\le1+u_i$ gives
\[
x_i\log^2\!\left(1-\eta+\frac{\eta}{x_i}\right)
\le
\eta\frac{\log^2(1+u_i)}{u_i}
\le\eta.
\]
In the last step we used
$\log(1+u)\le\sqrt{u}$ for $u>0$, which follows by differentiating
$\sqrt{u}-\log(1+u)$.  Since there are at most $d$ such indices, their total
contribution is at most $\eta$.  Adding the two regimes proves
\[
\Tr\sigma\bigl(\log\sigma-\log\sigma^{(\eta)}\bigr)^2
\le2\eta.
\]
\end{proof}

\subsection{A scalar second-moment estimate}

\begin{proof}[Proof of \cref{lem:scalar-moment-app}]
Set
\[
g(x):=\e^{-x}-1+x\ge0.
\]
The assumptions give
\[
\mathbb E g(X)
=\mathbb E\e^{-X}-1+\mathbb E X
\le u.
\]
For $x\le0$, the Taylor lower bound for $\e^{-x}$ gives $g(x)\ge x^2/2$.  For $x\ge0$,
\begin{align*}
g(x)
&=\int_0^x(1-\e^{-s})\,ds
\ge\int_0^x\frac{s}{1+s}\,ds
=x-\log(1+x)
\ge\frac{x^2}{2(1+x)}.
\end{align*}
Hence $x^2\le2(1+L)g(x)$ for every $x\in[-L,L]$.  Taking expectations proves the claim.
\end{proof}

\subsection{Continuity of conditional mutual information}

\begin{proof}[Proof of \cref{lem:cmi-continuity-app}]
Winter's finite-dimensional conditional-entropy continuity bound
\cite[Lemma~2]{Winter_2016} states that, whenever
$\frac12\norm{\rho_{AB}-\sigma_{AB}}_1\le\eta$,
\begin{equation}
\abs{S(A|B)_\rho-S(A|B)_\sigma}
\le2\eta\log d_A
+(1+\eta)h_2\!\left(\frac{\eta}{1+\eta}\right).
\label{eq:conditional-entropy-continuity-app}
\end{equation}
Trace distance is contractive under partial trace, so the same $\eta$ applies to all marginals below.  Using
\[
I(A:C|B)=S(A|B)-S(A|BC)
\]
and applying \eqref{eq:conditional-entropy-continuity-app} twice gives
\[
\abs{I(A:C|B)_\rho-I(A:C|B)_\sigma}
\le4\eta\log d_A
+2(1+\eta)h_2\!\left(\frac{\eta}{1+\eta}\right).
\]
For $0\le\eta\le\frac12$,
$(1+\eta)h_2(\eta/(1+\eta))\le2h_2(\eta)$, and hence the right-hand side is at most
$4\eta\log d_A+4h_2(\eta)$.  Repeating the argument with
$I(A:C|B)=S(C|B)-S(C|AB)$ gives the same estimate with $d_C$ in place of $d_A$.  Taking the better of the two bounds proves \eqref{eq:cmi-continuity-app} with $d_*=\min\{d_A,d_C\}$.
\end{proof}
\subsection{Evaluation of the numerical constant}
\label{app:cmi-constant-evaluation}

We record the elementary estimates used at the end of the proof of
\cref{thm:main-cmi}.  Recall that
\[
\Gamma=1+\log(d_{AB}d_{BC})+\log\frac1u,
\qquad
\eta=\frac{u}{\Gamma},
\qquad
0<u\le\frac12.
\]
Since $\log(d_{AB}d_{BC})\ge0$ and
$\log(1/u)\ge\log2$, one has
\[
\Gamma\ge1+\log2>\frac53,
\qquad
\eta\le\frac{1/2}{1+\log2}<\frac3{10}.
\]
Moreover,
\[
\log(d_{AB}d_{BC})
=\log d_A+2\log d_B+\log d_C
\ge2\log d_*,
\qquad
d_*=\min\{d_A,d_C\}.
\]
Using $h_2(\eta)\le\eta[1+\log(1/\eta)]$ in
\eqref{eq:smoothed-cmi-main}, together with
$\log(1/\eta)=\log(1/u)+\log\Gamma$, gives
\begin{align}
u_\eta
&\le
u+\frac{4u}{\Gamma}\log d_*
+\frac{4u}{\Gamma}
\left(1+\log\frac1u+\log\Gamma\right)
\notag\\
&\le
u+2u+4u\left(1+\frac{\log\Gamma}{\Gamma}\right)
\le
\left(7+\frac4{\e}\right)u
<9u.
\label{eq:ueta-constant-app}
\end{align}
Here we used $\sup_{x>0}\log(x)/x=1/\e$.  Next,
$2\log\Gamma\le\Gamma$, so
\begin{equation}
L_\eta
=2\log\frac1u+\log(d_{AB}d_{BC})+2\log\Gamma
\le3\Gamma.
\label{eq:Leta-constant-app}
\end{equation}
Consequently, $1+L_\eta\le4\Gamma$ and
$(1-\eta)^{-1/2}\le\sqrt{10/7}<6/5$.  Therefore
\begin{align}
4\sqrt{2\eta}
&\le\frac{12\sqrt2}{5}\sqrt{u\Gamma},
\notag\\
(1-\eta)^{-1/2}\sqrt{2(1+L_\eta)u_\eta}
&\le\frac{36\sqrt2}{5}\sqrt{u\Gamma}.
\end{align}
Adding the two estimates and using $48\sqrt2/5<14$ proves the constant in
\eqref{eq:combined-cmi-bound-main}.

\section{A cluster chain with a persistent \Aone{} defect}
\label{app:spurious-A1}

\subsection{Embedding of the chain in the \Aone{} partition}

Let
\[
    \Gamma=\{-2N,-2N+1,\ldots,2N\},
    \qquad N\geq 2,
\]
be a chain of \(4N+1\) qubits.  Embed \(\Gamma\) along one radial
\(B\)--\(D\) interface and assign its sites by
\begin{equation}
\begin{aligned}
    \Gamma\cap C&=\{-2N\},
    &\qquad \Gamma\cap B&=\{-2N+2,-2N+4,\ldots,2N-2\},\\
    \Gamma\cap E&=\{2N\},
    &\qquad \Gamma\cap D&=\{-2N+1,-2N+3,\ldots,2N-1\}.
\end{aligned}
    \label{eq:spurious-chain-C}
\end{equation}
Thus the inner endpoint lies in \(C\), the outer endpoint lies in
\(E\), the internal even sites lie in \(B\), and the odd sites lie in
\(D\).  Adjacent chain sites can be placed on opposite sides of the
\(B\)--\(D\) interface at uniformly bounded geometric distance.

A schematic version of the required placement is shown in
\cref{fig:spurious-A1-radial-chain}.  The omitted bulk sites may be put in
a product state; they make no contribution to the \Aone{} entropy
combination.

\begin{figure}[t]
\centering
\begin{tikzpicture}[
    scale=0.93,
    line cap=round,
    line join=round
]
    \def\R{4.25}
    \def\r{1.72}

    \draw[red,line width=1.25pt] (0,0) circle (\R);
    \draw[red,line width=1.25pt] (0,0) circle (\r);
    \draw[red,line width=1.25pt] (0,\r)--(0,\R);
    \draw[red,line width=1.25pt] (0,-\r)--(0,-\R);

    \node[red,font=\Large] at (-2.7,0.45) {$B$};
    \node[red,font=\Large] at (0,0) {$C$};
    \node[red,font=\Large] at (2.7,0.45) {$D$};
    \node[red,font=\large] at (1.05,-4.75) {$E$};

    \coordinate (q0)  at ( 0.00,-1.35);
    \coordinate (q1)  at ( 0.55,-1.78);
    \coordinate (q2)  at (-0.55,-2.08);
    \coordinate (q3)  at ( 0.55,-2.38);
    \coordinate (q4)  at (-0.55,-2.68);
    \coordinate (q5)  at ( 0.55,-2.98);
    \coordinate (q6)  at (-0.55,-3.28);
    \coordinate (q7)  at ( 0.55,-3.58);
    \coordinate (q8)  at (-0.55,-3.88);
    \coordinate (q9)  at ( 0.55,-4.0);
    \coordinate (q10) at ( 0.00,-4.65);

    \draw[black,line width=1.55pt]
       (q0)--(q1)--(q2)--(q3)--(q4)--(q5)--(q6)--(q7)--(q8)--(q9)--(q10);

    \foreach \j in {0,...,10}{
        \fill[black] (q\j) circle (3.2pt);
    }

    \node[anchor=east] at ($(q0)+(-0.12,0.04)$) {$-2N$};
    \node[anchor=west] at ($(q1)+(0.12,0.02)$) {$-2N+1$};
    \node[anchor=east] at ($(q2)+(-0.12,0)$) {$\cdots$};
    \node[anchor=west] at ($(q3)+(0.12,0)$) {$-1$};
    \node[anchor=east] at ($(q4)+(-0.12,0)$) {$0$};
    \node[anchor=west] at ($(q5)+(0.12,0)$) {$1$};
    \node[anchor=east] at ($(q6)+(-0.12,0)$) {$2$};
    \node[anchor=west] at ($(q7)+(0.12,0)$) {$3$};
    \node[anchor=east] at ($(q8)+(-0.12,0)$) {$\cdots$};
    \node[anchor=west] at ($(q9)+(0.12,0)$) {$2N-1$};
    \node[anchor=east] at ($(q10)+(-0.12,0)$) {$2N$};

\end{tikzpicture}
\caption{
The cluster chain is placed along a radial \(B\)--\(D\) interface.
Its inner endpoint \(-2N\) lies in \(C\), its outer endpoint \(2N\)
lies in \(E\), its internal even sites lie in \(B\), and its odd
sites lie in \(D\).  All sites carry the usual cluster-chain terms,
with the standard endpoint truncations.
}
\label{fig:spurious-A1-radial-chain}
\end{figure}

\subsection{Cluster Hamiltonian}

On the chain \(\Gamma\), consider
\begin{equation}
    H_\Gamma=-\sum_{i=-2N}^{2N}h_i.
    \label{eq:spurious-cluster-H}
\end{equation}
For every \(-2N+1\leq i\leq 2N-1\), let
\begin{equation}
    h_i=Z_{i-1}X_iZ_{i+1}.
    \label{eq:spurious-cluster-bulk}
\end{equation}
At the endpoints,
\begin{equation}
    h_{-2N}=X_{-2N}Z_{-2N+1},
    \qquad
    h_{2N}=Z_{2N-1}X_{2N}.
    \label{eq:spurious-cluster-endpoints}
\end{equation}
No bulk term is modified.  Equivalently, with
\(U_\Gamma:=\prod_{i=-2N}^{2N-1}\mathrm{CZ}_{i,i+1}\), one has
\(h_i=U_\Gamma X_iU_\Gamma^\dagger\) for every site, so
\(\lvert\psi_N\rangle=U_\Gamma\lvert+\rangle^{\otimes(4N+1)}\).
The operators \(h_i\) commute and obey \(h_i^2=\id\).  Their common
\(+1\) eigenspace is one-dimensional, so the ground-state density
operator is
\begin{equation}
    \rho_\Gamma
    =
    \lvert\psi_N\rangle\langle\psi_N\rvert
    =
    \frac{1}{2^{4N+1}}
    \prod_{i=-2N}^{2N}(\id+h_i).
    \label{eq:spurious-cluster-ground-state}
\end{equation}
To embed the example into a two-dimensional lattice, one may take
\begin{equation}
    H_{\mathrm{emb}}
    =
    H_\Gamma-\sum_{x\notin\Gamma}X_x,
    \qquad
    \lvert\Psi_N\rangle
    =
    \lvert\psi_N\rangle_\Gamma
    \otimes
    \lvert+\rangle_{\Gamma^c}.
    \label{eq:spurious-embedded-H}
\end{equation}
The off-chain product factors will be suppressed below.

We use the notation
\begin{equation}
    \str{n}{m}
    :=
    X_nX_{n+2}\cdots X_{m-2}X_m,
    \qquad
    n\equiv m\pmod 2.
    \label{eq:spurious-string-notation}
\end{equation}

\subsection{Reduced state after tracing out \(D\)}

Because \(D\cap\Gamma\) consists of all odd-numbered sites, tracing
out \(D\) retains only stabilizer products acting trivially on every odd
site.  No such product can contain an odd-site generator \(h_i\): its
\(X_i\) factor cannot be canceled by the neighboring generators, which
supply only \(Z_i\).  Among products of even-site generators, cancellation
of \(Z_i\) at each odd site requires either both adjacent generators or
neither.  Consequently, all even-site generators must be selected together,
or none of them.  The only surviving nonidentity stabilizer is therefore
\begin{equation}
    W:=\prod_{j=-N}^{N}h_{2j}
    =\str{-2N}{2N},
    \qquad W^2=\id,
    \qquad \Tr W=0.
    \label{eq:spurious-surviving-string}
\end{equation}
The reduced state on the \(2N+1\) even-site qubits in \(BCE\) is
\begin{equation}
    \rho_{BCE}
    =\Tr_D\rho_\Gamma
    =\frac{1}{2^{2N+1}}(\id+W).
    \label{eq:spurious-rho-BCE}
\end{equation}
This state has a single constraint: the product of all even-site
\(X\) eigenvalues is \(+1\). The string \(W\) acts nontrivially on both endpoints, in \(C\) and \(E\). Tracing out either endpoint therefore removes it, giving
\begin{equation}
    \rho_{BC}=\Tr_E\rho_{BCE}=\frac{\id_{BC}}{2^{2N}},
    \quad
    \rho_{BE}=\Tr_C\rho_{BCE}=\frac{\id_{BE}}{2^{2N}},
    \quad
    \rho_B=\Tr_{CE}\rho_{BCE}=\frac{\id_B}{2^{2N-1}}.
    \label{eq:spurious-rho-BC}
\end{equation}

\subsection{Entropy calculation}

The maximally mixed marginals immediately give
\begin{equation}
    S(BC)=2N\ln2,
    \qquad
    S(BE)=2N\ln2,
    \qquad
    S(B)=(2N-1)\ln2.
    \label{eq:spurious-SBC}
\end{equation}
Since \(W\) is a traceless Hermitian involution, its \(+1\) and \(-1\)
eigenspaces each have dimension \(2^{2N}\).  Thus
\eqref{eq:spurious-rho-BCE} has eigenvalue \(2^{-2N}\) with multiplicity
\(2^{2N}\), and all remaining eigenvalues are zero.  Hence
\begin{equation}
    S(BCE)=2N\ln2.
    \label{eq:spurious-SBCE}
\end{equation}

\begin{proposition}[Persistent spurious \Aone{} defect]
\label{prop:spurious-A1}
For the usual cluster-chain ground state embedded according to
\eqref{eq:spurious-chain-C},
\begin{equation}
    \delta_1(B,C,D)=I(C:E\mid B)=\ln2.
    \label{eq:spurious-exact-A1-defect}
\end{equation}
The defect is exactly one bit, independent of \(N\).
\end{proposition}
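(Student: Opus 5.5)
The plan is to evaluate both expressions in \eqref{eq:spurious-exact-A1-defect} directly from the entropies already computed, and to use the purification identity \eqref{eq:A1-purification-main} only as a consistency check.

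\emph{The CMI form.} The four entropies are recorded in \eqref{eq:spurious-SBC} and \eqref{eq:spurious-SBCE}, and the off-chain product factors contribute zero to every entropy. This gives
\[
I(C:E\mid B)=S(BC)+S(BE)-S(B)-S(BCE)=2N\ln2+2N\ln2-(2N-1)\ln2-2N\ln2=\ln2 .
\]

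\emph{The \Aone{} form.} The purification $\ket{\Psi_N}$ of $BCD$ has complement $E$ (plus product sites), and purity gives $S(CD)=S(BE)$ and $S(D)=S(BCE)$. Hence $\delta_1(B,C,D)=S(BC)+S(CD)-S(B)-S(D)$ coincides term by term with the CMI expression above. This is exactly \eqref{eq:A1-purification-main}, so no independent evaluation of the $D$-marginals is needed.

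\emph{The one real step.} Everything rests on the stabilizer claim that $\Tr_D\rho_\Gamma$ retains only the string $W$. I would double-check this by writing $\rho_\Gamma$ as the uniform sum over the stabilizer group and keeping only elements acting as identity on the odd sites. A product $\prod_{i\in S}h_i$ restricted to odd site $k$ carries $X_k^{[k\in S]}Z_k^{[k-1\in S]+[k+1\in S]}$. Identity at $k$ therefore forces $k\notin S$ and $[k-1\in S]=[k+1\in S]$. Chaining these parity constraints from $-2N$ to $2N$ gives $S=\emptyset$ or $S=$ all even sites.

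Tracing further over $C$ or $E$ kills $W$, because $W$ acts as $X$ on both endpoints. This yields the maximally mixed marginals \eqref{eq:spurious-rho-BC}.

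The answer is independent of $N$ because all $N$-dependence cancels in the entropy combination, leaving exactly one bit.
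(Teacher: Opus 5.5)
Your proposal is correct and follows the paper's proof: purity of the global state gives $S(CD)=S(BE)$ and $S(D)=S(BCE)$, and substituting the stabilizer-computed entropies $2N\ln2+2N\ln2-(2N-1)\ln2-2N\ln2$ yields $\ln2$ independently of $N$. Your explicit parity check, which shows that only $W$ survives the trace over the odd sites, re-derives the reduced-state computation the paper carries out just before the proposition.
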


\begin{proof}
The global state on \(BCDE\) is pure, so
\(S(CD)=S(BE)\) and \(S(D)=S(BCE)\).  Therefore
\begin{align}
    \delta_1(B,C,D)
    &=S(BC)+S(BE)-S(B)-S(BCE)
    \nonumber\\
    &=\ln2.
    \label{eq:spurious-A1-defect-intermediate}
\end{align}
\end{proof}

The endpoint separation can grow proportionally to \(N\), while the
conditional mutual information remains \(\ln2\).  The surviving even-site
parity constraint explains this persistence: conditional on the internal
\(X\) eigenvalues in \(B\), the two endpoint eigenvalues share one bit of
correlation.  Nevertheless, \(U_\Gamma\) can be arranged in two layers of
nearest-neighbor controlled-\(Z\) gates, and \(H_\Gamma\) is unitarily
equivalent to \(-\sum_i X_i\), with a unique ground state and gap \(2\).
Thus the nonzero defect in this geometry already occurs for the ordinary
cluster chain, without any modified local term.

\subsection{The modified chain with a spurious modular commutator}
\label{app:modified-cluster-A1}

The ordinary stabilizer cluster state above has $J=0$; its \Aone{} defect
already demonstrates why finite-depth preparation need not preserve
admissibility.  To connect directly to the nonzero spurious $J$ of
Ref.~\cite{Gass_2024}, keep the same chain and \Aone{} partition, but replace
only the two central Hamiltonian terms by
\begin{equation}
  h_0=\frac{Z_{-1}X_0+Z_0Z_1+Z_{-1}Y_0Z_1}{\sqrt3},
  \qquad
  h_1=Z_{-1}X_0X_1Z_2.
\end{equation}
The remaining terms are unchanged.  Tracing out the odd sites gives the
reduced state calculated in Ref.~\cite{Gass_2024}, here labeled by $BCE$:
\begin{equation}
  \begin{gathered}
    \rho_{BCE}
    =2^{-(2N+1)}\left[\id+\frac{P_1+P_2+P_3}{\sqrt3}\right],\\
    P_1=\str{-2N}{0},\qquad
    P_2=Z_0\str{2}{2N},\qquad
    P_3=\str{-2N}{-2}Y_0\str{2}{2N}.
  \end{gathered}
  \label{eq:modified-cluster-rho-A1}
\end{equation}
The three Pauli strings anticommute pairwise, so their normalized sum
squares to $\id$.  Thus $S(BCE)=2N\log2$.  Tracing out $E$ retains only
$P_1$, tracing out $C$ retains only $P_2$, and tracing out both endpoints
leaves $B$ maximally mixed.  Writing
$q=(1+1/\sqrt3)/2$ and
$h_2(q)=-q\log q-(1-q)\log(1-q)$, we obtain
\begin{equation}
  S(BC)=S(BE)=(2N-1)\log2+h_2(q),
  \qquad S(B)=(2N-1)\log2.
\end{equation}
Purity therefore gives
\begin{equation}
  \delta_1(B,C,D)=I(C:E\mid B)
  =2h_2\!\left(\frac{1+1/\sqrt3}{2}\right)-\log2
  \approx0.338266>0,
  \label{eq:modified-cluster-A1-defect}
\end{equation}
independently of the chain length.  For the response partition of
Ref.~\cite{Gass_2024} (distinct from this \Aone{} partition), the same state
has the nonzero value
\begin{equation}
  J_{\mathrm{spur}}
  =\frac{1}{2\sqrt3}
  \left[\log\frac{1+1/\sqrt3}{1-1/\sqrt3}\right]^2.
\end{equation}
Hence the modified chain exhibits both a spurious modular commutator and a
persistent \Aone{} violation, and lies outside the uniformly admissible
state class.

\section{Bulk \Aone{} does not imply boundary \Aone{} after truncation}
\label{app:boundary-A1-counterexample}

We give a finite-range example in which the full trajectory is a stationary
product state, but spatial truncation creates a nondecaying \Aone{} defect
at the cut.  This separates bulk admissibility of a given trajectory from
the additional wall-crossing requirement in
\cref{def:A1-preserving-QLU-phase}.

\paragraph{Cancellation in the full generator.}
On the square lattice with the $\ell^1$ metric, take boxes
$\Lambda_N=[-N,N]^2\cap\mathbb Z^2$ and the truncation region
$\Omega_N=\Lambda_N\cap\{x\ge1\}$.  Place data qubits at
\begin{equation}
  v_j=\begin{cases}(1,j),&j\text{ even},\\(2,j),&j\text{ odd},\end{cases}
  \qquad w_j=(0,j),
  \label{eq:boundary-A1-sites}
\end{equation}
where $w_j$ are control qubits outside $\Omega_N$.  Set $g=\pi/4$ and
specify the interaction decomposition by
\begin{equation}
  \Phi(\{v_j,v_{j+1}\})=gZ_{v_j}Z_{v_{j+1}},
  \qquad
  \Phi(\{v_j,v_{j+1},w_j\})=-gZ_{v_j}Z_{v_{j+1}}Z_{w_j}.
  \label{eq:boundary-A1-interaction}
\end{equation}
Only terms contained in $\Lambda_N$ are included.  They commute, have
diameter at most $3$, and have uniformly bounded strength and incidence.
All polynomial interaction norms are therefore finite uniformly in $N$.
Prepare every data qubit in $\ket+$ and every other qubit in $\ket0$,
denoting the resulting product state by $\ket{\Psi_{0,N}}$.  Then
\begin{equation}
  G_{\Lambda_N}=g\sum_j Z_{v_j}Z_{v_{j+1}}(\id-Z_{w_j}),
  \qquad
  G_{\Lambda_N}\ket{\Psi_{0,N}}=0.
\end{equation}
Consequently $U_{\Lambda_N}(s)\ket{\Psi_{0,N}}=\ket{\Psi_{0,N}}$ for every
$s\in[0,1]$, and every bulk \Aone{} violation is exactly zero.

\paragraph{Truncation creates a boundary cluster chain.}
Spatial truncation is applied to the specified interaction terms in
\eqref{eq:boundary-A1-interaction}: it deletes every three-body term but
retains every two-body term.  Hence
\begin{equation}
  G_{\Lambda_N,\Omega_N}=\frac\pi4\sum_j Z_{v_j}Z_{v_{j+1}}.
\end{equation}
Using the two-qubit identity
\begin{equation}
  \e^{-i\pi Z_aZ_b/4}
  =\e^{i\pi/4}\e^{-i\pi Z_a/4}\e^{-i\pi Z_b/4}\mathrm{CZ}_{ab},
\end{equation}
we see that $U_{\Lambda_N,\Omega_N}(1)\ket{\Psi_{0,N}}$ is the cluster state on the data qubits, up to on-site unitaries and a global phase.
All other sites remain in a product state.  The on-site unitaries do not change subsystem entropies.

\paragraph{A wall-crossing \Aone{} geometry.}
Fix even $m\ge2$ and $N>3m+2$.  With $o=(3/2,0)$, let
$\mathbb B_r(o)=\{z\in\mathbb Z^2:\norm{z-o}_1\le r\}$ and define
\begin{equation}
  B_m=(\mathbb B_{3m+1/2}(o)\setminus C_m)\cap\{x<3/2\},\quad
  C_m=\mathbb B_{m+1/2}(o),\quad
  D_m=(\mathbb B_{3m+1/2}(o)\setminus C_m)\cap\{x>3/2\},\quad
\end{equation}
See
\cref{fig:boundary-A1-truncation} for the partition of the chain and the cancellation mechanism .

\begin{figure}[t]
\centering
\begin{minipage}[c]{0.47\textwidth}
\centering
\begin{tikzpicture}[x=0.9cm,y=0.9cm,font=\small,
  data/.style={circle,draw=blue!65!black,fill=blue!15,inner sep=2.8pt},
  control/.style={rectangle,draw,fill=white,inner sep=2.5pt}]
  \node[font=\small\bfseries] at (1.5,4.6) {(a) Removing the cancellation};
  \fill[blue!4] (0.5,1.55) rectangle (3.8,4.1);
  \draw[densely dashed,gray] (0.5,1.55)--(0.5,4.1);
  \node at (2.5,3.85) {$\Omega_N$};
  \fill[red!10] (-0.5,2.1)--(1.1,2.1)--(2.4,3.1)--cycle;
  \draw[red!65!black,dashed] (1.1,2.1)--(-0.5,2.1)--(2.4,3.1);
  \draw[blue!65!black,very thick] (1.1,2.1)--(2.4,3.1);
  \node[control,label=left:{$w_j$}] at (-0.5,2.1) {$0$};
  \node[data,label=below:{$v_j$}] at (1.1,2.1) {};
  \node[data,label=right:{$v_{j+1}$}] at (2.4,3.1) {};
  \node[red!65!black] at (1.4,1.05) {$-gZ_{v_j}Z_{v_{j+1}}Z_{w_j}$};
  \node at (1.4,0.5) {$Z_{w_j}\ket0=\ket0$: the two terms cancel};
  \draw[-{Latex},thick] (1.4,0.05)--(1.4,-0.8);
  \node[anchor=west,font=\footnotesize] at (1.65,-0.4) {truncate to $\Omega_N$};
  \fill[blue!4] (0.5,-3.0) rectangle (3.8,-1.0);
  \draw[densely dashed,gray] (0.5,-3.0)--(0.5,-1.0);
  \node[control,draw=gray,text=gray,label=left:{$w_j$}] at (-0.5,-2.5) {$0$};
  \draw[blue!65!black,very thick] (1.1,-2.5)--(2.4,-1.5);
  \node[data,label=below:{$v_j$}] at (1.1,-2.5) {};
  \node[data,label=right:{$v_{j+1}$}] at (2.4,-1.5) {};
  \node[blue!65!black] at (1.4,-3.45) {$+gZ_{v_j}Z_{v_{j+1}}$ remains};
\end{tikzpicture}
\end{minipage}\hfill
\begin{minipage}[c]{0.51\textwidth}
\centering
\begin{tikzpicture}[x=0.38cm,y=0.38cm,font=\small]
  \node[font=\small\bfseries] at (0,9.6) {(b) Boundary chain and \Aone{} regions};
  \fill[blue!12] (0,6.5)--(-6.5,0)--(0,-6.5)--cycle;
  \fill[orange!18] (0,6.5)--(6.5,0)--(0,-6.5)--cycle;
  \fill[green!15] (0,2.5)--(-2.5,0)--(0,-2.5)--(2.5,0)--cycle;
  \draw[thick] (0,6.5)--(-6.5,0)--(0,-6.5)--(6.5,0)--cycle;
  \draw[thick] (0,2.5)--(-2.5,0)--(0,-2.5)--(2.5,0)--cycle;
  \draw[gray,dotted] (0,6.5)--(0,2.5) (0,-2.5)--(0,-6.5);
  \draw[gray,densely dashed] (-1,-7.8)--(-1,7.8);
  \node[anchor=east,font=\footnotesize] at (-1.2,7.9) {cut $x=\tfrac12$};
  \foreach \siteindex in {-7,...,6} {
    \pgfmathsetmacro{\sitex}{0.5-mod(abs(\siteindex+1),2)}
    \pgfmathsetmacro{\nextx}{-\sitex}
    \draw[black!65,thick] (\sitex,\siteindex)--(\nextx,{\siteindex+1});
  }
  \foreach \siteindex in {-7,...,7} {
    \pgfmathsetmacro{\sitex}{0.5-mod(abs(\siteindex+1),2)}
    \def\sitecolor{gray!45}
    \ifnum\siteindex>-7
      \ifnum\siteindex<7
        \ifodd\siteindex\relax
          \def\sitecolor{orange!75}
        \else
          \def\sitecolor{blue!55}
        \fi
      \fi
    \fi
    \ifnum\siteindex>-3
      \ifnum\siteindex<3
        \def\sitecolor{green!55!black}
      \fi
    \fi
    \filldraw[fill=\sitecolor,draw=black!70] (\sitex,\siteindex) circle (0.13);
  }
  \node at (-3.5,0) {$B_m$};
  \node at (3.5,0) {$D_m$};
  \node at (1.45,0) {$C_m$};
  \node at (4.8,5.5) {$E_m$};
  \node at (0,-8.5) {$m=2,\quad \delta_1=2\log2$};
\end{tikzpicture}
\end{minipage}
\caption{Truncation destroys cancellation and creates a boundary \Aone{}
defect.  (a) The blue bond is the positive two-body term; the red triangle
indicates the negative three-body term involving the exterior control
$w_j$.  They cancel on the initial state with $w_j$ in $\ket0$.
Truncation deletes the three-body term, leaving the blue bond.
(b) The retained bonds generate a cluster chain, up to on-site unitaries.
For $m=2$, the inner and outer diamonds are $C_m$ and $B_mC_mD_m$; the annulus
is split into $B_m$ (blue) and $D_m$ (orange).   Each arm alternates between $D_m$ and $B_m$. Product-state sites are omitted.}
\label{fig:boundary-A1-truncation}
\end{figure}

We compute the entropies by counting stabilizers.  The product-state
qubits contribute no entropy, and the on-site unitaries may be omitted.
For a subsystem containing $n_T$ data qubits and $k_T$ independent
stabilizers supported entirely within it,
\begin{equation}
  S(T)=(n_T-k_T)\log2
\end{equation}  The stabilizer generators of the cluster-chain are
\begin{equation}
  \mathsf K_j=Z_{v_{j-1}}X_{v_j}Z_{v_{j+1}}.
\end{equation}
A product supported in $T$ can use only generators centered in $T$,
since the $X$ at a generator's center cannot be canceled by other
generators.  For each of the four regions below, no generator centered
on an annular arm can occur: the outermost such generator in a product
would leave an uncanceled $Z$ on the next site outward, which lies
outside $T$. Thus $B_m$ and $D_m$, each containing $2m$ data qubits, have no nontrivial
supported stabilizers.  The region $B_mC_m$ contains $4m+1$ data qubits
and has exactly the $2m-1$ independent generators
$\mathsf K_{-m+1},\ldots,\mathsf K_{m-1}$ supported in it.
For $C_mD_m$, the two additional generators $\mathsf K_{-m}$ and
$\mathsf K_m$ are supported because the adjacent annular sites belong to
$D_m$.  It therefore has $2m+1$ independent stabilizers on $4m+1$ data
qubits.  There is no extra independent global stabilizer: for example,
\begin{equation}
  \prod_{k=1}^{m}\mathsf K_{m+2k-1}
  =Z_{v_m}\left(\prod_{k=1}^{m}X_{v_{m+2k-1}}\right)Z_{v_{3m}}
\end{equation}
still acts on $v_{3m}\in B_m$, outside $C_mD_m$; the lower arm likewise
leaves a $Z$ at the distinct site $v_{-3m}\in B_m$.
The counts give
\begin{equation}
  \begin{array}{c|cccc}
    T&B_m&D_m&B_mC_m&C_mD_m\\\hline
    n_T&2m&2m&4m+1&4m+1\\
    k_T&0&0&2m-1&2m+1\\
    S(T)/\log2&2m&2m&2m+2&2m
  \end{array}
\end{equation}
Thus the truncated endpoint satisfies
\begin{equation}
  \delta_1(B_m,C_m,D_m)=2\log2
  \label{eq:boundary-A1-persistent-defect}
\end{equation}
Since $|C_m|=\Theta(m^2)$, this cannot obey a bound
$\delta_1\le P(|C_m|)\e^{-c\ell}$ with a fixed polynomial $P$ and $c>0$.
The failure comes from destroying cancellations at the cut, not from
long interaction tails or a bulk defect along the full trajectory.
This example concerns bulk preservation on the chosen input family; it
does not assert that the full unitary preserves \Aone{} on every admissible
input.  It suffices to show that uniform bulk \Aone{} for a full trajectory,
even together with finite-range dynamics, does not imply the boundary
admissibility required of its spatial truncations.

\section{Hall conductance estimator deformation identities}
\label{app:Hall}

\begin{fact}[Basic facts]
\label{fact:basic}
Let $\ket\Psi$ be a pure state satisfying the exact on-site $U(1)$ symmetry
\eqref{eq:U1-main}, and let $K_X=-\log\rho_X$ be the modular Hamiltonian of
its reduced state on $X$.  All unions below are unions of disjoint regions,
and regional operators are implicitly extended by the identity outside their
support.  Below we collect four basic facts about the local charge operators
and the modular Hamiltonians of the symmetric state:
\begin{enumerate}
\item $Q_{AB}=Q_A+Q_B$.

\item $[K_X,Q_X]=0$.  Moreover,
$[K_A,Q_{AB}]=0$ and $[K_A,Q_{AB}^2]=0$.

\item
$\langle[K_{AB},Q_A^2]\rangle
=\langle[K_A,Q_{AB}^2]\rangle=0$.

\item For a tripartition $ABC$ of a subregion of the full system,
\begin{equation}
\langle[K_{AB},Q_{BC}^2]\rangle
=\langle[K_{AB},Q_{\overline{BC}}^2]\rangle,
\label{eq:charge-complement-fact}
\end{equation}
where $\overline{BC}$ is the complement of $BC$ in the full system.
\end{enumerate}
These facts are taken from \cite[Appendix~C]{strict-J-2024}.
\end{fact}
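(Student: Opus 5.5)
The four items are elementary consequences of sharp total charge together with the on-site structure of $Q$, and I would prove them in the listed order, each item feeding the next. Item~1 holds by definition: $Q_X=\sum_{v\in X}Q_v$ and $A\cap B=\varnothing$. For the rest I would use two standing observations. First, the $Q_v$ act on distinct tensor factors (or coincide), so all regional charges commute with one another. Second, any operator supported on $\overline X$ commutes with any operator supported on $X$.

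For item~2, I would first show $[\rho_X,Q_X]=0$. Since $Q\ket\Psi=q\ket\Psi$, we have $[Q,\ket\Psi\!\bra\Psi]=0$. Writing $Q=Q_X+Q_{\overline X}$ and taking $\Tr_{\overline X}$ gives
\[
[Q_X,\rho_X]=-\Tr_{\overline X}[Q_{\overline X},\ket\Psi\!\bra\Psi]=0,
\]
because the partial trace over $\overline X$ of a commutator with an operator on $\overline X$ vanishes by cyclicity of that partial trace. Hence $Q_X$ commutes with every spectral projector of $\rho_X$. In particular it commutes with the support projector and with $K_X=-\log\rho_X$ defined on the support (extended by any convention on the kernel that is a function of $\rho_X$). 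This gives $[K_X,Q_X]=0$. Then
\[
[K_A,Q_{AB}]=[K_A,Q_A]+[K_A,Q_B]=0,
\]
the second term vanishing by disjoint support, and $[K_A,Q_{AB}^2]=0$ follows because $K_A$ commutes with $Q_{AB}$.

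For item~3, I would write $\langle[K_{AB},Q_A^2]\rangle=\Tr\rho_{AB}[K_{AB},Q_A^2]=\Tr[\rho_{AB},K_{AB}]\,Q_A^2$ by cyclicity. This vanishes since $K_{AB}$ is a function of $\rho_{AB}$. The second expectation vanishes already at operator level by item~2.

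Item~4 is the only step requiring a small computation, and it is the one I would check most carefully. Write $Q_{BC}=Q-Q_{\overline{BC}}$, where the two terms commute. Then
\[
Q_{BC}^2=Q^2-2QQ_{\overline{BC}}+Q_{\overline{BC}}^2 .
\]
I would handle the two new terms separately.
\begin{itemize}
\item The $Q^2$ term: both $\bra\Psi K_{AB}Q^2\ket\Psi$ and $\bra\Psi Q^2K_{AB}\ket\Psi$ equal $q^2\langle K_{AB}\rangle$, so its commutator expectation vanishes. This uses only the eigenvector property, not any commutation of $Q$ with $K_{AB}$.
\item The cross term: using $[Q,Q_{\overline{BC}}]=0$, move $Q$ onto the ket or bra to obtain $\langle[K_{AB},QQ_{\overline{BC}}]\rangle=q\langle[K_{AB},Q_{\overline{BC}}]\rangle$. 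Next split $Q_{\overline{BC}}=Q_A+Q_R$, where $R=\overline{ABC}$ is disjoint from $AB$. The $Q_R$ part commutes with $K_{AB}$. The $Q_A$ part has vanishing expectation by the same cyclicity argument as in item~3, namely $\Tr[\rho_{AB},K_{AB}]Q_A=0$.
\end{itemize}
Hence only the $Q_{\overline{BC}}^2$ term survives, which is \eqref{eq:charge-complement-fact}.

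The one point needing care throughout is the nonfaithful case, where $\log\rho_X$ must be taken on the support. The argument is unaffected because $Q_X$ commutes with the full spectral resolution of $\rho_X$.
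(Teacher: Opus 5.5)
Your proof is correct. The paper gives no argument of its own for this fact; it imports the four items from Appendix~C of Ref.~\cite{strict-J-2024}. Your derivation is therefore a complete, self-contained proof along the standard lines:
\begin{itemize}
\item symmetry of each marginal, from $[Q,\ket\Psi\!\bra\Psi]=0$ and the partial trace;
\item modular stationarity $\Tr[\rho_{AB},K_{AB}]\,O=0$ for item~3;
\item the substitution $Q_{BC}=Q-Q_{\overline{BC}}$ on the charge eigenvector for item~4.
\end{itemize}

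One small streamlining is available for item~4. Item~2 together with disjoint support already gives $[K_{AB},Q]=[K_{AB},Q_{AB}]+[K_{AB},Q_{\overline{AB}}]=0$ at operator level. So the $Q^2$ term and the factor $Q$ in the cross term can be commuted through $K_{AB}$ directly. The only thing left to check is $q\langle[K_{AB},Q_A]\rangle=0$, which follows by stationarity.
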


\subsection{Deformation invariance of the Hall conductance estimator}

\begin{lemma}[On-site unitary covariance of IMF]
\label{lem:onsite-covariance}
Let $U=\prod_vU_v$ be any on-site unitary.  Then
\begin{equation}
U\cI_X(t)\ket\phi
=
\cI_X(t)U\ket\phi.
\label{eq:onsite-covariance}
\end{equation}
Consequently, $U$ can be moved freely through any sequence of IMF maps.  If
$U\ket\Psi$ differs from $\ket\Psi$ only by a phase, every state obtained
from $\ket\Psi$ by an IMF sequence has the same symmetry.
\end{lemma}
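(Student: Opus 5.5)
The plan is to exploit the fact that a product of on-site unitaries acts locally: it factorizes as $U=U_XU_{\bar X}$ with $U_X:=\prod_{v\in X}U_v$ and $U_{\bar X}:=\prod_{v\notin X}U_v$. Set $\ket{\phi'}:=U\ket\phi$. Its reduced state on $X$ is
\[
\rho_X(\phi')=\Tr_{\bar X}\bigl(U_XU_{\bar X}\ket\phi\!\bra\phi U_{\bar X}^\dagger U_X^\dagger\bigr)=U_X\rho_X(\phi)U_X^\dagger,
\]
because $U_{\bar X}$ cancels under the partial trace over $\bar X$ by cyclicity. The functional calculus then gives
\[
\rho_X(\phi')^{it}=U_X\rho_X(\phi)^{it}U_X^\dagger
\]
on the support. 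Here $U_X$ maps $\supp\rho_X(\phi)$ onto $\supp\rho_X(\phi')$, so the extension by the identity on the kernel is also covariant.

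Next I would compute directly:
\[
\cI_X(t)U\ket\phi=U_X\rho_X(\phi)^{it}U_X^\dagger U_XU_{\bar X}\ket\phi=U_X\rho_X(\phi)^{it}U_{\bar X}\ket\phi.
\]
Since $U_{\bar X}$ commutes with any operator supported on $X$, this equals $U_XU_{\bar X}\rho_X(\phi)^{it}\ket\phi=U\cI_X(t)\ket\phi$. That establishes \eqref{eq:onsite-covariance}.

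For a sequence of IMF maps, I would proceed by induction on the number of factors, composed from right to left. Each step applies the single-map identity to the current input state. This is legitimate because the identity holds for an arbitrary normalized input $\ket\phi$, and every IMF map preserves normalization. Hence
\[
U\,\cI_{X_n}(t_n)\cdots\cI_{X_1}(t_1)\ket\phi=\cI_{X_n}(t_n)\cdots\cI_{X_1}(t_1)\,U\ket\phi .
\]

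For the symmetry consequence, suppose $U\ket\Psi=\e^{i\theta}\ket\Psi$. Moving $U$ to the right through the sequence gives $U$ times the IMF sequence applied to $\ket\Psi$, which equals the IMF sequence applied to $\e^{i\theta}\ket\Psi$. Each $\cI_X(t)$ commutes with a global phase: the reduced density matrix is phase-independent, and the map is otherwise linear in its input. So this reduces to $\e^{i\theta}$ times the evolved state, which means the evolved state carries the same symmetry with the same phase.

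Applying this to $U=\e^{i\alpha Q}$ for all $\alpha$ yields total-charge conservation along the orbits. For the Hall argument one also needs charge rotations restricted to a region, $\e^{iyQ_{BC}}$, and those are still on-site, so the lemma applies to them as well.

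The only delicate point is the nonfaithful case. There I need to check that the kernel convention is compatible with conjugation by $U_X$. I expect this to be immediate, because $U_X$ intertwines the two support projections.
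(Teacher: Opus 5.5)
Your proposal is correct and follows essentially the same route as the paper's proof: you factor $U=U_XU_{\bar X}$, note that the $X$-marginal transforms by conjugation with $U_X$, pull the imaginary power through, and iterate over the sequence. Your explicit treatment of the global phase and of the support/kernel convention in the nonfaithful case fills in details that the paper leaves implicit.
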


\begin{proof}
Write $U=U_XU_{\bar X}$.  The reduced state of $U\ket\phi$ on $X$ is
$U_X\rho_X(\phi)U_X^\dagger$.  Therefore
\begin{align*}
\cI_X(t)U\ket\phi
&=
(U_X\rho_XU_X^\dagger)^{i t}U_XU_{\bar X}\ket\phi\\
&=
U_XU_{\bar X}\rho_X^{i t}\ket\phi
=
U\cI_X(t)\ket\phi.
\end{align*}
Iteration proves the sequence statement.
\end{proof}

By \cref{fact:basic}, expanding $Q_{BC}^2=(Q_B+Q_C)^2$ in
\eqref{eq:intro-Sigma} shows that the $Q_B^2$ and $Q_C^2$ terms have zero
commutator expectation, and hence gives \eqref{eq:Sigma-cross-main}.

Now use
\[
K_{AB}=K_{ABC}-K_{BC}+K_B+\Dlt_{A:B:C}.
\]
The expectation of the commutator of $Q_BQ_C$ with $K_{ABC}$ or $K_{BC}$
vanishes by modular stationarity, while $[K_B,Q_BQ_C]=0$ by
\cref{fact:basic}.  This proves \eqref{eq:Hall-Markov-main}.  Applying
\eqref{eq:weighted-comm-main} and \cref{thm:main-cmi} proves
\eqref{eq:Hall-small-main}.

\subsection{Representative elementary moves}

\begin{figure}[htbp]
\centering
\begin{tikzpicture}[line width=1pt,line cap=round,line join=round]
  \definecolor{sectorA}{RGB}{255,255,225}
  \definecolor{sectorB}{RGB}{255,255,178}
  \definecolor{sectorC}{RGB}{255,255,128}
  \coordinate (leftdisk) at (-3.75,-0.26);
  \coordinate (rightdisk) at (3.75,0.26);

  \foreach \diskcenter/\patchangle/\patchlabel/\patchcolor in {
    leftdisk/210/x/sectorA,
    leftdisk/330/y/sectorB,
    leftdisk/90/z/sectorC,
    rightdisk/150/z'/sectorB,
    rightdisk/30/y'/sectorB,
    rightdisk/270/x'/sectorB} {
    \filldraw[fill=\patchcolor]
      ($(\diskcenter)+(\patchangle:2.75)$) circle (0.67);
    \node[font=\fontsize{22}{24}\selectfont]
      at ($(\diskcenter)+(\patchangle:3.03)$) {$\patchlabel$};
  }

  \foreach \diskcenter in {leftdisk,rightdisk} {
    \begin{scope}[shift={(\diskcenter)}]
      \fill[sectorA] (0,0) -- (150:2.65)
        arc[start angle=150,end angle=270,radius=2.65] -- cycle;
      \fill[sectorB] (0,0) -- (270:2.65)
        arc[start angle=270,end angle=390,radius=2.65] -- cycle;
      \fill[sectorC] (0,0) -- (30:2.65)
        arc[start angle=30,end angle=150,radius=2.65] -- cycle;
      \draw (0,0) circle (2.65);
      \draw (150:2.65) -- (0,0) -- (30:2.65);
      \draw (0,0) -- (270:2.65);
      \node[font=\fontsize{32}{34}\selectfont] at (210:1.5) {$A$};
      \node[font=\fontsize{32}{34}\selectfont] at (330:1.5) {$B$};
      \node[font=\fontsize{32}{34}\selectfont] at (90:1.35) {$C$};
    \end{scope}
  }
\end{tikzpicture}
\caption{Three-sector disk partitions with boundary patches $x,y,z$
(left) and $x',y',z'$ (right).\textsuperscript{\ref{fn:boundary-patches-attribution}}}
\label{fig:three-sector-boundary-patches}
\end{figure}

The boundary patches in \cref{fig:three-sector-boundary-patches}\footnote{\label{fn:boundary-patches-attribution}The figure is adapted from \emph{Strict Area Law Entanglement versus Chirality}~\cite{strict-J-2024}.  The source figure image was supplied to a generative AI tool to recreate it in TikZ.} illustrate
the elementary moves considered below.  Unprimed patches meet a single
sector, while primed patches meet the outer endpoint of an interface between
two sectors.  For each move, let $X_m|Y_m|Z_m$ be the ordered tripartition
of the displayed CMI.  We write $\Dlt_{X|Y|Z}=\Dlt_{X:Y:Z}$ to display
this tripartition directly in the modular Markov defect.  The identities
have the uniform form
\begin{align}
\Sigma(\mathcal G_m')-\Sigma(\mathcal G_m)
&=-\frac{i}{2}\langle[\Dlt_{X_m|Y_m|Z_m},R_m]\rangle,
\notag\\
\abs{\Sigma(\mathcal G_m')-\Sigma(\mathcal G_m)}
&\le\Lambda_{X_m:Y_m:Z_m}\!\bigl(I(X_m:Z_m|Y_m)\bigr)
\norm{R_m\ket\Psi}_2.
\label{eq:Hall-generic-bound-app}
\end{align}
The six representative choices are
\begin{center}
\renewcommand{\arraystretch}{1.25}
\begin{tabularx}{\textwidth}{@{}l X X@{}}
\toprule
Move and CMI & Modular Markov defect & Charge polynomial $R_m$ \\
\midrule
$A\to Ax$, $I(x:B|A)$
& $\Dlt_{x|A|B}$
& $Q_{BC}^{2}$ \\
$B\to By$, $I(A:y|B)$
& $\Dlt_{A|B|y}$
& $Q_{BCy}^{2}-2Q_BQ_y$ \\
$C\to Cz$, $I(A:z|B)$
& $\Dlt_{A|B|z}$
& $-2Q_BQ_z$ \\
$A\to Ax'$, $I(x':C|AB)$
& $\Dlt_{x'|AB|C}$
& $-Q_{BC}^{2}$ \\
$B\to Bx'$, $I(x':C|AB)$
& $\Dlt_{x'|AB|C}$
& $2Q_{AB}Q_C-Q_{BC}^{2}$ \\
$C\to Cz'$, $I(z':B|A)$
& $\Dlt_{z'|A|B}$
& $2Q_AQ_{z'}$ \\
\bottomrule
\end{tabularx}
\end{center}
Cyclic permutations satisfy the same type of estimate.

\paragraph{$A\to Ax$.}
The defect relation
\[
K_{ABx}=K_{Ax}+K_{AB}-K_A-\Dlt_{x|A|B}
\]
shows that the $K_{Ax}$ and $K_A$ terms are disjoint from $Q_{BC}^2$.  Therefore
\[
\Sigma(Ax,B,C)-\Sigma(A,B,C)
=-\frac{i}{2}\langle[\Dlt_{x|A|B},Q_{BC}^2]\rangle.
\]

\paragraph{$B\to By$.}
Use
\[
K_{ABy}=K_{AB}+K_{By}-K_B-\Dlt_{A|B|y}.
\]
The $K_{By}$ and $K_B$ terms commute with $Q_{BCy}^2$.  Expanding the charge leaves
$i\langle[K_{AB},Q_BQ_y]\rangle$, which is the Hall conductance estimator for $A|B|y$.  Applying \eqref{eq:Hall-Markov-main} to that tripartition combines the terms into
\[
-\frac{i}{2}\langle[\Dlt_{A|B|y},Q_{BCy}^2-2Q_BQ_y]\rangle.
\]

\paragraph{$C\to Cz$.}
Expanding $Q_{BCz}^2-Q_{BC}^2$ leaves
$i\langle[K_{AB},Q_BQ_z]\rangle$.  Applying \eqref{eq:Hall-Markov-main} to $A|B|z$ gives
\[
i\langle[\Dlt_{A|B|z},Q_BQ_z]\rangle
=-\frac{i}{2}\langle[\Dlt_{A|B|z},-2Q_BQ_z]\rangle.
\]

\paragraph{$A\to Ax'$.}
The defect relation gives
\[
K_{ABx'}=K_{ABCx'}-K_{ABC}+K_{AB}+\Dlt_{x'|AB|C}.
\]
The nested-support commutator expectations involving $K_{ABCx'}$ and $K_{ABC}$ vanish, leaving
\[
\frac{i}{2}\langle[\Dlt_{x'|AB|C},Q_{BC}^2]\rangle.
\]

\paragraph{$B\to Bx'$.}
Compare assigning $x'$ to $B$ with assigning it to $A$.  Exact total-charge identities convert their difference to minus the Hall conductance estimator for $x'|AB|C$.  Combining it with the preceding move gives
\[
-\frac{i}{2}\langle[\Dlt_{x'|AB|C},2Q_{AB}Q_C-Q_{BC}^2]\rangle.
\]

\paragraph{$C\to Cz'$.}
Expanding the charge gives $i\langle[K_{AB},Q_BQ_{z'}]\rangle$.  By
\cref{fact:basic}, $[K_{AB},Q_{AB}]=0$, so this equals
$-i\langle[K_{AB},Q_AQ_{z'}]\rangle$.  With this choice, substituting
the defect for $z'|A|B$ makes all other terms vanish in expectation by
modular stationarity and charge symmetry, leaving
\[
-i\langle[\Dlt_{z'|A|B},Q_AQ_{z'}]\rangle.
\]

In each case, the weighted commutator inequality \eqref{eq:weighted-comm-main} and \cref{thm:main-cmi} turn the exact identity into the inequality in \eqref{eq:Hall-generic-bound-app}.  Summing over a deformation proves \eqref{eq:Sigma-global-main}.  If $\norm{Q_v}_\infty\le q_0$, then $\norm{Q_X}_\infty\le q_0|X|$, so all $\norm{R_m}_\infty$ are polynomial in the region sizes.  Approximate \Aone{} and polynomially many moves then prove \eqref{eq:Sigma-exp-main}.

\section{Singular-value bounds for modular continuity}
\label{app:phase-stability}

\subsection{Auxiliary singular-value estimates}
\label{app:singular-value-estimates}

For $L\ge2$, define
\begin{align}
  k_L(p)&:=\min\{-\log p,L\},
  &&0\le p\le1,
  \label{eq:truncated-log-active-app}\\
  f_L(s)&:=s\min\{-2\log s,L\},
  &&0\le s\le1,
  \qquad f_L(0):=0,
  \label{eq:singular-function-active-app}\\
  F_L(T)&:=k_L(TT^\dagger)T.
  \label{eq:matrix-function-active-app}
\end{align}

\begin{lemma}[Lipschitz truncated singular-value function]
\label{lem:fL-Lipschitz-app}
Let $\widetilde f_L$ be the odd extension of $f_L$ to $[-1,1]$.  Then
\begin{equation}
  \abs{\widetilde f_L(x)-\widetilde f_L(y)}
  \le L\abs{x-y},
  \qquad x,y\in[-1,1].
  \label{eq:fL-Lipschitz-active-app}
\end{equation}
\end{lemma}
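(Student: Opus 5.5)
Since $\widetilde f_L$ is odd and continuous on $[-1,1]$, the plan is to prove the Lipschitz bound first on $[0,1]$ and then on pairs of opposite sign.

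\emph{The half-interval.} Put $s_L:=\e^{-L/2}$, which is where $-2\log s=L$. The function $f_L$ is then given piecewise:
\begin{itemize}
\item $f_L(s)=Ls$ for $0\le s\le s_L$, so its slope is $L$;
\item $f_L(s)=-2s\log s$ for $s_L\le s\le1$, so $f_L'(s)=-2\log s-2$.
\end{itemize}
On $[s_L,1]$ the map $s\mapsto-2\log s-2$ is decreasing, from $L-2$ at $s=s_L$ down to $-2$ at $s=1$. Hence $\abs{f_L'}\le\max\{L-2,2\}\le L$, using $L\ge2$. The function $f_L$ is continuous at $s_L$, because both pieces equal $Ls_L$ there, and it is continuous at $0$. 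It is therefore piecewise $C^1$ with $\abs{f_L'}\le L$ almost everywhere, and integrating the derivative along $[y,x]$ gives $\abs{f_L(x)-f_L(y)}\le L\abs{x-y}$ for $x,y\in[0,1]$. By oddness the same bound holds on $[-1,0]$.

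\emph{Opposite signs.} Take $x>0>y$ and write $y=-\abs y$. Then
\[
\abs{\widetilde f_L(x)-\widetilde f_L(y)}=f_L(x)+f_L(\abs y)\le L x+L\abs y=L\abs{x-y}.
\]
The inequality uses $f_L(s)\le Ls$, which holds because $f_L(s)=s\min\{-2\log s,L\}\le Ls$ and $f_L\ge0$ on $[0,1]$. The cases in which one argument is $0$ are already covered by the same-sign case.

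\emph{Main obstacle.} The only delicate point is the derivative estimate near $s=1$, where $f_L'$ becomes negative; there $\abs{f_L'}$ can reach $2$. This is exactly why the hypothesis $L\ge2$ is needed.
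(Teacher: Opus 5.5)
Your proof is correct and follows essentially the same route as the paper: you split at $s_L=\e^{-L/2}$, show the derivative on the logarithmic branch ranges from $L-2$ down to $-2$, and use $L\ge2$, exactly as the paper does. The only cosmetic difference is in the opposite-sign case, which you handle directly via $0\le f_L(s)\le Ls$, whereas the paper integrates the derivative bound of the continuous, piecewise-smooth odd extension across the junctions at $0$ and $\pm s_L$.
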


\begin{proof}
Set $s_L:=\e^{-L/2}$.  On $[0,1]$,
\begin{equation}
  f_L(s)=
  \begin{cases}
    Ls, & 0\le s\le s_L,\\
    -2s\log s, & s_L\le s\le1.
  \end{cases}
  \label{eq:fL-pieces-active-app}
\end{equation}
The two branches agree at $s_L$, and on their interiors
\begin{equation}
  f_L'(s)=
  \begin{cases}
    L, & 0<s<s_L,\\
    -2\log s-2, & s_L<s<1.
  \end{cases}
  \label{eq:fL-derivative-active-app}
\end{equation}
The second branch ranges from $L-2$ to $-2$, so
$\abs{f_L'(s)}\le L$ because $L\ge2$.  Hence $f_L$ is $L$-Lipschitz on
$[0,1]$.  The odd extension
\begin{equation}
  \widetilde f_L(s)
  :=\operatorname{sgn}(s)f_L(\abs{s}),
  \qquad
  \widetilde f_L(0):=0,
\end{equation}
is continuous and piecewise differentiable on $[-1,1]$, with derivative of
magnitude at most $L$ on every smooth piece.  Integrating this bound across
the junctions at $0$ and $\pm s_L$ proves
\eqref{eq:fL-Lipschitz-active-app}.
\end{proof}

\begin{lemma}[Hilbert--Schmidt continuity of singular-value transforms]\cite{ArakiYamagami1981,AnderssonCarlssonPerfekt2016}.
\label{lem:truncated-modular-vector-app}
Let $f:[0,1]\to\mathbb R$ satisfy $f(0)=0$, and suppose its odd extension
$\widetilde f$ to $[-1,1]$ is $L_f$-Lipschitz.  For a singular-value
decomposition $R=\sum_i r_i\ket{a_i}\bra{b_i}$, define
\begin{equation}
  f^\diamond(R):=\sum_i f(r_i)\ket{a_i}\bra{b_i}.
\end{equation}
Then, for arbitrary rectangular matrices $T,S$ with operator norms at most
one,
\begin{equation}
  \norm{f^\diamond(T)-f^\diamond(S)}_2
  \le L_f\norm{T-S}_2.
  \label{eq:general-singular-transform-bound-app}
\end{equation}
In particular,
\begin{equation}
  \norm{F_L(T)-F_L(S)}_2
  \le L\norm{T-S}_2.
  \label{eq:FL-Lipschitz-app}
\end{equation}
For $f:[0,1]\to\mathbb C$ satisfying the same assumptions, the bound becomes
\begin{equation}
  \norm{f^\diamond(T)-f^\diamond(S)}_2
  \le\sqrt2\,L_f\norm{T-S}_2.
  \label{eq:complex-singular-transform-bound-app}
\end{equation}
\end{lemma}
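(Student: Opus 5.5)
The plan is to reduce the rectangular, singular-value statement to the standard Hilbert--Schmidt Lipschitz estimate for the functional calculus of Hermitian matrices, using the Hermitian (Jordan--Wielandt) dilation. For a rectangular $R:\cH_2\to\cH_1$ define
\[
H_R:=\begin{pmatrix}0&R\\ R^\dagger&0\end{pmatrix}.
\]
If $R=\sum_i r_i\ket{a_i}\bra{b_i}$, then $H_R$ has eigenvectors $(\ket{a_i},\pm\ket{b_i})/\sqrt2$ with eigenvalues $\pm r_i$, and zero eigenvalues on the joint kernels. Since $\|R\|_\infty\le1$, the spectrum lies in $[-1,1]$. Because $\widetilde f$ is odd with $\widetilde f(0)=0$, the functional calculus gives
\[
\widetilde f(H_R)=\begin{pmatrix}0&f^\diamond(R)\\ f^\diamond(R)^\dagger&0\end{pmatrix}.
\]
This identity also shows that $f^\diamond(R)$ is independent of the choice of singular-value decomposition when singular values are degenerate.

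The core estimate is that for Hermitian $X,Y$ with spectra in $[-1,1]$ and a real $L_f$-Lipschitz $g$,
\[
\norm{g(X)-g(Y)}_2\le L_f\norm{X-Y}_2 .
\]
To prove it, write $X=\sum_i\lambda_i\ket{u_i}\bra{u_i}$ and $Y=\sum_j\mu_j\ket{v_j}\bra{v_j}$, and expand both sides in the mixed matrix elements $\bra{u_i}\cdot\ket{v_j}$:
\[
\norm{g(X)-g(Y)}_2^2=\sum_{i,j}\abs{g(\lambda_i)-g(\mu_j)}^2\abs{\braket{u_i}{v_j}}^2
\le L_f^2\sum_{i,j}\abs{\lambda_i-\mu_j}^2\abs{\braket{u_i}{v_j}}^2=L_f^2\norm{X-Y}_2^2 .
\]
Applying this with $X=H_T$, $Y=H_S$, $g=\widetilde f$, and using $\norm{H_T-H_S}_2^2=2\norm{T-S}_2^2$ together with $\norm{\widetilde f(H_T)-\widetilde f(H_S)}_2^2=2\norm{f^\diamond(T)-f^\diamond(S)}_2^2$, the factor $2$ cancels and yields \eqref{eq:general-singular-transform-bound-app}.

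For the specialization \eqref{eq:FL-Lipschitz-app}, note that
\[
F_L(T)=k_L(TT^\dagger)T=\sum_i k_L(r_i^2)\,r_i\ket{a_i}\bra{b_i}=f_L^\diamond(T),
\]
since $k_L(r^2)r=r\min\{-2\log r,L\}$. Then \cref{lem:fL-Lipschitz-app} supplies $L_f=L$; the operator-norm hypothesis holds because coefficient matrices of normalized vectors have $\norm{T}_\infty\le\norm{T}_2=1$.

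For complex-valued $f$, write $f=f_1+if_2$ with $f_1,f_2$ real. Each $f_k$ has an odd extension that is $L_f$-Lipschitz, because $\abs{\operatorname{Re}w}\le\abs w$ and $\abs{\operatorname{Im}w}\le\abs w$. Moreover $f^\diamond=f_1^\diamond+if_2^\diamond$ by linearity in the coefficients. The triangle inequality then gives the constant $2L_f$; the sharper $\sqrt2L_f$ in \eqref{eq:complex-singular-transform-bound-app} should come from redoing the matrix-element expansion directly, using $\abs{f(x)-f(y)}^2=\abs{f_1(x)-f_1(y)}^2+\abs{f_2(x)-f_2(y)}^2\le 2L_f^2\abs{x-y}^2$.

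The main obstacle I expect is precisely this last step. The functional calculus $\widetilde f(H)$ of a non-real function is no longer Hermitian, but the spectral expansion above never uses that $g$ is real: the matrix-element identity holds for any function evaluated on the eigenvalues. So I would rerun that computation with $g=\widetilde f$ complex-valued and verify that the pointwise bound $\abs{\widetilde f(x)-\widetilde f(y)}\le\sqrt2L_f\abs{x-y}$ holds on all of $[-1,1]$. This requires the bound for $x,y$ of opposite signs as well as same signs, which the odd-extension hypothesis on $\operatorname{Re}f$ and $\operatorname{Im}f$ provides.
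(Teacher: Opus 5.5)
Your argument for the real case and for the specialization to $F_L$ is correct and is essentially the paper's proof. It uses the same Hermitian dilation and the same divided-difference estimate; your matrix-element expansion is the paper's $P_i(\cdot)Q_j$ block decomposition written in eigenbases. The factor $2$ cancels for the same reason, namely that the lower-left block of $\widetilde f(H_R)$ is the adjoint of the upper-right one.

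The complex case, however, is not established by your plan. Your triangle-inequality argument gives only $2L_f$, and the proposed refinement does not improve it. If you feed the pointwise bound $\abs{\widetilde f(x)-\widetilde f(y)}\le\sqrt2L_f\abs{x-y}$ into the complex-valued spectral expansion on the dilation, you get $\norm{\widetilde f(H_T)-\widetilde f(H_S)}_2\le\sqrt2L_f\norm{H_T-H_S}_2=2L_f\norm{T-S}_2$. Getting down to $\sqrt2L_f$ from there would require the real-case identity $\norm{\widetilde f(H_T)-\widetilde f(H_S)}_2^2=2\norm{f^\diamond(T)-f^\diamond(S)}_2^2$, and that identity fails for complex $f$.

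The reason is that, for $R=U\Sigma V^\dagger$, the lower-left block of $\widetilde f(H_R)$ is $f^\diamond(R^\dagger)=Vf(\Sigma)U^\dagger=(\bar f^\diamond(R))^\dagger$, not $f^\diamond(R)^\dagger$. The two off-diagonal blocks of the difference therefore need not have equal Hilbert--Schmidt norms. Write $X$ and $Y$ for the differences of the real and imaginary parts. Then the upper block has squared norm $\norm{X}_2^2+\norm{Y}_2^2-2\operatorname{Im}\Tr(X^\dagger Y)$, while the lower block has $\norm{X}_2^2+\norm{Y}_2^2+2\operatorname{Im}\Tr(X^\dagger Y)$.

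So the $\sqrt2$ sits in the wrong place in your plan. The hypothesis already says that $\widetilde f$ is $L_f$-Lipschitz in modulus, so there is no pointwise loss at all. The complex spectral expansion gives $\norm{\widetilde f(H_T)-\widetilde f(H_S)}_2\le L_f\norm{H_T-H_S}_2=\sqrt2L_f\norm{T-S}_2$. The $\sqrt2$ is the dilation factor that is no longer compensated, because you may only use that the upper-right block $f^\diamond(T)-f^\diamond(S)$ has norm at most that of the whole matrix. This is exactly how the paper concludes.

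Alternatively, you can rescue your real/imaginary split. Sum the two real divided-difference estimates, using that the complex divided difference satisfies $\abs{\theta_{ij}}^2=\abs{\theta^{(1)}_{ij}}^2+\abs{\theta^{(2)}_{ij}}^2\le L_f^2$. This gives $\norm{X}_2^2+\norm{Y}_2^2\le L_f^2\norm{T-S}_2^2$, and hence $\norm{X+iY}_2\le\sqrt2L_f\norm{T-S}_2$.

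The constant matters downstream: it is what produces $\ell(t)=\sqrt{2(1+4t^2)}$ in the IMF Lipschitz lemma and in all later weighted estimates.
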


\begin{proof}
Zero-padding allows us to regard $T$ and $S$ as square matrices without
changing their Hilbert--Schmidt norms or singular-value transforms.  For a
square matrix $R$, introduce
\begin{equation}
  \mathfrak d(R):=
  \begin{pmatrix}
    0&R\\
    R^\dagger&0
  \end{pmatrix}.
\end{equation}
Write $R=\sum_i r_i\ket{a_i}\bra{b_i}$.  For every positive singular value
$r_i$, the vectors
\begin{equation}
  \ket{w_i^\pm}
  :=\frac{1}{\sqrt2}
  \begin{pmatrix}
    \ket{a_i}\\
    \pm\ket{b_i}
  \end{pmatrix}
\end{equation}
are eigenvectors of $\mathfrak d(R)$ with eigenvalues $\pm r_i$.
The kernel contributes nothing because $\widetilde f(0)=0$.  Therefore,
using $\widetilde f(-r_i)=-\widetilde f(r_i)=-f(r_i)$, spectral calculus
gives
\begin{align}
  \widetilde f\bigl(\mathfrak d(R)\bigr)
  &=\sum_i f(r_i)
  \left(
    \ket{w_i^+}\!\bra{w_i^+}
    -\ket{w_i^-}\!\bra{w_i^-}
  \right)
  \notag\\
  &=
  \begin{pmatrix}
    0&f^\diamond(R)\\
    f^\diamond(R^\dagger)&0
  \end{pmatrix}.
  \label{eq:dilation-singular-transform-app}
\end{align}

For Hermitian contractions $A,B$, write their spectral resolutions as
$A=\sum_i\alpha_iP_i$ and $B=\sum_j\beta_jQ_j$.  Define
\begin{equation}
  \theta_{ij}:=
  \begin{cases}
    \displaystyle
    \frac{\widetilde f(\alpha_i)-\widetilde f(\beta_j)}
         {\alpha_i-\beta_j},
      &\alpha_i\ne\beta_j,\\[0.8em]
    0,&\alpha_i=\beta_j.
  \end{cases}
\end{equation}
Then $\abs{\theta_{ij}}\le L_f$ and
\begin{equation}
  P_i\bigl[\widetilde f(A)-\widetilde f(B)\bigr]Q_j
  =\theta_{ij}P_i(A-B)Q_j.
\end{equation}
The blocks $P_iXQ_j$ are mutually orthogonal in Hilbert--Schmidt inner
product as $(i,j)$ varies.  Consequently,
\begin{align}
  \norm{\widetilde f(A)-\widetilde f(B)}_2^2
  &=\sum_{i,j}\abs{\theta_{ij}}^2
    \norm{P_i(A-B)Q_j}_2^2
  \notag\\
  &\le L_f^2\sum_{i,j}\norm{P_i(A-B)Q_j}_2^2
  =L_f^2\norm{A-B}_2^2.
\end{align}
Thus
\begin{equation}
  \norm{\widetilde f(A)-\widetilde f(B)}_2
  \le L_f\norm{A-B}_2.
  \label{eq:HS-functional-calculus-app}
\end{equation}

Apply \eqref{eq:HS-functional-calculus-app} to $\mathfrak d(T)$ and
$\mathfrak d(S)$.  Since $f$ is real-valued,
$f^\diamond(R^\dagger)=f^\diamond(R)^\dagger$, so
\begin{equation}
  \norm{
    \widetilde f(\mathfrak d(T))
    -\widetilde f(\mathfrak d(S))
  }_2^2
  =2\norm{f^\diamond(T)-f^\diamond(S)}_2^2.
\end{equation}
On the other hand,
\begin{equation}
  \norm{\mathfrak d(T)-\mathfrak d(S)}_2^2
  =2\norm{T-S}_2^2
\end{equation}
The factors of $2$ cancel, proving
\eqref{eq:general-singular-transform-bound-app}. For $f=f_L$, the singular-value decomposition of $T$ shows that
\begin{align}
  F_L(T)
  &=k_L(TT^\dagger)T
  \notag\\
  &=\sum_i s_i k_L(s_i^2)\ket{u_i}\bra{v_i}
  =f_L^\diamond(T),
\end{align}
and similarly for $S$.  By
\cref{lem:fL-Lipschitz-app}, the odd extension of $f_L$ is
$L$-Lipschitz, so \eqref{eq:general-singular-transform-bound-app} proves
\eqref{eq:FL-Lipschitz-app}.

For complex-valued $f$, the dilation identity
\eqref{eq:dilation-singular-transform-app} and the spectral-block estimate
\eqref{eq:HS-functional-calculus-app} remain valid.  The upper-right block
of the difference of the dilated transforms is
$f^\diamond(T)-f^\diamond(S)$, so
\begin{align*}
  \norm{f^\diamond(T)-f^\diamond(S)}_2
  &\le\norm{\widetilde f(\mathfrak d(T))
             -\widetilde f(\mathfrak d(S))}_2\\
  &\le L_f\norm{\mathfrak d(T)-\mathfrak d(S)}_2
  =\sqrt2\,L_f\norm{T-S}_2.
\end{align*}
Explicitly, for $R=U\Sigma V^\dagger$,
\[
  f^\diamond(R^\dagger)=Vf(\Sigma)U^\dagger,
  \qquad
  f^\diamond(R)^\dagger=V\overline{f(\Sigma)}U^\dagger.
\]
Writing $A:=f^\diamond(T)-f^\diamond(S)$ and
$B:=f^\diamond(T^\dagger)-f^\diamond(S^\dagger)$, we have
\begin{align*}
  \widetilde f(\mathfrak d(T))-\widetilde f(\mathfrak d(S))
  &=\begin{pmatrix}0&A\\B&0\end{pmatrix},\\
  \norm{A}_2^2+\norm{B}_2^2
  &\le L_f^2\norm{\mathfrak d(T)-\mathfrak d(S)}_2^2
  =2L_f^2\norm{T-S}_2^2.
\end{align*}
For real-valued $f$, $B=A^\dagger$, so
\[
  2\norm{A}_2^2\le2L_f^2\norm{T-S}_2^2.
\]
For complex-valued $f$, we instead use
\[
  \norm{A}_2^2\le\norm{A}_2^2+\norm{B}_2^2
  \le2L_f^2\norm{T-S}_2^2,
  \qquad
  \norm{A}_2\le\sqrt2\,L_f\norm{T-S}_2.
\]
This proves \eqref{eq:complex-singular-transform-bound-app}.
\end{proof}

\section{Poisson analysis on a strip}
\label{app:strip-Poisson}

This appendix records the technique used in the proof of
\cref{thm:weighted-IMF}. 

\begin{lemma}[Hadamard three-lines theorem]
\label{lem:three-lines-app}
Let $g$ be bounded and continuous on the closed strip
$0\le\operatorname{Re}z\le a$ and analytic on $\mathcal S_a$.  Define
\begin{equation}
  M_0:=\sup_{s\in\mathbb R}\abs{g(is)},
  \qquad
  M_a:=\sup_{s\in\mathbb R}\abs{g(a+is)}.
\end{equation}
Then, for $z=x+iy$ with $0\le x\le a$,
\begin{equation}
  \abs{g(x+iy)}
  \le M_0^{1-x/a}M_a^{x/a}.
  \label{eq:three-lines-app}
\end{equation}
\end{lemma}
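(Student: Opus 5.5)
The plan is to reduce to the classical Phragmén–Lindelöf principle through a normalization by an exponential factor. I will treat the case $M_0,M_a>0$ first and recover the degenerate case by a limiting argument. For fixed $\lambda\in\mathbb R$ to be chosen, consider
\[
  g_1(z):=g(z)\,M_0^{-(1-z/a)}M_a^{-z/a},
\]
where the powers are defined as $\exp$ of real logarithms times complex arguments. Thus $\abs{M_0^{-(1-z/a)}M_a^{-z/a}}=M_0^{-(1-x/a)}M_a^{-x/a}$ depends only on $x=\operatorname{Re}z$. The function $g_1$ is analytic in the open strip, continuous and bounded on the closed strip, and satisfies $\abs{g_1}\le1$ on both boundary lines. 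It therefore suffices to show $\abs{g_1}\le1$ throughout the strip.

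Since the strip is unbounded, the maximum principle does not apply directly. I will regularize with a Gaussian damping factor: for $\epsilon>0$ set
\[
  g_{1,\epsilon}(z):=g_1(z)\,\e^{\epsilon z^2}.
\]
Writing $z=x+iy$ gives $\abs{\e^{\epsilon z^2}}=\e^{\epsilon(x^2-y^2)}\le \e^{\epsilon a^2}\e^{-\epsilon y^2}$. Because $g_1$ is bounded, $g_{1,\epsilon}\to0$ uniformly as $\abs{y}\to\infty$. On the two boundary lines, $\abs{g_{1,\epsilon}}\le \e^{\epsilon a^2}$. Apply the maximum modulus principle on the rectangle $[0,a]\times[-Y,Y]$ with $Y$ large enough that $\abs{g_{1,\epsilon}}\le \e^{\epsilon a^2}$ on the horizontal sides. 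This yields $\abs{g_{1,\epsilon}}\le \e^{\epsilon a^2}$ on the rectangle, and hence on the whole strip. Letting $\epsilon\to0$ at fixed $z$ gives $\abs{g_1(z)}\le1$, which unwinds to \eqref{eq:three-lines-app}.

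For the degenerate case, if $M_0=0$ or $M_a=0$, I will replace $M_0,M_a$ by $M_0+\delta,M_a+\delta$, which remain valid boundary bounds, apply the argument above, and let $\delta\downarrow0$. The right-hand side converges to $M_0^{1-x/a}M_a^{x/a}$ with the convention $0^0=1$ at the endpoints $x=0,a$, where the claim holds trivially anyway. On the boundary lines themselves the inequality is just the definition of $M_0$ and $M_a$.

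The only genuine obstacle is the unboundedness of the strip. The Gaussian damping step handles it, and it is the point where the boundedness hypothesis on $g$ is essential: without it, functions like $\exp(\exp(\pi z/a\cdot i))$-type growth would defeat the argument. Everything else is routine bookkeeping of moduli of complex powers.
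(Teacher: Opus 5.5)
Your proof is correct and follows the standard argument: normalize by the zero-free factor $M_0^{1-z/a}M_a^{z/a}$, damp with $\e^{\epsilon z^{2}}$ so the maximum modulus principle can be applied on finite rectangles, then let $\epsilon\to0$. The paper gives no proof of its own and simply cites the Grafakos text, which uses this same argument. Your $\delta$-perturbation for $M_0=0$ or $M_a=0$ is a small genuine addition, since the textbook version assumes positive boundary bounds; the only blemishes are the unused parameter $\lambda$ and the closing example, which should be $\exp(-i\,\e^{i\pi z/a})$ (unit modulus on both boundary lines) rather than $\exp(\e^{i\pi z/a})$, which is already unbounded on $\operatorname{Re}z=0$, and neither affects the argument.
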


\begin{proof}
See Ref.~\cite[p.~39]{grafakos2014classical}.
\end{proof}

\begin{lemma}[Contraction of the interpolation function]
\label{lem:interpolation-contraction-app}
Let $\rho=\rho_{XYZ}$ be a faithful finite-dimensional state and define
\begin{equation}
  \mathcal W(z)
  :=\rho_{YZ}^{z}\rho_Y^{-z}\rho_{XY}^{z}\rho^{1/2-z},
  \qquad
  0\le\operatorname{Re}z\le\frac12.
  \label{eq:interpolation-function-app}
\end{equation}
Then
\begin{equation}
  \norm{\mathcal W(z)}_2\le1
  \qquad
  (0\le\operatorname{Re}z\le\tfrac12).
  \label{eq:interpolation-contraction-app}
\end{equation}
\end{lemma}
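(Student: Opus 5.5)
The plan is to apply the Hadamard three-lines theorem (\cref{lem:three-lines-app}) on the strip $0\le\operatorname{Re}z\le\frac12$. Because the Hilbert--Schmidt norm is not itself an analytic function, I will work with the scalar functions obtained by duality. Fix an arbitrary operator $Y$ on $\cH_X\otimes\cH_Y\otimes\cH_Z$ with $\norm{Y}_2\le1$ and set
\[
g_Y(z):=\Tr\bigl[Y^\dagger\mathcal W(z)\bigr].
\]
Since $\rho$ is faithful, every marginal $\rho_{XY},\rho_Y,\rho_{YZ}$ is faithful as well. Each power $A^z=\sum_k a_k^z P_k$ is therefore an entire operator-valued function, and $\norm{A^z}_\infty\le\max\{a_{\min}^{x},a_{\max}^{x}\}$ with $x=\operatorname{Re}z$. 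This makes $g_Y$ entire and bounded on the closed strip, so the hypotheses of the three-lines lemma hold. Once I show $\abs{g_Y}\le1$ on both boundary lines, I get $\abs{g_Y(z)}\le1$ throughout the strip. Taking the supremum over $Y$, which gives $\sup_Y\abs{g_Y(z)}=\norm{\mathcal W(z)}_2$, then proves \eqref{eq:interpolation-contraction-app}. By Cauchy--Schwarz, it suffices to prove $\norm{\mathcal W(z)}_2\le1$ on the two boundary lines.

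\textbf{Left boundary, $z=is$.} Here $\rho_{YZ}^{is}$, $\rho_Y^{-is}$ and $\rho_{XY}^{is}$ are unitaries, and $\rho^{1/2-is}=\rho^{-is}\rho^{1/2}$. Unitary invariance of the Hilbert--Schmidt norm therefore gives
\[
\norm{\mathcal W(is)}_2=\norm{\rho^{1/2}}_2=(\Tr\rho)^{1/2}=1.
\]

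\textbf{Right boundary, $z=\frac12+is$.} The factor $\rho^{1/2-z}=\rho^{-is}$ on the right is unitary and can be dropped. On the left, $\rho_{YZ}^{is}$ commutes with $\rho_{YZ}^{1/2}$ and can also be dropped. This leaves
\[
M:=\rho_{YZ}^{1/2}\rho_Y^{-1/2-is}\rho_{XY}^{1/2+is},
\qquad
\norm{M}_2^2=\Tr\Bigl[\rho_{XY}^{1/2-is}\rho_Y^{-1/2+is}\,\rho_{YZ}\,\rho_Y^{-1/2-is}\rho_{XY}^{1/2+is}\Bigr].
\]
Only $\rho_{YZ}$ acts nontrivially on $Z$, so I first take the partial trace over $Z$. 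This replaces $\rho_{YZ}$ by $\rho_Y\otimes\id_X$. The middle factor then collapses:
\[
\rho_Y^{-1/2+is}\rho_Y\rho_Y^{-1/2-is}=\id.
\]
What remains is $\Tr\bigl[\rho_{XY}^{1/2-is}\rho_{XY}^{1/2+is}\bigr]=\Tr\rho_{XY}=1$. Hence $\norm{\mathcal W(\frac12+is)}_2=1$ as well.

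I expect the right-boundary computation to be the only step needing care. The key points are the ordering of the factors, the use of commutation to strip off the unitary phases, and the observation that the partial trace over $Z$ can be taken inside because $\rho_{XY}$ and $\rho_Y$ act as the identity on $Z$. The rest is routine: the duality reduction, and boundedness of $g_Y$ on the strip, which follows directly from faithfulness and finite dimension.
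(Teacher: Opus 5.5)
Your proof is correct and is essentially the paper's argument: you pair $\mathcal W(z)$ against a Hilbert--Schmidt unit test operator, apply the three-lines lemma, and show unit Hilbert--Schmidt norm on both boundary lines. The only difference is cosmetic. On the right boundary you compute $\Tr M^\dagger M$ and trace out $Z$, whereas the paper uses cyclicity to write $\norm{\mathcal W(\tfrac12+is)}_2^2=\Tr[\rho_{XY}\rho_Y^{-1/2+is}\rho_{YZ}\rho_Y^{-1/2-is}]$ and traces out $X$. You may also want to rename your test operator, since $Y$ already denotes a subsystem.
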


\begin{proof}
On the lower boundary, every factor preceding $\rho^{1/2}$ is unitary, so
\begin{equation}
  \norm{\mathcal W(is)}_2=1.
  \label{eq:interpolation-lower-app}
\end{equation}
On the upper boundary,
\begin{align}
  \norm{\mathcal W(\tfrac12+is)}_2^2
  &=
  \Tr\!\left[
    \rho_{XY}\rho_Y^{-1/2+is}
    \rho_{YZ}\rho_Y^{-1/2-is}
  \right]
  \notag\\
  &=
  \Tr\!\left[
    \rho_Y\rho_Y^{-1/2+is}
    \rho_{YZ}\rho_Y^{-1/2-is}
  \right]
  =1.
  \label{eq:interpolation-upper-app}
\end{align}
The second equality traces out $X$, after which cyclicity evaluates the
trace.  Now fix an operator $T$ with $\norm{T}_2=1$ and define
\begin{equation}
  g_T(z):=\Tr[T^\dagger\mathcal W(z)].
\end{equation}
The Hilbert--Schmidt Cauchy--Schwarz inequality and
\eqref{eq:interpolation-lower-app}--\eqref{eq:interpolation-upper-app} give
\begin{equation}
  \sup_{s\in\mathbb R}\abs{g_T(is)}\le1,
  \qquad
  \sup_{s\in\mathbb R}\abs{g_T(\tfrac12+is)}\le1.
\end{equation}
Applying \cref{lem:three-lines-app} with $a=1/2$ yields
\begin{equation}
  \abs{g_T(x+iy)}
  \le
  \left(\sup_s\abs{g_T(is)}\right)^{1-2x}
  \left(\sup_s\abs{g_T(\tfrac12+is)}\right)^{2x}
  \le1.
\end{equation}
Taking the supremum over all $T$ with $\norm{T}_2=1$ and using
Hilbert--Schmidt duality proves \eqref{eq:interpolation-contraction-app}.
This is the interpolation argument used in multivariate trace
inequalities \cite{SutterBertaTomamichel2017}.
\end{proof}

We next state the Poisson formula f.  
\begin{lemma}[Poisson integral formula on a strip]
\label{lem:strip-Poisson-representation-app}
Let $h$ be a bounded real-valued harmonic function on $\{z=x+iy:0<x<a\}$ with continuous boundary values. For $0<x<a$,
\begin{align}
h(x+iy)
={}&
\int_{-\infty}^{\infty}
P_0^{(a)}(x,y-s)h(is)\,ds +
\int_{-\infty}^{\infty}
P_a^{(a)}(x,y-s)h(a+is)\,ds,
\label{eq:strip-Poisson-representation-app}
\end{align}
where
\begin{align}
P_0^{(a)}(x,s)
&:=
\frac{1}{2a}
\frac{\sin(\pi x/a)}
{\cosh(\pi s/a)-\cos(\pi x/a)},
\label{eq:strip-P0-app}\\
P_a^{(a)}(x,s)
&:=
\frac{1}{2a}
\frac{\sin(\pi x/a)}
{\cosh(\pi s/a)+\cos(\pi x/a)}.
\label{eq:strip-Pa-app}
\end{align}
\end{lemma}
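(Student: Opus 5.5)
The plan is to transfer the problem to the upper half-plane by a conformal map, use the classical half-plane Poisson representation there, and pull the kernel back to the strip.

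\textbf{Step 1: conformal map.} Write $z=x+iy$ and set
\[
w=\phi(z):=\e^{i\pi z/a}=\e^{-\pi y/a}\e^{i\pi x/a}.
\]
Since $\arg w=\pi x/a\in(0,\pi)$, the map $\phi$ is a biholomorphism from the open strip onto the upper half-plane $\mathbb H$. The left boundary point $is$ goes to the positive real point $t=\e^{-\pi s/a}$. The right boundary point $a+is$ goes to the negative real point $-\e^{-\pi s/a}$. The boundary points $w=0$ and $w=\infty$ correspond to $y\to\pm\infty$ and are not attained.

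\textbf{Step 2: representation on $\mathbb H$.} Because a harmonic function composed with a holomorphic map is harmonic, $H:=h\circ\phi^{-1}$ is bounded and harmonic on $\mathbb H$. It has continuous boundary values on $\mathbb R\setminus\{0\}$. I then claim the standard representation
\[
H(w)=\int_{\mathbb R}\frac{\operatorname{Im}w}{\pi|w-t|^2}H(t)\,dt .
\]
To prove it, let $P[H]$ denote the right-hand side. It is bounded and harmonic, and it attains the boundary values $H(t)$ at every continuity point $t\neq0$. The difference $u:=H-P[H]$ is therefore bounded, harmonic, and vanishes continuously on $\mathbb R\setminus\{0\}$.

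\textbf{Step 3 (main obstacle): uniqueness.} I must show $u\equiv0$ despite the missing boundary point $0$ and the point at infinity. First, $u$ extends continuously by zero across $\mathbb R\setminus\{0\}$. Schwarz reflection then extends $u$ to a bounded harmonic function on $\mathbb C\setminus\{0\}$. An isolated singularity of a bounded harmonic function is removable, so $u$ becomes a bounded entire harmonic function. By Liouville it is constant, and since it vanishes on the real axis, $u\equiv0$. The same conclusion also follows from a Phragmén--Lindelöf argument with the comparison functions $\eps\log|w|$ and $\eps\log(1/|w|)$. This is the only delicate point; the rest is computation.

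\textbf{Step 4: change of variables.} On the positive half-line put $t=\e^{-\pi s/a}$, so $|dt|=(\pi/a)\e^{-\pi s/a}\,ds$. Then
\[
|w-t|^2=2\e^{-\pi(y+s)/a}\bigl[\cosh(\pi(y-s)/a)-\cos(\pi x/a)\bigr],
\qquad
\operatorname{Im}w=\e^{-\pi y/a}\sin(\pi x/a).
\]
The exponential factors cancel, and the kernel becomes exactly $P_0^{(a)}(x,y-s)\,ds$, with $H(t)=h(is)$.

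On the negative half-line put $t=-\e^{-\pi s/a}$. The same computation holds with $\cos(\pi x/a)$ replaced by $-\cos(\pi x/a)$, which gives $P_a^{(a)}(x,y-s)\,ds$ and $H(t)=h(a+is)$.

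Summing the two half-line contributions yields \eqref{eq:strip-Poisson-representation-app}.
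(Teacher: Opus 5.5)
Your proof is correct and takes a genuinely different route from the paper, which gives no argument and simply cites Widder for the strip Poisson formula. You transfer to the upper half-plane with $w=\e^{i\pi z/a}$, and the kernel computation is right. For $t=\pm\e^{-\pi s/a}$ one has $|w-t|^2=2\e^{-\pi(y+s)/a}[\cosh(\pi(y-s)/a)\mp\cos(\pi x/a)]$, and the Jacobian $(\pi/a)\e^{-\pi s/a}$ cancels the exponentials, giving exactly $P_0^{(a)}$ and $P_a^{(a)}$ with the $1/(2a)$ normalization. Your uniqueness step (odd reflection across both half-lines, removable singularity at $w=0$, Liouville) is complete. It also correctly pinpoints where boundedness enters: without it the formula fails, e.g.\ for $\sin(\pi x/a)\sinh(\pi y/a)$.

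Your route gives a self-contained proof of exactly the case the paper uses in \cref{thm:weighted-IMF}. There $h=1-\operatorname{Re}f$ is bounded by \cref{lem:interpolation-contraction-app} and real-analytic up to the boundary. The citation, by contrast, gives brevity.

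One minor point: your Phragm\'en--Lindel\"of aside is loose as stated. The functions $\eps\log|w|$ and $\eps\log(1/|w|)$ each tend to $-\infty$ at one of the two omitted boundary points, so neither alone is a majorant. A single function such as $\eps\log\bigl(|w+i|^2/|w|\bigr)$ would make that alternative work: it is harmonic on $\mathbb H$, positive on $\mathbb R\setminus\{0\}$, and divergent at both $0$ and $\infty$. Your main argument does not depend on the aside.
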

\begin{proof} 
See Refs.~\cite{widder1961functions} 
\end{proof}

\begin{lemma}[Inward normal derivative at the strip boundary]
\label{lem:strip-Poisson-normal-app}
Suppose in addition that $h$ is twice continuously differentiable near
$z=0$, that $h\ge0$ on the closed strip, and that $h(0)=0$.  Then
\begin{align}
\partial_xh(0)
={}&
\frac{\pi}{2a^2}
\int_{-\infty}^{\infty}
\frac{h(is)}{\cosh(\pi s/a)-1}\,ds +
\frac{\pi}{2a^2}
\int_{-\infty}^{\infty}
\frac{h(a+is)}{\cosh(\pi s/a)+1}\,ds.
\label{eq:strip-normal-general-app}
\end{align}
\end{lemma}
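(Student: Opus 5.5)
The natural route is to differentiate the Poisson representation of \cref{lem:strip-Poisson-representation-app} in $x$ at the boundary point $x\to0^+$, $y=0$. Since $h(0)=0$, the inward normal derivative is $\partial_xh(0)=\lim_{x\to0^+}h(x)/x$. This limit exists because $h$ is $C^2$ near $0$.

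Divide the representation at $(x,0)$ by $x$ and pass to the limit inside each integral. In the lower kernel,
\[
\frac{P_0^{(a)}(x,s)}{x}=\frac{1}{2a}\frac{\sin(\pi x/a)/x}{\cosh(\pi s/a)-\cos(\pi x/a)},
\]
which converges pointwise for $s\neq0$ to $\frac{\pi}{2a^2}\bigl[\cosh(\pi s/a)-1\bigr]^{-1}$. In the upper kernel, $P_a^{(a)}(x,s)/x$ converges to $\frac{\pi}{2a^2}\bigl[\cosh(\pi s/a)+1\bigr]^{-1}$ uniformly, with an integrable exponential majorant. Together with the boundedness of $h$, dominated convergence settles the upper integral directly.

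The main obstacle is the lower integral, because the limiting kernel has a non-integrable $s^{-2}$ singularity at $s=0$. I would split it into $|s|\ge\delta$ and $|s|<\delta$.

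On $|s|\ge\delta$, the kernels are uniformly bounded by an integrable exponential majorant, so dominated convergence applies exactly as for the upper integral.

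On $|s|<\delta$, we need to control the contribution near the singularity. The $C^2$ hypothesis together with $h(0)=0$ gives $h(is)=s\,\partial_yh(0)+O(s^2)$. The nonnegativity of $h$ on the boundary line, with $h(0)=0$, forces $s=0$ to be a minimum of $s\mapsto h(is)$, hence $\partial_yh(0)=0$ and $h(is)=O(s^2)$. Near $0$ the kernel satisfies
\[
\frac{\sin(\pi x/a)/x}{\cosh(\pi s/a)-\cos(\pi x/a)}\lesssim\frac{1}{s^2+x^2}.
\]
The integrand is therefore bounded by a constant uniformly in $x$, and this part contributes $O(\delta)$ uniformly in $x$.

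Letting $x\to0^+$ and then $\delta\to0$ yields the stated formula, with the lower integral converging because $h(is)=O(s^2)$. As a fallback, one can use monotone convergence, since $h\ge0$ and the prefactors are positive, to identify the lower limit. Its finiteness then follows from the finiteness of $\partial_xh(0)$ once the near-zero piece is shown to vanish as $\delta\to0$.
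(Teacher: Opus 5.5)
Your proposal is correct and follows essentially the paper's argument: both use nonnegativity together with the $C^2$ hypothesis to get $h(is)=O(s^2)$, pair this with a $C/(s^2+x^2)$ bound on the lower kernel to cancel the $s^{-2}$ singularity, and conclude by dominated convergence with exponential majorants at infinity. The only difference is cosmetic. You pass to the limit in the difference quotient $h(x)/x$, using a $\delta$-split that is not really needed because your integrand is already bounded by a uniform constant near $s=0$, whereas the paper differentiates the kernel in $x$ and lets $x\downarrow0$.
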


\begin{proof}
The only issue is justifying differentiation under the Poisson integrals.
Set $q(s):=h(is)$.  Since $q(s)\ge0=q(0)$, one has $q'(0)=0$, and Taylor's
theorem gives $h(is)=O(s^2)$ as $s\to0$.

For $s\neq0$, direct differentiation of
\eqref{eq:strip-P0-app}--\eqref{eq:strip-Pa-app} at $x=0$ gives
\begin{align}
\partial_xP_0^{(a)}(0,s)
&=
\frac{\pi}{2a^2}
\frac{1}{\cosh(\pi s/a)-1},
\label{eq:strip-P0-normal-app}\\
\partial_xP_a^{(a)}(0,s)
&=
\frac{\pi}{2a^2}
\frac{1}{\cosh(\pi s/a)+1}.
\label{eq:strip-Pa-normal-app}
\end{align}
Near $(x,s)=(0,0)$, Taylor expansion gives
$|\partial_xP_0^{(a)}(x,s)|\le C/(s^2+x^2)$.  Thus its product with
$h(is)=O(s^2)$ is locally bounded, cancelling the otherwise nonintegrable
$1/s^2$ singularity.  The upper-boundary term is nonsingular at $s=0$, and
both differentiated kernels decay exponentially at infinity.  Dominated
convergence theorem can therefore be applied to move the $x\downarrow0$ limit inside the integrals, thus permitting differentiation of
\eqref{eq:strip-Poisson-representation-app} at $x=0$, $y=0$; inserting
\eqref{eq:strip-P0-normal-app}--\eqref{eq:strip-Pa-normal-app} proves
\eqref{eq:strip-normal-general-app}.
\end{proof}

\section{Nonfaithful states in finite-time instantaneous modular flow}
\label{app:nonfaithful-IMF}

In the proof of the weighted finite-time Markov-decomposition theorem,
\cref{thm:weighted-IMF}, it is convenient to first assume that the reduced
state on the tripartite region is faithful.  This assumption is only a
technical device for the complex strip-interpolation argument.  It is
\emph{not} required for the definition of instantaneous modular flow itself,
nor does it modify any of the bounds stated in the main text.  In this
appendix we explain this point carefully and prove that
\cref{eq:weighted-IMF}, together with all of its consequences, remains valid
for arbitrary finite-dimensional states without any lower bound on their
nonzero eigenvalues.

\subsection{Support preservation under real-time instantaneous modular flow}
\label{app:nonfaithful-support}

Let $\ket{\psi}$ be a normalized pure state on
$\cH_X\otimes\cH_{\bar X}$, and let
\begin{equation}
\rho_X
=
\sum_{a=1}^{r}p_a\ket a\!\bra a,
\qquad
p_a>0,
\label{eq:nonfaithful-rhoX}
\end{equation}
be its reduced density matrix on $X$.  We denote by
\begin{equation}
\Pi_X:=\supp\rho_X=\sum_{a=1}^{r}\ket a\!\bra a
\label{eq:nonfaithful-support-projector}
\end{equation}
the orthogonal projector onto its support.  Imaginary powers are understood
on this support:
\begin{equation}
\rho_X^{it}
:=
\sum_{a=1}^{r}p_a^{it}\ket a\!\bra a.
\label{eq:nonfaithful-imaginary-power}
\end{equation}
Thus $\rho_X^{it}$ is generally only a partial unitary on $\cH_X$,
\begin{equation}
\rho_X^{it}\rho_X^{-it}
=
\rho_X^{-it}\rho_X^{it}
=
\Pi_X.
\label{eq:nonfaithful-partial-unitary}
\end{equation}

\begin{lemma}[Support preservation and unitary extension of IMF]
\label{lem:nonfaithful-support-preservation}
For every $t\in\mathbb R$, the instantaneous modular flow
\begin{equation}
\cI_X(t)\ket\psi
:=
\rho_X^{it}\ket\psi
\label{eq:nonfaithful-IMF-definition}
\end{equation}
is a normalized pure state.  Moreover,
\begin{align}
\rho_X\bigl(\cI_X(t)\psi\bigr)
&=\rho_X(\psi),
\label{eq:nonfaithful-X-marginal-invariant}\\
\rho_{\bar X}\bigl(\cI_X(t)\psi\bigr)
&=\rho_{\bar X}(\psi),
\label{eq:nonfaithful-complement-marginal-invariant}
\end{align}
and hence the support and rank of the two Schmidt marginals are unchanged.
In particular,
\begin{equation}
\cI_X(-t)\cI_X(t)\ket\psi=\ket\psi.
\label{eq:nonfaithful-IMF-inverse}
\end{equation}

Furthermore, the action of $\rho_X^{it}$ on the current state can always be
implemented by a genuine unitary on the full Hilbert space $\cH_X$.  One
choice is
\begin{equation}
\widetilde U_X(t)
:=
\rho_X^{it}+(\id_X-\Pi_X)
=
\sum_{a=1}^{r}p_a^{it}\ket a\!\bra a
+
(\id_X-\Pi_X),
\label{eq:nonfaithful-unitary-extension}
\end{equation}
for which
\begin{equation}
\widetilde U_X(t)\ket\psi
=
\rho_X^{it}\ket\psi
=
\cI_X(t)\ket\psi.
\label{eq:nonfaithful-unitary-extension-action}
\end{equation}
\end{lemma}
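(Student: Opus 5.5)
The plan is to work directly in the Schmidt decomposition of $\ket\psi$ across $X:\bar X$ and read off every claim from it. Write
\[
\ket\psi=\sum_{a=1}^{r}\sqrt{p_a}\,\ket a_X\ket{\tilde a}_{\bar X},
\]
with orthonormal $\{\ket{\tilde a}\}$ on $\bar X$, where $\ket a$ and $p_a>0$ are the eigenvectors and eigenvalues in \eqref{eq:nonfaithful-rhoX}. Applying \eqref{eq:nonfaithful-imaginary-power} gives
\[
\cI_X(t)\ket\psi=\sum_{a=1}^{r}p_a^{it}\sqrt{p_a}\,\ket a\ket{\tilde a}.
\]
This is again a Schmidt decomposition, with the same Schmidt vectors and coefficients of the same moduli, because $|p_a^{it}|=1$ for real $t$ and $p_a>0$.

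From this form, normalization is immediate. The $X$ marginal is $\sum_a p_a\ket a\!\bra a=\rho_X(\psi)$, and the $\bar X$ marginal is $\sum_a p_a\ket{\tilde a}\!\bra{\tilde a}=\rho_{\bar X}(\psi)$, since the phases cancel in the outer product; this proves \eqref{eq:nonfaithful-X-marginal-invariant}--\eqref{eq:nonfaithful-complement-marginal-invariant}. Consequently the supports, ranks and support projector $\Pi_X$ are unchanged.

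For the inverse identity \eqref{eq:nonfaithful-IMF-inverse}, the key observation is that $\cI_X(-t)$ acting on $\cI_X(t)\ket\psi$ recomputes the reduced state from its current input. That input's $X$ marginal equals $\rho_X$, so the operator applied is $\rho_X^{-it}$. The relation $\rho_X^{-it}\rho_X^{it}=\Pi_X$ from \eqref{eq:nonfaithful-partial-unitary}, together with $\Pi_X\ket\psi=\ket\psi$ (each Schmidt vector $\ket a$ lies in the support), then returns $\ket\psi$.

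For the unitary extension, $\widetilde U_X(t)$ is block diagonal with respect to $\Pi_X\oplus(\id_X-\Pi_X)$, with the diagonal unitary $\sum_a p_a^{it}\ket a\!\bra a$ on the first block and the identity on the second, so it is unitary. The action identity \eqref{eq:nonfaithful-unitary-extension-action} follows from $(\id_X-\Pi_X)\ket\psi=0$.

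The only point needing care, and the closest thing to an obstacle, is the current-state convention in the inverse. One must note explicitly that the marginal used by $\cI_X(-t)$ is that of the flowed state, and that it coincides with $\rho_X(\psi)$ by the first part. The order of the argument is therefore forced: establish marginal invariance first, then deduce the inverse.
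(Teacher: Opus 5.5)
Your proposal is correct and follows essentially the same route as the paper: Schmidt decomposition across $X:\bar X$, observing that the flow only rotates Schmidt phases (giving normalization, marginal invariance, and the inverse), and block-diagonal unitarity of $\widetilde U_X(t)$ together with $\Pi_X\ket\psi=\ket\psi$. Your explicit remark that $\cI_X(-t)$ recomputes the marginal from the flowed state, which coincides with $\rho_X(\psi)$, makes precise a step the paper leaves implicit when it cancels the phases term by term.
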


\begin{proof}
Take the Schmidt decomposition across $X:\bar X$,
\begin{equation}
\ket\psi
=
\sum_{a=1}^{r}
\sqrt{p_a}\,
\ket a_X\ket{\widehat a}_{\bar X},
\label{eq:nonfaithful-Schmidt}
\end{equation}
where only the nonzero Schmidt values are included.  Then
\begin{equation}
\cI_X(t)\ket\psi
=
\sum_{a=1}^{r}
\sqrt{p_a}\,
e^{it\log p_a}
\ket a_X\ket{\widehat a}_{\bar X}.
\label{eq:nonfaithful-Schmidt-flow}
\end{equation}
The flow only changes Schmidt phases.  Consequently its norm is one, and
tracing either side of the Schmidt decomposition gives
\eqref{eq:nonfaithful-X-marginal-invariant} and
\eqref{eq:nonfaithful-complement-marginal-invariant}.  Applying the opposite
flow multiplies each Schmidt term by
$e^{-it\log p_a}e^{it\log p_a}=1$, proving
\eqref{eq:nonfaithful-IMF-inverse}.

Finally, $\widetilde U_X(t)$ is unitary because it acts by phases on
$\Pi_X\cH_X$ and by the identity on its orthogonal complement.  Since
\begin{equation}
\Pi_X\ket\psi=\ket\psi,
\end{equation}
its action on $\ket\psi$ agrees with the support imaginary power, proving
\eqref{eq:nonfaithful-unitary-extension-action}.
\end{proof}

More generally, if $X\subseteq Y$ and $\rho_Y$ has marginal $\rho_X$, then
\begin{equation}
\Pi_X\rho_Y=\rho_Y=\rho_Y\Pi_X.
\label{eq:nonfaithful-projector-absorbed}
\end{equation}
Indeed, purify $\rho_Y$ and apply the same Schmidt-support argument to the
purification.  Consequently,
\begin{align}
\rho_X^{it}\rho_Y
&=
\widetilde U_X(t)\rho_Y,
\label{eq:nonfaithful-extension-left}\\
\rho_Y\rho_X^{it}
&=
\rho_Y\widetilde U_X(t).
\label{eq:nonfaithful-extension-right}
\end{align}
This is the support-projector mechanism underlying the flipping, inverse, and
nested commutation moves in the presence of zero eigenvalues.

\subsection{Why faithfulness appears in the strip proof}
\label{app:nonfaithful-strip-role}

The preceding discussion shows that real-time IMF has no singularity at zero
Schmidt values.  Faithfulness enters the proof of
\cref{thm:weighted-IMF} for a different reason: the proof analytically
continues the real-time vector families into complex modular time.

For a faithful state $\rho=\rho_{XYZ}$, the proof introduces
\begin{equation}
\mathcal W(z)
=
\rho_{YZ}^{z}
\rho_Y^{-z}
\rho_{XY}^{z}
\rho^{1/2-z},
\qquad
0\le\operatorname{Re}z\le\frac12.
\label{eq:nonfaithful-interpolation-function}
\end{equation}
Faithfulness makes all positive and negative complex powers in
\eqref{eq:nonfaithful-interpolation-function} bounded operators on the full finite-dimensional
Hilbert space.  In particular, on the imaginary axis the relative factor
\begin{equation}
U_s:=\rho_{YZ}^{-is}\rho_Y^{is}
\label{eq:nonfaithful-Us}
\end{equation}
is a genuine unitary.  This gives the exact norm identity
\begin{equation}
d_\phi(s)
=
\norm{
\mathcal W(-is)-\rho^{1/2}
}_2
\label{eq:nonfaithful-d-interpolation-faithful}
\end{equation}
used in \eqref{eq:d-interpolation-function}, and it also gives the unit-norm boundary
condition
\begin{equation}
\norm{\mathcal W(is)}_2=1.
\label{eq:nonfaithful-interpolation-lower-faithful}
\end{equation}

For a nonfaithful state, support imaginary powers are perfectly well defined,
but they are only partial unitaries.  If one formally keeps the same
expression \eqref{eq:nonfaithful-Us}, then in general
\begin{equation}
U_s^\dagger U_s\neq\id
\end{equation}
on the full Hilbert space.  Therefore unitary invariance of the norm cannot
be used directly to infer \eqref{eq:nonfaithful-d-interpolation-faithful} for the
difference of the two matrix amplitudes.  Moreover, the upper strip boundary
contains negative real powers such as $\rho_Y^{-1/2+is}$, whose treatment
requires careful support bookkeeping when different noncommuting marginals
have different kernels.  Thus the faithfulness assumption is a convenience
for the \emph{complex interpolation}; it is not an assumption required by
the physical IMF evolution.

Rather than defining every inverse, logarithm, and modular power only on the
support of the corresponding density operator, and tracking the resulting
support projectors throughout the strip proof, we remove this auxiliary
assumption by a faithful approximation of the initial state.

\subsection{Faithful regularization and convergence of IMF trajectories}
\label{app:nonfaithful-regularization}

Let
\begin{equation}
\rho^{(\eta)}
:=
(1-\eta)\rho
+
\eta\frac{\id_{XYZ}}{d_{XYZ}},
\qquad
0<\eta<1.
\label{eq:nonfaithful-depolarization}
\end{equation}
Every marginal is then faithful; for example,
\begin{equation}
\rho_R^{(\eta)}
=
(1-\eta)\rho_R
+
\eta\frac{\id_R}{d_R}
>
0.
\label{eq:nonfaithful-depolarized-marginal}
\end{equation}
Set $T_\eta:=(\rho^{(\eta)})^{1/2}$ and $T:=\rho^{1/2}$.
Since $\rho^{(\eta)}$ and $\rho$ commute, the scalar inequality
$(\sqrt a-\sqrt b)^2\le|a-b|$ gives
\begin{equation}
\norm{T_\eta-T}_2^2
\le\norm{\rho^{(\eta)}-\rho}_1
=\eta\norm{\id_{XYZ}/d_{XYZ}-\rho}_1
\le2\eta.
\label{eq:nonfaithful-amplitude-convergence}
\end{equation}
Thus the matrix amplitudes converge in Hilbert--Schmidt norm, even though
imaginary powers need not converge in operator norm on the kernel of $\rho$.

For a matrix amplitude $T$ with $\norm{T}_2=1$, write the current-state IMF
in matrix form as
\begin{equation}
\cI_R(t)T
:=\bigl[\Tr_{XYZ\setminus R}(TT^\dagger)\bigr]^{it}T,
\label{eq:nonfaithful-matrix-IMF}
\end{equation}
with support imaginary powers as before.  Regrouping the row indices in
$R$ against the remaining row and column indices preserves the
Hilbert--Schmidt norm and turns this map into the singular-value transform
used in the proof of \cref{lem:IMF-Lipschitz}.  Hence
\begin{equation}
\norm{\cI_R(t)T-\cI_R(t)S}_2
\le\ell(t)\norm{T-S}_2.
\label{eq:nonfaithful-matrix-IMF-Lipschitz}
\end{equation}
Each map uses the marginal of its current matrix amplitude, not that of
the initial state.

Define the matrix factorization error
\begin{equation}
\mathfrak E_\rho^{X:Y:Z}(t)
:=\cI_{XYZ}(t)\rho^{1/2}
-\cI_{XY}(t)\cI_{YZ}(t)\cI_Y(-t)\rho^{1/2}.
\label{eq:nonfaithful-matrix-error}
\end{equation}
For any original state vector $\ket\phi$ with reduced state $\rho$ on
$XYZ$, the vector and matrix sequences have identical current reduced
states and therefore identical left-acting operators.  Consequently,
\begin{equation}
\norm{\mathfrak E_\phi^{X:Y:Z}(t)}_2
=\norm{\mathfrak E_\rho^{X:Y:Z}(t)}_2.
\label{eq:nonfaithful-matrix-error-norm}
\end{equation}

\begin{lemma}[Convergence of the finite Markov-factorization error]
\label{lem:nonfaithful-E-convergence}
For every fixed $t\in\mathbb R$,
\begin{equation}
\begin{aligned}
\norm{\mathfrak E_{\rho^{(\eta)}}^{X:Y:Z}(t)
      -\mathfrak E_\rho^{X:Y:Z}(t)}_2
&\le[\ell(t)+\ell(t)^3]\norm{T_\eta-T}_2\\
&\le\sqrt{2\eta}\,[\ell(t)+\ell(t)^3].
\end{aligned}
\label{eq:nonfaithful-E-convergence}
\end{equation}
In particular, the matrix errors converge in Hilbert--Schmidt norm as
$\eta\downarrow0$.
\end{lemma}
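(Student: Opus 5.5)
The plan is a telescoping argument built on the matrix Lipschitz bound \eqref{eq:nonfaithful-matrix-IMF-Lipschitz}. That bound holds for arbitrary unit-Hilbert--Schmidt amplitudes and uses no spectral information. We compare the two terms of $\mathfrak E$ separately, once for the amplitude $T_\eta=(\rho^{(\eta)})^{1/2}$ and once for $T=\rho^{1/2}$. By the triangle inequality,
\[
\norm{\mathfrak E_{\rho^{(\eta)}}-\mathfrak E_\rho}_2
\le\norm{\cI_{XYZ}(t)T_\eta-\cI_{XYZ}(t)T}_2
+\norm{\cI_{XY}(t)\cI_{YZ}(t)\cI_Y(-t)T_\eta-\cI_{XY}(t)\cI_{YZ}(t)\cI_Y(-t)T}_2 .
\]
Both amplitudes have unit Hilbert--Schmidt norm, since $\Tr\rho^{(\eta)}=\Tr\rho=1$, so each current-state map is applied to legitimate inputs.

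\emph{First term.} One application of \eqref{eq:nonfaithful-matrix-IMF-Lipschitz} bounds it by $\ell(t)\norm{T_\eta-T}_2$.

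\emph{Second term.} We peel the three maps from the outside in, applying the same Lipschitz bound at each stage to the two intermediate amplitudes $\cI_{YZ}(t)\cI_Y(-t)T_\eta$ and $\cI_{YZ}(t)\cI_Y(-t)T$, and so on. Each intermediate amplitude still has unit norm: by \cref{lem:nonfaithful-support-preservation}, each IMF map only rephases Schmidt coefficients. The factor for $\cI_Y(-t)$ is $\ell(-t)=\ell(t)$, because $\ell$ is even. The three factors multiply to $\ell(t)^3\norm{T_\eta-T}_2$.

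\emph{Conclusion.} Adding the two terms gives the first inequality, with prefactor $\ell(t)+\ell(t)^3$. The second inequality then follows from \eqref{eq:nonfaithful-amplitude-convergence}, namely $\norm{T_\eta-T}_2\le\sqrt{2\eta}$, and Hilbert--Schmidt convergence as $\eta\downarrow0$ is immediate.

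\emph{Main obstacle.} The one delicate point is to confirm that the matrix-form IMF \eqref{eq:nonfaithful-matrix-IMF} is a singular-value transform of the regrouped amplitude, even when the support projectors of the two amplitudes differ. Otherwise the Lipschitz bound could fail across a change of kernel. This holds because $f_t(x)=\operatorname{sgn}(x)|x|^{1+2it}$ vanishes at $0$ and is globally $\sqrt{1+4t^2}$-Lipschitz. Zero singular values therefore contribute nothing, whichever unitary extension is chosen, and \cref{lem:truncated-modular-vector-app} applies directly. I will state this explicitly rather than rely on operator-norm convergence of the imaginary powers, which in general fails on $\ker\rho$.
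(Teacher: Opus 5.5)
Your proof is correct and follows the paper's own argument. The triangle inequality splits the difference into the one-map branch, which is $\ell(t)$-Lipschitz by \eqref{eq:nonfaithful-matrix-IMF-Lipschitz}, and the three-map branch, which is $\ell(t)^3$-Lipschitz by composition; \eqref{eq:nonfaithful-amplitude-convergence} then gives the $\sqrt{2\eta}$ bound. Your remarks on the unit norm of intermediate amplitudes, the evenness of $\ell$, and the kernel-independence of the singular-value transform make explicit what the paper leaves implicit.
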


\begin{proof}
By \eqref{eq:nonfaithful-matrix-IMF-Lipschitz}, the first branch is
$\ell(t)$-Lipschitz and the second, a composition of three current-state
IMF maps, is $\ell(t)^3$-Lipschitz.  The triangle inequality and
\eqref{eq:nonfaithful-amplitude-convergence} give the result.
\end{proof}

\subsection{Removal of the faithfulness assumption in the weighted theorem}
\label{app:nonfaithful-Fatou}

We can now pass to the nonfaithful limit without changing the bound.

\begin{proposition}[The weighted IMF bound for arbitrary finite-dimensional states]
\label{prop:nonfaithful-weighted-IMF}
The weighted finite-time Markov-decomposition bound
\begin{equation}
\pi
\int_{-\infty}^{\infty}
\frac{
\norm{\mathfrak E_{\phi}^{X:Y:Z}(t)}_2^2
}{
\ell(t)^2[\cosh(2\pi t)-1]
}
\,dt
\le
I(X:Z|Y)_\rho
\label{eq:nonfaithful-weighted-IMF}
\end{equation}
holds for every finite-dimensional tripartite state $\rho_{XYZ}$, whether
or not $\rho$ or its marginals are faithful.
\end{proposition}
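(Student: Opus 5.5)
The plan is to apply the faithful-state version of \cref{thm:weighted-IMF} to the depolarized states $\rho^{(\eta)}$ and then let $\eta\downarrow0$, controlling the left-hand side by Fatou's lemma and the right-hand side by continuity of conditional mutual information.

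First, fix a purification of $\rho^{(\eta)}$, or equivalently work with the matrix amplitude $T_\eta=(\rho^{(\eta)})^{1/2}$. By \eqref{eq:nonfaithful-matrix-error-norm}, the vector error for any purification equals the matrix error. Since $\rho^{(\eta)}$ and all of its marginals are faithful, the proof of \cref{thm:weighted-IMF} given in the main text applies verbatim. This yields
\[
\pi\int_{-\infty}^{\infty}
\frac{\norm{\mathfrak E_{\rho^{(\eta)}}^{X:Y:Z}(t)}_2^2}{\ell(t)^2[\cosh(2\pi t)-1]}\,dt
\le I(X:Z|Y)_{\rho^{(\eta)}}.
\]

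Second, pass to the limit pointwise in $t$. By \cref{lem:nonfaithful-E-convergence}, for each fixed $t$ the Hilbert--Schmidt norm of $\mathfrak E_{\rho^{(\eta)}}(t)$ converges to that of $\mathfrak E_\rho(t)$. Hence the integrands converge pointwise on $t\neq0$; the point $t=0$ has measure zero. The integrands are nonnegative, so Fatou's lemma gives
\[
\int\liminf_{\eta\downarrow0}\le\liminf_{\eta\downarrow0}\int.
\]
The left-hand side of this inequality is exactly the weighted integral for $\rho$. No domination near $t=0$ is needed; this is the reason for using Fatou's lemma rather than dominated convergence.

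Third, control the right-hand side. We have $\tfrac12\norm{\rho^{(\eta)}-\rho}_1\le\eta$, so \cref{lem:cmi-continuity-app} gives
\[
I(X:Z|Y)_{\rho^{(\eta)}}\le I(X:Z|Y)_\rho+4\eta\log d_*+4h_2(\eta),
\]
and the correction tends to $0$ as $\eta\downarrow0$. Combining this with the Fatou inequality yields \eqref{eq:nonfaithful-weighted-IMF} for arbitrary finite-dimensional $\rho$. Finally, \eqref{eq:nonfaithful-matrix-error-norm} transfers the statement to any vector $\ket\phi$ whose reduced state on $XYZ$ is $\rho$.

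The main obstacle is the second step, namely checking that the IMF factorization error really is the continuous object in the limit. Imaginary powers do not converge on the kernel of $\rho$, but the current-state IMF maps are globally $\ell(t)$-Lipschitz in the amplitude by \eqref{eq:nonfaithful-matrix-IMF-Lipschitz}, and this is exactly what \cref{lem:nonfaithful-E-convergence} supplies. One must also confirm that the vector–matrix identification \eqref{eq:nonfaithful-matrix-error-norm} holds for the regularized state. For $\rho^{(\eta)}$ I would simply define the error through the matrix amplitude, which is all the faithful proof actually uses.
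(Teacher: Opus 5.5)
Your proposal is correct and follows the paper's proof essentially step for step. You regularize to the faithful state $\rho^{(\eta)}$ and apply the faithful case of \cref{thm:weighted-IMF} in matrix-amplitude form. You then get pointwise convergence of the integrands for $t\neq0$ from \cref{lem:nonfaithful-E-convergence}, pass to the limit with Fatou's lemma, and close with continuity of the conditional mutual information. The only difference is cosmetic: you use the explicit bound of \cref{lem:cmi-continuity-app} (valid once $\eta\le\frac12$), where the paper cites qualitative continuity of CMI in fixed finite dimension.
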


\begin{proof}
For every $\eta>0$, the state $\rho^{(\eta)}$ is faithful, so the faithful
case of \cref{thm:weighted-IMF}, expressed in matrix form using
\eqref{eq:nonfaithful-matrix-error-norm}, gives
\begin{equation}
\pi
\int_{-\infty}^{\infty}
\frac{
\norm{
\mathfrak E_{\rho^{(\eta)}}^{X:Y:Z}(t)
}_2^2
}{
\ell(t)^2[\cosh(2\pi t)-1]
}
\,dt
\le
I(X:Z|Y)_{\rho^{(\eta)}}.
\label{eq:nonfaithful-weighted-eta}
\end{equation}
For $t\neq0$, define
\begin{equation}
g_\eta(t)
:=
\frac{
\pi
\norm{
\mathfrak E_{\rho^{(\eta)}}^{X:Y:Z}(t)
}_2^2
}{
\ell(t)^2[\cosh(2\pi t)-1]
},
\qquad
g(t)
:=
\frac{
\pi
\norm{
\mathfrak E_{\rho}^{X:Y:Z}(t)
}_2^2
}{
\ell(t)^2[\cosh(2\pi t)-1]
}.
\label{eq:nonfaithful-g-eta}
\end{equation}
The value assigned at the single point $t=0$ is irrelevant for the
integral.  By
\cref{lem:nonfaithful-E-convergence},
\begin{equation}
g_\eta(t)\longrightarrow g(t)
\qquad
(t\neq0).
\label{eq:nonfaithful-g-pointwise}
\end{equation}
All of these functions are nonnegative.  Fatou's lemma therefore gives
\begin{align}
\int_{-\infty}^{\infty}g(t)\,dt
&\le
\liminf_{\eta\downarrow0}
\int_{-\infty}^{\infty}g_\eta(t)\,dt
\notag\\
&\le
\liminf_{\eta\downarrow0}
I(X:Z|Y)_{\rho^{(\eta)}}.
\label{eq:nonfaithful-Fatou-step}
\end{align}
Conditional mutual information is continuous in fixed finite dimension, and
$\rho^{(\eta)}\to\rho$ in trace norm.  Hence
\begin{equation}
I(X:Z|Y)_{\rho^{(\eta)}}
\longrightarrow
I(X:Z|Y)_\rho.
\label{eq:nonfaithful-CMI-convergence}
\end{equation}
Substitution into \eqref{eq:nonfaithful-Fatou-step} and
\eqref{eq:nonfaithful-matrix-error-norm} prove
\eqref{eq:nonfaithful-weighted-IMF}.
\end{proof}

\begin{remark}[No change to the finite-time or no-go bounds]
\label{rem:nonfaithful-no-bound-change}
The regularization parameter $\eta$ is removed before any finite-time
bound is used.  Therefore no factor depending on $\eta$, on
$\lambda_{\min}(\rho_R)$, or on the rank of any reduced density matrix enters
the final result.  In particular:

\begin{enumerate}
\item the weighted bound \eqref{eq:weighted-IMF} holds with exactly the
same right-hand side $I(X:Z|Y)_\rho$;

\item the pointwise finite-time bound in
\cref{cor:pointwise-IMF} follows from the same weighted bound and the same
spectrum-free modular-moment bound, with unchanged constants;

\item the weighted comparison of the no-go trajectories and all subsequent
continuity and deformation bounds are unchanged;

\item the quantitative lower bounds on the \Aone{} violation forced by a
nonzero modular commutator or Hall conductance estimator therefore retain exactly the
same dependence on time, subsystem dimensions, $c_-$, and $\sigma_{xy}$.
\end{enumerate}

Thus faithfulness is an auxiliary assumption of the analytic interpolation
proof only.  The physical instantaneous modular flow and every theorem stated
in the main text remain valid for arbitrary finite-dimensional reduced
states.
\end{remark}

\section{Pointwise finite-time factorization error of the IMF}
\label{app:pointwise-IMF}

Combining the weighted estimate with a time-Lipschitz bound gives the
following explicit pointwise estimate.

\begin{corollary}[Explicit pointwise finite-time error]
\label{cor:pointwise-IMF}
Define
\begin{equation}
\mathfrak m_{X:Y:Z}
:=
\log(d_{XYZ}d_{XY}d_{YZ}d_Y).
\label{eq:m-definition}
\end{equation}
Then
\begin{equation}
\begin{aligned}
\norm{\mathfrak E_{\phi}^{X:Y:Z}(t)}_2
\le
\min\Bigg\{
2,
|t|\mathfrak m_{X:Y:Z},
\ell(t)
\left[
\frac{4\mathfrak m_{X:Y:Z}}{\pi}
\left(
\cosh\!\left(2\pi\left[|t|+\mathfrak m_{X:Y:Z}^{-1}\right]\right)-1
\right)
I(X:Z|Y)_\phi
\right]^{1/3}
\Bigg\}.
\end{aligned}
\label{eq:pointwise-IMF}
\end{equation}
\end{corollary}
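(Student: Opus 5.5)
The proof has three branches, handled separately, with the weighted estimate of \cref{thm:weighted-IMF} used in its sharper intermediate form \eqref{eq:weighted-d} for the matrix-amplitude distance $d_\phi(t)$, which carries no $\ell(t)$ in the weight. The bound $2$ is immediate: both vectors in \eqref{eq:general-flow-error} are normalized by \cref{lem:nonfaithful-support-preservation}. Throughout, I first treat faithful $\rho_{XYZ}$ and remove this assumption at the end exactly as in \cref{app:nonfaithful-IMF}. There, the depolarization of \cref{lem:nonfaithful-E-convergence} converges pointwise in $t$, and every bound below is stated without spectral quantities.

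\emph{Time-Lipschitz step.} I will show that both $t\mapsto\mathfrak E^{X:Y:Z}_\phi(t)$ and $t\mapsto d_\phi(t)$ are $\mathfrak m_{X:Y:Z}$-Lipschitz, and that both vanish at $t=0$. Unwinding the current-state marginals gives
\[
\cI_{XYZ}(t)\ket\phi=\rho^{it}\ket\phi,
\qquad
\cI_Y(-t)\ket\phi=\rho_Y^{-it}\ket\phi,
\qquad
\cI_{YZ}(t)\cI_Y(-t)\ket\phi=\rho_Y^{-it}\rho_{YZ}^{it}\ket\phi,
\]
where the last identity holds because the $YZ$ marginal of $\rho_Y^{-it}\ket\phi$ is $\rho_Y^{-it}\rho_{YZ}\rho_Y^{it}$. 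The final factor $\cI_{XY}(t)$ similarly becomes a conjugated power $\rho_Y^{-it}\rho_{YZ}^{it}\,\rho_{XY}^{it}\,\rho_{YZ}^{-it}\rho_Y^{it}$, possibly after the nested-commutation reordering. Hence each branch has the form $W(t)\ket\phi$, where $W(t)$ is a product of unitary one-parameter groups with fixed generators (the initial modular Hamiltonians, some conjugated by unitaries). Differentiating by the product rule, each term has the form (unitary)$\,K_R\,$(unitary)$\ket\phi$, and the inner unitary leaves the relevant $R$-marginal unitarily equivalent to $\rho_R$. So each term has norm at most $\norm{K_R\rho_R^{1/2}}_2\le\log d_R$ by \eqref{eq:universal-modular-moment-main}. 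Summing over $R\in\{XYZ,XY,YZ,Y\}$ bounds the derivative of the difference by $\mathfrak m_{X:Y:Z}$, which yields the $|t|\mathfrak m_{X:Y:Z}$ branch. The same computation applied to the two amplitudes in \eqref{eq:dphi-definition} shows $d_\phi$ is $\mathfrak m_{X:Y:Z}$-Lipschitz with $d_\phi(0)=0$.

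\emph{Cube-root step.} Fix $t_0$ and set $e:=d_\phi(t_0)\le2$. By the Lipschitz property, $d_\phi(s)\ge e/2$ on the interval $|s-t_0|\le e/(2\mathfrak m)$, whose half-width is at most $\mathfrak m^{-1}$. On this interval, $\cosh(2\pi s)-1\le\cosh(2\pi[|t_0|+\mathfrak m^{-1}])-1$. Restricting the integral in \eqref{eq:weighted-d} to this interval therefore gives
\[
\pi\cdot\frac{e}{\mathfrak m}\cdot\frac{e^2}{4}
\le\bigl(\cosh(2\pi[|t_0|+\mathfrak m^{-1}])-1\bigr)\,I(X:Z|Y)_\phi,
\]
that is,
\[
e^3\le\frac{4\mathfrak m}{\pi}\bigl(\cosh(2\pi[|t_0|+\mathfrak m^{-1}])-1\bigr)\,I(X:Z|Y)_\phi .
\]
Combining this with \eqref{eq:E-by-d}, $\norm{\mathfrak E(t_0)}_2\le\ell(t_0)\,d_\phi(t_0)$, gives the third branch, with $\ell(t)$ outside the cube root as stated.

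\emph{Expected obstacle.} The main difficulty is the time-Lipschitz step. The current-state IMF maps are nonlinear, so the derivative of the composed branch is not simply a sum of the initial modular generators. The rewriting as fixed-generator unitary groups must be done carefully, keeping track of which conjugated marginal each factor sees, so that every derivative term is controlled by a modular second moment of a state unitarily equivalent to an initial marginal. In the nonfaithful case, the same bookkeeping is done with the unitary extensions of \cref{lem:nonfaithful-support-preservation}, or avoided by the depolarization-and-limit argument.
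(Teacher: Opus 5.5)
Your bound $2$ and your cube-root branch are correct and match the paper's argument. Because $P_t=\rho_{XY}^{-it}\rho^{1/2+it}$ and $Q_t=\rho_Y^{-it}\rho_{YZ}^{it}\rho^{1/2}$ are genuine products of fixed one-parameter groups, $d_\phi$ is $\mathfrak m_{X:Y:Z}$-Lipschitz. Restricting \eqref{eq:weighted-d} to $|s-t|\le d_\phi(t)/(2\mathfrak m_{X:Y:Z})$ and then applying \eqref{eq:E-by-d} gives the third branch.

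The gap is in the $|t|\mathfrak m_{X:Y:Z}$ branch. You derive it from the claim that $t\mapsto\mathfrak E_\phi^{X:Y:Z}(t)$ is $\mathfrak m_{X:Y:Z}$-Lipschitz. That claim rests on rewriting the last factor $\cI_{XY}(t)$ as the fixed conjugated power $\rho_Y^{-it}\rho_{YZ}^{it}\rho_{XY}^{it}\rho_{YZ}^{-it}\rho_Y^{it}$, and this rewriting is not valid. The input to that map is $\rho_Y^{-it}\rho_{YZ}^{it}\ket\phi$. Since $\rho_{YZ}^{it}$ acts on $Z$, which is traced out, its $XY$ marginal is $\rho_Y^{-it}\,\Tr_Z\bigl[\rho_{YZ}^{it}\rho_{XYZ}\rho_{YZ}^{-it}\bigr]\rho_Y^{it}$. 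This is generally not unitarily equivalent to $\rho_{XY}$, and it depends on $t$ through its base, not only through the exponent. Your proposed operator is not even supported on $XY$. Nested commutation cannot repair this, because $XY$ and $YZ$ are not nested and do not become nested after either is flipped to its complement.

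As a result, the product-rule derivative of the three-map branch contains a term from differentiating the $t$-dependent base of the imaginary power. That term is not of the form $K_R\ket\omega$ and is not controlled by \eqref{eq:universal-modular-moment-main}. The natural estimate, combining \cref{lem:IMF-Lipschitz} with the fixed-input time bound, gives a time-Lipschitz constant of order $\ell(t)(\log d_Y+\log d_{YZ})+\log d_{XY}$, which grows with $|t|$. Falling back on $\norm{\mathfrak E_\phi^{X:Y:Z}(t)}_2\le\ell(t)d_\phi(t)\le\ell(t)|t|\mathfrak m_{X:Y:Z}$ loses a factor $\ell(t)\ge\sqrt2$.

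The missing idea is to compare each branch with $\ket\phi$ at fixed $t$, instead of differentiating in $t$. For any normalized $\ket\omega$, $\norm{\cI_R(\tau)\ket\omega-\ket\omega}_2\le|\tau|\norm{K_R(\omega)\ket\omega}_2\le|\tau|\log d_R$. The last step applies the universal modular-moment bound to the \emph{current} marginal, so no unitary equivalence with $\rho_R$ is needed. Telescoping along $\cI_Y(-t)$, $\cI_{YZ}(t)$, $\cI_{XY}(t)$ puts the three-map branch within $|t|(\log d_Y+\log d_{YZ}+\log d_{XY})$ of $\ket\phi$. The one-map branch lies within $|t|\log d_{XYZ}$ of $\ket\phi$, and the triangle inequality then gives exactly $|t|\mathfrak m_{X:Y:Z}$.
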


\begin{proof}
First assume that $\rho:=\rho_{XYZ}$ is faithful.  Work directly with the
matrix amplitudes
\begin{equation}
P_t:=\rho_{XY}^{-it}\rho^{1/2+it},
\qquad
Q_t:=\rho_Y^{-it}\rho_{YZ}^{it}\rho^{1/2}.
\label{eq:trajectory-matrix-representations}
\end{equation}
Then $d_\phi(t)=\norm{P_t-Q_t}_2$ by \eqref{eq:dphi-definition},
and $\norm{P_t}_2=\norm{Q_t}_2=1$.  With $K_R=-\log\rho_R$,
differentiation gives
\begin{align*}
\dot P_t
&=iK_{XY}P_t-i\rho_{XY}^{-it}K_{XYZ}\rho^{1/2+it},\\
\dot Q_t
&=iK_YQ_t-i\rho_Y^{-it}K_{YZ}\rho_{YZ}^{it}\rho^{1/2}.
\end{align*}
The identities
$\Tr_Z(P_tP_t^\dagger)=\rho_{XY}$ and
$\Tr_{XZ}(Q_tQ_t^\dagger)=\rho_Y$, together with unitary invariance of
the Hilbert--Schmidt norm, therefore yield
\begin{align}
\norm{\dot P_t}_2
&\le\norm{K_{XY}\rho_{XY}^{1/2}}_2
   +\norm{K_{XYZ}\rho^{1/2}}_2
\le\log d_{XY}+\log d_{XYZ},\notag\\
\norm{\dot Q_t}_2
&\le\norm{K_Y\rho_Y^{1/2}}_2
   +\norm{K_{YZ}\rho_{YZ}^{1/2}}_2
\le\log d_Y+\log d_{YZ},
\label{eq:trajectory-derivative-bounds}
\end{align}
where the final inequalities use
\eqref{eq:universal-modular-moment-main}.  Hence
\begin{equation}
\abs{d_\phi(s)-d_\phi(t)}
\le\norm{P_s-P_t}_2+\norm{Q_s-Q_t}_2
\le\mathfrak m_{X:Y:Z}\abs{s-t}.
\label{eq:dphi-Lipschitz}
\end{equation}
We next convert the weighted integral estimate into a pointwise estimate.
Fix $t\in\mathbb R$.  The number $\mathfrak m_{X:Y:Z}$ is strictly positive
by its definition in \eqref{eq:m-definition}.  If $d_\phi(t)=0$, then
\eqref{eq:E-by-d} already gives
$\mathfrak E_\phi^{X:Y:Z}(t)=0$, so the third branch is immediate.  We may
therefore assume $d_\phi(t)>0$.  By \eqref{eq:dphi-Lipschitz},
\begin{equation}
d_\phi(s)
\ge d_\phi(t)-\mathfrak m_{X:Y:Z}\abs{s-t}
\ge\frac{d_\phi(t)}{2}
\qquad
\text{whenever }
\abs{s-t}\le\frac{d_\phi(t)}{2\mathfrak m_{X:Y:Z}}.
\label{eq:dphi-local-lower-bound}
\end{equation}
Since $\norm{P_t}_2=\norm{Q_t}_2=1$, we have $d_\phi(t)\le2$.  Hence
$|s|\le|t|+\mathfrak m_{X:Y:Z}^{-1}$ throughout this interval.
Because
$x\mapsto\cosh(2\pi x)-1$ is increasing for $x\ge0$, the weighted estimate
\eqref{eq:weighted-d} yields
\begin{align}
I(X:Z|Y)_\phi
&\ge
\pi\int_{t-d_\phi(t)/(2\mathfrak m_{X:Y:Z})}^{t+d_\phi(t)/(2\mathfrak m_{X:Y:Z})}
\frac{d_\phi(s)^2}{\cosh(2\pi s)-1}\,ds
\notag\\
&\ge
\frac{\pi d_\phi(t)^3}
{4\mathfrak m_{X:Y:Z}
\left[\cosh\!\left(2\pi[|t|+\mathfrak m_{X:Y:Z}^{-1}]\right)-1\right]}.
\label{eq:weighted-local-pointwise}
\end{align}
Equivalently,
\begin{equation}
d_\phi(t)^3
\le
\frac{4\mathfrak m_{X:Y:Z}}{\pi}
\left(
\cosh\!\left(2\pi[|t|+\mathfrak m_{X:Y:Z}^{-1}]\right)-1
\right)I(X:Z|Y)_\phi.
\label{eq:dphi-pointwise-cubic}
\end{equation}
Taking the cube root and using
\eqref{eq:E-by-d} proves the third branch of
\eqref{eq:pointwise-IMF}.

For completeness, we derive the other two branches.  Every IMF map preserves
the norm of its input state.  The two vectors whose difference defines
$\mathfrak E_\phi^{X:Y:Z}(t)$ are therefore normalized, and the triangle
inequality immediately gives
\begin{equation}
\norm{\mathfrak E_\phi^{X:Y:Z}(t)}_2\le2.
\label{eq:pointwise-trivial-branch}
\end{equation}
For any normalized $\ket\omega$ and any region $R$, the marginal used during
the single map $\cI_R(\tau)\ket\omega$ is fixed by its input.  Hence
\begin{align}
\norm{\cI_R(\tau)\ket\omega-\ket\omega}_2
&=
\norm{
\int_0^\tau
(-iK_R(\omega))e^{-isK_R(\omega)}\ket\omega\,ds
}_2
\notag\\
&\le
|\tau|\norm{K_R(\omega)\ket\omega}_2
\le |\tau|\log d_R.
\label{eq:single-IMF-identity-bound}
\end{align}
The last inequality follows from
\eqref{eq:universal-modular-moment-main} applied to the current marginal
$\rho_R(\omega)$.  Applying \eqref{eq:single-IMF-identity-bound} at each
current input state and telescoping directly gives
\begin{align}
&\norm{
\cI_{XY}(t)\cI_{YZ}(t)\cI_Y(-t)\ket\phi-\ket\phi
}_2
\notag\\
&\le
\norm{
\cI_{XY}(t)\cI_{YZ}(t)\cI_Y(-t)\ket\phi
-\cI_{YZ}(t)\cI_Y(-t)\ket\phi
}_2
\notag\\
&\quad+
\norm{
\cI_{YZ}(t)\cI_Y(-t)\ket\phi
-\cI_Y(-t)\ket\phi
}_2
\notag\\
&\quad+
\norm{\cI_Y(-t)\ket\phi-\ket\phi}_2
\notag\\
&\le
|t|(\log d_{XY}+\log d_{YZ}+\log d_Y),
\label{eq:three-IMF-identity-bound}
\end{align}
while the one-map branch obeys
\begin{equation}
\norm{\cI_{XYZ}(t)\ket\phi-\ket\phi}_2
\le |t|\log d_{XYZ}.
\label{eq:full-IMF-identity-bound}
\end{equation}
A final triangle inequality between the two branches yields the second branch.

The calculation above is valid directly on the supports of nonfaithful
marginals.  Equivalently, one may apply it to the faithful regularizations in
\cref{app:nonfaithful-IMF} and pass to the limit; all constants are uniform in
the regularization parameter.
\end{proof}

\section{Lambert-\texorpdfstring{$W$}{W} inversion and scaling}
\label{app:Lambert-W}

The Lambert function is defined implicitly by
\begin{equation}
W(x)\e^{W(x)}=x.
\end{equation}
On $[-\e^{-1},0)$ it has two real branches.  We use the negative branch
$W_{-1}$, characterized by
\begin{equation}
W_{-1}(x)\le-1,
\qquad
W_{-1}(x)\e^{W_{-1}(x)}=x.
\label{eq:Lambert-W-minus-one-definition}
\end{equation}

\begin{lemma}[Lambert-$W$ representation of $\chi_\gamma$]
\label{lem:chi-Lambert-representation}
For $\gamma\ge2$ and $0<z\le\gamma$, the unique solution
$x=\chi_\gamma(z)\in(0,1]$ of $x[\gamma+2\log(1/x)]=z$ is
\begin{equation}
\chi_\gamma(z)
=-\frac{z}{2W_{-1}\!\left(-\frac z2\e^{-\gamma/2}\right)}.
\label{eq:chi-Lambert-representation}
\end{equation}
\end{lemma}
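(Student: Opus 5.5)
The plan is to solve the defining equation for $x$ by a direct change of variables, which puts it in the form $w\e^{w}=\text{const}$, and then to identify the correct real branch.

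Start from $x[\gamma+2\log(1/x)]=z$ with $x\in(0,1]$. Divide by $2$ and write $\log(1/x)=-\log x$, giving
\[
x\left(\tfrac{\gamma}{2}-\log x\right)=\tfrac z2 .
\]
Introduce
\[
w:=\log x-\tfrac{\gamma}{2}.
\]
Then $x=\e^{w+\gamma/2}$ and $\tfrac\gamma2-\log x=-w$, so the equation becomes $-w\,\e^{w}\e^{\gamma/2}=z/2$, that is,
\[
w\,\e^{w}=-\tfrac z2\,\e^{-\gamma/2}.
\]
Inverting through $x=\e^{w}\e^{\gamma/2}$ together with $\e^{w}=\bigl(-\tfrac z2\e^{-\gamma/2}\bigr)/w$ gives $x=-z/(2w)$. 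This is exactly the claimed formula once we show that $w=W_{-1}\bigl(-\tfrac z2\e^{-\gamma/2}\bigr)$.

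The branch selection is the only real content of the proof. There are three points to check.

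First, the argument $y:=-\tfrac z2\e^{-\gamma/2}$ lies in $[-\e^{-1},0)$. Since $z\le\gamma$, we have $|y|\le \tfrac\gamma2\e^{-\gamma/2}$. The function $s\e^{-s}$ has maximum $\e^{-1}$ at $s=1$, so $|y|\le\e^{-1}$, and $y<0$ because $z>0$. Hence both real branches of $W$ are defined at $y$.

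Second, our $w$ satisfies $w\le-1$. This uses $x\le1$ and $\gamma\ge2$: indeed $w=\log x-\gamma/2\le-\gamma/2\le-1$.

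Third, on $(-\infty,-1]$ the map $w\mapsto w\e^{w}$ is strictly decreasing from $0^-$ to $-\e^{-1}$. So it has a unique preimage of $y$ there, and by the characterization \eqref{eq:Lambert-W-minus-one-definition} that preimage is $W_{-1}(y)$.

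This also recovers the uniqueness stated in the theorem. Conversely, defining $x$ through $W_{-1}(y)$ gives $x=\e^{W_{-1}(y)+\gamma/2}$. We need $x\le1$, which does not follow from $w\le-1$ alone. It does follow from the monotonicity of $x[\gamma+2\log(1/x)]$ on $(0,1]$ (endpoint value $\gamma\ge z$): the solution in $(0,1]$ exists, and the argument above shows it must equal this $x$.

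I expect the branch selection to be the only delicate step. In particular, the requirement $\gamma\ge2$ enters precisely to force $w\le-1$, and so to exclude the principal branch $W_0$.
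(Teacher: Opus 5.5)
Your proof is correct and matches the paper's argument: the paper substitutes $v=\gamma/2+\log(1/x)$, which is exactly $-w$ in your notation, obtains $(-v)\e^{-v}=-\frac z2\e^{-\gamma/2}$, and selects $W_{-1}$ because $-v\le-\gamma/2\le-1$. Your checks that the argument lies in $[-\e^{-1},0)$ and that $w\mapsto w\e^{w}$ is injective on $(-\infty,-1]$ fill in details the paper leaves implicit.
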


\begin{proof}
Set $v:=\gamma/2+\log(1/x)$.  Then $v\ge\gamma/2\ge1$,
$x=\e^{\gamma/2-v}$, and the defining equation becomes
\begin{equation}
z=2xv=2v\e^{\gamma/2-v},
\qquad
(-v)\e^{-v}=-\frac z2\e^{-\gamma/2}.
\end{equation}
Since $-v\le-1$, the relevant branch is $W_{-1}$, so
$v=-W_{-1}(-(z/2)\e^{-\gamma/2})$.  Substituting into $x=z/(2v)$
gives \eqref{eq:chi-Lambert-representation}.
\end{proof}

The standard asymptotic expansion of $W_{-1}$ gives the following scaling;
see Ref.~\cite[Sec.~4, pp.~349--350]{Corless1996LambertW}.

\begin{lemma}[Lambert-$W$ scaling]
\label{lem:Lambert-W-scaling}
As $y$ approaches $0$ from above,
\begin{equation}
-W_{-1}(-y)
=
\log\frac1y
+\log\log\frac1y
+O\!\left(
\frac{\log\log(1/y)}{\log(1/y)}
\right).
\label{eq:Lambert-W-asymptotic}
\end{equation}
Consequently, for $\gamma\ge2$ and $0<z\le\gamma$, set
\begin{equation}
L_{\gamma,z}:=\frac\gamma2+\log\frac2z.
\end{equation}
Then, as $L_{\gamma,z}\to\infty$,
\begin{equation}
\chi_\gamma(z)
=
\frac{z}{
2\left[
L_{\gamma,z}+\log L_{\gamma,z}
+O\!\left(\frac{\log L_{\gamma,z}}{L_{\gamma,z}}\right)
\right]}.
\label{eq:chi-Lambert-scaling}
\end{equation}
\end{lemma}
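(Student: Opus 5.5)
The plan is to prove the asymptotic for $W_{-1}$ directly by a fixed-point argument, rather than only citing Ref.~\cite{Corless1996LambertW}, and then to substitute it into the representation of \cref{lem:chi-Lambert-representation}.

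\textbf{Step 1 (fixed-point equation).} For $0<y<\e^{-1}$, set $w:=-W_{-1}(-y)$. By \eqref{eq:Lambert-W-minus-one-definition}, $w\ge1$ and $w\e^{-w}=y$. Put $\lambda:=\log(1/y)>1$. Taking logarithms gives the exact relation
\[
w=\lambda+\log w .
\]

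\textbf{Step 2 (a priori bounds).} Since $w\ge1$, we have $\log w\ge0$, and hence $w\ge\lambda$. In the other direction, the elementary inequality $\log w\le w/2$ gives $w\le\lambda+w/2$, so $w\le2\lambda$. Together these yield
\[
\log\lambda\le\log w\le\log\lambda+\log2 .
\]

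\textbf{Step 3 (bootstrap).} Substitute the bounds back into the fixed-point equation. Step 2 gives $w=\lambda+O(\log\lambda)$, so $w/\lambda=1+O(\log\lambda/\lambda)$. Therefore
\[
\log w=\log\lambda+\log(w/\lambda)=\log\lambda+O\!\left(\frac{\log\lambda}{\lambda}\right).
\]
Hence $w=\lambda+\log\lambda+O(\log\lambda/\lambda)$. With $\lambda=\log(1/y)$, this is exactly \eqref{eq:Lambert-W-asymptotic}.

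\textbf{Step 4 (scaling of $\chi_\gamma$).} Apply Step 3 with $y=(z/2)\e^{-\gamma/2}$. Then $\log(1/y)=\gamma/2+\log(2/z)=L_{\gamma,z}$, and $y\downarrow0$ exactly when $L_{\gamma,z}\to\infty$. The constraint $0<z\le\gamma$ with $\gamma\ge2$ gives $y\le(\gamma/2)\e^{-\gamma/2}\le\e^{-1}$, so $W_{-1}(-y)$ is defined. By \cref{lem:chi-Lambert-representation}, $\chi_\gamma(z)=z/(2w)$, and inserting the expansion of $w$ gives \eqref{eq:chi-Lambert-scaling}.

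The only genuinely delicate point is Step 2. The crude upper bound $w\le2\lambda$ is what lets us replace $\log w$ by $\log\lambda$ with a controlled error, and it has to be established without circularity. After that, the remaining steps are routine substitutions.
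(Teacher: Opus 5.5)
Your argument is correct, but it takes a different route from the paper for the first display. The paper does not prove \eqref{eq:Lambert-W-asymptotic}; it quotes the standard asymptotic series for $W_{-1}$ from Corless et al.\ (Sec.~4). The $\chi_\gamma$ statement then follows by the same substitution as your Step~4, namely $y=(z/2)\e^{-\gamma/2}$ with $\log(1/y)=L_{\gamma,z}$, inserted into \cref{lem:chi-Lambert-representation}.

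Your bootstrap from $w=\lambda+\log w$, with the sandwich $\lambda\le w\le2\lambda$, is self-contained and gives more than the $O(\cdot)$ statement. Because $w\ge\lambda$ and $\log(1+x)\le x$, you get the explicit bound $0\le w-\lambda-\log\lambda=\log(w/\lambda)\le(\log\lambda+\log2)/\lambda$ for every $\lambda\ge1$. So the error term is nonnegative, has an explicit constant, and depends on $(\gamma,z)$ only through $L_{\gamma,z}$. That uniformity is what is actually used when the paper absorbs the Lambert-$W$ factor into the polynomials $P_J,P_\Sigma$ in \cref{cor:thermal-exp-L,cor:Hall-exp-L}, where $\gamma$ and $z$ both vary with $L$. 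The citation route is shorter and gives the full asymptotic series with higher-order corrections, but for the paper's purposes your elementary bound is arguably preferable.

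A cosmetic point: Step~1 assumes $y<\e^{-1}$, while Step~4 allows $y=\e^{-1}$. At that endpoint $w=\lambda=1$ and all your bounds still hold, and only $y\downarrow0$ matters anyway.
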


\bibliographystyle{unsrturl}
\bibliography{bibilography}

\end{document}